\newcommand*\ie{i.\kern.1em e.\ }
\newcommand*\eg{e.\kern.1em g.\ }
\newcommand*\cf{c.\kern.1em f.\ }
\newcommand*\almev{a.\kern.1em e.\ }
\theoremstyle{plain}
\newtheorem{theorem}{Theorem}[section]
\newtheorem{lemma}[theorem]{Lemma}
\newtheorem{fact}[theorem]{Fact}
\newtheorem{proposition}[theorem]{Proposition}
\newtheorem{claim}[theorem]{Claim}
\newtheorem{corollary}[theorem]{Corollary}
\newtheorem{observation}[theorem]{Observation}
\crefname{claim}{Claim}{Claims}
\crefname{fact}{Fact}{Facts}
\theoremstyle{definition}
\newtheorem{condition}[theorem]{Condition}
\newtheorem{definition}[theorem]{Definition}
\newtheorem{remark}[theorem]{Remark}
\theoremstyle{plain}
\newcommand{\ignore}[1]{}
\DeclareMathOperator{\supp}{supp}   
\newcommand{\dist}{\mathsf{dist}}
\newcommand{\Var}[1]{\mathrm{Var} \left[ #1 \right]}
\newcommand{\Inf}[1]{\mathrm{Inf} \left[ #1 \right]}
\newcommand{\Ex}[1]{\bE \left[ #1 \right]}
\newcommand{\Exu}[2]{\underset{#1} \bE \left[ #2 \right] }
\renewcommand{\Pr}[1]{\bP \left[ #1 \right]} 
\newcommand{\Pru}[2]{\underset{ #1 }\bP \left[ #2 \right]}
\newcommand{\Pruc}[3]{\underset{ #1 }\bP \left[ #2 \;\; \mid \;\; #3 \right]}
\newcommand{\define}{\vcentcolon=}
\newcommand{\floor}[1]{\ensuremath{\lfloor #1 \rfloor}}
\newcommand{\ceil}[1]{\ensuremath{\lceil #1 \rceil}}
\DeclarePairedDelimiter{\abs}{\lvert}{\rvert}
\newcommand{\ind}[1]{\mathds{1} \left[ #1 \right] }
\newcommand\ltquestion{\stackrel{\mathclap{\normalfont\mbox{\tiny ?}}}{<}}
\newcommand\lequestion{\stackrel{\mathclap{\normalfont\mbox{\tiny ?}}}{\le}}
\newcommand\gequestion{\stackrel{\mathclap{\normalfont\mbox{\tiny ?}}}{\ge}}
\newcommand\eqquestion{\stackrel{\mathclap{\normalfont\mbox{\tiny ?}}}{=}}
\newcommand{\zo}{\{0,1\}}
\newcommand{\cD}{\ensuremath{\mathcal{D}}}
\newcommand{\cE}{\ensuremath{\mathcal{E}}}
\newcommand{\cL}{\ensuremath{\mathcal{L}}}
\newcommand{\cM}{\ensuremath{\mathcal{M}}}
\newcommand{\cR}{\ensuremath{\mathcal{R}}}
\newcommand{\cU}{\ensuremath{\mathcal{U}}}
\newcommand{\cX}{\ensuremath{\mathcal{X}}}
\newcommand{\bE}{\ensuremath{\mathbb{E}}}
\newcommand{\bN}{\ensuremath{\mathbb{N}}}
\newcommand{\bP}{\ensuremath{\mathbb{P}}}
\newcommand{\bR}{\ensuremath{\mathbb{R}}}
\newcommand{\bZ}{\ensuremath{\mathbb{Z}}}
\setlist[description]{leftmargin=\parindent,labelindent=\parindent}
\renewcommand{\epsilon}{\varepsilon}
\newcommand{\grad}{\nabla}
\newcommand{\laplacian}{\Delta}
\newcommand{\weakto}{\rightharpoonup}
\newcommand{\const}{\mathsf{const}}
\newcommand{\mono}{\mathsf{mono}}
\DeclareMathOperator{\Lip}{\mathsf{Lip}}
\DeclareMathOperator{\Mon}{\mathsf{Mon}}
\DeclareMathOperator{\slope}{\mathsf{slope}}
\DeclareMathOperator*{\esssup}{ess\,sup}
\newcommand{\loc}{\mathrm{loc}}
\newcommand{\inp}[2]{\left\langle #1, #2 \right\rangle}
\newcommand{\inpspace}[3]{\left\langle #1, #2 \right\rangle_{#3}}
\newcommand{\Ber}{\mathsf{Ber}}
\newcommand{\upf}[1]{#1 {\uparrow}}
\newcommand{\downf}[1]{#1 {\downarrow}}
\newcommand{\onef}[1]{#1^{(1)}}
\newcommand{\twof}[1]{#1^{(2)}}
\newcommand{\transmap}[2]{{#1}_{\#}{#2}}
\newcommand{\cond}[2]{{#1}_{|#2}}
\newcommand{\marginal}[2]{{#1}_{#2}}
\newcommand{\mon}{\mathrm{mon}}
\newcommand{\idmap}{\mathsf{id}}
\newcommand{\stcolon}{\;:\;}
\newcommand{\closedInt}{[0,1]}
\newcommand{\dirH}{\vec H}
\begin{document}

\title{Directed Isoperimetry and Monotonicity Testing: A Dynamical Approach}

\iftoggle{anonymous}{%
    \author{Anonymous Author}
    }{%
\author{%
  Renato Ferreira Pinto Jr.\thanks{Partly funded by an NSERC Canada Graduate Scholarship Doctoral
  Award.}\\
  University of Waterloo\\
  \texttt{renato.ferreira@uwaterloo.ca}}}

\date{}

\maketitle

\begin{abstract}
    This paper explores the connection between classical isoperimetric inequalities, their directed
    analogues, and monotonicity testing. We study the setting of real-valued functions $f :
    \closedInt^d \to \bR$ on the solid unit cube, where the goal is to test with respect to the
    $L^p$ distance. Our goals are twofold: to further understand the relationship between classical
    and directed isoperimetry, and to give a monotonicity tester with sublinear query complexity in
    this setting.

    Our main results are 1)~an $L^2$ monotonicity tester for $M$-Lipschitz functions with query
    complexity $\widetilde O(\sqrt{d} M^2 / \epsilon^2)$ and, behind this result, 2)~the directed
    Poincaré inequality $\dist^\mono_2(f)^2 \le C\, \Ex{|\grad^- f|^2}$, where the ``directed
    gradient'' operator $\grad^-$ measures the local violations of monotonicity of $f$.

    To prove the second result, we introduce a partial differential equation (PDE), the
    \emph{directed heat equation}, which takes a one-dimensional function $f$ into a monotone
    function $f^*$ over time and enjoys many desirable analytic properties. We obtain the directed
    Poincaré inequality by combining convergence aspects of this PDE with the theory of optimal
    transport. Crucially for our conceptual motivation, this proof is in complete analogy with the
    mathematical physics perspective on the classical Poincaré inequality, namely as characterizing
    the convergence of the standard heat equation toward equilibrium.
\end{abstract}

\thispagestyle{empty}
\setcounter{page}{0}
\newpage
{
\setcounter{tocdepth}{2} 
\tableofcontents
}
\thispagestyle{empty}
\setcounter{page}{0}
\newpage
\setcounter{page}{1}



\section{Introduction}

One of the central problems in the field of property testing is \emph{monotonicity testing}: given a
function $f$ defined over some partially ordered domain, decide whether $f$ is \emph{monotone}, \ie
$f(x) \le f(y)$ whenever $x \preceq y$, or \emph{$\epsilon$-far} from any monotone function under a
given distance metric. Since the introduction of this problem \cite{GGLRS00} and especially over the
last decade, a series of works has revealed striking connections between monotonicity testing and
directed analogues of well-studied and ubiquitous isoperimetric inequalities such as Poincaré,
Margulis, and Talagrand inequalities.

Let $p, q \ge 1$. Following the notation of \cite{Fer23}, we say a (classical) \emph{$(L^p,
\ell^q)$-Poincaré inequality} is an inequality of the form
\[
    \dist^\const_p(f)^p \le C\, \Ex{\|\grad f\|_q^p}
\]
for, say, all functions $f : \zo^d \to \bR$ on the Boolean cube, or all functions $f : \closedInt^d
\to \bR$ on the unit cube, or perhaps only the Boolean-valued functions on these domains. Here
$\dist^\const_p(f)$ denotes the $L^p$ distance of $f$ to the closest constant function when the
domain is given uniform probability measure. For example,
\begin{enumerate}
    \item for functions $f : \zo^d \to \bR$, the classical Poincaré inequality $\Var{f} \le C\,
        \Inf{f}$, where $\Inf{f}$ denotes the total influence of $f$, is an $(L^2, \ell^2)$-Poincaré
        inequality (see \eg \cite{OD14});
    \item in the same setting, the Talagrand inequality \cite{Tal93} is an $(L^1, \ell^2)$ inequality;
    \item for smooth functions $f : \closedInt^d \to \bR$, the $(L^2, \ell^2)$ inequality is often
        called \emph{the} Poincaré inequality, especially in mathematical analysis (see \eg
        \cite{BGL14}); and
    \item in the same setting, the $(L^1, \ell^2)$ inequality was proved by Bobkov \& Houdré
        \cite{BH97}.
\end{enumerate}

A series of works on monotonicity testing has shown that many of these inequalities enjoy natural
``directed analogues''\!\!, as identified by \cite{CS16}. Let $\dist^\mono_p(f)$ denote the $L^p$
distance of $f$ to the closest monotone function. Then a \emph{directed $(L^p, \ell^q)$-Poincaré
inequality} is an inequality of the form
\[
    \dist^\mono_p(f)^p \le C\, \Ex{\|\grad^- f\|_q^p} \,,
\]
where $\grad^- f \define \min\{0, \grad f\}$, the \emph{directed gradient} of $f$, captures the
local violations of monotonicity\footnote{Note that both the classical and directed inequalities
relate a local property (violations of the ``constant'' or ``monotone'' property, captured by the
gradient) to a global one (distance to a constant or monotone function).}\!\!. For example
(listing by type of inequality rather than in historical order),
\begin{enumerate}
    \item an $(L^1, \ell^1)$ inequality was proved by \cite{GGLRS00} for functions\footnote{For the
        purposes of exposition, we limit the present discussion to the Boolean cube and unit cube
        domains; in particular, we do not extend the notation and discussion to hypergrid domains.}
        $f : \zo^d \to \zo$;
    \item \cite{Fer23} gave the same inequality for Lipschitz functions $f : \closedInt^d \to \bR$;
    \item \cite{KMS18} proved the stronger $(L^1, \ell^2)$ inequality for functions $f : \zo^d \to
        \zo$, in analogy with Talagrand's result in the classical setting; and
    \item related directed isoperimetric statements, not in Poincaré form, include a directed
        analogue of the Margulis inequality by \cite{CS16}, and isoperimetric inequalities for
        real-valued functions on the Boolean cube \cite{BKR24} and for Boolean functions on the
        hypergrid \cite{BCS23,BKKM23}.
\end{enumerate}

What makes the story above especially compelling is that each new directed isoperimetric inequality
also enabled an \emph{algorithmic result}, namely a monotonicity tester with improved query
complexity in the same setting. While we refrain from a full review and refer to \cite{Bla23}
instead, one central example is that, for functions $f : \zo^d \to \zo$, the celebrated work of
\cite{KMS18} essentially resolved the question for nonadaptive testers by giving a $\widetilde
O(\sqrt{d}/\epsilon^2)$ query tester.

In this paper, we seek to further understand the connection between classical and directed
isoperimetry, and the role of the latter in monotonicity testing. Building upon \cite{Fer23}, we
pursue this goal by studying the fully continuous setting, namely functions $f : \closedInt^d \to
\bR$. Let us offer two reasons for this choice, which we view as connections along two conceptual
``axes'':
\begin{description}
    \item[Classical versus directed.]
        The continuous setting is central to the study of isoperimetric phenomena. Ever since the
        original ``isoperimetric problem'' about shapes in Euclidean space, isoperimetric
        inequalities have enjoyed a rich and fruitful history with connections to mathematical
        physics, geometry, probability theory, diffusion processes, optimal transport, and so on;
        and closest to our subject, the Poincaré inequality itself first appeared in the study of
        partial differential equations arising in mathematical physics \cite{Poi90}. Thus, if there
        are unifying principles underlying both classical and directed isoperimetric phenomena, it
        seems reasonable to expect such a principle to manifest itself in the continuous setting.
    \item[Discrete versus continuous.]
        Phenomena involving functions on Boolean and continuous domains are often intimately
        related\footnote{See \eg \cite[Chapter~11]{OD14} and recent works such as
        \cite{KOW16,CHHL19,DNS21,AHLVXY23,EMR23}.}\!\!, which we may interpret as a form of
        ``robustness'' of the phenomena. Thus, it is natural to ask about the full scope of this
        connection in the case of directed isoperimetry and monotonicity testing. While \cite{Fer23}
        started to answer this question by giving an $L^1$ monotonicity tester with $O(d)$ query
        complexity via a directed $(L^1, \ell^1)$-Poincaré inequality, many questions remain. For
        example, they left open the possibility of a tester with $O(\sqrt{d})$ query complexity,
        which would bring the continuous landscape closer to the discrete one.
\end{description}

Our main result in this paper is a directed $(L^2, \ell^2)$-Poincaré inequality for sufficiently
regular functions $f : \closedInt^d \to \bR$ (in particular, Lipschitz continuity suffices; see
\cref{section:pde-preliminaries}):

\begin{restatable}[Directed Poincaré inequality]{theorem}{thmdirectedpoincareinequality}
    \label{thm:directed-poincare-inequality}
    There exists a universal constant $C > 0$ such that, for all $f \in H^1((0,1)^d)$,
    \begin{equation}
        \label{eq:main}
        \dist^\mono_2(f)^2 \le C\, \Ex{\|\grad^- f\|_2^2} \,.
    \end{equation}
\end{restatable}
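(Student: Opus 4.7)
The plan is to follow the dynamical approach advertised in the abstract: construct a ``directed heat flow'' $\{u_t\}_{t \ge 0}$ with $u_0 = f$ that converges to a monotone function $f^\ast = \lim_{t\to\infty} u_t$, and bound $\|f - f^\ast\|_2^2$ by the integrated ``directed Dirichlet energy'' $\int_0^\infty \Ex{\|\grad^- u_t\|_2^2}\, dt$. This is in direct analogy with one standard derivation of the classical Poincaré inequality via the heat semigroup $e^{t\Delta}$, where $\|f - \Ex{f}\|_2^2 = 2\int_0^\infty \Ex{\|\grad u_t\|_2^2}\, dt$.

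First I would treat the one-dimensional case. The natural candidate is the nonlinear parabolic PDE
\[
    \partial_t u_t = \partial_x\bigl((\partial_x u_t)^-\bigr), \qquad s^- \define \min\{s, 0\},
\]
on $(0,1)$ with Neumann boundary conditions. Two features make this appealing: monotone non-decreasing functions are exactly its stationary states, and formal integration by parts yields the energy identity
\[
    \tfrac{d}{dt} \tfrac{1}{2}\|u_t\|_2^2 = -\int_0^1 \bigl((\partial_x u_t)^-\bigr)^2\, dx,
\]
so the $L^2$ norm decreases precisely at the directed Dirichlet energy rate (and the same identity holds with $u_t$ replaced by $u_t - c$ for any constant $c$). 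After establishing well-posedness (likely via a non-degenerate regularization of $s \mapsto s^-$ and passage to the limit in $H^1$) and convergence $u_t \to u_\infty$ to a monotone limit, the remaining task is the quantitative convergence bound $\|u_0 - u_\infty\|_2^2 \le C \int_0^\infty \Ex{|(\partial_x u_t)^-|^2}\, dt$. I would attack this via optimal transport: in one dimension, $L^2$ distances to monotone functions are tightly connected to $W_2$ distances between the associated measures (via monotone rearrangement as the optimal transport map), and I would seek a Wasserstein contraction estimate along the directed flow to close the loop.

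For the $d$-dimensional step, I would combine the one-dimensional inequality via a tensorization-type argument, most promisingly by running the flow simultaneously in all coordinates, $\partial_t u = \sum_{i=1}^d \partial_i\bigl((\partial_i u)^-\bigr)$. The energy identity then sums over directions and produces exactly the right-hand side of \eqref{eq:main}, while monotone functions (coordinate-wise non-decreasing) remain the stationary states.

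The main obstacle I foresee is precisely this tensorization/assembly step. Unlike the classical Poincaré inequality, where the closest constant (the mean) factorizes trivially and the variance decomposes additively across coordinates, the closest coordinate-wise monotone function in $d$ dimensions does not admit such a clean decomposition, and iterating 1D monotonizations need not yield the $L^2$-optimal multivariate monotone function. Showing that the joint flow converges to \emph{some} coordinate-wise monotone limit and that the total $L^2$ displacement is controlled by the sum of directed Dirichlet energies will require substantial care; I expect the optimal-transport machinery to be essential here, both for stitching together the one-dimensional controls and for proving convergence of the degenerate nonlinear flow to a genuinely monotone equilibrium.
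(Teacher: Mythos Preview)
Your one-dimensional plan is close to the paper's: the same directed heat equation, the same gradient-flow framing, and the same use of exponential decay of the directed Dirichlet energy $\cE^-(u_t)$ (the paper proves this via the Poincaré inequality for zero-boundary functions applied to $\partial_x\downf{u_t}$). One correction: your energy identity for $\tfrac12\|u_t-c\|_2^2$ only holds for constants $c$, whereas the limit $u_\infty$ is monotone but not constant; replacing $c$ by $u_\infty$ yields only an \emph{inequality} $\tfrac{d}{dt}\tfrac12\|u_t-u_\infty\|_2^2 \le -\int(\partial_x^- u_t)^2$, which is still enough once you have exponential decay of $\cE^-$.

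The real gap is the $d$-dimensional step. The paper explicitly rejects the simultaneous flow $\partial_t u = \sum_i \partial_i(\partial_i^- u)$ because of ``the interaction between nonlinearities in multiple dimensions''; in particular there is no evident analogue of the canonical decomposition $u=\upf{u}+\downf{u}$ in several variables, and without it the argument for dimension-free exponential decay of $\cE^-$ does not go through. The paper also notes that tensorizing the 1D $(L^2,\ell^2)$ inequality directly fails (the law of total variance has no directed analogue). Concretely, if you instead run the 1D flow sequentially, coordinate by coordinate, the triangle inequality in $L^2$ gives $\|f-f^*\|_2 \le \sum_i \|f^{(i)}-f^{(i+1)}\|_2$, which after Cauchy--Schwarz loses a factor of $\sqrt{d}$. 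The paper's resolution is to work not in $L^2$ but in $W_2$: it proves a 1D \emph{transport-energy} inequality $W_2^2(\mu,\mu_\infty)\lesssim \cE^-(u)$ via Benamou--Brenier, then applies the 1D semigroup $P_\infty$ sequentially along each coordinate and composes the resulting transport plans \emph{Pythagoreanly} (since the displacements at different steps are along orthogonal coordinate directions, $W_2^2$ adds rather than $W_2$), yielding a $d$-dimensional directed transport-energy inequality with no dimension loss. Only at the very end is the $L^2$ distance to monotone recovered, via a directed Kantorovich weak-duality / Hamilton--Jacobi argument that converts $W_2^2(\mu\to\mu^*)$ back into $\dist^\mono_2(f)^2$. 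You correctly flagged tensorization as the obstacle and sensed that optimal transport is relevant, but the specific mechanism --- Wasserstein as the intermediate quantity, Pythagorean composition of axis-aligned plans, and duality at the end --- is the missing idea.
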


We highlight two related aspects of this result. First, it takes the same form as the most classical
form of the Poincaré inequality, namely the $(L^2, \ell^2)$ form. In contrast, \cite{Fer23} gave an
$(L^1, \ell^1)$ inequality, which does not have a natural classical counterpart\footnote{In the
classical settings, as noted above, we have the stronger $(L^1, \ell^2)$ inequality instead. Note
that the $\ell^2$ (Euclidean) norm enjoys many nice properties that the $\ell^1$ norm does not (\eg
rotation invariance, being self-dual).}\!\!.

Second, the main theme of our proof of this result is the study of the \emph{convergence of a
partial differential equation} (PDE), which is the original motivating problem for the Poincaré
inequality. The idea is to take one of the central properties of the Poincaré inequality---that it
characterizes the convergence of the \emph{heat equation} to equilibrium---and modify that PDE in a
natural way so that it converges to a monotone (rather than constant) equilibrium, then derive from
its (exponential) convergence the directed inequality. In fact, plugging in the unmodified heat
equation into our proof recovers a long but unsurprising proof of the classical Poincaré inequality.
We contend that, in the continuous setting, exponential convergence of a PDE to its
constant/monotone equilibrium is the unifying principle behind the classical and directed
inequalities.

As an application, we obtain a monotonicity tester for Lipschitz functions $f : \closedInt^d \to
\bR$, with respect to the $L^2$ distance, using roughly $\sqrt{d}$ queries:

\begin{restatable}{theorem}{thmltwomonotonicitytester}
    \label{thm:l2-monotonicity-tester}
    There exists a nonadaptive, directional derivative $L^2$ monotonicity tester for $M$-Lipschitz
    functions $f : \closedInt^d \to \bR$ with query complexity $\widetilde O( \sqrt{d} M^2 /
    \epsilon^2 )$ and one-sided error.
\end{restatable}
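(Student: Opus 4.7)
My plan is to prove Theorem~\ref{thm:l2-monotonicity-tester} by combining the directed Poincaré inequality (Theorem~\ref{thm:directed-poincare-inequality}) with a concentration-based analysis of the natural nonadaptive tester. Given an $M$-Lipschitz function $f : \closedInt^d \to \bR$, the tester I have in mind draws $N = \widetilde O(\sqrt{d}\, M^2/\epsilon^2)$ pairs $(x_j, i_j)$ uniformly from $\closedInt^d \times [d]$, queries the directional derivatives $\partial_{i_j} f(x_j)$ (well-defined almost everywhere for Lipschitz functions and bounded in absolute value by $M$), and rejects if any observed value is strictly negative. Since monotonicity implies $\partial_i f \ge 0$ almost everywhere for every coordinate direction $i$, this construction has one-sided error and is nonadaptive by design.

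For the completeness analysis, suppose $f$ is $\epsilon$-far from monotone in $L^2$. Theorem~\ref{thm:directed-poincare-inequality} then yields
\[
  \sum_{i=1}^d \Ex{(\partial_i^- f)^2} \;=\; \Ex{\|\grad^- f\|_2^2} \;\ge\; \epsilon^2/C,
\]
so that the expected squared directed derivative at a uniform query is at least $\epsilon^2/(Cd)$. Combined with the Lipschitz bound $(\partial_i^- f)^2 \le M^2$, this gives a handle on the probability that a single query witnesses a violation. The remaining question is a purely probabilistic one: how many independent uniform pairs must one draw before seeing a witness with constant probability.

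The main obstacle is extracting the $\sqrt{d}$ factor. A direct Chebyshev or Bernstein argument applied to the estimator $Y_j = (\partial_{i_j}^- f(x_j))^2$ yields only $\widetilde O(dM^2/\epsilon^2)$, which matches the $(L^1, \ell^1)$-based bound of \cite{Fer23} and loses the improvement promised by the stronger $(L^2, \ell^2)$ inequality. To close this gap, I would decompose the directed-gradient values across a dyadic sequence of magnitude thresholds and analyze each regime separately: large-magnitude violations are individually easy to certify, while small-magnitude ones are suppressed by the squared structure of the $(L^2, \ell^2)$ bound, which charges them quadratically less than an $\ell^1$-type inequality would. The polylogarithmic factors hidden in $\widetilde O(\cdot)$ absorb the union bound across dyadic scales, as well as any finite-difference overhead needed to approximate $\partial_i f$ from function-value queries should the oracle model require it. Conceptually, this is intended as the continuous counterpart of the KMS analysis of the Boolean-cube monotonicity tester via the directed Talagrand inequality, where the same quadratic structure on the right-hand side is what drives $O(d)$ down to $\widetilde O(\sqrt{d})$.
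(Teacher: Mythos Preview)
Your proposed tester queries single-coordinate partial derivatives $\partial_{i_j} f(x_j)$ and rejects if any is negative. This algorithm cannot achieve $\widetilde O(\sqrt{d})$ query complexity, regardless of how the analysis is organized. Consider the instance $f_i(x) = -\epsilon x_i + \sum_{j \ne i} \tfrac{M}{\sqrt{d}} x_j$ for a uniformly random hidden $i \in [d]$ (this is exactly the lower-bound family in \cref{res:lower-bound}). Each $f_i$ is $O(M)$-Lipschitz and $\Omega(\epsilon)$-far from monotone, yet $\partial_j f_i < 0$ holds only for $j = i$. A single coordinate query therefore rejects with probability exactly $1/d$, so your tester needs $\Omega(d)$ queries on this family. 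No dyadic decomposition of magnitude thresholds can help: the rejection probability of the test is what it is, and here there is only one violating direction at a single scale. The squared structure of the $(L^2,\ell^2)$ inequality is already being used in your bound $\sum_i \Ex{(\partial_i^- f)^2} \ge \epsilon^2/C$, but dividing by $d$ for a uniform coordinate leaves you at $\epsilon^2/(Cd)$, and Markov against the cap $(\partial_i^- f)^2 \le M^2$ gives rejection probability $\ge \epsilon^2/(CdM^2)$, which is tight for this example.

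The paper's tester instead queries $\grad f(x) \cdot v$ for a \emph{random subset direction} $v \in \{0,1\}^d$, where the density of $v$ is itself random over a dyadic range (\cref{def:distribution}); this is the continuous analogue of the KMS path tester, where taking edge $i$ corresponds to setting $v_i = 1$. The $\sqrt{d}$ saving comes from \cref{lemma:good-vector-distribution}, which shows that this distribution over $v$ satisfies $\Pr_v[u \cdot v < 0] \gtrsim \|u^-\|_2^2 / (\sqrt{d} \log d \cdot \|u\|_2^2)$ for any $u$; combined with $\|\grad f(x)\|_2 \le M$ and the directed Poincar\'e inequality, one iteration rejects with probability $\gtrsim \epsilon^2/(\sqrt{d} M^2 \log d)$. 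So the dyadic idea does appear, but it governs the \emph{sparsity of the query direction}, not a post-hoc threshold on single-coordinate derivatives. Your invocation of KMS is apt in spirit but misses precisely this point: on the Boolean cube the KMS tester samples paths of random length $2^k$, not single edges, and the length distribution is where the $\sqrt{d}$ lives.
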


This result answers affirmatively (up to a logarithmic factor) the question asked by \cite{Fer23}.
We remark that the algorithm above is also an $L^p$ tester for any $p \in [1,2]$, that the
directional derivative queries may be replaced by value queries as long as $f$ is sufficiently
smooth, and that the dependence on $d$ is optimal among a natural generalization of pair testers to
the continuous setting; see the next section for definitions and details.

The rest of the introduction proceeds as follows. In \cref{section:tester-details}, we define our
testing model and describe our monotonicity tester; in \cref{section:proof-overview} we give an
overview of our proof of the directed Poincaré inequality; and in \cref{section:discussion}, we
discuss our results and some open questions.

\subsection{Monotonicity testing of Lipschitz functions}
\label{section:tester-details}

In this section, let $\Omega \define \closedInt^d$ for simplicity of notation. For a function $f \in
L^p(\Omega)$ and $p \ge 1$, we define the $L^p$ distance to monotonicity of $f$ as
\[
    \dist^\mono_p(f)
    \define \inf \left\{ \|f-g\|_{L^p(\Omega)} : g \in L^p(\Omega) \text{ monotone} \right\} \,,
\]
where $\|f-g\|_{L^p(\Omega)} = \left( \int_{\Omega} (f-g)^p \odif x \right)^{1/p}$ by definition.

As announced, our main algorithmic result is a monotonicity tester with respect to the $L^2$
distance, so we adopt the $L^p$ testing model of \cite{BRY14}. The specific problem we
consider---$L^p$ testing monotonicity of Lipschitz functions---is the same as \cite{Fer23}. Let us
formally define the model. We say that function $f$ is $M$-Lipschitz if $\abs*{f(x) - f(y)} \le M
|x-y|$ for all pairs of points $x, y$. Then, the testing model is as follows:

\begin{definition}[Monotonicity tester]
    Let $p \ge 1$. Given parameters $d \in \bN$ and $M, \epsilon > 0$, we say a randomized algorithm
    $A$ is an \emph{$L^p$ monotonicity tester} for $M$-Lipschitz functions with query complexity
    $m(d, M, \epsilon)$ if, for every $M$-Lipschitz input function $f : \closedInt^d \to \bR$,
    algorithm $A$ makes $m(d, M, \epsilon)$ oracle queries to $f$ and 1)~accepts with probability at
    least $2/3$ if $f$ is monotone; 2)~rejects with probability at least $2/3$ if $\dist^\mono_p(f)
    \ge \epsilon$.
\end{definition}

As usual, an algorithm is \emph{nonadaptive} if it decides all its queries in advance before seeing
any output from the oracle, and it has \emph{one-sided error} if it accepts monotone functions with
probability $1$. As in \cite{Fer23}, we allow two types of oracles queries:

\begin{description}
    \item[Value query:] Given point $x \in \Omega$, the oracle outputs the value $f(x)$.
    \item[Directional derivative query:] Given point $x \in \Omega$ and direction $v \in \bR^d$, the
        oracle outputs the directional derivative $\grad_v f(x) = v \cdot \grad f(x)$, or the symbol
        $\bot$ if $f$ is not differentiable at $x$.
\end{description}

We remark that Lipschitz functions on $\Omega$ are differentiable \emph{almost everywhere} (\ie
outside a set of measure zero) by Rademacher's theorem, so non-differentiability is not a concern,
and they are bounded and hence in $L^p(\Omega)$ for any $p \ge 1$, so $\dist^\mono_p(f)$ is always
well-defined.

\paragraph*{The algorithm.}
Our tester may be seen as the natural continuous analogue of the path tester of \cite{KMS18} for the
Boolean cube, which samples points $x \preceq y \in \zo^d$ connected by a path of length $2^k$, for
$k$ sampled uniformly from $[\log d]$, and rejects if $f(x) > f(y)$. In our case, a path is replaced
by a \emph{direction} $v \in \zo^d$, and the condition $f(x) > f(y)$ is replaced by the condition $v
\cdot \grad f(x) < 0$, for a uniformly random point $x \in \closedInt^d$, via a directional
derivative query. Such a tester accepts any monotone function, so the challenge is to ensure that it
detects the case when $\dist^\mono_2(f) \ge \epsilon$.

In this case, hiding constant factors, \cref{thm:directed-poincare-inequality} gives that
$\Ex{\|\grad^- f\|_2^2} \gtrsim \epsilon^2$. We would like the distribution of $v$ to be such that,
for any $x$, $\Pru{v}{v \cdot \grad f(x) < 0} \gtrsim \frac{\|\grad^- f(x)\|_2^2}{\sqrt{d} \|\grad
f(x)\|_2^2}$. Note that $\|\grad f(x)\|_2^2 \le M^2$ because $f$ is $M$-Lipschitz, so if we had such
a distribution for $v$, the tester would reject with probability at least $\Exu{x}{\frac{\|\grad^-
f(x)\|_2^2}{\sqrt{d} M^2}} \gtrsim \frac{\epsilon^2}{\sqrt{d} M^2}$, implying that $O(\sqrt{d} M^2 /
\epsilon^2)$ queries suffice.

A natural choice would be to sample $v \in \zo^d$ so that $v_i = 1$ corresponds to taking edge $i$
in the path tester. We make this concrete as follows: sample $p$ uniformly from $\left\{ 1,
\frac{1}{2}, \dotsc, \frac{1}{2^{\ceil{\log_2(4d)}}} \right\}$, and sample $v \in \zo^d$ by letting
each $v_i$ be $1$ independently with probability $p$. At a high level,
\cref{lemma:good-vector-distribution} shows that this distribution satisfies the condition above (up
to a logarithmic factor) by considering, for each \emph{threshold} $\tau$, the smallest value of $p$
such that, informally, after issuing all its queries the algorithm should expect to sample at least
one vector $v$ whose support intersects with at least one entry $i$ from $\grad f(x)$ satisfying
$(\grad f(x))_i < -\tau$ (this idea allows us to ``forget'' about all the other negative entries of
$\grad f(x)$, and focus on a simpler ``good event''). By ranging over all possible $\tau$, we show
that there exists $\tau$ and corresponding $p$ such that, with good probability, the contribution of
the positive entries of $\grad f(x)$ to $v \cdot \grad f(x)$ is smaller than $\tau$, so that $v
\cdot \grad f(x) < 0$.

\paragraph*{Remarks on the tester.} As observed in \cite{BRY14}, for $1 \le p \le q$ and any (say)
Lipschitz $f$ and $g$, Jensen's inequality gives $\|f-g\|_{L^p(\Omega)} \le \|f-g\|_{L^q(\Omega)}$.
Thus any $L^q$ monotonicity tester is also an $L^p$ tester, so that the $\widetilde O(\sqrt{d} M^2 /
\epsilon^2)$ upper bound also holds for $L^p$ testing with $p \in [1,2]$.

A second remark is that the directional derivative queries, while convenient for the analysis and in
our opinion conceptually clean, can be replaced with value queries under reasonable assumptions.
First, in \cref{remark:robust-tester} we note that the rejection condition $v \cdot \grad f(x) < 0$
may in fact be replaced with the more robust condition $v \cdot \grad f(x) < -\Theta(\epsilon/d)$.
Now, suppose the input function $f$ is promised to be twice-differentiable with second derivatives
bounded by any constant $\beta$. We may then replace any directional derivative query $v \cdot \grad
f(x)$ with the value queries $f(x)$ and $f(y)$, where $y = x + \alpha v$ for sufficiently small
$\alpha > 0$ as a function of $\beta$, $\epsilon$ and $d$, and reject if $f(y) < f(x)$. This is
possible because, if $v \cdot \grad f(x) < -\Theta(\epsilon/d)$ and for such small $\alpha$, the
directional derivative $v \cdot \grad f(z)$ remains negative on all points $z$ in the line segment
between $x$ and $y$, and hence $f(y) < f(x)$. Conceptually, we view this as confirmation that
directional derivative queries are not unreasonably powerful.

Finally, the $\sqrt{d}$ dependence in the query complexity is optimal for \emph{derivative-pair
testers}, which are randomized nonadaptive testers that, at each step, either sample points $x
\preceq y$ and use value queries to reject if $f(x) > f(y)$, or sample point $x$ and direction $v
\succeq 0$ and use a directional derivative query to reject if $v \cdot \grad f(x) < 0$. The proof
is not difficult, and is given in \cref{section:lower-bound}.

\subsection{Proof overview}
\label{section:proof-overview}

Our guiding ideal is to prove \cref{thm:directed-poincare-inequality} by identifying a robust
approach to the classical Poincaré inequality such that, by ``toggling'' a single aspect of the
approach, we can transform a proof of the classical statement into a proof of its directed
counterpart.

\subsubsection{Starting point: the heat equation}

The first step is to identify the right classical starting point. For example, the Poincaré
inequality can be proved using Fourier analysis, but since the directed problem is highly nonlinear
(due to the $\grad^-$ operator), this approach does not seem suitable. Instead, we take a
physically-motivated approach. Let $u = u(t,x)$ where we think of the first variable as time, with
each $f = u(t) = u(t, \cdot)$ a function over space. A remarkable property (see \eg
\cite[Chapter~4]{BGL14}) of the Poincaré inequality
\[
    \Var{f} \le C \, \Ex{\|\grad f\|_2^2}
\]
(where we have written $\Var{f}$ in the place of $\dist^\const_2(f)^2$, which is the same thing) is
that it is equivalent to exponential decay of variance in the heat equation
\begin{equation}
    \label{eq:overview-heat}
    \partial_t u = \laplacian u \,,
\end{equation}
where $\laplacian$ denotes the Laplacian operator, given appropriate ``no-flux'' boundary conditions
which we do not expand on at the moment. To see this equivalence, assume for simplicity the mean
zero condition $\int u(0) \odif x = \int u(0,x) \odif x = 0$ (which is then preserved over time),
compute $\partial_t \Var{u(t)}$, differentiate under the integral, apply \eqref{eq:overview-heat}
and integrate by parts to obtain
\[
    \partial_t \Var{u(t)}
    = \partial_t \int u(t)^2 \odif x
    = 2 \int u(t) \partial_t u(t) \odif x
    = 2 \int u(t) \laplacian u(t) \odif x
    = -2 \int \grad u(t) \cdot \grad u(t) \odif x \,.
\]
This means that $\partial_t \Var{u(t)} \le -C\, \Var{u(t)}$ if and only if $\Var{u(t)} \le
\tfrac{2}{C} \int \|\grad u(t)\|_2^2 \odif x$, \ie exponential decay of the variance is equivalent
to the Poincaré inequality.

We observe that solutions to the heat equation converge to a constant equilibrium (taking the
average value of $u$), and the associated Poincaré inequality is a bound on the distance to a
constant function. Accordingly, it seems intuitive that if we replace \eqref{eq:overview-heat} with
a PDE that converges to a \emph{monotone} function, we might learn something about the distance to a
monotone function instead.

\subsubsection{The directed heat equation}
\label{section:overview-directed-heat-equation}

One challenge we face is that analyzing directed analogues of \eqref{eq:overview-heat} in the
multidimensional setting proves challenging due to the interaction between nonlinearities in
multiple dimensions. Hence, let us focus on the one-dimensional case for now. In one dimension, the
heat equation is
\[
    \partial_t u = \partial_x \partial_x u \,.
\]
The surface reading of this PDE is that, if we focus on the value of $u$ at a single point $x$, the
PDE tells us how this value changes over time. Another useful perspective is to think of $u(t,x)$ as
the density of ``particles'' at each point, and ask about how these particles are ``moving'' over
time; this idea can be made formal via the so-called \emph{continuity equation}. Under this view, it
turns out that the inner expression $\partial_x u$ represents (up to a sign change) the \emph{flux}
(or momentum) field, \ie the total rate of particle movement at each point, and by taking the
derivative of this field, \ie $\partial_x \partial_x u$, we determine whether, on the balance, there
are more ``incoming'' or ``outgoing'' particles---quantitatively, the rate of change $\partial_t u$.

With this perspective, a natural candidate PDE---which we call the \emph{directed heat
equation}---is
\begin{equation}
    \label{eq:intro-directed-pde}
    \partial_t u = \partial_x \partial_x^- u \,,
\end{equation}
where $\partial_x^- u \define \min\{0, \partial_x u\}$. The idea is that the new flux $\partial_x^-
u$ is always nonpositive, which, up to a required sign change, means that particles are only allowed
to ``move to the right''. Observe also that the RHS of \eqref{eq:intro-directed-pde} is zero for any
monotone function, so monotone functions are stationary solutions, while for decreasing functions,
this PDE behaves exactly as the heat equation. Thus, intuitively, this PDE seems to take a function
$u(0)$ and move it toward a monotone limit over time.

We explained above that the Poincaré inequality is intimately connected to convergence of the heat
equation, and we would like to show a similar property in the directed case. One option would be to
study the rate of decay of the distance to monotonicity, $\dist^\mono_2$. Although this is possible,
it does not seem to lead to a proof strategy; also note that, while in the case of the variance we
also have tools such as Fourier analysis to reason about the rate of decay directly, here we only
have the PDE to work with. Fortunately, there is another relevant quantity which decays
exponentially under the heat equation, namely the \emph{Dirichlet energy}
\[
    \cE(f) = \frac{1}{2} \int (\partial_x f)^2 \odif x \,.
\]
Indeed, another formal computation shows that by differentiating under the integral, switching the
partial derivatives, applying \eqref{eq:overview-heat} and integrating by parts, we get
\begin{align*}
    \partial_t \cE(u(t))
    &= \frac{1}{2} \partial_t \int (\partial_x u(t))^2 \odif x
    = \int (\partial_x u(t)) (\partial_t \partial_x u(t)) \odif x
    = \int (\partial_x u(t)) (\partial_x \partial_t u(t)) \odif x \\
    &= \int (\partial_x u(t)) (\partial_x \partial_x \partial_x u(t)) \odif x
    = -\int (\partial_x \partial_x u(t))^2 \odif x
    \le -\frac{1}{C} \int (\partial_x u(t))^2 \odif x
    = -\frac{2}{C} \cE(u(t)) \,,
\end{align*}
where the inequality is an application of a version of the Poincaré inequality for functions that
are zero on the boundary, which will be the case for $\partial_x u(t)$ by the no-flux boundary
conditions.

Now, it seems reasonable to propose the following directed analogue of the Dirichlet energy:
\[
    \cE^-(f) = \frac{1}{2} \int (\partial_x^- f)^2 \odif x \,.
\]
If $\cE$ measures the ``locally non-constant'' activity of a function, $\cE^-$ measures the
``locally non-monotone'' activity. Using $\cE^-$ to recast the directed heat equation in the
language of gradient flows and maximal monotone operators \cite{Bre73}, we can show that
\begin{enumerate}
    \item this PDE has a solution (\cref{res:solution,cor:evolution-solution-gradient-to-neumann});
    \item the directed Dirichlet energy decays exponentially in time
        (\cref{prop:exponential-decay});
    \item the solution converges to a monotone equilibrium as $t \to \infty$
        (\cref{prop:convergence-to-equilibrium}); and
    \item the solution, including up to the limit above, satisfies several other essential analytic
        properties, such as nonexpansiveness (\cref{prop:p-infty-nonexpansive}) and order
        preservation (\cref{cor:p-infty-order-preserving}).
\end{enumerate}

\subsubsection{Transport-energy inequality in one dimension}
\label{section:overview-transport-energy-one-dimension}

It is tempting, now, to try and use these results to conclude the one-dimensional directed Poincaré
inequality. The problem is that this seems to lead to a dead end, because we do not know how to
\emph{tensorize} the one-dimensional inequality into a multidimensional one\footnote{In the
classical case, the \emph{law of total variance} does allow such tensorization
\cite[Chapter~4.3]{BGL14}, but this aspect of the problem does not seem to be robust to passing to
the directed setting. Also note that, if we only desired a one-dimensional directed Poincaré
inequality, then \cite{Fer23} offers a much shorter proof.}\!\!.

Instead, we keep leaning on what the PDEs naturally tell us. It turns out that there is a different
notion of distance, the \emph{Wasserstein} distance, which is much more closely connected to the
theory of evolution equations and better suited to the dynamical approach we are undertaking.
Informally, the squared Wasserstein distance $W_2^2(\varrho_0, \varrho_1)$ between probability
measures $\varrho_0$ and $\varrho_1$ is the minimum total cost of a ``transport plan'' (coupling)
moving particles from $\varrho_0$ to $\varrho_1$, where the cost of moving a particle from point $x$
to point $y$ is $|x-y|^2$. The connection to PDEs is via the \emph{Benamou-Brenier formula}, which
informally says that
\[
    W_2^2(\varrho_0, \varrho_1)
    = \min\left\{ \int_0^1 \|v_t\|_{L^2(\varrho_t)}^2 \odif t :
            \text{$v_t$ is a velocity field taking $\varrho_0$ to $\varrho_1$ from time $0$ to $1$}
            \right\} \,.
\]
Without going into details, it follows that if we can upper bound the directed Dirichlet energy
$\cE^-(u(t))$, then we can upper bound the total (weighted) magnitude of the velocity field, \ie
$\|v_t\|_{L^2(\varrho_t)}^2$ (recall that the momentum of particles in the directed heat equation is
essentially $\partial_x^- u$, so this connection is not arbitrary), and therefore upper bound the
Wasserstein distance between the initial state $\varrho_0$---which informally corresponds to the
function $f = u(0)$---and the final state $\varrho_1$---which informally corresponds to the monotone
equilibrium $f^* = \lim_{t \to \infty} u(t)$.

This strategy yields, at least for bounded functions, the following result. Let $\cU$ be a class of
``reasonable'' initial states which we do not define for now, and let the operator $P_\infty$ map
each initial state to its monotone equilibrium according to the directed heat equation. Then
\begin{restatable*}[Transport-energy inequality in one
    dimension]{theorem}{thmundirectedtransportenergy}
    \label{thm:undirected-transport-energy}
    There exists a constant $C > 0$ such that the following holds. Let $u \in \cU$ be positive,
    bounded away from zero, and satisfy $\int_{(0,1)} u \odif x = 1$. Define the measures $\odif \mu
    \define u \odif x$ and $\odif \mu_\infty \define (P_\infty u) \odif x$. Then
    \[
        W_2^2(\mu, \mu_\infty) \le \frac{C}{\inf u} \cE^-(u) \,.
    \]
\end{restatable*}

\subsubsection{Tensorizing the transport-energy inequality}
\label{section:overview-tensorizing}

The advantage of making this detour through the Wasserstein distance is that, modulo the required
technical work, tensorization becomes relatively straightforward. Indeed, it is well-known in the
theory of optimal transport (\cf \cite[Remark~6.6]{Vil09}) that the $W_2$ distance scales well with
the dimension and reflects useful geometric content. Here the main idea is that, since the cost
function for the $W_2$ distance is the squared Euclidean distance $|x-y|^2$ between points $x, y \in
\closedInt^d$, one way to extend the one-dimensional result above to, say, the unit square $[0,1]^2$
is to
\begin{enumerate}
    \item apply the one-dimensional result to each row, making the row restrictions monotone while
        paying cost $W_2^2(\varrho_0, \varrho_1) = a^2$; and then
    \item apply the one-dimensional result to each column, making the column restrictions monotone
        while paying cost $W_2^2(\varrho_1, \varrho_2) = b^2$.
\end{enumerate}
Then, by combining the transport of particles along rows and columns into a single transport plan,
we obtain via the Pythagorean theorem a plan with cost $a^2 + b^2$, so this quantity upper bounds
the overall squared distance $W_2^2(\varrho_0, \varrho_2)$. The same idea extends to higher
dimensions, and we prove:

\begin{restatable*}[Pythagorean composition of transport plans]{lemma}{lemmapythagoreancomposition}
    \label{lemma:pythagorean-composition}
    Let $I, J \subseteq [d]$ be nonempty disjoint sets. Let $\mu, \varrho, \nu \in P(\bR^d)$ be
    supported in bounded sets, and let $\gamma^+ \in \Pi_I(\mu \to \varrho)$ and $\gamma^- \in
    \Pi_J(\varrho \to \nu)$. Then there exists $\gamma \in \Pi_{I \cup J}(\mu \to \nu)$ satisfying
    $C_2(\gamma)^2 = C_2(\gamma^+)^2 + C_2(\gamma^-)^2$.
\end{restatable*}

In this statement, $C_2(\gamma)^2$ is the cost of transport plan $\gamma$, and $\Pi_I(\mu \to \nu)$
denotes the set of transport plans between probability measures $\mu$ to $\nu$, under two additional
restrictions: 1)~particles can only move along ``$I$-aligned'' lines, \eg along rows in the example;
and 2)~particles can only move ``up'' in the partial order on $\bR^d$, \ie a particle can move from
$x$ to $y$ only if $x \preceq y$---we call such a transport plan \emph{directed}. The second
restriction is important because, when we finally recover a Poincaré inequality from a
transport-energy inequality, the information that particles only moved up in the partial order will
be reflected in the appearance of the directed gradient $\grad^-$.

Now, informally, \cref{thm:undirected-transport-energy} tells us that each step of the above
strategy (\eg moving particles along rows) incurs cost bounded by the directed Dirichlet energy, \ie
the integral $\int (\partial_i^- f)^2 \odif x$ of the partial derivatives along the (say) rows.
Repeating for every $i \in [d]$ via \cref{lemma:pythagorean-composition}, we obtain the directed,
multidimensional transport-energy inequality:

\begin{restatable*}[Transport-energy inequality]{theorem}{thmtransportenergy}
    \label{thm:transport-energy}
    There exists a universal constant $C > 0$ such that the following holds. Let $a \in (0,1)$, and
    let $f \in \Lip$ satisfy $1-a \le f \le 1+a$ and $\int_{\closedInt^d} f \odif x = 1$. Define the
    probability measures $\odif \mu \define f \odif x$ and $\odif \mu^* \define f^* \odif x$ on
    $\closedInt^d$. Then
    \[
        W_2^2(\mu \to \mu^*)
        \le \frac{C(1+a)^2}{(1-a)^3} \int_{\closedInt^d} \left| \grad^- f \right|^2 \odif x \,.
    \]
\end{restatable*}

Moreover, by the order preservation property of our solution to our PDE, each operation along
direction $i$ preserves the monotonicity along the directions $j < i$, so the final function $f^*$
in the statement above, which is the result after all $d$ steps, is indeed monotone. The notation
$W(\mu \to \mu^*)$ indicates that this bound holds even for directed transport plans, as explained
above.

\subsubsection{Recovering a Poincaré inequality via optimal transport duality}
\label{section:overview-duality}

The final step is to recover, from the multidimensional transport-energy inequality, our desired
directed Poincaré inequality. Fortunately, there is an intimate connection between transport and
Poincaré inequalities (as well as Sobolev, logarithmic Sobolev, and other isoperimetric and
concentration inequalities), and this theory is well established in the classical case (\cf
\cite[Chapters~21~and~22]{Vil09}). Closest to the present approach, \cite{Liu20} proved, in the
classical setting, the \emph{equivalence} between transport-energy, Poincaré, and other related
inequalities. Our task, therefore, is to obtain at least an implication in the directed setting.

The main ingredient toward this goal is the notion of \emph{Kantorovich duality} for the Wasserstein
distance. In the classical setting, the weak Kantorovich duality says that, given two probability
measures $\mu$ and $\nu$, if we can find ``certificate functions'' $\phi$ and $\psi$ satisfying
\begin{equation}
    \label{eq:overview-duality-cond}
    \phi(y) - \psi(x) \le |x-y|^2 \qquad \text{for all points $x$ and $y$,}
\end{equation}
then
\begin{equation}
    \label{eq:overview-duality}
    W_2^2(\mu, \nu) \ge \int \phi(y) \odif \nu(y) - \int \psi(x) \odif \mu(x) \,.
\end{equation}
Villani recounts the following analogy by Caffarelli: suppose you operate a consortium of bakeries
and cafes in a city, and that $\mu(\odif x)$ is the supply of bread by a bakery at point $x$ while
$\nu(\odif y)$ is the demand by a cafe at point $y$. Then if transporting a unit of bread from $x$
to $y$ costs $|x-y|^2$, the minimum cost for transporting all the bread is by definition $W_2^2(\mu,
\nu)$. Suppose a transportation company offers to take over the transportation job by buying bread
from the bakeries at price $\psi(x)$ and selling it to cafes at price $\phi(y)$. If
\eqref{eq:overview-duality-cond} holds then, for each $x$ and $y$, hiring the transportation company
is no more expensive than handling the transportation yourself. Therefore the cost $W_2^2(\mu, \nu)$
is at least as large as how much the bakery-cafe consortium pays the company to fulfill the supply
and demand, which is the RHS of \eqref{eq:overview-duality}.

In the directed setting, where we can only transport mass from $x$ to $y$ if $x \preceq y$, we
expect the transport to be in general more expensive, so ``more'' certificates should be valid.
Indeed, it is not difficult to obtain the following natural directed version of weak duality
(\cref{lemma:weak-duality}): if
\[
    \phi(y) - \psi(x) \le |x-y|^2 \qquad \text{for all points $x \preceq y$,}
\]
then
\[
    W_2^2(\mu \to \nu) \ge \int \phi(y) \odif \nu(y) - \int \psi(x) \odif \mu(x) \,.
\]

The final main ingredient is an operator which, given a candidate function $\phi$, produces the best
possible $\psi$ for duality. This is accomplished by a directed analogue of the so-called
\emph{Hamilton-Jacobi operator}, defined as follows: for each function $h$ on the unit cube,
\[
    (\dirH_t h)(x) \define \begin{cases}
        h(x)                                                            & \text{if $t = 0$} \\
        \sup_{y \succeq x} \left\{ h(y) - \frac{1}{2t} |x-y|^2 \right\} & \text{otherwise.}
    \end{cases}
\]
Then, for each $h$ and setting $t=1$, this operator yields the following (\cref{prop:h-duality}):
\begin{equation}
    \label{eq:overview-h-duality}
    \tfrac{1}{2} W_2^2(\mu \to \nu)
    \ge \int h \odif \nu - \int (\dirH_1 h) \odif \mu \,.
\end{equation}
Note that, as $t \to 0^+$, $(\dirH_t h)(x)$ intuitively seeks the direction $y-x \succeq 0$ of
steepest ascent of $h$. This suggests a connection to the directed gradient, and indeed in
\cref{prop:properties-h} we show that
\begin{equation}
    \label{eq:overview-limsup}
    \limsup_{t \to 0^+} \frac{\dirH_t h(x) - h(x)}{t}
    \le \frac{|\grad^+ h(x)|^2}{2} \,,
\end{equation}
where $\grad^+ h \define \max\{0, \grad h\}$.

At this point, instead of trying to reproduce the calculations from \cref{section:perturbation}, let
us give some intuition for how the pieces above recover a Poincaré inequality. By homogeneity, it
suffices to consider mean-zero functions $h$. The key idea is to fix a small $t > 0$ and let $h$
play two roles at the same time: 1)~as the building block for measures $\odif \mu(x) = (1 + th)
\odif x$ and $\odif \mu^*(x) = (1 + th^*) \odif x$, where $h^*$ is the monotone function obtained
via the coordinate-wise application of the directed heat equation as in the previous section; and
2)~in the test function $-th$ for duality via \eqref{eq:overview-h-duality}.

From the fact that $h$ plays both of these roles\footnote{We may think of test function $-th$ as the
transportation company trying to profit from non-monotonicity of $h$.}\!\!,
\eqref{eq:overview-h-duality} ends up producing the ``interaction term'' $\int (h^2 - h h^*) \odif
x$, which can be appropriately bounded by $\int (h - h^*)^2 \odif x$; and since $h^*$ is monotone,
the inequality $\int (h-h^*) \odif x \ge \dist^\mono_2(h)^2$ explains why we should expect the
distance to monotonicity to appear. Moreover, after the appropriate calculations, the term involving
$\dirH$ in \eqref{eq:overview-h-duality} gives rise to the expression in \eqref{eq:overview-limsup},
so letting $t \to 0^+$ makes the directed gradient of $h$ appear. Then $W_2^2(\mu, \mu^*)$ can be
lower bounded by an expression involving $\dist^\mono_2(h)^2$ and $\int |\grad^- h|^2 \odif x$. On
the other hand, \cref{thm:transport-energy} gives that $W_2^2(\mu, \mu^*)$ is upper bounded by an
expression involving $\int |\grad^- h|^2 \odif x$. Therefore, chaining the inequalities, we have
precisely the terms required to put \eqref{eq:main} together.

\subsection{Discussion and open questions}
\label{section:discussion}

\newcommand{\parspace}{\vspace{-.7em}}

\paragraph*{Conceptual and technical aspects.}
We consider the dynamical approach to \cref{thm:directed-poincare-inequality}---which establishes
that the convergence properties of a PDE underlies both directed and classical isoperimetric
statements---to be the main conceptual contribution of this work. The role played by optimal
transport speaks to the intimate relation between optimal transport and such dynamical processes,
and the fact that some of the optimal transport theory seems to find natural directed counterparts
is also intriguing and could be of independent interest.

Much of our technical effort is in dealing with the nonlinear nature of the $\grad^-$ operator. For
example, in principle this rules out well-known Fourier analytic arguments. In fact, even many of
the tools from nonlinear PDEs fail to apply at first, because they require some sort of coercivity
that is not satisfied in our setting---at a very informal level, because $\partial^-_x u$ may remain
at zero even as $\partial_x u$ grows arbitrarily large, which ``opens the door'' for pathological
objects to obstruct the theory. We deal with this difficulty via a canonical decomposition $u =
\upf{u} + \downf{u}$ of $u$ into a nondecreasing $\upf{u}$ and a nonincreasing $\downf{u}$, so that
we can isolate the phenomena we can control in $\downf{u}$ (and in fact recover a bit of linearity),
and deal with the less well-behaved $\upf{u}$ only when necessary.

\parspace
\paragraph*{Comparison with \cite{Fer23}.}
In prior work, \cite{Fer23} gave an $L^1$ tester for functions $f$ satisfying $\Lip_1(f) \le L$ with
query complexity $O(dL/\epsilon)$, where $\Lip_1(f)$ is the Lipschitz constant of $f$ with respect
to the $\ell^1$ metric. In contrast, in this paper we parameterize the problem by the more natural
$\ell^2$ (Euclidean) metric\footnote{In fact, that a $\sqrt{d}$ tester would be parameterized in the
$\ell^2$ metric if it exists was already suggested in \cite{Fer23}.}\!\!. Our tester takes functions
satisfying $\Lip_2(f) \le M$ and has query complexity $\widetilde O(\sqrt{d} M^2/\epsilon^2)$, and
it is also a fortiori an $L^1$ tester as already remarked.

By monotonicity of $\ell^q$ norms and the Cauchy-Schwarz inequality, in general we have $\Lip_1(f)
\le \Lip_2(f) \le \sqrt{d} \Lip_1(f)$, which allows rough comparisons between the two results. Since
the Euclidean geometry is more natural than the $\ell^1$ geometry, we find the first inequality more
informative: it implies that the tester of \cite{Fer23} is also an $L^1$ tester for functions
satisfying $\Lip_2(f) \le M$ with query complexity $O(dM/\epsilon)$. Thus, for $L^1$ testing
$M$-Lipschitz functions and ignoring logarithmic factors, \cref{thm:l2-monotonicity-tester} is
better than the tester of \cite{Fer23} when $\frac{\epsilon}{M} > \frac{1}{\sqrt{d}}$ and
vice-versa. Curiously, something analogous is true of the testers of \cite{KMS18} and \cite{GGLRS00}
for the Boolean setting.

\cite{Fer23} asked about lower bounds (for general testers) in the present setting, and this
question remains open. We also do not resolve Conjecture~1.8 of \cite{Fer23}, which asks for a
directed $(L^1, \ell^2)$-Poincaré inequality for Lipschitz $f : \closedInt^d \to \bR$, \ie an
analogue to the inequalities of Talagrand and Bobkov \& Houdré, but rather obtain our $\widetilde
O(\sqrt{d} M^2/\epsilon^2)$ tester via the $(L^2, \ell^2)$ inequality.

\parspace
\paragraph*{Comparison with the path tester.}
As explained in \cref{section:tester-details}, our directional derivative tester is essentially the
natural continuous analogue of the path tester of \cite{KMS18}, where taking edge $i$ in the path
tester roughly corresponds to letting $v_i = 1$ in the directional derivative query $v \cdot \grad
f(x)$. This conceptual connection is intriguing, and we do not know whether there is a formal
connection between the continuous and discrete problems that could explain it.

\parspace
\paragraph*{Nonlinear discrete Laplacian.}
In the study of spectral theory for directed graphs, \cite{Yos16} introduced a notion of nonlinear
Laplacian operator, with an associated heat equation, and applied these ideas to network analysis
and spectral clustering of directed graphs. Our directed heat equation may be seen as a closely
related continuous counterpart to the dynamical process studied in \cite{Yos16}, and further
exploring this connection is an exciting direction for future work.

\parspace
\paragraph*{Further applications?}
Finally, we ask whether other problems in property testing---particularly in the case of continuous
domain---can benefit from the techniques and ideas in the proof of
\cref{thm:directed-poincare-inequality}. In property testing, the idea of comparing the input object
$f$ to some ``ideal'' object $f^*$ that satisfies the property is very natural, and in this paper we
offer techniques from partial differential equations and optimal transport as useful tools for
reasoning about this comparison smoothly in time when the problem is continuous in nature.
Therefore, it is plausible that these tools may have something to say about other property testing
problems of continuous nature.

\paragraph*{Organization.} The rest of the paper is organized as follows. In
\cref{section:preliminaries}, we introduce definitions and conventions used throughout the paper. In
\cref{section:upper-bound}, we give our monotonicity tester and prove
\cref{thm:l2-monotonicity-tester}. In \cref{section:lower-bound}, we prove the query complexity
lower bound for derivative-pair testers. Finally,
\cref{section:directed-heat-semigroup,section:transport-energy-inequality,section:poincare-from-transport-energy}
establish \cref{thm:directed-poincare-inequality}, following the outline given in the proof overview.

\section{Preliminaries}
\label{section:preliminaries}

In this paper, $\bN$ denotes the set of strictly positive integers $\{1, 2, \dotsc\}$. Throughout
the paper, $d \in \bN$ is an arbitrary natural number indicating the dimension of the ambient space
$\bR^d$ unless otherwise specified. For $m \in \bN$, we write $[m]$ to denote the set $\{i \in \bN :
i \le m\}$. For any $x \in \bR$, we write $x^+$ for $\max\{0, x\}$ and $x^-$ for $\max\{0, -x\}$,
and we extend this notation to vectors $u \in \bR^d$ in the natural way: $u^+, u^- \in \bR^d$ are
given by $u^+_i \define (u_i)^+$ and $u^-_i \define (u_i)^-$. For real numbers $a$ and $b$, we use
the notation $a \land b \define \min(a, b)$ and $a \lor b \define \max(a, b)$.

For a vector $u \in \bR^d$, we let $\supp(u) \subseteq [d]$ denote the set of indices where $u$ is
nonzero, and write $\|u\|_0 \define \abs*{\supp(u)}$. For two points $x, y \in \bR^d$, we write $x
\preceq y$ if $x_i \le y_i$ for every $i \in [d]$, and $y \succeq x$ if $x \preceq y$.

We denote the closure of a set $D \subset \bR^d$ by $\overline D$. For a measure space $(\Omega,
\Sigma, \mu)$ and measurable function $f : \Omega \to \bR$, we write $\int_\Omega f \odif \mu$ for
the Lebesgue integral of $f$ over this space when it exists. Then for $1 \le p < +\infty$, the space
$L^p(\Omega)$ is the set of measurable functions $f$ such that $\abs{f}^p$ is Lebesgue integrable,
\ie $\int_\Omega \abs{f}^p \odif \mu < +\infty$, and we write the $L^p$ norm of such functions as
$\|f\|_{L^p(\Omega)} = \left(\int_\Omega \abs{f}^p \odif \mu\right)^{1/p}$. We usually write $\odif
x$ for the Lebesgue measure on $\bR^d$ in the context of integration, \ie for Lebesgue measurable
set $\Omega \subset \bR^d$, we write $\int_\Omega f \odif x$ for the Lebesgue integral of integrable
$f$. When we need to refer to the Lebesgue measure of a set $\Omega$ explicitly, we write
$\cL(\Omega)$.

Throughout the paper, all measures on $\bR^d$ are Borel measures. We say that measure $\mu$ over
$\Omega \subset \bR^d$ is \emph{absolutely continuous} if it is absolutely continuous with respect
to the Lebesgue measure, \ie if there exists a measurable function $f : \Omega \to \bR_{\ge 0}$
satisfying $\odif \mu = f \odif x$. In this case, $f$ is called the Radon-Nikodym derivative, or
\emph{density} of $\mu$.

Given a set $\Omega \subset \bR^d$ and $M > 0$, we say $f : \Omega \to \bR$ is \emph{$M$-Lipschitz}
if $\abs*{f(x) - f(y)} \le M |x-y|$ for all $x, y \in \Omega$. The \emph{Lipschitz constant} of $f$
is the smallest $M$ for which $f$ is $M$-Lipschitz. We say that $f$ is Lipschitz if it is
$M$-Lipschitz for any $M > 0$.

We use the notation $a \lequestion b$, $a \eqquestion b$, etc.\ within a proof to denote
(in)equalities that have not yet been established.

\paragraph*{Notation for directed partial derivatives and gradients.} Let $\Omega \subset \bR^d$ be
an open set, and let $f : \Omega \to \bR$ be Lipschitz. Then by Rademacher's theorem $f$ is
differentiable almost everywhere in $\Omega$. For each $x \in \Omega$ where $f$ is differentiable,
let $\grad f(x) = (\partial_1 f(x), \dotsc, \partial_d f(x))$ denote its gradient, where $\partial_i
f(x)$ is the partial derivative of $f$ along the $i$-th coordinate at $x$. Then, let $\partial^-_i
\define 0 \land \partial_i$, \ie for every $x$ where $f$ is differentiable we have $\partial^-_i
f(x) = -\left( \partial_i f(x) \right)^-$. We call $\partial^-_i$ the \emph{directed partial
derivative} operator in direction $i$. Then we define the \emph{directed gradient} operator by
$\grad^- \define (\partial^-_1, \dotsc, \partial^-_d)$, again defined on every $x$ where $f$ is
differentiable. Note that $\grad^- f \preceq 0$. We also similarly define $\partial^+_i \define 0
\lor \partial_i$ and $\grad^+_i \define (\partial^+_1, \dotsc, \partial^+_d)$.

\section{Algorithm and upper bound}
\label{section:upper-bound}

\begin{algorithm}[t]
    \caption{$L^2$ monotonicity tester for Lipschitz functions using directional derivative queries}
    \hspace*{\algorithmicindent}
        \textbf{Input:} Directional derivative oracle access to $M$-Lipschitz function $f :
        \closedInt^d \to \bR$. \\
    \hspace*{\algorithmicindent}
        \textbf{Output:} Accept if $f$ is monotone, reject if $\dist^\mono_2(f) \ge \epsilon$.
    \begin{algorithmic}
        \Procedure{DirectionalDerivativeTester}{$f, d, M, \epsilon$}
            \Repeat{$\Theta \left( \frac{\sqrt{d} M^2}{\epsilon^2} \log d \right)$}
                \State Sample $x \in \closedInt^d$ uniformly at random.
                \State Sample $v \in \zo^d$ from distribution $\cD$ given by
                \cref{def:distribution}.
                \State \textbf{Reject} if $\grad f(x) \cdot v < 0$.
            \EndRepeat
            \State \textbf{Accept}.
        \EndProcedure
    \end{algorithmic}
    \label{alg:tester}
\end{algorithm}

\begin{definition}[Distribution of direction vector]
    \label{def:distribution}
    For each $p \in [0,1]$, we define distribution $\cD_p$ over $\zo^d$ as follows. To sample
    $\bm{v} \sim \cD_p$,
    \begin{enumerate}
        \item Sample $\bm{x}_i \sim \Ber(p)$ independently for each $i \in [d]$, where $\Ber(p)$ is
            the Bernoulli distribution;
        \item Produce $\bm{v} = \sum_{i=1}^d \bm{x}_i e_i$, where $e_i$ denotes the $i$-th standard
            basis vector.
    \end{enumerate}
    Then, we define the distribution $\cD$ over $\zo^d$ as follows. Let $P \define \left\{1,
    \frac{1}{2}, \frac{1}{4}, \dotsc, \frac{1}{2^{\ceil{\log_2(4d)}}}\right\}$. To sample $\bm{v}
    \sim \cD$, we first sample $\bm{p}$ uniformly at random from $P$, and then sample $\bm{v}$ from
    $\cD_{\bm{p}}$.
\end{definition}

\begin{theorem}[Refinement of \cref{thm:l2-monotonicity-tester}]
    \label{thm:l2-monotonicity-tester-refined}
    \cref{alg:tester} is a nonadaptive, directional derivative $L^2$ monotonicity tester for
    $M$-Lipschitz functions $f : \closedInt^d \to \bR$ with query complexity $O\left( \frac{\sqrt{d}
    M^2}{\epsilon^2} \log d \right)$ and one-sided error.
\end{theorem}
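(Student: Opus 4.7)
The plan is to separately address the one-sided error and soundness of \cref{alg:tester}. For one-sided error, if $f$ is monotone, then by Rademacher's theorem $f$ is differentiable almost everywhere, and at every such point $\grad f(x) \succeq 0$; hence $v \cdot \grad f(x) \geq 0$ for any $v \in \zo^d$, and the tester never rejects.

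For soundness, I plan to combine \cref{thm:directed-poincare-inequality} with a key lemma on the distribution $\cD$ asserting: for every $g \in \bR^d$,
\[
    \Pru{v \sim \cD}{v \cdot g < 0} \gtrsim \frac{\|g^-\|_2^2}{\sqrt{d}\,\|g\|_2^2\, \log d}.
\]
Given this lemma, I substitute $g = \grad f(x)$, use $\|\grad f(x)\|_2 \leq M$ (from the Lipschitz hypothesis), take expectation over $x$, and apply \cref{thm:directed-poincare-inequality} (which gives $\Ex{\|\grad^- f(x)\|_2^2} \gtrsim \epsilon^2$ whenever $\dist^\mono_2(f) \geq \epsilon$). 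The per-iteration rejection probability is thus $\gtrsim \epsilon^2 / (\sqrt{d}\,M^2\,\log d)$, so the standard amplification argument over the $T = \Theta(\sqrt{d}\,M^2/\epsilon^2 \log d)$ iterations pushes the overall rejection probability above $2/3$.

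The heart of the argument is the key lemma, which I prove by a Markov-plus-bucketing calculation. For each $p \in P$, set $\tau_p = 2p\|g^+\|_1$ (handling $\|g^+\|_1 = 0$ separately) and let $B_p = \{i : g_i \leq -\tau_p\}$. Split $v \sim \cD_p$ via two events: $A = \{v \cdot g^+ < \tau_p/2\}$, determined by the coordinates of $v$ outside $B_p$; and $C = \{\supp(v) \cap B_p \neq \emptyset\}$, determined by the coordinates inside $B_p$. These events depend on disjoint coordinate sets and are therefore independent; Markov's inequality gives $\Pr{A} \geq 1/2$, while a direct computation gives $\Pr{C} \gtrsim \min\{p|B_p|, 1\}$. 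On $A \cap C$, the positive contribution to $v \cdot g$ is less than $\tau_p/2$ while at least one index $i \in B_p$ with $v_i = 1$ contributes at most $-\tau_p$, so $v \cdot g < 0$. This yields $\Pru{v \sim \cD_p}{v \cdot g < 0} \gtrsim p|B_p|$.

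To extract the full lemma, I average over the $O(\log d)$ choices of $p \in P$: since $\tau_p$ is geometric in $p$, the sum $\sum_p p|B_p| = \sum_p \tau_p |B_p|/(2\|g^+\|_1)$ is essentially a Riemann sum for $\|g^-\|_1/\|g^+\|_1$. Finally, the inequalities $\|g^-\|_1 \geq \|g^-\|_2^2/\|g^-\|_\infty$ and $\|g^-\|_\infty \|g^+\|_1 \leq \|g\|_2 \cdot \sqrt{d}\|g\|_2$ combine to produce the desired bound. The main obstacle I anticipate is careful bookkeeping of boundary cases in the Riemann sum---ensuring that the geometric mesh $P$ covers the relevant range of thresholds and that contributions from outside this range are negligible---and the special case $\|g^+\|_1 = 0$, where $g \preceq 0$ and the argument simplifies considerably.
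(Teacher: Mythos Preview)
Your high-level strategy---one-sided error via Rademacher, then soundness by combining \cref{thm:directed-poincare-inequality} with a key lemma on $\cD$ and amplifying---matches the paper exactly. The difference is entirely in the proof of the key lemma, and there your argument has a genuine gap.

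The problem is the Riemann-sum step. With $\tau_p = 2p\|g^+\|_1$, the smallest threshold is $\tau_{p_{\min}} \asymp \|g^+\|_1/d$, so any coordinate with $(g^-)_i$ below this level never appears in any $B_p$. Take $g$ with a single entry $+1$ and $d-1$ entries equal to $-1/(10d)$: then $\|g^+\|_1 = 1$, $\tau_{p_{\min}} \ge 1/(8d) > 1/(10d)$, so \emph{every} $B_p$ is empty and your bound gives zero. Yet the target $\|g^-\|_2^2/(\sqrt{d}\|g\|_2^2\log d) \asymp 1/(d^{3/2}\log d)$ is positive (and in fact the true rejection probability under $\cD_{p_{\min}}$ is $\Omega(1)$ here, since with constant probability $v$ hits exactly one negative index and misses the positive one). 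So the claim that ``contributions from outside the mesh are negligible'' is false in this regime: when $\|g^-\|_1 \ll \|g^+\|_1$, essentially all of $\|g^-\|_2^2$ can sit below $\tau_{p_{\min}}$. This is not a bookkeeping issue; it is a structural limitation of tying $\tau_p$ to $\|g^+\|_1$ alone. (There is also a minor slip: Markov gives $\Pr{v\cdot g^+ < \tau_p} \ge 1/2$, not $\Pr{v\cdot g^+ < \tau_p/2} \ge 1/2$, since $\bE[v\cdot g^+] = \tau_p/2$; this is easily fixed but the main gap is not.)

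The paper's proof of the key lemma (\cref{lemma:good-vector-distribution}) avoids this by proceeding by contradiction and, crucially, choosing $p$ \emph{adaptively} as a function of the threshold $\tau$ and the vector: for each $\tau$ it takes $p(\tau)$ to be the smallest $p\in P$ with $p \ge t/\|a_\tau\|_0$, where $t \asymp \|g^-\|_2^2/(\sqrt d\,\|g\|_2^2)$. This adaptive coupling is what lets the integration over $\tau$ recover the full $\|g^-\|_2^2$ and produce the contradiction $\|g^-\|_2 < \|g^-\|_\infty$. If you want to rescue a direct (non-contradiction) argument, you would need either a different, data-dependent relation between $p$ and $\tau$, or a separate treatment of the regime where $\|g^-\|_\infty \lesssim \|g^+\|_1/d$.
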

\begin{proof}
    First note that, since $f$ is Lipschitz and hence differentiable almost everywhere in $(0,1)^d$
    by Rademacher's theorem, with probability $1$ the algorithm only samples points $x$ at which $f$
    is differentiable. Moreover, we have $\|\grad f\|_2 \le M$ almost everywhere since $f$ is
    $M$-Lipschitz.

    The algorithm clearly accepts any monotone function; indeed if $f$ is monotone, then $\grad f
    \succeq 0$ and, since $\cD$ is supported on $\zo^d$, we have $\grad f \cdot \bm{v} \ge 0$ with
    probability $1$. Now, suppose $\dist^\mono_2(f) \ge \epsilon$. Combining
    \cref{lemma:good-vector-distribution,thm:directed-poincare-inequality}, we obtain that the
    probability that any single iteration of the tester rejects is
    \begin{align*}
        \Pr{\text{Iteration rejects}}
        &= \Pru{\substack{\bm{x} \in \closedInt^d \\ \bm{v} \sim \cD}}{
            \grad f(\bm{x}) \cdot \bm{v} < 0} \\
        &\ge \Pru{\substack{\bm{x} \in \closedInt^d \\ \bm{v} \sim \cD}}{
            \grad f(\bm{x}) \cdot \bm{v} < -\frac{\delta}{d} \|\grad^- f(\bm{x})\|_2} \\
        &= \int_{\closedInt^d} \Pru{\bm{v} \sim \cD}{
            \grad f(x) \cdot \bm{v} < -\frac{\delta}{d} \|\grad^- f(x)\|_2} \odif x \\
        &\ge \int_{\closedInt^d} c \cdot \frac{\|\grad^- f(x)\|_2^2}{\sqrt{d} \log(d) M^2} \odif x
            & \text{(\cref{lemma:good-vector-distribution}, $\|\grad f\|_2 \le M$)} \\
        &= \frac{c}{M^2 \sqrt{d} \log d} \int_{\closedInt^d} \|\grad^- f(x)\|_2^2 \odif x \\
        &\ge \frac{c}{M^2 \sqrt{d} \log d} \cdot \frac{1}{C} \dist^\mono_2(f)^2
            & \text{(\cref{thm:directed-poincare-inequality})} \\
        &\ge \frac{c}{C} \cdot \frac{\epsilon^2}{M^2 \sqrt{d} \log d} \,,
    \end{align*}
    where $\delta, c$ are the constants from \cref{lemma:good-vector-distribution} and $C$ is the
    constant from \cref{thm:directed-poincare-inequality}. Thus $\Theta\left( \frac{\sqrt{d}
    M^2}{\epsilon^2} \log d \right)$ iterations suffice to reject with probability at least $2/3$.
\end{proof}

\begin{remark}
    \label{remark:robust-tester}
    A slight modification of the proof of \cref{thm:l2-monotonicity-tester-refined} also shows that,
    if we replace the condition $\grad f(x) \cdot v < 0$ in \cref{alg:tester} with the more
    demanding condition $\grad f(x) \cdot v < -\frac{K \epsilon}{d}$ for some universal constant $K
    > 0$, then we still obtain a tester with the same guarantees; in particular, \cref{alg:tester}
    does not rely on arbitrary precision. Indeed, \cref{thm:directed-poincare-inequality} gives that
    $\int \|\grad^- f(x)\|_2^2 \odif x \ge \tfrac{1}{C} \epsilon^2$ when $f$ is $\epsilon$-far from
    monotone, but the points $x$ for which $\|\grad^- f(x)\|_2^2 \le \frac{\epsilon^2}{2C}$ can only
    contribute at most $\frac{\epsilon^2}{2C}$ to the integral. Therefore the points satisfying
    $\|\grad^- f(x)\|_2^2 > \frac{\epsilon^2}{2C}$ must contribute at least $\frac{\epsilon^2}{2C}$
    and, at each such point, \cref{lemma:good-vector-distribution} guarantees that $\grad^- f(x)
    \cdot \bm{v} < -\frac{\delta}{d} \|\grad^- f(x)\|_2 < -\frac{\delta}{d} \cdot
    \frac{\epsilon}{\sqrt{2C}}$ with at least the probability given in the lemma. Therefore a
    similar calculation to that of \cref{thm:l2-monotonicity-tester-refined} shows that this
    modified tester also rejects with sufficient probability.
\end{remark}

The following lemma, which was used in the proof of \cref{thm:l2-monotonicity-tester-refined},
quantifies the ability of the tester to detect negative entries in the gradient $\grad f(x)$ via
directional derivative queries $v \cdot \grad f(x)$. This setup may be seen, informally, as a signed
version of the classic \emph{group testing} problem \cite{Dor43}.

\begin{lemma}[Detecting negative entries with subset sums]
    \label{lemma:good-vector-distribution}
    There exist universal constants $c, \delta > 0$ such that the distribution $\cD$ from
    \cref{def:distribution} has the following property: for any nonzero $u \in \bR^d$, we have
    \[
        \Pru{\bm{v} \sim \cD}{u \cdot \bm{v} < -\frac{\delta}{d} \|u^-\|_2}
        \ge c \cdot \frac{\|u^-\|_2^2}{\sqrt{d} \log(d) \cdot \|u\|_2^2} \,.
    \]
\end{lemma}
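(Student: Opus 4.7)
The plan is to fix $\bm p = p \in P$ in the definition of $\cD$ and, for a carefully chosen $p^*$, lower bound the conditional probability of $\{u \cdot \bm v < -\frac{\delta}{d}\|u^-\|_2\}$. Since $\Pr{\bm p = p^*} = 1/|P| = \Omega(1/\log d)$, this accounts for one of the logarithmic factors in the denominator, and marginalizing then yields the claim.

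The first step is a dyadic decomposition of $u^-$ by magnitude. Because $\|u^-\|_\infty^2 \ge \|u^-\|_2^2/d$, only $O(\log d)$ dyadic scales of magnitude contribute appreciably to $\|u^-\|_2^2$, so by pigeonhole there is a \emph{dominant threshold} $\tau^* > 0$ such that the set $S \define \{i : u_i \le -\tau^*\}$ satisfies $|S|(\tau^*)^2 \gtrsim \|u^-\|_2^2/\log d$; in particular $\tau^* \gtrsim \|u^-\|_2/\sqrt{d \log d}$ because $|S| \le d$. I then choose $p^* \in P$ to be the element nearest to $1/|S|$, so that $p^* |S| \in [1/2, 1]$; such $p^*$ exists because $|S| \le d$ and $|P| = \ceil{\log_2(4d)} + 1$.

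Conditional on $\bm p = p^*$, I would decompose $u \cdot \bm v = X^-_S + X^-_{\overline S} + X^+$, where $X^-_S$ is the contribution from $S$, $X^-_{\overline S}$ is from indices $i$ with $u_i \in (-\tau^*, 0)$, and $X^+ = \sum_{i : u_i > 0} u_i \bm v_i$. The event that at least one $i \in S$ is sampled has probability $\ge 1 - e^{-1/2}$ by standard Bernoulli estimates, and on this event $X^-_S \le -\tau^*$; the remainder $X^-_{\overline S}$ is nonpositive and only helps. It remains to control $X^+$: using $\bE[X^+] = p^* \|u^+\|_1$ and $\Var{X^+} \le p^* \|u^+\|_2^2$, I expect to combine Chebyshev's inequality with the arithmetic consequences of the dyadic decomposition to show that $\Pr{X^+ < \tau^*/2}$ is at least $\Omega(\|u^-\|_2^2/(\sqrt d \|u\|_2^2))$. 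On the intersection of the good events, $u \cdot \bm v \le -\tau^*/2 \le -\frac{\delta}{d}\|u^-\|_2$ for any sufficiently small universal $\delta$.

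The main obstacle is this last step, since in the borderline regime exemplified by $u = -e_1 + d^{-1/2}\sum_{i > 1} e_i$ with $p^* \sim 1/\sqrt d$, the mean $\bE[X^+]$ is comparable to $\tau^*$ itself, so naive Markov and Chebyshev bounds on $X^+$ are too weak. In that case I would invoke a CLT/median-style argument, using the fact that the median of $\mathrm{Bin}(n, p)$ lies within $O(1)$ of its mean to obtain $\Pr{X^+ < \tau^*} \ge \tfrac{1}{2} - o(1)$; more generally, a case analysis based on the relative sizes of $|S|\tau^*$, $\|u^+\|_1$, and $\|u^+\|_2$ should show that the factor $\|u^-\|_2^2/(\sqrt d \|u\|_2^2)$ is always recoverable---either as a constant probability (when $\bE[X^+] \ll \tau^*$) times the available slack, or, in the boundary regime, as the $1/\sqrt d$-scale probability that an essentially Gaussian $X^+$ falls below its mean. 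Combining this conditional bound with the $1 - e^{-1/2}$ constant from event (i) and the $\Omega(1/\log d)$ factor from choosing $p^* \in P$ completes the proof, with the universal $c$ and $\delta$ determined by the constants arising in the dyadic decomposition and the concentration estimates.
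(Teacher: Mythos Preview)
Your choice $p^*\approx 1/|S|$ is the crux of the gap. Take your own example $u=-e_1+d^{-1/2}\sum_{i>1}e_i$: here $|S|=1$, so your rule gives $p^*=1$ (not $p^*\sim 1/\sqrt d$ as you wrote), whence $\bm v=(1,\dots,1)$ almost surely and $u\cdot\bm v\approx\sqrt d>0$. More generally, with $p^*\approx 1/|S|$ one has $\bE[X^+]=\|u^+\|_1/|S|$, which can exceed $\tau^*$ by an arbitrarily large factor; in that regime the \emph{median} of $X^+$ already sits far above $\tau^*$, so no CLT or median argument can produce $\Pr{X^+<\tau^*/2}\gtrsim \|u^-\|_2^2/(\sqrt d\,\|u\|_2^2)$.

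The correct trade-off is the opposite one: take $p\approx t/|S|$ for $t\asymp\|u^-\|_2^2/(\sqrt d\,\|u\|_2^2)$, so that hitting $S$ already has probability $\Theta(t)$, and then $\bE[X^+]\le p\sqrt d\,\|u^+\|_2\lesssim \|u^-\|_2/|S|$; a single Markov bound now gives $\Pr{X^+<\tau^*/2}\ge 1/2$ \emph{provided} $|S|\,\tau^*\gtrsim\|u^-\|_2$. Your pigeonhole step only delivers $|S|(\tau^*)^2\gtrsim\|u^-\|_2^2/\log d$, hence $|S|\,\tau^*\gtrsim\|u^-\|_2/\log d$, which would cost an extra logarithm. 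The paper avoids both pitfalls at once: it assumes for contradiction that the bound fails at every $p\in P$, derives (with the small-$p$ choice above) the inequality $\tau\,\|a_\tau\|_0<20C\,\|u^-\|_2$ for \emph{every} threshold $\tau$, and then integrates over $\tau$ to obtain $\tfrac12\|u^-\|_2^2<20C\,\|u^-\|_2\|u^-\|_\infty$, a contradiction for $C=1/40$. The integration over all thresholds is precisely what replaces your pigeonhole without the $\log d$ loss.
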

\begin{proof}
    We may assume that $u$ contains at least one strictly negative entry, since otherwise
    $\|u^-\|_2^2 = 0$ and the claim is trivial. Let $\delta \define 1/100$, and define $\delta_d
    \define \delta / d$ for convenience.

    Recall the distributions $\cD$ and $\cD_p$, as well as the set $P$, from
    \cref{def:distribution}. Let $t \define \frac{C \|u^-\|_2^2}{\sqrt{d} \cdot \|u\|_2^2}$, where
    we let $C \define 1/40$. Note that $0 < t < 1$. Letting $c_1 \define 1/10$, we will be done if
    we can show that there exists $p \in P$ such that
    \[
        \Pru{\bm{v} \sim \cD_p}{u \cdot \bm{v} < -\delta_d \|u^-\|_2} \gequestion c_1 t \,,
    \]
    Suppose for a contradiction that this is not the case, \ie that for every $p \in P$,
    \[
        \Pru{\bm{v} \sim \cD_p}{u \cdot \bm{v} < -\delta_d \|u^-\|_2} < c_1 t \,.
    \]
    For convenience of notation, let $a \define u^-$ and $b \define u^+$. Then for every $p \in P$,
    letting $\bm{z}, \bm{w} \sim \cD_p$ independently, we conclude that $u \cdot \bm{v}$ is
    distributed identically to $b \cdot \bm{w} - a \cdot \bm{z}$, and hence
    \[
        \Pru{\bm{z}, \bm{w} \sim \cD_p}{a \cdot \bm{z} > b \cdot \bm{w} + \delta_d \|a\|_2}
        < c_1 t \,.
    \]
    Fix any $\tau \in (2\delta_d \|a\|_2, \|a\|_\infty)$, where the interval is nonempty since
    $\|a\|_2 \le \|a\|_1 \le d \|a\|_\infty$, so that $2\delta_d \|a\|_2 = \frac{2 \|a\|_2}{100 d} <
    \|a\|_\infty$. Define $a_\tau \in \bR^d$ as the vector obtained from $a$ by only preserving
    entries that are strictly larger than $\tau$, and zeroing out entries that are at most
    $\tau$. Since the dot product $a \cdot \bm{z}$ only gets smaller if we omit some of its
    summands, we conclude that for every $p \in P$,
    \[
        \Pru{\bm{z}, \bm{w} \sim \cD_p}{a_\tau \cdot \bm{z} > b \cdot \bm{w} + \delta_d \|a\|_2}
        < c_1 t \,.
    \]
    Let $E$ denote the event that $\supp(\bm{z}) \cap \supp(a_\tau) \ne \emptyset$, so that
    $a_\tau \cdot \bm{z} > \tau$ when $E$ occurs and $a_\tau \cdot \bm{z} = 0$
    otherwise. Fix the smallest $p = p(\tau) \in P$ satisfying $p \ge
    \frac{t}{\|a_\tau\|_0}$, which must exist because $\|a_\tau\|_0 \ge 1$ by the choice
    of range for $\tau$ and $t \le 1$ as observed above, and therefore
    $\frac{t}{\|a_\tau\|_0} \le 1 \in P$. Then
    \[
        \Pru{\bm{z} \sim D_{p(\tau)}}{E}
        = 1 - (1-p(\tau))^{\|a_\tau\|_0}
        \ge 1 - e^{-p(\tau) \|a_\tau\|_0}
        \ge 1 - e^{-t}
        \ge \frac{t}{2} \,,
    \]
    the last inequality since $e^{-x} \le 1 - x/2$ for (say) $0 \le x \le 1$. Therefore
    \begin{align*}
        c_1 t
        &> \Pru{\bm{z}, \bm{w} \sim \cD_{p(\tau)}}{a_\tau \cdot \bm{z}
            > b \cdot \bm{w} + \delta_d \|a\|_2} \\
        &= \Pru{\bm{z} \sim D_{p(\tau)}}{E} \Pruc{\bm{z}, \bm{w} \sim \cD_{p(\tau)}}{
            a_\tau \cdot \bm{z} > b \cdot \bm{w} + \delta_d \|a\|_2}{E} \\
            &\qquad + \Pru{\bm{z} \sim D_{p(\tau)}}{\lnot E}
                \Pruc{\bm{z}, \bm{w} \sim \cD_{p(\tau)}}{
                a_\tau \cdot \bm{z} > b \cdot \bm{w} + \delta_d \|a\|_2}{\lnot E} \\
        &\ge \frac{t}{2} \Pru{\bm{w} \sim \cD_{p(\tau)}}{
            \tau > b \cdot \bm{w} + \delta_d \|a\|_2} \,.
    \end{align*}
    We conclude that
    \[
        \Pru{\bm{w} \sim \cD_{p(\tau)}}{b \cdot \bm{w} < \tau - \delta_d \|a\|_2} < 2 c_1
        \,.
    \]
    We now claim that $p(\tau) \le \frac{2t}{\|a_\tau\|_0}$. Indeed if this were not the
    case, then the choice of $p(\tau)$ would imply that $p_{\min} \define \min P$ satisfies
    $p_{\min} > \frac{2t}{\|a_\tau\|_0}$ with $p(\tau) = p_{\min} \le \frac{1}{4d}$ and
    hence, since $\tau > 2\delta_d \|a\|_2$,
    \[
        \Pru{\bm{w} \sim \cD_{p(\tau)}}{b \cdot \bm{w} < \tau - \delta_d \|a\|_2}
        \ge \Pru{\bm{w} \sim \cD_{p(\tau)}}{b \cdot \bm{w} = 0}
        \ge \Pru{\bm{w} \sim \cD_{p(\tau)}}{\bm{w} = 0}
        = (1 - p_{\min})^d
        \ge 1 - d \cdot p_{\min}
        \ge \frac{3}{4} \,,
    \]
    a contradiction. Thus $p(\tau) \le \frac{2t}{\|a_\tau\|_0}$. Now, the definition of
    $\cD_{p(\tau)}$, linearity of expectation and the Cauchy-Schwarz inequality yield
    \[
        \Exu{\bm{w} \sim \cD_{p(\tau)}}{b \cdot \bm{w}}
        = \sum_{i=1}^d \Pr{\bm{w}_i = 1} b_i
        = p(\tau) \cdot \|b\|_1
        \le p(\tau) \sqrt{d} \|b\|_2 \,,
    \]
    so by Markov's inequality,
    \[
        \Pru{\bm{w} \sim \cD_{p(\tau)}}{b \cdot \bm{w} \ge 10 p(\tau) \sqrt{d} \|b\|_2}
        \le \frac{1}{10} \,.
    \]
    Now, the union bound implies that
    \[
        \Pru{\bm{w} \sim \cD_{p(\tau)}}
            {\tau - \delta_d \|a\|_2 \le b \cdot \bm{w} < 10 p(\tau) \sqrt{d} \|b\|_2}
        \ge 1 - 2c_1 - \frac{1}{10}
        > 0 \,,
    \]
    so there exists $\bm{w}$ satisfying $\tau - \delta_d \|a\|_2 \le b \cdot \bm{w} < 10
    p(\tau) \sqrt{d} \|b\|_2$, and hence $\tau - \delta_d \|a\|_2 < 10 p(\tau) \sqrt{d}
    \|b\|_2$. Now, using the fact that $p(\tau) \le \frac{2t}{\|a_\tau\|_0}$, recalling
    that $t = \frac{C \|u^-\|_2^2}{\sqrt{d} \cdot \|u\|_2^2}$, observing that $\|u\|_2^2 =
    \|u^-\|_2^2 + \|u^+\|_2^2$ while $b = u^+$ by definition, and using the inequality $2xy \le x^2
    + y^2$, we obtain
    \[
        \tau - \delta_d \|a\|_2
        < \frac{20 \sqrt{d} \|b\|_2 t}{\|a_\tau\|_0}
        = \frac{20C \sqrt{d} \|u^+\|_2 \|u^-\|_2^2}
               {\|a_\tau\|_0 \cdot \sqrt{d} \cdot \|u\|_2^2}
        \le \frac{10C \cdot \|u^-\|_2}
                 {\|a_\tau\|_0} \,.
    \]
    In summary, recalling that $a = u^-$ by definition, for all $\tau \in (2\delta_d \|a\|_2,
    \|a\|_\infty)$ we have
    \[
        \frac{\tau}{2} \|a_\tau\|_0
        \le (\tau - \delta_d \|a\|_2) \|a_\tau\|_0
        < 10C \|a\|_2 \,,
    \]
    so $\tau \|a_\tau\|_0 < 20C \|a\|_2$ for all $\tau \in (2\delta_d \|a\|_2,
    \|a\|_\infty)$. On the other hand, if $\tau \in (0, 2\delta_d \|a\|_2]$ then
    \[
        \tau \|a_\tau\|_0
        \le 2\delta_d \|a\|_2 \cdot d
        = 2\delta \|a\|_2
        = \frac{1}{50} \|a\|_2
        < \frac{1}{2} \|a\|_2
        = 20C \|a\|_2 \,,
    \]
    so in fact $\tau \|a_\tau\|_0 < 20C \|a\|_2$ for all $\tau \in (0, \|a\|_\infty)$.
    Integrating over $\tau$ and using Tonelli's theorem, we conclude that
    \begin{align*}
        \|a\|_\infty \cdot 20C \|a\|_2
        &= \int_{(0, \|a\|_\infty)} 20C \|a\|_2 \odif \tau
        > \int_{(0, \|a\|_\infty)} \tau \|a_\tau\|_0 \odif \tau
        = \int_{(0, \|a\|_\infty)} \sum_{i=1}^d
                \left( \tau \cdot \ind{a_i > \tau} \right) \odif \tau \\
        &= \sum_{i=1}^d \int_{(0, \|a\|_\infty)}
                \left( \tau \cdot \ind{a_i > \tau} \right) \odif \tau
        = \sum_{i=1}^d \int_{(0, a_i)} \tau \odif \tau
        = \sum_{i=1}^d \frac{a_i^2}{2}
        = \frac{\|a\|_2^2}{2} \,.
    \end{align*}
    Hence we obtain that $\|a\|_2 < 40C \cdot \|a\|_\infty = \|a\|_\infty$, a contradiction as
    desired.
\end{proof}

\section{Lower bound}
\label{section:lower-bound}

\begin{definition}
    A \emph{derivative-pair tester} for monotonicity of Lipschitz functions $f : \closedInt^d \to
    \bR$ is described by a distribution $\cD$ over \emph{pair tests} and \emph{directional
    derivative tests}, where
    \begin{enumerate}
        \item a pair test is an operation that performs a pair of value queries $f(x), f(y)$ for $x
            \preceq y$, and rejects if and only if $f(x) > f(y)$; and
        \item a directional derivative test is an operation that performs a directional derivative
            query at point $x$ and direction $v \succeq 0$, and rejects if and only if $\grad f
            \cdot v < 0$.
    \end{enumerate}
    The tester independently samples $m$ tests from $\cD$, rejects if any of these tests rejects,
    and accepts otherwise. By definition, every derivative-pair tester is nonadaptive and has
    one-sided error.
\end{definition}

\begin{theorem}
    \label{res:lower-bound}
    Every $L^1$ derivative-pair tester for monotonicity of $M$-Lipschitz functions $f : \closedInt^d
    \to \bR$ must have query complexity $\Omega\left( \frac{\sqrt{d} M}{\epsilon} \right)$.
\end{theorem}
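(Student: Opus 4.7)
My plan is to prove the lower bound via Yao's minimax principle: instead of reasoning about an arbitrary derivative-pair tester, I would exhibit a distribution $\nu$ over $M$-Lipschitz NO functions (each at least $\epsilon$-far from monotone in $L^1$) such that every individual pair test and every individual directional derivative test rejects a random $f \sim \nu$ with probability at most $O(\epsilon/(\sqrt{d}M))$. Since a one-sided tester must find at least one rejecting test on any NO instance, a union bound over the $m$ tests performed then forces $m = \Omega(\sqrt{d}M/\epsilon)$.

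For the distribution $\nu$, the natural first candidate is the family of random linear functions $f_\sigma(x) = (M/\sqrt{d})\, \sigma \cdot x$, indexed by sign vectors $\sigma \in \{\pm 1\}^d$ having exactly $k$ entries equal to $-1$ in a uniformly random position, where $k$ is chosen roughly as $k = \lceil c\, d\epsilon^2/M^2 \rceil$ for an appropriate constant $c > 0$. Each $f_\sigma$ is $M$-Lipschitz since $\|\grad f_\sigma\|_2 = M$, and an elementary Irwin--Hall / central-limit calculation applied to $\sum_{i \in S^-} x_i$ shows that the $L^1$ distance from $f_\sigma$ to the nearest monotone function is $\Theta(M \sqrt{k/d}) = \Theta(\epsilon)$, making each $f_\sigma$ appropriately far from monotone.

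The key observation is that on this family, both kinds of tests reduce to the same event: a directional derivative test $(x, v)$ rejects iff $v \cdot \sigma < 0$, while a pair test $(x,y)$ rejects iff $(y-x) \cdot \sigma < 0$; writing $w \succeq 0$ for either $v$ or $y-x$, in both cases the rejection event is $w \cdot \sigma < 0$, equivalent to $\sum_{i \in S^-} w_i > \|w\|_1 / 2$, where $S^-$ is the uniformly random $k$-subset of negative entries of $\sigma$. Markov's inequality applied to the nonnegative random fraction $\sum_{i \in S^-} w_i / \|w\|_1$ (which has expectation $k/d$) gives rejection probability at most $2k/d = O(\epsilon^2/M^2)$, and this bound is essentially tight (achieved by $w = e_i$). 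This yields a lower bound of $\Omega(M^2/\epsilon^2)$, which in turn implies the claimed $\Omega(\sqrt{d}M/\epsilon)$ whenever $\epsilon \le M/\sqrt{d}$.

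The main obstacle, as I see it, is the complementary regime $\epsilon > M/\sqrt{d}$, where the linear construction degrades because the required $k$ would exceed $d$. In this regime I would either switch to a qualitatively different family of hard instances---for instance, essentially-monotone functions carrying a small non-monotone ``dip'' placed along a random direction, calibrated so that no single nonnegative $w$ witnesses it with probability more than $O(1/\sqrt{d})$---or invoke the known $\Omega(\sqrt{d})$ discrete monotonicity-testing lower bound from the Boolean hypercube literature via a Lipschitz interpolation embedding, and then stitch the two ranges together. The clean $w \mapsto \sigma \cdot w$ reduction and Markov argument make the small-$\epsilon$ case straightforward; carrying the $\sqrt{d}M/\epsilon$ dependence through the large-$\epsilon$ regime is the step where I expect the real work to lie.
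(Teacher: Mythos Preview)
Your regime analysis is inverted, and as a result the construction never actually delivers $\Omega(\sqrt{d}M/\epsilon)$ except at the single scale $\epsilon \asymp M/\sqrt{d}$. The Markov bound gives rejection probability $\le 2k/d$, so the query lower bound is $\Omega(d/k)$. For this to equal $\Omega(M^2/\epsilon^2)$ you need $k \asymp d\epsilon^2/M^2$, but when $\epsilon \ll M/\sqrt{d}$ that quantity is $\ll 1$ and the ceiling forces $k=1$; you then get only $\Omega(d)$, which is strictly weaker than the target $\sqrt{d}M/\epsilon \gg d$. Conversely, in the regime $\epsilon \gtrsim M/\sqrt{d}$ where your $k$ is genuinely $\Theta(d\epsilon^2/M^2)$, you have $M^2/\epsilon^2 \le \sqrt{d}M/\epsilon$, so the bound you obtain is again too weak. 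So the ``main obstacle'' is not the large-$\epsilon$ regime you flag; it is that a \emph{symmetric} construction with all slopes $\pm M/\sqrt{d}$ cannot separate the distance parameter $\epsilon$ from the Lipschitz scale $M/\sqrt{d}$, and hence cannot beat $\Omega(d)$.

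The paper's fix is to make the slopes \emph{asymmetric}: a single negative direction $i$ with slope $-\epsilon$ (so the function is exactly $\Theta(\epsilon)$-far from monotone along that line, by a one-dimensional average), while the remaining $d-1$ directions carry slope $+M/\sqrt{d}$ (maximal under the Lipschitz constraint). The rejection condition for a direction $v \succeq 0$ becomes $\epsilon\, v_i > (M/\sqrt{d})\sum_{j\neq i} v_j$, and an elementary argument shows that any $v$ can satisfy this for at most $\sqrt{d}\,\epsilon/M$ coordinates $i$ (after pruning coordinates that never help, look at the smallest surviving entry). Hence the rejection probability over a uniform $i$ is at most $\epsilon/(\sqrt{d}M)$, uniformly in $\epsilon \le M$, with no case split. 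Your Irwin--Hall distance estimate and the vague large-$\epsilon$ patches are then unnecessary: the right hard distribution already has $k=1$, and it is the asymmetry between the $-\epsilon$ slope and the $+M/\sqrt{d}$ slopes that produces the $\sqrt{d}M/\epsilon$ dependence.
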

\begin{proof}
    It suffices to give a distribution over $O(M)$-Lipschitz functions $\bm{f}$ that are
    $\Omega(\epsilon)$-far from monotone in $L^1$ distance and such that any fixed pair test or
    directional derivative test has only an $O\left( \frac{\epsilon}{\sqrt{d} M} \right)$
    probability of rejecting a random $\bm{f}$. Moreover, to obtain the asymptotic lower bound we
    may as well assume that (say) $\epsilon \le 1 \le M$.

    For each $i \in [d]$, define the function $f_i : \closedInt^d \to \bR$ by
    \[
        f_i(x) \define -\epsilon x_i + \sum_{j \in [d] \setminus \{i\}} \frac{M}{\sqrt{d}} x_j \,.
    \]
    Then, define $\bm{f} \define f_{\bm{i}}$ where $\bm{i}$ is sampled uniformly at random from
    $[d]$. Note that each $f_i$ is linear and hence Lipschitz with Lipschitz constant
    \[
        |\grad f_i(x)|
        = \sqrt{\epsilon^2 + (d-1) \left(\frac{M}{\sqrt{d}}\right)^2}
        \le \sqrt{2} M
        = O(M) \,.
    \]
    We claim that $\dist^\mono_1(f_i) = \Omega(\epsilon)$ for each $i \in [d]$; by symmetry it
    suffices to consider the case $i=d$. Any restriction $f_d(x_{-d}, \cdot)$ of $f_d$ to an
    axis-aligned line in direction $d$ (where $x_{-d} \define (x_1, \dotsc, x_{d-1})$) is a linear
    function with slope $-\epsilon$, so it is $\Omega(\epsilon)$-far from monotone in $L^1$ distance
    (in fact, $\epsilon/4$-far, and the constant average-valued function achieves this). But by
    Tonelli's theorem, the distance from $f_d$ to any monotone function $g$ is the average of the
    distance over each such axis-aligned line:
    \[
        \int_{\closedInt^d} |f_d-g| \odif x
        = \int_{\closedInt^{d-1}} \odif x_{-d} \int_{[0,1]}
            |f_d(x_{-d}, x_d) - g(x_{-d}, x_d)| \odif x_d
        \ge \int_{\closedInt^{d-1}} \Omega(\epsilon) \odif x_{-d}
        = \Omega(\epsilon) \,.
    \]
    We now show that any pair test or directional derivative test rejects a random $\bm{f}$ only
    with probability $O\left(\frac{\epsilon}{\sqrt{d} M}\right)$. Note that, since each $f_i$ is a
    linear function, any pair test on points $x \preceq y$ may be simulated by a directional
    derivative test on point $x$ and direction $v = y - x \succeq 0$. Therefore it suffices to
    consider directional derivative tests. Moreover, since the gradient of each $f_i$ is constant
    over the points $x$, only the direction $v$ is relevant. For simplicity, we will say that
    direction $v$ \emph{rejects} function $f_i$ if a directional derivative test with direction $v$
    rejects $f_i$.

    Fix any direction $v \succeq 0$. Suppose there exists $j \in [d]$ such that
    \[
        \epsilon v_j \le \frac{M}{\sqrt{d}} \sum_{k \in [d] \setminus \{j\}} v_k \,.
    \]
    We then claim that the vector $v'$ given by $v'_j \define 0$ and $v'_k \define v_k$ for $k \in
    [d] \setminus \{j\}$ rejects every $f_i$ that $v$ rejects. Indeed, suppose $v$ rejects $f_i$.
    There are two cases. If $i = j$, then
    \[
        0 > \grad f_i(x) \cdot v
        = -\epsilon v_j + \sum_{k \in [d] \setminus \{j\}} \frac{M}{\sqrt{d}} v_k \,,
    \]
    a contradiction, so this case cannot happen. On the other hand, if $i \ne j$, then
    \[
        \grad f(x) \cdot v'
        = -\epsilon v'_i + \sum_{k \in [d] \setminus \{i\}} \frac{M}{\sqrt{d}} v'_k
        \le -\epsilon v_i + \sum_{k \in [d] \setminus \{i\}} \frac{M}{\sqrt{d}} v_k
        = \grad f(x) \cdot v
        < 0 \,,
    \]
    the inequality since $v'_i = v_i$ and $v' \preceq v$. Hence $v'$ also rejects $f_i$, as claimed.
    Thus $v'$ rejects a random $\bm{f}$ with no less probability than $v$. Therefore we may assume
    without loss of generality that, for every $j \in [d]$, either $v_j = 0$ or
    \[
        \epsilon v_j > \frac{M}{\sqrt{d}} \sum_{k \in [d] \setminus \{j\}} v_k \,.
    \]
    Let $j^* \in [d]$ be such that $v_{j^*}$ is a minimum nonzero entry of $v$, and let $n \define
    \|v\|_0$ be the number of nonzero entries of $v$. Applying the inequality above to $j = j^*$
    yields
    \[
        \epsilon v_{j^*}
        > \frac{M}{\sqrt{d}} \sum_{k \in [d] \setminus \{j^*\}} v_k
        \ge \frac{M}{\sqrt{d}} \sum_{k \in [d] \setminus \{j^*\}} \ind{v_k > 0} v_{j^*}
        = \frac{M}{\sqrt{d}} \cdot n v_{j^*} \,,
    \]
    and hence $n < \frac{\sqrt{d} \epsilon}{M}$. Finally, note that $v$ does not reject $f_i$ if
    $v_i = 0$, so we conclude that
    \[
        \Pru{\bm{f}}{v \text{ rejects } \bm{f}}
        = \Pru{\bm{i}}{v \text{ rejects } f_{\bm{i}}}
        \le \Pru{\bm{i}}{v_{\bm{i}} > 0}
        = \frac{n}{d}
        \le \frac{\sqrt{d} \epsilon / M}{d}
        = \frac{\epsilon}{\sqrt{d} M} \,,
    \]
    as desired. This concludes the proof.
\end{proof}

\begin{remark}
    By the remark in \cref{section:tester-details}, the lower bound above holds for all $L^p$
    testers, $p \ge 1$.
\end{remark}

\paragraph*{How to read the rest of this paper.}
\cref{section:directed-heat-semigroup,section:transport-energy-inequality,section:poincare-from-transport-energy}
establish the proof of \cref{thm:directed-poincare-inequality} via arguments from partial
differential equations and optimal transport theory, and make up the most technical portion of this
paper. While we give technical preliminaries summarizing the standard theory we require in the
beginning of \cref{section:directed-heat-semigroup,section:transport-energy-inequality}, for a
detailed reading it may also be helpful to have the listed reference texts
\cite{Eva10,Bre11,Vil09,San15} on hand.

We favour a linear presentation in which results are built up progressively (occasionally deferring
a single-use lemma to be given immediately after the result that requires it), culminating in one or
a few key results in each subsection -- these are indicated at the start of each section. While we
defer a few technical lemmas to \cref{section:technical-lemmas}, we include in the main text the
proofs of lemmas that refer to the particular theory developed in this paper (as opposed to standard
results of general interest). Therefore, it may be convenient to skip proofs of lemmas on a first
reading.

\section{Directed heat semigroup}
\label{section:directed-heat-semigroup}

In this section, we make concrete the ideas presented in
\cref{section:overview-directed-heat-equation} of the proof overview. Recall that we wish to analyze
the \emph{directed heat equation}, which we informally specified in \eqref{eq:intro-directed-pde} as
\begin{equation}
    \label{eq:body-directed-pde}
    \partial_t u = \partial_x \partial_x^- u \,.
\end{equation}
We require a more rigorous definition before we can study this PDE, since \eg the quantity
$\partial_x^- u = \min\{0, \partial_x u\}$ is not in general differentiable even if $u$ is smooth.
Indeed, while the classical heat equation has a smoothing effect, we expect any solution to
\eqref{eq:body-directed-pde} (in the appropriate sense) to have non-differentiable points, so a more
technical treatment of this nonlinear PDE is unavoidable.

The first step is to replace the operator $\partial^- u$ with something more amenable to analysis.
In \cref{def:canonical-representation} we define a \emph{canonical representation} $u = \upf{u} +
\downf{u}$ such that $\downf{u}$ is nonincreasing and differentiable (more precisely, it is in the
Sobolev space $H^1$), while $\upf{u}$ is nondecreasing and need not be differentiable (it only needs
to be in the space $L^2$ of square-integrable functions). While a priori such a representation need
not be unique, we show in \cref{prop:optimal-representations} that there is a unique representation
(after fixing the shift by a constant) minimizing the energy $\int (\partial_x \downf{u})^2 \odif
x$; intuitively, this means that we pack all the ``decreasing behaviour'' of $u$ into $\downf{u}$,
and all the ``increasing behaviour'' of $u$ into $\upf{u}$. Note that $\upf{u}$ may even contain
jump discontinuities, which our theory must withstand, so this decomposition captures extremely
relevant properties of our problem.

Then, we are able to restate our PDE more rigorously in
\cref{def:static-neumann-problem,def:neumann-evolution}. Essentially, the ``Neumann problem''
\[
    \begin{cases}
        z = \partial_x \partial_x \downf{u} & \text{in } (0,1) \\
        \partial_x \downf{u} = 0            & \text{on } \{0, 1\}
    \end{cases}
\]
given in \eqref{eq:static-neumann-problem} (where $z$ will take the role of $\partial_t u$ in
\cref{def:neumann-evolution}) replaces $\partial_x^- u$ with $\partial_x \downf{u}$ in
\eqref{eq:body-directed-pde}, and then places a ``Neumann boundary condition'' $\partial_x \downf{u}
= 0$ which, as is standard in PDEs, means that there will be no mass flux at the boundaries of the
domain. For technical reasons that are common to most study of PDEs, we need to give in
\eqref{eq:weak-solution} a notion of ``weak solution'', which is essentially a solution which may
not be regular enough to satisfy the specification of the PDE (\ie it may not be
twice-differentiable), but which behaves (via integration by parts) just like a ``strong solution''
for many analytical purposes, and can sometimes be ``promoted'': a weak solution that is regular
enough is usually also a strong solution (in fact, our problem automatically enjoys such a promotion
thanks to ``elliptic regularity'', see \cref{lemma:elliptic-regularity}).

To study our PDE, we bring in the theory of gradient flows and maximal monotone operators
\cite{Bre73} by recasting the PDE as the problem of finding the direction of steepest descent (in
the space of $L^2$ functions) of the directed Dirichlet energy $\cE^-(u)$, which our formalism
allows us to define in terms of $\downf{u}$ in \cref{def:energy-functional}. The main results in
\cref{section:neumann-gradient-flow-problems} are \cref{res:solution} on the existence of a unique
solution to the gradient flow problem -- which allows us to define a semigroup operator $P_t$, the
\emph{directed heat semigroup}, describing the evolution of this solution -- and
\cref{cor:evolution-solution-gradient-to-neumann}, which shows that this solution is also a solution
to the Neumann problem (\ie our PDE of interest).

To conclude the basic study of our PDE, \cref{section:energy-decay} shows, via the theory of
gradient flows, that the directed Dirichlet energy $\cE^-(P_t u)$ decays exponentially in time
(\cref{prop:exponential-decay}), which allows us to conclude that $P_t u$ converges to some limit as
$t \to \infty$ (\cref{lemma:strong-convergence}).

\cref{section:auxiliary-regularity,section:preservation-of-regularity,section:preservation-of-lipschitz}
are concerned with showing that regular initial states $u$ yield solutions $P_t u$ that are also
regular for all times $t > 0$, where ``regular'' can mean Lipschitz or Sobolev class $H^1$. The main
result in \cref{section:auxiliary-regularity} is \cref{lemma:in-domain-well-behaved}, a technical
lemma which says that for solutions to our PDE, if $\partial_x \downf{u} < 0$ in an interval $(a,
b)$ (which we wish to call ``$u$ is decreasing''), it must be the case that $\upf{u}$ is constant in
$(a, b)$ (legitimizing that wish). This result is far from being vacuous or tautological -- it
implies that in regions where our PDE is ``active'' (because $\partial_x \downf{u}$ is nonzero), the
other component $\upf{u}$ of $u$ is ``inactive'' (constant), which means that solutions to our PDE
avoid a host of pathological scenarios and allows crucial technical arguments to go through. Then,
using technical arguments and tools from the theory of maximal monotone operators,
\cref{section:preservation-of-regularity} builds toward \cref{cor:preservation-of-regularity} --
$H^1$ regularity is preserved over time -- and \cref{section:preservation-of-lipschitz} builds
toward \cref{cor:preservation-of-lipschitz} -- Lipschitz regularity is preserved over time.

In \cref{section:nonexpansiveness}, we turn our attention to nonexpansiveness and order preservation
properties. The main result, \cref{prop:pt-diff-nonincreasing}, is a quantitative order preservation
result which implies that if initial states $u, v$ satisfy $u \le v$, then $P_t u \le P_t v$ for all
times $t > 0$ (\cref{cor:pt-order-preserving}). We conclude \cref{section:nonexpansiveness} with
some useful basic observations about the semigroup $P_t$; in particular,
\cref{prop:pt-transformations} shows that $P_t$ behaves nicely under certain affine transformations,
namely $P_t (\alpha u + \beta) = \alpha P_t u + \beta$ when $\alpha > 0$ and $\beta \in \bR$.

Finally, \cref{section:convergence} studies the convergence of solutions $P_t u$ as $t \to \infty$.
The main result is \cref{prop:convergence-to-equilibrium}, which shows that $P_t u$ converges to a
\emph{monotone} function, which we define as $P_\infty u$ in \cref{def:p-infty}. The remainder of
\cref{section:convergence} shows that the main properties of $P_t$ shown in the previous subsections
-- preservation of regularity, nonexpansiveness, behaviour under affine transformations -- are also
satisfied by $P_\infty$.

\subsection{Preliminaries for PDE}
\label{section:pde-preliminaries}

We briefly outline some of the main concepts required to study our PDE, and refer the reader to \eg
\cite{Eva10,Bre11} for detailed expositions.

Let $J = (a,b)$ be an open interval. The absolutely continuous (AC) functions $f : \overline J \to
\bR$ are precisely those for which there exists a function $g \in L^1(J)$ such that, for all $x \in
\overline J$, $f(x) = f(a) + \int_{(a,x)} g \odif x$. We call $g$ a \emph{weak derivative} of $f$,
and write $\partial_x f$ for any weak derivative of $f$. The weak derivative is almost everywhere
(a.e.) uniquely determined, and moreover, $f$ is classically differentiable \almev and its classical
derivative agrees with $\partial_x f$ \almev.

Let $\Omega \define J^d$. Let $k \in \bZ_{\ge 0}$ and $p \in [1, +\infty]$. The Sobolev space
$W^{k,p}(\Omega)$ is the space of functions $f : \Omega \to \bR$ which have weak derivatives up to
order $k$ in $L^p(\Omega)$. The definition of weak derivative in the multidimensional case is more
involved than in the one-dimensional case, but the details are not relevant here. In one dimension,
we may recursively define $W^{0,p}(J) \define L^p(J)$ and, for $k \ge 1$, $W^{k,p}(J)$ as the set of
functions $f \in W^{k-1,p}(J)$ such that $\partial_x f \in W^{k-1,p}(J)$.

We identify \almev equal functions into equivalence classes in $W^{k,p}(\Omega)$, and we write ``$f
= g$ a.e.'' and ``$f = g$ in $W^{k,p}(\Omega)$'' interchangeably. Often, we are interested in a
continuous representative of $f$, which is unique when it exists, and by abuse of notation write $f$
for its continuous representative as well. With this interpretation, a standard fact is that
$W^{1,\infty}(\Omega)$ is precisely the class of Lipschitz functions on $\Omega$. If $1 \le p \le q
\le +\infty$, then $W^{k,q}(\Omega) \subset W^{k,p}(\Omega)$.

The case $p = 2$ is special because then $H^k(\Omega) \define W^{k,2}(\Omega)$ is a Hilbert space.
In one dimension, the inner product in $H^k(J)$ is
\[
    \inpspace{f}{g}{H^k(J)} \define \sum_{i=0}^k \inpspace{D^i f}{D^i g}{L^2(J)} \,,
\]
where $D^i$ above denotes the $i$-th weak derivative. When the space of integration is clear, we
drop the subscript from the inner product notation. The inner product above also induces the norm
\[
    \|f\|_{H^k(J)} = \sqrt{\sum_{i=0}^k \|D^i f\|_{L^2(J)}^2} \,.
\]

For a sequence $(f_n)_{n \in \bN} \subset H^k(\Omega)$ and $f \in H^k(\Omega)$, the notation ``$f
\to g$ in $H^k(\Omega)$'' means convergence in norm, \ie $\|f_n - f\|_{H^k(\Omega)} \to 0$ as $n \to
\infty$. We write ``$f \weakto g$ weakly in $H^k(\Omega)$'' to denote weak convergence in this
space, which means that $\varphi(f_n) \to \varphi(f)$ as $n \to \infty$ for every $\varphi$ in the
dual space of $H^k(\Omega)$. The dual space of $L^2(\Omega)$ is $L^2(\Omega)$ itself with the
$L^2(\Omega)$ inner product action $\varphi(f) = \inp{\varphi}{f}$. The dual space of $H^1(\Omega)$
is larger, but contains $L^2(\Omega)$ with the same action.

It is often useful to approximate a function by a sequence of smooth functions, and for that we will
use \emph{mollification}. The following comes from \cite[Appendix~C]{Eva10}. The \emph{standard
mollifier} $\eta \in C^\infty(\bR^d)$ is given by
\[
    \eta(x) \define \begin{cases}
        C \exp\left( \frac{1}{|x|^2-1} \right) & \text{if } |x| < 1 \\
        0                                      & \text{if } |x| \ge 1 \,,
    \end{cases}
\]
for constant $C > 0$ chosen so that $\int_{\bR^d} \eta \odif x = 1$. Then for each $\epsilon > 0$,
we let
\[
    \eta_\epsilon(x) \define \frac{1}{\epsilon^d} \eta(x/\epsilon) \,,
\]
which is a $C^\infty(\bR^d)$ function satisfying $\int_{\bR^d} \eta_\epsilon \odif x = 1$ and
$\eta(x) = 0$ for $x \not\in B(0,\epsilon)$, where $B(0,\epsilon)$ is the open ball of radius
$\epsilon$ centered at $0$. We abuse language and also call $\eta_\epsilon$ a standard mollifier.

Let $U \subset \bR^d$ be an open set and let $U_\epsilon \define \{ x \in U : \mathrm{dist}(x,
\partial U) > \epsilon \}$, where $\partial U$ is the boundary of $U$. For locally integrable $f : U
\to \bR$ (meaning that $f$ is integrable on every compact $K \subset U$), we define the
\emph{mollification} $f^\epsilon : U_\epsilon \to \bR$ of $f$ by
\[
    f^\epsilon(x)
    \define (\eta_\epsilon * f)(x)
    = \int_U \eta_\epsilon(x-y) f(y) \odif y
    = \int_{B(0,\epsilon)} \eta(y) f(x-y) \odif y \,.
\]
We then have the following properties (see Theorem~7 of \cite[Appendix~C]{Eva10}):
\begin{enumerate}
    \item $f^\epsilon \in C^\infty(U_\epsilon)$.
    \item $f^\epsilon \to f$ \almev as $\epsilon \to 0$.
    \item If $f \in C(U)$, then $f^\epsilon \to f$ uniformly on compact subsets of $U$.
    \item If $1 \le p < \infty$ and $f \in L^p_\loc(U)$, then $f^\epsilon \to f$ in $L^p_\loc(U)$,
\end{enumerate}
where $L^p_\loc(U)$ is the space of measurable functions whose every restriction to compact $K
\subset U$ is in $L^p(K)$, and convergence in $L^p_\loc(U)$ means convergence in $L^p(K)$ for every
compact $K \subset U$.

We also use the following abstract spaces. Let $X$ be a real Banach space (for us, typically
$L^2(J)$), let $1 \le p \le +\infty$ and let $T \in (0, +\infty]$. Then the \emph{Bochner space}
$L^p(0, T; X)$ is the space of Bochner measurable functions (whose precise definition is not
important here) $\bm{u} : [0, T] \to X$ whose norm $\|\bm{u}\|_{L^p(0, T; X)}$ is finite, where
\[
    \|\bm{u}\|_{L^p(0, T; X)} \define \begin{cases}
        \left( \int_{(0,T)} \|\bm{u}(t)\|_X^p \odif t \right)^{1/p} & 1 \le p < +\infty \\
        \esssup_{t \in (0,T)} \|\bm{u}(t)\|_X                       & p = +\infty \,.
    \end{cases}
\]
As usual, $C([0, T]; X)$ denotes the set of continuous functions $\bm{u} : [0,T] \to X$. When $X$
possesses the so-called Radon-Nikodym property, which in particular is the case for $X = L^2(J)$,
and for $T < +\infty$, the characterization of absolutely continuous functions as those possessing
an integrable weak derivative introduced earlier extends to functions $\bm{u} : [0,T] \to X$ (see
\cite[pp.~217--219]{DU77}).

Above and hereafter, we use boldface symbols, \eg $\bm{u}$, to denote Banach-valued functions whose
domain we think of as the time in an evolution equation.

Hereafter, we write $I$ for the unit interval $(0,1)$ unless otherwise specified, which will always
be clearly indicated.

\subsection{Neumann and gradient flow problems}
\label{section:neumann-gradient-flow-problems}

\begin{definition}[Down-$H^1$ functions]
    \label{def:down-h1-functions}
    We define the set $\cU$ of \emph{down-$H^1$ functions} as
    \[
        \cU \define \{ u \in L^2(I) \stcolon \cR(u) \ne \emptyset \} \,,
    \]
    where $\cR(u)$ is the set of admissible \emph{representations} for $u$ as follows:
    \begin{align*}
        \cR(u) \define \Big\{
            &(\onef{u}, \twof{u}) \in L^2(I) \times H^1(I) \stcolon \\
            &\onef{u} \text{ is nondecreasing, } \twof{u} \text{ is nonincreasing, }
            \int_I \twof{u} \odif x = 0, \text{ and }
            u = \onef{u} + \twof{u} \text{ \almev}
            \Big\} \,.
    \end{align*}
\end{definition}

\begin{remark}
    The properties ``nondecreasing'' and ``nonincreasing'' in the definition above technically do
    not apply directly to members of $L^2(I)$ and $H^1(I)$, which are equivalence classes of \almev
    equal functions. Rather, we mean is that $\onef{u}$ is a member of $L^2(I)$ which has a
    representative function $I \to \bR$ that is nondecreasing, and likewise for $\twof{u}$. More
    generally, when $f$ is an object of any space that identifies \almev equal functions, we abuse
    language and write ``$f$ is nonincreasing'' (resp.\ nondecreasing) to mean that $f$ has a
    nonincreasing (resp.\ nondecreasing) representative.
\end{remark}

\begin{remark}
    Is is easy to check that $H^1(I) \subset \cU$, since for any $u \in H^1(I)$ we may separate the
    positive and negative parts of $\partial_x u$ into $\onef{u}$ and $\twof{u}$, respectively, and
    shift $\onef{u}$ and $\twof{u}$ by appropriate constants so that they satisfy the conditions of
    \cref{def:down-h1-functions}. Then, since $H^1(I)$ is dense in $L^2(I)$, so is $\cU$.
\end{remark}

\begin{definition}[Directed Dirichlet energy]
    For each $u \in \cU$ and $(\onef{u}, \twof{u}) \in \cR(u)$, define
    \[
        \cD^-(\onef{u}, \twof{u})
        \define \frac{1}{2} \int_I \left( \partial_x \twof{u} \right)^2 \odif x \,.
    \]
\end{definition}

\begin{definition}[Optimal representations]
    For each $u \in \cU$, define $\cR^*(u) \subseteq \cR(u)$ as the set of representations
    minimizing $\cD^-$, \ie
    \[
        \cR^*(u) \define \left\{
            (\onef{u}, \twof{u}) \in \cR(u) \stcolon
            \cD^-(\onef{u}, \twof{u})
            = \inf_{\cR(u)} \cD^- \right\} \,.
    \]
\end{definition}

\begin{proposition}[Existence and uniqueness of optimal representation]
    \label{prop:optimal-representations}
    For all $u \in \cU$, $\cR^*(u)$ contains exactly one element.
\end{proposition}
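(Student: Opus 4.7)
The plan is to establish existence via the direct method in the calculus of variations and uniqueness via strict convexity of the squared $L^2$ norm acting on $\partial_x \twof{u}$.

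For \textbf{existence}, since $u \in \cU$ the set $\cR(u)$ is nonempty, so $m \define \inf_{\cR(u)} \cD^- \in [0, +\infty)$. I would pick a minimizing sequence $(\onef{u}_n, \twof{u}_n) \in \cR(u)$ with $\cD^-(\onef{u}_n, \twof{u}_n) \to m$. By construction $\|\partial_x \twof{u}_n\|_{L^2(I)}$ is bounded, and since $\int_I \twof{u}_n \odif x = 0$, the Poincar\'e--Wirtinger inequality upgrades this to a uniform bound in $H^1(I)$. Passing to a subsequence, $\twof{u}_n \weakto \twof{u}^*$ weakly in $H^1(I)$. The set of (classes of) nonincreasing functions is convex and norm-closed in $H^1(I)$, hence weakly closed by Mazur's theorem, so $\twof{u}^*$ is nonincreasing; similarly the linear constraint $\int_I \twof{u} \odif x = 0$ passes to the limit. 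Defining $\onef{u}^* \define u - \twof{u}^*$, weak convergence in $H^1(I)$ implies weak convergence in $L^2(I)$, so $\onef{u}_n = u - \twof{u}_n \weakto \onef{u}^*$ in $L^2(I)$; since the set of nondecreasing $L^2$ functions is convex and closed, it is weakly closed, giving that $\onef{u}^*$ is nondecreasing. Therefore $(\onef{u}^*, \twof{u}^*) \in \cR(u)$. Finally, weak convergence $\partial_x \twof{u}_n \weakto \partial_x \twof{u}^*$ in $L^2(I)$ and weak lower semicontinuity of the squared norm give $\cD^-(\onef{u}^*, \twof{u}^*) \le \liminf_n \cD^-(\onef{u}_n, \twof{u}_n) = m$, so $(\onef{u}^*, \twof{u}^*) \in \cR^*(u)$.

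For \textbf{uniqueness}, suppose $(\onef{u}_1, \twof{u}_1), (\onef{u}_2, \twof{u}_2) \in \cR^*(u)$. Their midpoint lies in $\cR(u)$ because the set of nondecreasing (resp.\ nonincreasing) functions is convex, and the constraints $u = \onef{u} + \twof{u}$ and $\int_I \twof{u} \odif x = 0$ are linear. Strict convexity of the squared $L^2$ norm yields
\[
    \cD^-\!\left( \tfrac{\onef{u}_1 + \onef{u}_2}{2}, \tfrac{\twof{u}_1 + \twof{u}_2}{2} \right)
    \le \tfrac{1}{2}\bigl(\cD^-(\onef{u}_1, \twof{u}_1) + \cD^-(\onef{u}_2, \twof{u}_2)\bigr) = m,
\]
with equality iff $\partial_x \twof{u}_1 = \partial_x \twof{u}_2$ a.e. Since the midpoint energy must be at least $m$, equality holds, so $\twof{u}_1 - \twof{u}_2$ is a.e.\ constant; the shared mean-zero constraint forces this constant to be $0$, and then $\onef{u}_1 = u - \twof{u}_1 = u - \twof{u}_2 = \onef{u}_2$.

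The main obstacle I anticipate is not conceptual but notational: carefully checking that the weak closedness of the monotonicity constraints is legitimate when working with equivalence classes in $L^2(I)$ and $H^1(I)$ (as flagged in the paper's remark after \cref{def:down-h1-functions}), and that the weak convergence $\twof{u}_n \weakto \twof{u}^*$ transfers cleanly between $H^1(I)$ and $L^2(I)$ so that the decomposition $u = \onef{u}^* + \twof{u}^*$ survives in the limit.
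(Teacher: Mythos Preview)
Your proposal is correct and follows essentially the same approach as the paper: both use the direct method (the paper packages it via an abstract convex-optimization lemma on the reflexive space $L^2(I) \oplus H^1(I)$, while you run the minimizing-sequence argument by hand) with the same ingredients---Poincar\'e--Wirtinger for boundedness, weak closedness of the monotonicity and mean-zero constraints, and weak lower semicontinuity of the $L^2$ norm---and both derive uniqueness from strict convexity of $w \mapsto \tfrac{1}{2}\|w\|_{L^2}^2$ together with the mean-zero constraint forcing $\twof{u}_1 = \twof{u}_2$.
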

\begin{proof}[Proof of existence]
    Let $u \in \cU$. Let $E \define L^2(I) \oplus H^1(I)$, where $\oplus$ denotes the direct sum of
    Hilbert spaces, so that $E$ is also a Hilbert space (and hence a reflexive Banach space). Note
    that $\cR(u) \subseteq E$ by definition. Our goal is to show that $C = \cR(u)$ and $f = \cD^-$
    satisfy the conditions of \cref{lemma:convex-optimization}.

    It is clear that $\cR(u)$ is nonempty, because $u \in \cU$. It is immediate to verify that
    $\cR(u)$ is convex as well. We now check that $\cR(u)$ is closed in $E$. Let $\big((\onef{u_n},
    \twof{u_n})\big)_{n \in \bN}$ be any sequence in $\cR(u)$ that converges in $E$ to some
    $(\onef{u}, \twof{u}) \in E$; we would like to show that $(\onef{u}, \twof{u}) \in \cR(u)$.
    First, since $(\onef{u_n}, \twof{u_n}) \to (\onef{u}, \twof{u})$ in $E$, we have $\onef{u_n} \to
    \onef{u}$ in $L^2(I)$ and $\twof{u_n} \to \twof{u}$ in $H^1(I)$ and hence in $L^2(I)$, so that
    \begin{align*}
        &\|u - (\onef{u} + \twof{u})\|_{L^2(I)} \\
        &\qquad= \lim_{n \to \infty} \left\|
            (u - (\onef{u_n} + \twof{u_n})) + (\onef{u_n} - \onef{u}) + (\twof{u_n} - \twof{u})
            \right\|_{L^2(I)} \\
        &\qquad\le \lim_{n \to \infty} \left\| u - (\onef{u_n} + \twof{u_n}) \right\|_{L^2(I)}
            + \lim_{n \to \infty} \left\| \onef{u_n} - \onef{u} \right\|_{L^2(I)}
            + \lim_{n \to \infty} \left\| \twof{u_n} - \twof{u} \right\|_{L^2(I)} \\
        &\qquad= 0 \,,
    \end{align*}
    where the last equality used the fact that $u = \onef{u_n} + \twof{u_n}$ \almev for every $n$
    (since $(\onef{u_n}, \twof{u_n}) \in \cR(u)$) and the two convergence observations above. Thus
    $u = \onef{u} + \twof{u}$ in $L^2(I)$ and hence almost everywhere. Moreover, by
    \cref{lemma:weak-limit-monotonic}, $\onef{u}$ is nondecreasing and $\twof{u}$ is nonincreasing.
    Finally, since $\int_I \twof{u_n} \odif x = 0$ for each $n$ and $\twof{u_n} \to \twof{u}$ in
    $L^2(I)$, it follows that $\int_I \twof{u} \odif x = 0$ as well. We conclude that $(\onef{u},
    \twof{u}) \in \cR(u)$, and thus $\cR(u)$ is closed.

    Now, clearly $\cD^-$ is proper since it is finitely valued by definition. It is also continuous
    (and thus lower semicontinuous) because for any $(\onef{u_n}, \twof{u_n}) \to (\onef{u},
    \twof{u})$ in $E$, we have $\twof{u_n} \to \twof{u}$ in $H^1(I)$ and hence $\int_I \left(
    \partial_x \twof{u_n} \right)^2 \odif x \to \int_I \left( \partial_x \twof{u} \right)^2 \odif
    x$, so $\cD^-(\onef{u_n}, \twof{u_n}) \to \cD^-(\onef{u}, \twof{u})$. It is also immediate to
    check that that $\cD^-$ is convex.

    Finally, we need to show that $\cD^-$ is coercive in the sense of
    \cref{lemma:convex-optimization}. Let $\big((\onef{u_n} \twof{u_n})\big)_{n \in \bN}$ be a
    sequence in $\cR(u)$ such that $\left\| (\onef{u_n}, \twof{u_n}) \right\|_E^2 =
    \|\onef{u_n}\|_{L^2(I)}^2 + \|\twof{u_n}\|_{H^1(I)}^2 \to \infty$. We claim that $\|\partial_x
    \twof{u_n}\|_{L^2(I)} \to \infty$. Suppose for a contradiction that this false. There are two
    cases. First, suppose $\|\twof{u_n}\|_{H^1(I)} \to \infty$. Then necessarily
    $\|\twof{u_n}\|_{L^2(I)} \to \infty$, but since each $\twof{u_n}$ has mean zero, the
    Poincaré-Wirtinger inequality gives that
    \[
        \|\partial_x \twof{u_n}\|_{L^2(I)} \ge \frac{1}{C} \|\twof{u_n}\|_{L^2(I)}
    \]
    for some constant $C > 0$. Thus $\|\partial_x \twof{u_n}\|_{L^2(I)} \to \infty$, a
    contradiction. In the second case, $\|\twof{u_n}\|_{H^1(I)}$ remains bounded, so we must have
    $\|\onef{u_n}\|_{L^2(I)} \to \infty$. But since $u = \onef{u_n} + \twof{u_n}$ in $L^2(I)$, the
    reverse triangle inequality implies that $\|\twof{u_n}\|_{L^2(I)} \ge \|\onef{u_n}\|_{L^2(I)} -
    \|u\|_{L^2(I)} \to \infty$, so $\|\twof{u_n}\|_{L^2(I)} \to \infty$. Then again
    Poincaré-Wirtinger implies that $\|\partial_x \twof{u_n}\|_{L^2(I)} \to \infty$, so the claim
    holds. But since $\cD^-\left( \onef{u_n}, \twof{u_n} \right) = \frac{1}{2} \|\partial_x
    \twof{u_n}\|_{L^2(I)}^2$, we have $\cD^-\left( \onef{u_n}, \twof{u_n} \right) \to +\infty$, so
    $\cD^-$ is coercive. Thus $\cD^-$ achieves its minimum on $\cR(u)$ by
    \cref{lemma:convex-optimization}, so $\cR^*(u)$ is nonempty.
\end{proof}
\begin{proof}[Proof of uniqueness]
    We show uniqueness using strict convexity. Suppose $(\onef{u}, \twof{u}), (\onef{v}, \twof{v})
    \in \cR^*(u)$, and suppose for a contradiction that $(\onef{u}, \twof{u}) \ne (\onef{v},
    \twof{v})$ in $E$ (in the notation from the previous part of the proof; note that equality in
    $E$ is equivalent to equality as tuples in $L^2(I) \times H^1(I)$).

    We first claim that $\partial_x \twof{u} \ne \partial_x \twof{v}$ in $L^2(I$). Indeed, suppose
    $\partial_x \twof{u} = \partial_x \twof{v}$ in $L^2(I)$. Note that $\twof{u} \ne \twof{v}$ in
    $L^2(I)$, because otherwise we would have $\twof{u} = \twof{v}$ in $H^1(I)$ and $\onef{u} = u -
    \twof{u} = u - \twof{v} = \onef{v}$ in $L^2(I)$, and thus $(\onef{u}, \twof{u}) = (\onef{v},
    \twof{v})$ in $E$, a contradiction. Now, since $\partial_x \twof{u} = \partial_x \twof{v}$ and
    $\twof{u} \ne \twof{v}$ in $L^2(I)$, we conclude that $\twof{u} = \twof{v} + C$ in $L^2(I)$ for
    some constant $C \ne 0$. But this contradicts the fact that $\int_I \twof{u} \odif x = \int_I
    \twof{v} \odif x = 0$, which holds by the definition of $\cR(u)$. Thus $\partial_x \twof{u} \ne
    \partial_x \twof{v}$ in $L^2(I)$.

    \sloppy
    Define the function $f : L^2(I) \to \bR_{\ge 0}$ by $f(w) \define \frac{1}{2} \|w\|_{L^2(I)}^2$,
    which is strictly convex, and note that $\cD^-(\onef{r}, \twof{r}) = f(\partial_x \twof{r})$ for
    all $r \in \cR(u)$. Now, the element $(\onef{z}, \twof{z}) \define \frac{1}{2} (\onef{u},
    \twof{u}) + \frac{1}{2} (\onef{v}, \twof{v})$ is in $\cR(u)$ by convexity of that set, and
    satisfies $\cD^-(\onef{z}, \twof{z}) = f\left( \frac{1}{2} \partial_x \twof{u} + \frac{1}{2}
    \partial_x \twof{v} \right) < \frac{1}{2} f(\partial_x \twof{u}) + \frac{1}{2} f(\partial_x
    \twof{v}) = \frac{1}{2} \cD^-(\onef{u}, \twof{u}) + \frac{1}{2} \cD^-(\onef{v}, \twof{v})$,
    contradicting the fact that $(\onef{u}, \twof{u}), (\onef{v}, \twof{v}) \in \cR^*(u)$. Thus
    $(\onef{u}, \twof{u}) = (\onef{v}, \twof{v})$ in $E$, as needed.
\end{proof}

\begin{lemma}[See \eg {\cite[Corollary 3.23]{Bre11}}]
    \label{lemma:convex-optimization}
    Let $E$ be a reflexive Banach space and let $C \subseteq E$ be nonempty, closed and convex.
    Suppose $f : C \to (-\infty, +\infty]$ is convex, proper, lower semicontinuous, and coercive in
    the sense that for every sequence $(x_n)_{n \in \bN}$ in $C$ with $\|x_n\|_E \to \infty$, we
    have $f(x_n) \to +\infty$. Then there exists $x^* \in C$ satisfying $f(x^*) = \inf_{x \in C}
    f(x)$.
\end{lemma}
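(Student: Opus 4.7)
The plan is to apply the direct method of the calculus of variations, which is the standard route to existence of minimizers in reflexive Banach spaces. First I would extract a minimizing sequence $(x_n)_{n \in \bN} \subset C$ with $f(x_n) \to \inf_{C} f$; since $f$ is proper and convex with $\inf_C f < +\infty$ (take any point where $f$ is finite), this infimum is either a finite real number or $-\infty$, and in either case such a sequence exists.

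The first substantive step is to show that $(x_n)$ is norm-bounded in $E$. This is where coercivity enters: if $\|x_n\|_E$ were unbounded along a subsequence, then $f(x_n) \to +\infty$ along that subsequence, contradicting the fact that $(x_n)$ is minimizing (in particular ruling out the possibility that $\inf_C f = -\infty$, so the infimum is a finite real). Consequently $(x_n)$ lies in a bounded set of $E$.

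Next, because $E$ is reflexive, every bounded sequence has a weakly convergent subsequence (by the Banach-Alaoglu theorem combined with reflexivity, or equivalently by the Eberlein-Šmulian theorem). So, after passing to a subsequence, I may assume $x_n \weakto x^*$ weakly in $E$ for some $x^* \in E$. Since $C$ is convex and norm-closed, Mazur's theorem implies $C$ is weakly closed, giving $x^* \in C$.

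The final step is to invoke weak lower semicontinuity of $f$. Because $f$ is convex and (strongly) lower semicontinuous on the convex closed set $C$, another application of Mazur's theorem (or a standard epigraph argument) shows that $f$ is sequentially weakly lower semicontinuous. Hence
\[
    f(x^*) \le \liminf_{n \to \infty} f(x_n) = \inf_{x \in C} f(x) \,,
\]
which forces equality and exhibits $x^*$ as the desired minimizer. I expect the main conceptual obstacle to be the passage from strong to weak lower semicontinuity for $f$ — everything else is a quick extraction argument — but this is handled cleanly by the convexity hypothesis via Mazur, which is why convexity of $f$ (and not merely lower semicontinuity) is an essential assumption.
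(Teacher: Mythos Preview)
Your proposal is correct and follows the standard direct-method argument (minimizing sequence, boundedness via coercivity, weak compactness via reflexivity, weak closedness of $C$ and weak lower semicontinuity of $f$ via Mazur). The paper itself does not give a proof of this lemma; it simply cites it as a known result from \cite[Corollary~3.23]{Bre11}, whose proof is essentially the one you have outlined.
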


Owing to \cref{prop:optimal-representations}, we may define for each $u \in \cU$ a canonical
representation and directed Dirichlet energy (by a slight abuse of notation):

\begin{definition}[Canonical representation]
    \label{def:canonical-representation}
    For each $u \in \cU$, the \emph{canonical representation} of $u$ is the unique element of
    $\cR^*(u)$, denoted $(\upf{u}, \downf{u})$. We define the \emph{directed Dirichlet energy} of
    $u$ by $\cD^-(u) \define \cD^-(\upf{u}, \downf{u})$.
\end{definition}

The following definition, which extends the directed Dirichlet energy functional to all of $L^2(I)$
by assigning $+\infty$ to functions outside of $\cU$, will enable us to find and study a solution to
our PDE using tools from the theory of maximal monotone operators and gradient flows. Our main
reference for this theory is \cite{Bre73}.

\begin{definition}[Energy functional]
    \label{def:energy-functional}
    Define the functional $\cE^- : L^2(I) \to [0, +\infty]$ by
    \[
        \cE^-(u) \define \begin{cases}
            \cD^-(u) & \text{if } u \in \cU \\
            +\infty  & \text{otherwise.}
        \end{cases}
    \]
\end{definition}

\begin{proposition}
    The functional $\cE^- : L^2(I) \to [0, +\infty]$ is convex, proper and lower semicontinuous.
\end{proposition}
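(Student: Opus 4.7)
The plan is to verify the three properties separately, with lower semicontinuity being the only genuinely nontrivial step. Properness is immediate: $\cE^- \ge 0$ everywhere, and the zero function satisfies $0 \in \cU$ with canonical representation $(0,0)$, so $\cE^-(0) = 0 < +\infty$.

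For convexity, I would argue as follows. If either $u$ or $v$ lies outside $\cU$, the inequality $\cE^-(\lambda u + (1-\lambda) v) \le \lambda \cE^-(u) + (1-\lambda) \cE^-(v)$ is trivial since the right-hand side is $+\infty$. If both lie in $\cU$, then the pair $(\lambda \upf{u} + (1-\lambda) \upf{v},\, \lambda \downf{u} + (1-\lambda) \downf{v})$ is a valid (but not necessarily optimal) representation of $\lambda u + (1-\lambda) v$, since nondecreasingness, nonincreasingness, mean-zero, and $H^1$-regularity are all preserved under convex combinations. Using the definition of $\cE^-$ via the infimum together with the convexity of the map $w \mapsto \tfrac{1}{2}\|\partial_x w\|_{L^2(I)}^2$ (a squared Hilbert-space norm) gives $\cE^-(\lambda u + (1-\lambda) v) \le \lambda \cE^-(u) + (1-\lambda) \cE^-(v)$ directly.

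The main work is lower semicontinuity: given $u_n \to u$ in $L^2(I)$, I need $\cE^-(u) \le \liminf_n \cE^-(u_n)$. Assume the $\liminf$ is finite (else there is nothing to prove) and extract a subsequence along which $\cE^-(u_n) \to L < +\infty$, so eventually $u_n \in \cU$ with canonical representations $(\upf{u_n}, \downf{u_n})$. Boundedness of $\|\partial_x \downf{u_n}\|_{L^2(I)}$ combined with $\int_I \downf{u_n} \odif x = 0$ and the Poincaré--Wirtinger inequality yields boundedness of $\|\downf{u_n}\|_{H^1(I)}$. By reflexivity of $H^1(I)$ and the Rellich--Kondrachov compact embedding $H^1(I) \hookrightarrow L^2(I)$, I can pass to a further subsequence along which $\downf{u_n} \weakto w$ in $H^1(I)$ and $\downf{u_n} \to w$ strongly in $L^2(I)$ for some $w \in H^1(I)$. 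Consequently $\upf{u_n} = u_n - \downf{u_n} \to u - w$ in $L^2(I)$.

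It remains to verify that $(u-w, w) \in \cR(u)$ and that the energy bound survives the limit. Monotonicity of the strong $L^2$ limits $u - w$ (nondecreasing) and $w$ (nonincreasing) follows from \cref{lemma:weak-limit-monotonic}, applied after extracting an almost-everywhere convergent subsequence; the mean-zero condition $\int_I w \odif x = 0$ passes directly to the limit from $\int_I \downf{u_n} \odif x = 0$. Hence $u \in \cU$ with $(u-w, w)$ an admissible (possibly suboptimal) representation, so $\cE^-(u) \le \tfrac{1}{2}\|\partial_x w\|_{L^2(I)}^2$. Finally, weak lower semicontinuity of the $L^2$-norm under the weak convergence $\partial_x \downf{u_n} \weakto \partial_x w$ gives $\|\partial_x w\|_{L^2(I)}^2 \le \liminf_n \|\partial_x \downf{u_n}\|_{L^2(I)}^2 = 2L$, completing the proof. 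The only subtlety I anticipate is making sure the monotonicity of each component passes to the limit cleanly; this is exactly what \cref{lemma:weak-limit-monotonic} (used earlier in the proof of \cref{prop:optimal-representations}) is designed to handle.
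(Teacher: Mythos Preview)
Your proof is correct and follows essentially the same approach as the paper: properness and convexity are dispatched quickly, and for lower semicontinuity you extract a subsequence with bounded energy, use Poincar\'e--Wirtinger to bound $\downf{u_n}$ in $H^1(I)$, pass to a weak $H^1$ limit, invoke Rellich--Kondrachov for strong $L^2$ compactness, verify the limit pair is an admissible representation of $u$ via \cref{lemma:weak-limit-monotonic}, and conclude by weak lower semicontinuity of the derivative norm. Your organization is in fact slightly cleaner than the paper's, since you apply weak lower semicontinuity of $\|\partial_x(\cdot)\|_{L^2}$ directly rather than decomposing the $H^1$ norm, and you obtain strong $L^2$ convergence of $\upf{u_n}$ as a consequence of $u_n \to u$ and $\downf{u_n} \to w$ rather than arguing separately for its boundedness; the aside about extracting an a.e.\ convergent subsequence before applying \cref{lemma:weak-limit-monotonic} is unnecessary (that lemma already handles weak $L^2$ limits), but harmless.
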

\begin{proof}
    Properness is trivial, since \eg $0 \in \cU$ and $\cE^-(0) = 0$. Convexity also follows
    easily from the convexity of $\cU$ and $\cD^-$.

    It remains to show lower semicontinuity, and since $L^2(I)$ is a metric space, it suffices to
    show sequential lower semicontinuity. Let $u \in L^2(I)$, and let $(u_n)_{n \in \bN}$ be a
    sequence in $L^2(I)$ converging to $u$ in $L^2(I)$. We must show that
    \begin{equation}
        \label{eq:lsc}
        \cE^-(u) \lequestion \liminf_{n \to \infty} \cE^-(u_n) \,.
    \end{equation}
    The only relevant case is when the RHS above is finite, so suppose there exists a subsequence
    $(u_{n_k})_{k \in \bN}$ such that $\lim_{k \to \infty} \cE^-(u_{n_k}) = A < +\infty$. By
    extracting a subsequence if necessary, we may assume that $\cE^-(u_{n_k}) < +\infty$, and thus
    $u_{n_k} \in \cU$, for every $k$.

    \sloppy We claim that $\big( (\upf{u_{n_k}}, \downf{u_{n_k}}) \big)_k$ is bounded as a sequence
    in $L^2(I) \oplus H^1(I)$. First, since $\lim_{k \to \infty} \cE^-(u_{n_k}) = A$, we have that
    $\left( \int_I (\partial_x \downf{u_{n_k}})^2 \odif x \right)_k$ is bounded. Then, since every
    $u_{n_k} \in \cU$ and hence $\int_I \downf{u_{n_k}} \odif x = 0$, the Poincaré-Wirtinger
    inequality implies that
    \[
        \int_I (\downf{u_{n_k}})^2 \odif x \le C \int_I (\partial_x \downf{u_{n_k}})^2 \odif x
    \]
    for some $C > 0$, and hence $(\downf{u_{n_k}})_k$ is bounded in $H^1(I)$. Now, since $u_{n_k}
    \to u$ in $L^2(I)$, we have that $\left( \|u_{n_k}\|_{L^2(I)} \right)_k$ is bounded, and since
    $\left( \|\downf{u_{n_k}}\|_{L^2(I)} \right)_k$ is bounded as a consequence of boundedness in
    $H^1(I)$, it follows that $\left( \|\upf{u_{n_k}}\|_{L^2(I)} \right)_k = \left( \|u_{n_k} -
    \downf{u_{n_k}}\|_{L^2(I)} \right)_k$ is also bounded, that is, $(\upf{u_{n_k}})_k$ is bounded
    in $L^2(I)$. Hence $\big( (\upf{u_{n_k}}, \downf{u_{n_k}}) \big)_k$ is bounded in $L^2(I) \oplus
    H^1(I)$ as claimed.

    Since $L^2(I) \oplus H^1(I)$ is a Hilbert space, we conclude that $\big( (\upf{u_{n_k}},
    \downf{u_{n_k}}) \big)_k$ has a weakly convergent subsequence, which we denote by $\big(
    (\upf{u_{n_{k_m}}}, \downf{u_{n_{k_m}}}) \big)_m$. Then there exist $\onef{v} \in L^2(I)$ and
    $\twof{v} \in H^1(I)$ such that $\upf{u_{n_{k_m}}} \weakto \onef{v}$ weakly in $L^2(I)$ and
    $\downf{u_{n_{k_m}}} \weakto \twof{v}$ weakly in $H^1(I)$. So letting $v \define \onef{v} +
    \twof{v}$, we conclude that $u_{n_{k_m}} \weakto v$ weakly in $L^2(I)$; indeed, for every $f \in
    L^2(I)$, we have
    \[
        \inp{u_{n_{k_m}}}{f} = \inp{\upf{u_{n_{k_m}}} + \downf{u_{n_{k_m}}}}{f}
        = \inp{\upf{u_{n_{k_m}}}}{f} + \inp{\downf{u_{n_{k_m}}}}{f}
        \to \inp{\onef{v}}{f} + \inp{\twof{v}}{f}
        = \inp{v}{f} \,.
    \]
    We now claim that $v \in \cU$. Indeed, by \cref{lemma:weak-limit-monotonic}, $\onef{v}$ is
    nondecreasing and $\twof{v}$ is nonincreasing, and moreover $\int_I \twof{v} \odif x = 0$ by
    weak convergence; hence $(\onef{v}, \twof{v}) \in \cR(v)$. Then, since $u_{n_{k_m}} \weakto v$
    weakly in $L^2(I)$ and $u_{n_{k_m}} \to u$ in $L^2(I)$, we conclude that $u = v$, so $u \in \cU$
    with $(\onef{v}, \twof{v}) \in \cR(u)$. In particular, this means that $\cE^-(u) \le
    \cD^-(\onef{v}, \twof{v}) = \frac{1}{2} \int_I (\partial_x \twof{v})^2 \odif x$.

    Now, the fact that $\downf{u_{n_{k_m}}} \weakto \twof{v}$ weakly in $H^1(I)$ implies that
    $\|\twof{v}\|_{H^1(I)} \le \liminf_{m \to \infty} \|\downf{u_{n_{k_m}}}\|_{H^1(I)}$. Moreover,
    since $H^1(I)$ embeds compactly into $L^2(I)$ by the Rellich-Kondrachov theorem (see \eg
    \cite[Theorem~9.16]{Bre11}), we have that $\downf{u_{n_{k_m}}} \to \twof{v}$ in $L^2(I)$, so
    $\|\downf{u_{n_{k_m}}}\|_{L^2(I)}^2 \to \|\twof{v}\|_{L^2(I)}^2$. Therefore we obtain
    \begin{align*}
        \liminf_{m \to \infty} \cE^-(u_{n_{k_m}})
        &= \frac{1}{2} \liminf_{m \to \infty} \int_I (\partial_x \downf{u_{n_{k_m}}})^2 \odif x
        = \frac{1}{2} \liminf_{m \to \infty} \left( \|\downf{u_{n_{k_m}}}\|_{H^1(I)}^2
            - \|\downf{u_{n_{k_m}}}\|_{L^2(I)}^2 \right) \\
        &= \left( \frac{1}{2} \liminf_{m \to \infty} \|\downf{u_{n_{k_m}}}\|_{H^1(I)}^2 \right)
            - \frac{1}{2} \|\twof{v}\|_{L^2(I)}^2 \\
        &\ge \frac{1}{2} \left( \|\twof{v}\|_{H^1(I)}^2 -  \|\twof{v}\|_{L^2(I)}^2 \right)
        = \frac{1}{2} \int_I (\partial_x \twof{v})^2 \odif x \\
        &\ge \cE^-(u) \,,
    \end{align*}
    and thus \eqref{eq:lsc} holds.
\end{proof}

\begin{definition}[Static Neumann problem]
    \label{def:static-neumann-problem}
    Let $u \in \cU$ and $z \in L^2(I)$. We say that $u, z$ form a \emph{weak solution} to the
    \emph{static Neumann problem}
    \begin{equation}
        \label{eq:static-neumann-problem}
        \begin{cases}
            z = \partial_x \partial_x \downf{u} & \text{in } I \\
            \partial_x \downf{u} = 0            & \text{on } \{0, 1\}
        \end{cases}
    \end{equation}
    if for all $\phi \in H^1(I)$ we have
    \begin{equation}
        \label{eq:weak-solution}
        \int_I z \phi \odif x = -\int_I (\partial_x \downf{u}) (\partial_x \phi) \odif x
        \,.
    \end{equation}
\end{definition}

\begin{remark}
    The equation $z = \partial_x \partial_x \downf{u}$ in \eqref{eq:static-neumann-problem} is also
    called the Poisson equation, while $\partial_x \downf{u} = 0$ on $\{0, 1\}$ is called the
    (homogeneous) Neumann boundary condition.
\end{remark}

\begin{definition}[Nice evolution function]
    \label{def:nice-evolution-functions}
    Let $\bm{u} \in C([0, +\infty); L^2(I))$ and $\bm{u'} : (0, +\infty) \to L^2(I)$. We say that
    $\bm{u}$ is a \emph{nice evolution function} (with weak derivative $\bm{u'}$) if
    1)~$\bm{u}(t) = \bm{u}(0) + \int_{(0, t)} \bm{u'}(s) \odif s$ for all $t > 0$;
    2)~$\bm{u'}_{(\delta, +\infty)} \in L^\infty(\delta, +\infty; L^2(I))$ for all $\delta > 0$,
    where $\bm{u'}_J$ denotes the restriction of $\bm{u'}$ to domain $J \subseteq (0, +\infty)$; and
    3)~$\bm{u'}_{(0, \delta)} \in L^2(0, \delta; L^2(I))$ for all $\delta > 0$.
\end{definition}

\begin{remark}
    \cref{def:nice-evolution-functions} captures functions that are Lipschitz away from zero, and
    absolutely continuous with square-integrable weak derivative near zero.
\end{remark}

\begin{definition}[Neumann evolution problem]
    \label{def:neumann-evolution}
    Let $u_0 \in \cU$ and let $\bm{u} \in C([0, +\infty); L^2(I))$ be a nice evolution function. We
    say $\bm{u}$ is a \emph{weak solution} to the \emph{Neumann evolution problem} with initial
    state (or initial data) $u_0$ if 1)~$\bm{u}(0) = u_0$; 2)~$\bm{u}(t) \in \cU$ for all $t > 0$;
    and 3)~$\bm{u}(t), \bm{u'}(t)$ form a weak solution to the static Neumann problem for \almev $t
    > 0$.
\end{definition}

Before introducing the gradient flow problem, we need some notation. Let $\varphi : L^2(I) \to [0,
+\infty]$ be a convex, proper, lower semicontinuous function. We write $D(\varphi) \define \{ u \in
L^2(I) : \varphi(u) < +\infty \}$ for the \emph{domain} of $\varphi$. For any $u \in L^2(I)$, we
write define the \emph{subdifferential} of $\varphi$ at $u$ as
\[
    \partial \varphi(u) \define \{ z \in L^2(I) :
        \forall v \in L^2(I) \,.\, \varphi(v) \ge \varphi(u) + \inp{z}{v-u} \} \,,
\]
and we write $D(\partial \varphi) \define \{ u \in L^2(I) : \partial \varphi(u) \ne \emptyset \}$
for the \emph{domain} of $\partial \varphi$. Using the fact that $\varphi$ is proper, it is easy to
check that $D(\partial \varphi) \subseteq D(\varphi)$. It is standard that $\partial \varphi$ is a
\emph{maximal monotone operator}; we will not use (the definition of) this property explicitly, but
rather rely on the theory of such operators as presented in \cite{Bre73}.

We can now define the gradient flow problem:

\begin{definition}[Gradient flow problem]
    \label{def:gradient-flow}
    Let $u_0 \in \cU$ and let $\bm{u}$ be a nice evolution function. We say $\bm{u}$ is a
    \emph{solution} to the \emph{gradient flow problem} with initial state (or initial data) $u_0$
    if 1)~$\bm{u}(0) = u_0$; 2)~$\bm{u}(t) \in D(\partial \cE^-)$ for all $t > 0$; and
    3)~$\bm{u'}(t) \in -\partial \cE^-(\bm{u}(t))$ for \almev $t > 0$.
\end{definition}

Note that in both \cref{def:neumann-evolution,def:gradient-flow}, the pointwise condition at $t=0$
makes sense by the requirement that $\bm{u}$ be continuous.

The theory of maximal monotone operators and gradient flows immediately yields that the gradient
flow problem has a unique solution, as follows:

\begin{proposition}
    \label{res:solution}
    Let $u_0 \in \cU$. Then there exists a unique solution $\bm{u} \in C([0, +\infty); L^2(I))$ to
    the gradient flow problem. Moreover, for all $t > 0$ we have
    \[
        \cE^-(u_0) - \cE^-(\bm{u}(t)) = \int_{(0,t)} \|\bm{u'}(s)\|_{L^2(I)}^2 \odif s \,.
    \]
    For $u_0, v_0 \in \cU$, the corresponding solutions $\bm{u}, \bm{v}$ have $\|\bm{u}(t) -
    \bm{v}(t)\|_{L^2(I)} \le \|u_0 - v_0\|_{L^2(I)}$ for all $t > 0$.
\end{proposition}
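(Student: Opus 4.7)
The plan is to invoke the classical theory of gradient flows generated by a convex, proper, lower semicontinuous functional on a Hilbert space, as developed by Brezis in \cite{Bre73}. We have already shown that $\cE^- : L^2(I) \to [0,+\infty]$ is convex, proper and lower semicontinuous, and since $u_0 \in \cU = D(\cE^-)$ by assumption, the initial datum lies in the effective domain. Brezis's fundamental existence theorem for the Cauchy problem $\bm{u'}(t) \in -\partial \cE^-(\bm{u}(t))$ then yields a unique $\bm{u} \in C([0,+\infty); L^2(I))$ with $\bm{u}(0) = u_0$ such that $\bm{u}(t) \in D(\partial \cE^-)$ for all $t > 0$ and $\bm{u'}(t) = -(\partial \cE^-(\bm{u}(t)))^\circ$ for a.e.\ $t > 0$ (where $(\cdot)^\circ$ denotes the minimal norm element). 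I would then translate the regularity of $\bm{u}$ into our definition of \emph{nice evolution function}: the same theorem guarantees that $t \mapsto \bm{u}(t)$ is Lipschitz on $[\delta, +\infty)$ with $\|\bm{u'}\|_{L^\infty(\delta, +\infty; L^2(I))} \le \|(\partial \cE^-(\bm{u}(\delta)))^\circ\|_{L^2(I)}$ for any $\delta > 0$, which gives condition 2, while the fact that $u_0 \in D(\cE^-)$ (rather than merely $\overline{D(\cE^-)}$) upgrades this to $\bm{u'} \in L^2(0, \delta; L^2(I))$ near zero via the bound $\int_{(0,\delta)} \|\bm{u'}(s)\|_{L^2(I)}^2 \odif s \le \cE^-(u_0)$, giving condition 3. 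Condition 1 (absolute continuity with $\bm{u}(t) = \bm{u}(0) + \int_{(0,t)} \bm{u'}(s) \odif s$) is a standard property of gradient flow solutions in Hilbert spaces with the Radon–Nikodym property.

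For the energy identity, I would apply the chain rule for $\cE^-$ along $\bm{u}$ (see, e.g., \cite[Lemme~3.3]{Bre73}), which asserts that when $\bm{u}$ is absolutely continuous with $\bm{u'} \in L^2_\loc((0,+\infty); L^2(I))$ and $\bm{u}(t) \in D(\partial \cE^-)$ a.e., the scalar function $t \mapsto \cE^-(\bm{u}(t))$ is absolutely continuous on $[\delta, +\infty)$ and satisfies $\frac{d}{dt} \cE^-(\bm{u}(t)) = \inp{h(t)}{\bm{u'}(t)}$ for any measurable selection $h(t) \in \partial \cE^-(\bm{u}(t))$. Choosing $h(t) = -\bm{u'}(t)$, this becomes $\frac{d}{dt} \cE^-(\bm{u}(t)) = -\|\bm{u'}(t)\|_{L^2(I)}^2$, and integrating from $\delta$ to $t$ and letting $\delta \to 0^+$ (using continuity of $\bm{u}$ and lower semicontinuity of $\cE^-$, together with the integrability bound already established) yields the announced identity.

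Finally, nonexpansiveness is an immediate consequence of the monotonicity of $\partial \cE^-$. For two initial data $u_0, v_0 \in \cU$ with solutions $\bm{u}, \bm{v}$, the function $t \mapsto \tfrac{1}{2} \|\bm{u}(t) - \bm{v}(t)\|_{L^2(I)}^2$ is absolutely continuous, and differentiating gives
\[
    \frac{d}{dt} \tfrac{1}{2} \|\bm{u}(t) - \bm{v}(t)\|_{L^2(I)}^2
    = \inp{\bm{u'}(t) - \bm{v'}(t)}{\bm{u}(t) - \bm{v}(t)}
    = -\inp{\xi(t) - \eta(t)}{\bm{u}(t) - \bm{v}(t)} \le 0,
\]
where $\xi(t) \in \partial \cE^-(\bm{u}(t))$ and $\eta(t) \in \partial \cE^-(\bm{v}(t))$ are the chosen selections, and the last inequality uses monotonicity of $\partial \cE^-$ (a subdifferential of a convex function is always monotone). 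Integrating from $0$ to $t$ gives the claimed contraction. Uniqueness also follows: if $\bm{u}, \bm{v}$ solve the gradient flow problem with the same initial datum, the contraction estimate forces $\bm{u} \equiv \bm{v}$.

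The main obstacle I anticipate is not any single hard argument but rather the bookkeeping needed to verify that the solution furnished by Brezis's theorem satisfies our specific definition of \emph{nice evolution function} (\cref{def:nice-evolution-functions}), which splits the regularity requirement into a Lipschitz bound away from zero and an $L^2$ bound near zero. This split matches the two regimes naturally produced by the theory — the smoothing effect of the semigroup for $t > 0$ and the weaker $L^2$-in-time integrability at $t = 0^+$ when the initial data only belong to $D(\cE^-)$ and not to $D(\partial \cE^-)$ — and the main care is in invoking the right quantitative estimates from \cite{Bre73} to justify each.
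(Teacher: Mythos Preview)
Your proposal is correct and takes essentially the same approach as the paper: the paper's proof is simply ``This is a direct application of \cite[Theorems~3.1~and~3.2 and Proposition~3.1]{Bre73},'' and you have spelled out precisely how those results yield each clause (existence/uniqueness, the nice-evolution-function regularity, the energy identity, and nonexpansiveness). Your more detailed bookkeeping about the $L^\infty$-away-from-zero versus $L^2$-near-zero regularity is exactly the content packaged into those cited theorems.
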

\begin{proof}
    This is a direct application of \cite[Theorems~3.1~and~3.2 and Proposition~3.1]{Bre73}.
\end{proof}

\paragraph*{Directed heat semigroup.}
As described in \cite{Bre73}, the nonexpansive property of solutions $\bm{u}$ allows us to define an
operator $P_t : L^2(I) \to L^2(I)$ such that $P_t u = \bm{u}(t)$ for each initial state $u \in \cU$
with corresponding solution $\bm{u}$, and with $P_t u$ defined by continuous extension when $u \in
L^2(I) \setminus \cU$ (recall that $\cU$ is dense in $L^2(I)$). Then for every $u \in L^2(I)$ and $t
> 0$ it holds that $P_t u \in D(\partial \cE^-)$, and $(P_t)_{t \ge 0}$ forms a nonexpansive
semigroup in $L^2(I)$, \ie
\begin{enumerate}
    \item $P_{t+s} = P_t P_s$ for all $t, s \in \bR_{\ge 0}$, with $P_0$ the identity;
    \item $P_t u \to u$ in $L^2(I)$ as $t \downarrow 0$, for all $u \in L^2(I)$; and
    \item $\|P_t u - P_t v\|_{L^2(I)} \le \|u - v\|_{L^2(I)}$ for all $u, v \in L^2(I)$ and $t \in
        \bR_{\ge 0}$.
\end{enumerate}
The last property also implies that, for each $t \ge 0$, $P_t : L^2(I) \to L^2(I)$ is continuous. We
say that $P_t$ is the semigroup \emph{generated} by $-\partial \cE^-$.

It turns out that the solution to the gradient flow problem is also a weak solution to the Neumann
evolution problem, as the following results show.

\begin{proposition}
    \label{prop:static-solution-gradient-to-neumann}
    Let $u \in D(\partial \cE^-)$ and $z \in -\partial \cE^-(u)$. Then $u, z$ form a weak solution
    to the static Neumann problem.
\end{proposition}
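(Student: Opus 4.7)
The condition $z \in -\partial \cE^-(u)$ is equivalent to saying that $u$ minimizes the convex functional $v \mapsto \cE^-(v) + \inp{z}{v}$ over $L^2(I)$. My plan is to probe this minimization by perturbations $v = u \pm t\phi$ for $\phi \in H^1(I)$ and $t > 0$ small, using explicit representations of $v$ in $\cR(v)$ to produce tractable upper bounds on $\cE^-(v)$; combining with the subdifferential inequality and letting $t \to 0^+$ should force the identity $\int_I z \phi \odif x = -\int_I (\partial_x \downf u)(\partial_x \phi) \odif x$. Applying this first with $\phi = 1$ (using the rep $(\upf u + t, \downf u)$) immediately gives the auxiliary fact $\int_I z \odif x = 0$.

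For general $\phi \in H^1(I)$, I would decompose $\phi = \phi_n + \phi_d + \phi(0)$ with $\phi_n(x) = \int_0^x (\partial_y \phi)^+ \odif y$ nondecreasing and $\phi_d(x) = -\int_0^x (\partial_y \phi)^- \odif y$ nonincreasing. For $t > 0$, the tuple $(\upf u + t\phi_n + c_1, \downf u + t\phi_d + c_2)$, with constants $c_1, c_2$ enforcing mean zero and the sum constraint, lies in $\cR(u + t\phi)$ and yields $\cE^-(u + t\phi) \le \tfrac{1}{2} \int_I (\partial_x \downf u + t \partial_x \phi_d)^2 \odif x$. Together with the symmetric construction for $-\phi$, dividing by $t$ and sending $t \to 0^+$ produces the sandwich
\[
    -\int_I (\partial_x \downf u)(\partial_x \phi_d) \odif x
    \le \inp{z}{\phi}
    \le -\int_I (\partial_x \downf u)(\partial_x \phi_n) \odif x,
\]
which is generally not tight. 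To close it, I split $\phi = \phi_1 + \phi_2$ so that $\partial_x \phi_2 = \mathbf{1}_{S_0} \partial_x \phi$ is supported on the degenerate set $S_0 \define \{x \in I \stcolon \partial_x \downf u(x) = 0\}$, while $\partial_x \phi_1$ is supported on $I \setminus S_0$.

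For $\phi_1$, the representation $(\upf u, \downf u + t(\phi_1 - \bar\phi_1))$ of $u + t(\phi_1 - \bar\phi_1)$ (where $\bar\phi_1 = \int_I \phi_1 \odif x$) is valid in both $\pm t$ directions whenever $|\partial_x \phi_1|$ is pointwise dominated by $|\partial_x \downf u|$; the resulting two-sided variational equality, combined with $\int_I z = 0$, gives $\inp{z}{\phi_1} = -\int_I (\partial_x \downf u)(\partial_x \phi) \odif x$ (the right-hand side is insensitive to replacing $\phi_1$ by $\phi$ since $\partial_x \downf u$ vanishes on $S_0$). The general case follows by approximating $\phi_1$ by truncations supported on $\{|\partial_x \downf u| \ge \epsilon\}$ and letting $\epsilon \to 0^+$. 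For $\phi_2$, a representation $(\upf u + \phi_{2,+} - c, \downf u - \phi_{2,-} + c)$ that absorbs the perturbation into $\upf u$ (where $\phi_{2,\pm}$ are nondecreasing antiderivatives of $(\mathbf{1}_{S_0}\partial_x \phi)^\pm$ and $c$ balances mean zero) produces, using the disjoint supports of $\partial_x \downf u$ and $\partial_x \phi_{2,\pm}$, the bound $\cE^-(u + \phi_2) \le \cE^-(u) + \tfrac{1}{2}\int_{S_0}((\partial_x \phi)^-)^2 \odif x$; then scaling $\phi_2 \mapsto \alpha\phi_2$ and sending $\alpha \to 0^+$ in the subdifferential inequality pins $\inp{z}{\phi_2} = 0$. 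Adding the two identities $\inp{z}{\phi_1}$ and $\inp{z}{\phi_2}$ yields the weak PDE. The main technical obstacle is the approximation step for $\phi_1$, which requires a careful truncation argument because $|\partial_x \phi_1|/|\partial_x \downf u|$ need not be uniformly bounded on $I \setminus S_0$.
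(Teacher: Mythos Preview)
Your strategy is sound and would yield a correct proof, but it is substantially more laborious than the paper's route. The paper avoids the entire $\phi_1/\phi_2$ split and the truncation argument by exhibiting a \emph{single} representation of $v_\alpha = u + \alpha\phi$ that works for every $\alpha \in \bR$ and every $\phi \in H^1(I)$: namely
\[
    \twof{v}(x) = \downf{u}(x) + \int_0^x \big(\alpha\,\partial_y\phi \,\wedge\, |\partial_y\downf{u}|\big)\,\odif y,
    \qquad
    \onef{v} = v_\alpha - \twof{v}.
\]
One checks directly that $\twof{v}$ is nonincreasing and that $|\partial_x\twof{v}| \le |\partial_x\downf{u} + \alpha\,\partial_x\phi|$, which gives the clean quadratic bound
\[
    \cE^-(u + \alpha\phi) \le \tfrac{1}{2}\int_I (\partial_x\downf{u} + \alpha\,\partial_x\phi)^2\,\odif x
\]
for all $\alpha$. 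Combining with the subgradient inequality and sending $\alpha \to 0^\pm$ immediately yields $\inp{z}{\phi} = -\int_I (\partial_x\downf{u})(\partial_x\phi)\,\odif x$ in one stroke.

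The difference is that your initial representation only feeds the \emph{nonincreasing part} $\phi_d$ of the perturbation into $\downf{u}$, losing the cross term with $\phi_n$; the paper's construction instead feeds \emph{as much of $\alpha\,\partial_x\phi$ as $\downf{u}$ can absorb} (capped by $|\partial_x\downf{u}|$) into the second component, regardless of sign. This is precisely what makes the bound tight to first order without any case analysis on $S_0$, and it sidesteps the approximation issue you flag at the end.
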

\begin{proof}
    Note that we have $u \in D(\partial \cE^-) \subseteq D(\cE^-) = \cU$. Let $\phi \in H^1(I)$; we
    will prove that \eqref{eq:weak-solution} holds. Let $\alpha \ne 0$, let $\psi_\alpha \define
    \alpha \phi$ and let $v_\alpha \define u + \psi_\alpha$. Since $-z \in \partial \cE^-(u)$, we
    have
    \begin{equation}
        \label{eq:subgradient-v-alpha}
        \cE^-(v_\alpha) \ge \cE^-(u) + \inp{-z}{v_\alpha-u}
        = \cE^-(u) + \inp{-z}{\psi_\alpha} \,.
    \end{equation}
    Now, the fact that $u \in \cU$ implies that $\cE^-(u) = \cD^-(u)$. We now claim that $v_\alpha
    \in \cU$ as well, which will imply that $\cE^-(v_\alpha) = \cD^-(v_\alpha)$. In fact, we
    establish the following:

    \begin{claim}
        We have $v_\alpha \in \cU$. Moreover, $\cD^-(v_\alpha) \le \frac{1}{2} \int_I (\partial_x
        \downf{u} + \partial_x \psi_\alpha)^2 \odif x$.
    \end{claim}
    \begin{proof}
        We define a nondecreasing function $\onef{v} \in L^2(I)$ and a nonincreasing function
        $\twof{v} \in H^1(I)$ as follows: for each $x \in I$,
        \begin{align*}
            \onef{v}(x) &\define \upf{u}(x) + \psi_\alpha(0) + \int_{(0, x)}
                    \big( \partial_x \psi_\alpha(y)
                            - \abs{\partial_x \downf{u}(y)} \big)^+ \odif y
                \qquad \text{and} \\
            \twof{v}(x) &\define \downf{u}(x) + \int_{(0, x)}
                    \big( \partial_x \psi_\alpha(y)
                            \land \abs{\partial_x \downf{u}(y)} \big) \odif y \,.
        \end{align*}
        It is clear that $\onef{v}, \twof{v} \in L^2(I)$ with $\onef{v}$ nondecreasing. Since
        $\downf{u} \in H^1(I)$ by definition of $\cR(u)$, while $\psi_\alpha \in H^1(I)$ because
        $\phi \in H^1(I)$, we also obtain that $\twof{v} \in H^1(I)$. To see that $\twof{v}$ is
        nonincreasing, note that for \almev $x \in I$ we have
        \[
            \partial_x \twof{v}(x)
            = \partial_x \downf{u}(x)
                + \big( \partial_x \psi_\alpha(x) \land \abs{\partial_x \downf{u}(x)} \big)
            \le 0 \,.
        \]
        Observe also that $\onef{v} + \twof{v} = u + \psi_\alpha = v_\alpha$. Finally, by
        translating $\onef{v}$ and $\twof{v}$ by a constant if necessary, we can ensure that $\int_I
        \twof{v} \odif x = 0$ without invalidating the other properties. Thus $v_\alpha \in \cU$.

        To show the second part of the claim, we note that for \almev $x \in I$,
        \[
            \Big| \underbrace{\partial_x \twof{v}(x)}_{\le 0} \Big|
            = \Big| \underbrace{\partial_x \downf{u}(x)}_{\le 0}
                + \underbrace{\big( \partial_x \psi_\alpha(x) \land
                    \abs{\partial_x \downf{u}(x)} \big)}_{\le \abs{\partial_x \downf{u}(x)}} \Big|
            \le \Big| \partial_x \downf{u}(x) + \partial_x \psi_\alpha(x) \Big| \,,
        \]
        where the inequality follows from inspecting the cases $\partial_x \psi_\alpha(x) \le
        \abs{\partial_x \downf{u}(x)}$ and $\partial_x \psi_\alpha(x) > \abs{\partial_x
        \downf{u}(x)}$. We conclude that
        \[
            \cD^-(v_\alpha) \le \cD^-(\onef{v}, \twof{v})
            = \frac{1}{2} \int_I (\partial_x \twof{v})^2 \odif x
            \le \frac{1}{2} \int_I (\partial_x \downf{u}(x) + \partial_x \psi_\alpha(x))^2 \odif x
            \,. \qedhere
        \]
    \end{proof}

    \noindent
    Now, putting together \eqref{eq:subgradient-v-alpha} and the claim, we get
    \[
        \frac{1}{2} \int_I (\partial_x \downf{u})^2 \odif x - \int_I z \psi_\alpha \odif x
        \le \cD^-(v_\alpha) \le \frac{1}{2} \int_I (\partial_x \downf{u} + \partial_x \psi_\alpha)^2
        \odif x \,.
    \]
    Simplifying and substituting $\psi_\alpha = \alpha \phi$, we obtain
    \[
        \alpha \int_I z \phi \odif x
        \ge -\frac{1}{2} \alpha^2 \int_I (\partial_x \phi)^2 \odif x
            - \alpha \int_I (\partial_x \downf{u})(\partial_x \phi) \odif x \,.
    \]
    Taking the limits $\alpha \to 0^+$ and $\alpha \to 0^-$, we conclude that
    \[
        \int_I z \phi \odif x = -\int_I (\partial_x \downf{u})(\partial_x \phi) \odif x \,,
    \]
    which is \eqref{eq:weak-solution} as needed.
\end{proof}

\begin{corollary}
    \label{cor:evolution-solution-gradient-to-neumann}
    Let $u_0 \in \cU$ and suppose $\bm{u}$ is a solution to the gradient flow problem. Then $\bm{u}$
    is also a weak solution to the Neumann evolution problem.
\end{corollary}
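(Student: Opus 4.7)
The plan is to verify the three defining conditions of a weak solution to the Neumann evolution problem (\cref{def:neumann-evolution}) directly from the hypotheses on $\bm{u}$ and by invoking \cref{prop:static-solution-gradient-to-neumann}. Since the gradient flow definition already stipulates that $\bm{u}$ is a nice evolution function with $\bm{u}(0)=u_0$, the initial-condition requirement of the Neumann evolution problem is automatic.

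Next, I would handle the requirement $\bm{u}(t) \in \cU$ for all $t > 0$. By \cref{def:gradient-flow} we have $\bm{u}(t) \in D(\partial \cE^-)$ for every $t > 0$, and by the general inclusion $D(\partial \varphi) \subseteq D(\varphi)$ noted just before \cref{def:gradient-flow}, this gives $\bm{u}(t) \in D(\cE^-)$. Unpacking \cref{def:energy-functional}, $D(\cE^-) = \cU$, so $\bm{u}(t) \in \cU$ as required.

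For the third condition, I would argue pointwise in $t$. By \cref{def:gradient-flow}, there is a full-measure set $N \subseteq (0, +\infty)$ such that for every $t \in N$ we have $\bm{u}'(t) \in -\partial \cE^-(\bm{u}(t))$, with $\bm{u}(t) \in D(\partial \cE^-)$. For each such $t$, set $u = \bm{u}(t)$ and $z = \bm{u}'(t)$; then $u \in D(\partial \cE^-)$ and $z \in -\partial \cE^-(u)$, so \cref{prop:static-solution-gradient-to-neumann} applies and yields that $(u, z) = (\bm{u}(t), \bm{u}'(t))$ is a weak solution to the static Neumann problem, \ie \eqref{eq:weak-solution} holds with $\downf{u}$ replaced by $\downf{\bm{u}(t)}$. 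This is precisely the third condition of \cref{def:neumann-evolution}, completing the verification.

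I do not anticipate any real obstacle here: this corollary is essentially a packaging statement that reinterprets the gradient flow solution, which is a pointwise-in-time object, through the static equivalence established in \cref{prop:static-solution-gradient-to-neumann}. The only subtle point is keeping track of the ``for \almev $t$'' qualifier, which is harmless because both definitions allow that qualifier for the subdifferential inclusion and the weak Neumann identity, respectively.
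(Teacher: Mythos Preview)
Your proposal is correct and follows essentially the same approach as the paper's proof: verify the three conditions of \cref{def:neumann-evolution} directly, using the inclusion $D(\partial \cE^-) \subseteq D(\cE^-) = \cU$ for the second and invoking \cref{prop:static-solution-gradient-to-neumann} pointwise in $t$ for the third. The paper's version is just a more compressed rendering of the same argument.
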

\begin{proof}
    The condition $\bm{u}(0) = u_0$ holds by definition of solution to the gradient flow problem,
    and for all $t > 0$ we have $\bm{u}(t) \in D(\partial \cE^-) \subseteq D(\cE^-) = \cU$. Finally,
    for \almev $t > 0$ we have $\bm{u'}(t) \in -\partial \cE^-(\bm{u}(t))$, which by
    \cref{prop:static-solution-gradient-to-neumann} implies that $\bm{u}(t), \bm{u'}(t)$ form a
    solution to the static Neumann problem. Hence $\bm{u}$ is a weak solution to the Neumann
    evolution problem.
\end{proof}

\subsection{Exponential decay of directed Dirichlet energy}
\label{section:energy-decay}

The following elliptic regularity result is standard, and essentially shows that weak solutions to
the static Neumann problem are in fact strong solutions whose regularity is two degrees higher than
that of $z$; in particular, even if we only have $z \in L^2(I)$, we gain one degree of regularity by
obtaining $\downf{u} \in H^2(I)$ when we only assumed $\downf{u} \in H^1(I)$. Recall that $H^0(I)$
is the same as $L^2(I)$.

\begin{lemma}[Elliptic regularity; see {\cite[Chapter IV Section 2, Theorems 3 and 4]{Mik78}}]
    \label{lemma:elliptic-regularity}
    Let $k \ge 0$ be an integer and let $u \in \cU, z \in H^k(I)$ form a weak solution to the static
    Neumann problem
    \[
        \begin{cases}
            z = \partial_x \partial_x \downf{u} & \text{in } I \\
            \partial_x \downf{u} = 0            & \text{on } \{0, 1\} \,.
        \end{cases}
    \]
    Then $\downf{u} \in H^{k+2}(I)$, and moreover $\partial_x \downf{u} = 0$ on $\{0,1\}$ and $z =
    \partial_x \partial_x \downf{u}$ \almev in $I$.
\end{lemma}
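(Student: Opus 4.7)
The plan is to give a direct 1D argument (rather than appealing to the cited reference) since the one-dimensional case is elementary. Write $w := \downf{u} \in H^1(I)$. The weak formulation says
\[
    \int_I z \phi \odif x = -\int_I (\partial_x w)(\partial_x \phi) \odif x \qquad \text{for all } \phi \in H^1(I).
\]
The idea is to convert this into an equation that can be integrated by parts once in $x$, by introducing a primitive for $z$ and then testing against carefully chosen $\phi$.

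First I would test against the constant function $\phi \equiv 1$, which immediately yields the compatibility condition $\int_I z \odif x = 0$. Next, since $z \in H^k(I) \subseteq L^2(I)$, define $Z(x) := \int_{(0,x)} z(y)\odif y$, so that $Z \in H^{k+1}(I)$, $Z(0) = 0$, and by the previous step $Z(1) = 0$. Integration by parts (which is justified for $Z, \phi \in H^1(I)$) rewrites the weak formulation as
\[
    Z(1)\phi(1) - \int_I Z (\partial_x \phi) \odif x = -\int_I (\partial_x w)(\partial_x \phi) \odif x,
\]
and using $Z(1) = 0$ this simplifies to
\[
    \int_I \big(Z - \partial_x w\big)(\partial_x \phi) \odif x = 0 \qquad \text{for all } \phi \in H^1(I).
\]

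Now I would run two test-function arguments. Testing against all $\phi \in C_c^\infty(I)$, the displayed identity says that the distributional derivative of the $L^2$ function $Z - \partial_x w$ vanishes, hence $Z - \partial_x w \equiv C$ a.e.\ for some constant $C \in \bR$. Plugging this back in and testing against $\phi(x) := x$ gives $C \cdot (1 - 0) = 0$, so $C = 0$. Therefore $\partial_x w = Z$ a.e., which has three consequences:
\begin{enumerate}
    \item $\partial_x w \in H^{k+1}(I)$, so $w \in H^{k+2}(I)$, i.e.\ $\downf{u} \in H^{k+2}(I)$.
    \item The continuous representative of $\partial_x w$ satisfies $\partial_x w(0) = Z(0) = 0$ and $\partial_x w(1) = Z(1) = 0$, giving the Neumann boundary condition.
    \item Differentiating weakly, $\partial_x \partial_x w = \partial_x Z = z$ a.e.\ in $I$, establishing the Poisson equation pointwise a.e.
\end{enumerate}

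There is no real obstacle: the only bookkeeping to be careful about is (a) ensuring that the integration by parts step is valid with $Z$ only in $H^1(I)$ rather than $C^1(\overline I)$, which is standard for $H^1$ functions on an interval, and (b) that the test functions $\phi \equiv 1$ and $\phi(x) = x$ are both bona fide elements of $H^1(I)$, which is obvious. Everything else is a direct one-dimensional calculation, and the argument is uniform in $k$ since $Z$ inherits the $H^{k+1}$ regularity from $z \in H^k(I)$ automatically.
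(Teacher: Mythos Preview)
Your proof is correct. The paper does not actually prove this lemma; it simply states it with a citation to a general elliptic regularity result in \cite{Mik78}. Your argument is a self-contained one-dimensional proof that avoids invoking the general machinery: by introducing the primitive $Z$ of $z$, you reduce the second-order weak formulation to the first-order identity $\int_I (Z - \partial_x w)\,\partial_x\phi\,\odif x = 0$, from which the constant-difference conclusion and the boundary values follow by testing against $C_c^\infty$ functions and then against $\phi(x)=x$. This is more elementary and more transparent than deferring to a general reference, and it makes explicit why the one-dimensional case is essentially trivial (no interior/boundary bootstrapping is needed). The only cosmetic point is that in the integration-by-parts step the boundary term is $Z(1)\phi(1) - Z(0)\phi(0)$ rather than just $Z(1)\phi(1)$, but since $Z(0)=0$ this does not affect anything.
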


We now give our energy decay result.

\begin{proposition}
    \label{prop:exponential-decay}
    There exists a constant $K > 0$ such that the following holds. Let $u \in \cU$. Then for all $t
    > 0$,
    \[
        \cE^-(P_t u) \le e^{-Kt} \cE^-(u) \,.
    \]
\end{proposition}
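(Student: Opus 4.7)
The plan is to differentiate $\cE^-(\bm{u}(t))$ along the gradient flow, identify its derivative with $-\|\bm{u}'(t)\|_{L^2(I)}^2$, and then lower-bound this dissipation by $\cE^-(\bm{u}(t))$ via a Poincaré inequality, so that Grönwall closes the loop. Concretely, let $\bm{u}$ be the solution to the gradient flow problem with initial data $u$, so that $P_t u = \bm{u}(t)$.

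First I would use the energy identity from \cref{res:solution}, namely
\[
\cE^-(u) - \cE^-(\bm{u}(t)) = \int_{(0,t)} \|\bm{u}'(s)\|_{L^2(I)}^2 \odif s,
\]
which shows that $t \mapsto \cE^-(\bm{u}(t))$ is absolutely continuous on every $[\delta, \infty)$ with $\tfrac{d}{dt} \cE^-(\bm{u}(t)) = -\|\bm{u}'(t)\|_{L^2(I)}^2$ for \almev $t > 0$.

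Next, I would compute $\|\bm{u}'(t)\|_{L^2(I)}^2$ using the Neumann structure. For \almev $t > 0$ we have $\bm{u}'(t) \in -\partial \cE^-(\bm{u}(t))$, so by \cref{prop:static-solution-gradient-to-neumann}, the pair $\bm{u}(t), \bm{u}'(t)$ is a weak solution to the static Neumann problem. Applying the elliptic regularity result \cref{lemma:elliptic-regularity} with $k = 0$, I obtain $\downf{\bm{u}(t)} \in H^2(I)$, with $\bm{u}'(t) = \partial_x \partial_x \downf{\bm{u}(t)}$ \almev in $I$ and $\partial_x \downf{\bm{u}(t)} = 0$ on $\{0, 1\}$. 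Setting $g_t \define \partial_x \downf{\bm{u}(t)}$, this gives $g_t \in H^1(I)$ with $g_t(0) = g_t(1) = 0$, and
\[
\|\bm{u}'(t)\|_{L^2(I)}^2 = \int_I (\partial_x g_t)^2 \odif x, \qquad 2\cE^-(\bm{u}(t)) = \int_I g_t^2 \odif x.
\]

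The Poincaré inequality for $H^1_0(I)$ functions then yields a universal constant $C > 0$ such that
\[
\int_I g_t^2 \odif x \le C \int_I (\partial_x g_t)^2 \odif x,
\]
that is, $2\cE^-(\bm{u}(t)) \le C \|\bm{u}'(t)\|_{L^2(I)}^2$. Combining with the dissipation identity,
\[
\frac{d}{dt} \cE^-(\bm{u}(t)) = -\|\bm{u}'(t)\|_{L^2(I)}^2 \le -\frac{2}{C} \cE^-(\bm{u}(t))
\]
for \almev $t > 0$. Setting $K \define 2/C$ and applying Grönwall's inequality (using absolute continuity of $\cE^-(\bm{u}(\cdot))$ on $[\delta, \infty)$ and letting $\delta \to 0^+$, together with lower semicontinuity of $\cE^-$ to handle $t = 0$) gives $\cE^-(P_t u) \le e^{-Kt} \cE^-(u)$.

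The main obstacle is the second step: asserting $\bm{u}'(t) = \partial_x \partial_x \downf{\bm{u}(t)}$ pointwise \almev together with the vanishing boundary trace of $\partial_x \downf{\bm{u}(t)}$, which is what upgrades the weak formulation into an honest $H^1_0$ Poincaré estimate on $\partial_x \downf{\bm{u}(t)}$. This is exactly what \cref{lemma:elliptic-regularity} provides; without it, one only has the weak identity $\int z\phi = -\int (\partial_x \downf{u})(\partial_x \phi)$, which does not directly give the $L^2$ norm of $\bm{u}'(t)$ as an integral of $(\partial_x g_t)^2$. Everything else is standard semigroup/Grönwall bookkeeping.
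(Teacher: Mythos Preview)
Your proposal is correct and follows essentially the same approach as the paper: differentiate the energy via the identity from \cref{res:solution}, use \cref{prop:static-solution-gradient-to-neumann} and \cref{lemma:elliptic-regularity} to identify $\bm{u}'(t) = \partial_x\partial_x\downf{\bm{u}(t)}$ with $\partial_x\downf{\bm{u}(t)} \in H^1_0(I)$, apply the Poincar\'e inequality to get $\partial_t \cE^-(\bm{u}(t)) \le -\tfrac{2}{C}\cE^-(\bm{u}(t))$, and conclude by Gr\"onwall. The only difference is cosmetic: since $u \in \cU$ gives $\cE^-(u) < +\infty$, the energy identity already yields absolute continuity on $[0,T]$ directly, so your $\delta \to 0^+$ limiting step is unnecessary (though harmless).
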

\begin{proof}
    Let $u \in \cU$ and let $\bm{u}(t) = P_t u$ be the corresponding solution to the gradient flow
    problem. Recall that \cref{res:solution} gives, for all $t > 0$,
    \[
        \cE^-(\bm{u}(t)) = \cE^-(u) + \int_{(0,t)} -\|\bm{u'}(s)\|_{L^2(I)}^2 \odif s \,.
    \]
    It follows that $t \mapsto \cE^-(\bm{u}(t))$ is absolutely continuous on every interval $[0, T]$
    with weak derivative $\partial_t \cE^-(\bm{u}(t)) = -\|\bm{u'}(t)\|_{L^2(I)}^2$. Moreover for
    \almev $t > 0$ we have that $\bm{u}(t), \bm{u'}(t)$ form a weak solution to the static Neumann
    problem, and thus $\bm{u'}(t) = \partial_x \partial_x \downf{\bm{u}(t)}$ \almev in $I$ and
    $\partial_x \downf{\bm{u}(t)} = 0$ on $\{0,1\}$ by \cref{lemma:elliptic-regularity}. Therefore
    the Poincaré inequality (for zero-on-the-boundary functions) yields
    \[
        \cE^-(\bm{u}(t))
        = \frac{1}{2} \|\partial_x \downf{\bm{u}(t)}\|_{L^2(I)}^2
        \le \frac{1}{2} C \|\partial_x \partial_x \downf{\bm{u}(t)}\|_{L^2(I)}^2
        = \frac{C}{2} \|\bm{u'}(t)\|_{L^2(I)}^2
    \]
    for some constant $C > 0$. It follows that for \almev $t > 0$, we have
    \[
        \partial_t \cE^-(\bm{u}(t))
        = -\|\bm{u'}(t)\|_{L^2(I)}^2
        \le -\frac{2}{C} \cE^-(\bm{u}(t)) \,,
    \]
    which implies that for all $t > 0$,
    \[
        \cE^-(\bm{u}(t)) \le e^{-2t/C} \cE^-(u) \,. \qedhere
    \]
\end{proof}

The exponential decay of $t \mapsto \cE^-(P_t u)$ allows us to find a Cauchy sequence in $(P_t u)_{t
\ge 0}$, and thus establish its strong convergence to some limit in $L^2(I)$. Later on, we will say
more about this limit by reasoning about the weak convergence of $P_t u$ (to the same limit).

\begin{lemma}[Cauchy sequence]
    \label{lemma:cauchy}
    Let $u \in \cU$, and let $(u_k)_{k \in \bN}$ be given by $u_k \define P_k u$. Then $(u_k)_k$ is
    Cauchy as a sequence in $L^2(I)$. As a consequence, $u_k \to u^*$ in $L^2(I)$ for some $u^* \in
    L^2(I)$.
\end{lemma}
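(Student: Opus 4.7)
The plan is to bound consecutive differences $\|u_{k+1} - u_k\|_{L^2(I)}$ by an exponentially decaying quantity, then conclude via the triangle inequality and a geometric series. The key ingredients are the energy identity from \cref{res:solution}, the exponential decay from \cref{prop:exponential-decay}, and the fact that $\bm{u}(t) = u + \int_{(0,t)} \bm{u'}(s) \odif s$ from the definition of nice evolution function (together with the semigroup property $P_{k+s} u = P_s(P_k u)$).

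First, I would fix $u \in \cU$, let $\bm{u}(t) = P_t u$, and for each integer $k \ge 0$ consider the solution starting from $P_k u$, whose derivative over $(k, k+1)$ coincides with $\bm{u'}$ restricted there by the semigroup property. Applying \cref{res:solution} to the shifted problem and then \cref{prop:exponential-decay} gives
\[
    \int_{(k, k+1)} \|\bm{u'}(s)\|_{L^2(I)}^2 \odif s
    = \cE^-(P_k u) - \cE^-(P_{k+1} u)
    \le \cE^-(P_k u)
    \le e^{-Kk} \cE^-(u) \,.
\]
Next, since $\bm{u}$ is a nice evolution function, $u_{k+1} - u_k = \int_{(k,k+1)} \bm{u'}(s) \odif s$ in $L^2(I)$, so the triangle inequality for Bochner integrals followed by the Cauchy-Schwarz inequality (exploiting that the interval has unit length) yields
\[
    \|u_{k+1} - u_k\|_{L^2(I)}
    \le \int_{(k,k+1)} \|\bm{u'}(s)\|_{L^2(I)} \odif s
    \le \left( \int_{(k,k+1)} \|\bm{u'}(s)\|_{L^2(I)}^2 \odif s \right)^{1/2}
    \le e^{-Kk/2} \cE^-(u)^{1/2} \,.
\]

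Finally, for $m > k$, the triangle inequality applied to the telescoping sum gives
\[
    \|u_m - u_k\|_{L^2(I)}
    \le \sum_{i=0}^{m-k-1} \|u_{k+i+1} - u_{k+i}\|_{L^2(I)}
    \le \cE^-(u)^{1/2} \sum_{i=k}^{\infty} e^{-Ki/2}
    = \frac{\cE^-(u)^{1/2}}{1 - e^{-K/2}} \, e^{-Kk/2} \,,
\]
which tends to zero as $k \to \infty$, uniformly in $m$. Thus $(u_k)_k$ is Cauchy in $L^2(I)$, and by completeness of $L^2(I)$ converges to some $u^* \in L^2(I)$.

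I do not expect any serious obstacle: the exponential decay of the energy is doing all the real work, and the only mild care needed is to split the interval $(k, m)$ into unit subintervals before applying Cauchy-Schwarz (applying Cauchy-Schwarz directly on $(k, m)$ would produce an unwanted factor $(m-k)^{1/2}$). The use of the semigroup property to interpret the energy identity as a statement about the tail of $\int \|\bm{u'}\|^2$ is the key conceptual step, but it follows immediately from $P_{k+s} = P_s P_k$.
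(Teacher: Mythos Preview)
Your proof is correct and essentially identical to the paper's: both split the interval $(m,n)$ into unit pieces, apply Cauchy--Schwarz on each to bound $\|u_{j+1}-u_j\|_{L^2(I)}$ by $(\cE^-(u_j)-\cE^-(u_{j+1}))^{1/2}\le \cE^-(u_j)^{1/2}$, invoke \cref{prop:exponential-decay}, and sum the resulting geometric series. Your remark about needing to split before Cauchy--Schwarz is exactly the point, and your use of the semigroup property to justify the energy identity on $(k,k+1)$ is equivalent to taking differences of \cref{res:solution} at $t=k$ and $t=k+1$.
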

\begin{proof}
    The existence of a strong limit from the Cauchy property follows from the fact that $L^2(I)$ is
    a complete normed space. Let us now establish the Cauchy property. Fix any $n > m$ in $\bN$.
    Then
    {\allowdisplaybreaks
        \begin{align*}
            \|u_n - u_m\|_{L^2(I)}
            &= \left\| \int_{(m,n)} \bm{u'}(t) \odif t \right\|_{L^2(I)}
                & \text{(Absolute continuity)} \\
            &\le \sum_{j=m}^{n-1} \int_{(j,j+1)} \|\bm{u'}(t)\|_{L^2(I)} \odif t
                & \text{(Triangle inequality)} \\
            &\le \sum_{j=m}^{n-1}
                \left( \int_{(j,j+1)} \|\bm{u'}(t)\|_{L^2(I)}^2 \odif t \right)^{1/2}
                & \text{(Jensen's inequality)} \\
            &= \sum_{j=m}^{n-1} \left( \cE^-(u_j) - \cE^-(u_{j+1}) \right)^{1/2}
                & \text{(\cref{res:solution})} \\
            &\le \sum_{j=m}^\infty \cE^-(u_j)^{1/2}
                & \text{($\cE^-$ is nonnegative)} \\
            &\le \sum_{j=m}^\infty \left( e^{-Kj} \cE^-(u) \right)^{1/2}
                & \text{(\cref{prop:exponential-decay})} \\
            &= \cE^-(u)^{1/2} \frac{e^{-Km/2}}{1 - e^{-K/2}}
                & \text{(Geometric series)} \\
            &= A \cE^-(u)^{1/2} e^{-Bm}
        \end{align*}
    }%
    for constants $A, B > 0$ that only depend on the constant $K$ from
    \cref{prop:exponential-decay}. It follows that, for every $\epsilon > 0$, there exists $N \in
    \bN$ such that, for all $n > m > N$,
    \[
        \|u_n - u_m\|_{L^2(I)}
        \le A \cE^-(u)^{1/2} e^{-Bm}
        \le A \cE^-(u)^{1/2} e^{-BN}
        < \epsilon \,,
    \]
    so $(u_k)_k$ is Cauchy.
\end{proof}

\begin{proposition}[Strong convergence]
    \label{lemma:strong-convergence}
    Let $u \in \cU$. Then $P_t u$ converges in $L^2(I)$ to some $u^* \in L^2(I)$ as $t \to \infty$.
\end{proposition}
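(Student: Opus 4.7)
The plan is to upgrade the already-established convergence along the integer sequence $(P_k u)_{k \in \bN}$ (\cref{lemma:cauchy}) to convergence along the continuous parameter $t \to \infty$. The limit candidate is simply the $u^* \in L^2(I)$ produced by \cref{lemma:cauchy}, so the only task is to show that $P_t u$ stays close to $P_{\lfloor t \rfloor} u$ for large $t$.

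First I would fix $u \in \cU$, write $\bm{u}(t) = P_t u$, and recall that $\bm{u}$ is a nice evolution function, so absolute continuity gives $\bm{u}(t) - \bm{u}(s) = \int_{(s,t)} \bm{u'}(\tau)\,\odif\tau$ for $0 \le s < t$. Combining the triangle inequality, Jensen's inequality, and the energy identity from \cref{res:solution} exactly as in the proof of \cref{lemma:cauchy}, I obtain
\[
    \|\bm{u}(t) - \bm{u}(s)\|_{L^2(I)}
    \le \int_{(s,t)} \|\bm{u'}(\tau)\|_{L^2(I)}\,\odif\tau
    \le \sqrt{t-s}\,\big(\cE^-(\bm{u}(s)) - \cE^-(\bm{u}(t))\big)^{1/2} \,.
\]
Next I would feed in the exponential decay from \cref{prop:exponential-decay}, which gives $\cE^-(\bm{u}(s)) \le e^{-Ks} \cE^-(u)$. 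In particular, for any $t \in [s, s+1]$ we get the uniform bound $\|\bm{u}(t) - \bm{u}(s)\|_{L^2(I)} \le e^{-Ks/2}\cE^-(u)^{1/2}$, which goes to $0$ as $s \to \infty$.

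With that in hand, for arbitrary $t > 0$ set $k = \lfloor t \rfloor$ and decompose
\[
    \|P_t u - u^*\|_{L^2(I)}
    \le \|P_t u - P_k u\|_{L^2(I)} + \|P_k u - u^*\|_{L^2(I)} \,.
\]
The first term is bounded by $e^{-Kk/2}\cE^-(u)^{1/2}$ by the estimate above (applied with $s = k$), and the second term tends to $0$ by \cref{lemma:cauchy}. Hence both terms vanish as $t \to \infty$, so $P_t u \to u^*$ in $L^2(I)$.

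I do not anticipate any serious obstacle: every ingredient is already in place. The only subtlety is that the estimate from \cref{lemma:cauchy} was phrased for integer times, but since it relies only on the absolute continuity of $\bm{u}$ on every subinterval and on the exponential decay of $\cE^-(\bm{u}(\cdot))$, its adaptation to continuous time is immediate, and the semigroup property is not even needed. If one prefers, the same computation shows directly that the net $(P_t u)_{t \ge 0}$ is Cauchy in $L^2(I)$ as $t \to \infty$, which combined with completeness yields the claimed limit.
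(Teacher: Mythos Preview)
Your proposal is correct and follows essentially the same approach as the paper: take the limit $u^*$ from \cref{lemma:cauchy}, split $\|P_t u - u^*\|$ via the floor $k = \lfloor t \rfloor$, control $\|P_t u - P_k u\|$ by combining absolute continuity, Jensen, the energy identity from \cref{res:solution}, and the exponential decay from \cref{prop:exponential-decay}, and handle $\|P_k u - u^*\|$ via \cref{lemma:cauchy}. The paper's argument is identical in structure and in the ingredients used.
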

\begin{proof}
    Let $u^*$ be such that $(P_k u)_{k \in \bN}$ converges to $u^*$ in $L^2(I)$ as $k \to \infty$,
    as given by \cref{lemma:cauchy}. We claim that $P_t u \to u^*$. Let $\epsilon > 0$, and let $N
    \in \bN$ be such that $\|P_k u - u^*\|_{L^2(I)} < \epsilon$ for all $k \ge N$, as given by the
    convergence of the sequence $(P_k u)_k$. Then for any $t \ge N$, letting $j \define \floor{t}$,
    we have
    \[
        \|P_t u - u^*\|_{L^2(I)} \le \|P_j u - u^*\|_{L^2(I)} + \|P_t u - P_j u\|_{L^2(I)}
    \]
    by the triangle inequality. We have $\|P_j u - u^*\|_{L^2(I)} < \epsilon$ by the choice of $N$,
    and on the other hand,
    \begin{align*}
        \|P_t u - P_j u\|_{L^2(I)}
        &= \left\| \int_{(j,t)} \bm{u'}(s) \odif s \right\|_{L^2(I)}
            & \text{(Absolute continuity)} \\
        &\le \int_{(j,t)} \|\bm{u'}(s)\|_{L^2(I)} \odif s
            & \text{(Triangle inequality)} \\
        &\le (t-j)^{1/2} \left( \int_{(j,t)} \|\bm{u'}(s)\|_{L^2(I)}^2 \odif s \right)^{1/2}
            & \text{(Jensen's inequality)} \\
        &\le \left( \cE^-(P_j u) - \cE^-(P_t u) \right)^{1/2}
            & \text{($t-j \in [0, 1)$, \cref{res:solution})} \\
        &\le e^{-Kj/2} \cE^-(u)^{1/2}
            & \text{($\cE^- \ge 0$, \cref{prop:exponential-decay})} \,.
    \end{align*}
    Thus by letting $N$ be large enough, we can ensure that
    \[
        \|P_t u - P_j u\|_{L^2(I)}
        \le e^{-Kj/2} \cE^-(u)^{1/2}
        \le e^{-KN/2} \cE^-(u)^{1/2}
        < \epsilon \,,
    \]
    and hence $\|P_t u - u^*\|_{L^2(I)} < 2\epsilon$. Thus $P_t u \to u^*$ in $L^2(I)$.
\end{proof}

\subsection{Auxiliary results for studying regularity of solutions}
\label{section:auxiliary-regularity}

\begin{lemma}[Functions in $D(\partial \cE^-)$ are well-behaved]
    \label{lemma:in-domain-well-behaved}
    Suppose $u \in D(\partial \cE^-)$. Let $(a,b) \in I$ be a nonempty interval and suppose that
    $\partial_x \downf{u}(x) < 0$ for all $x \in (a,b)$. Then $\upf{u}$ is constant in $(a,b)$.
\end{lemma}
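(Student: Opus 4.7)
The plan is to exploit the subdifferential structure of $u \in D(\partial \cE^-)$ by constructing an explicit $v \in L^2(I)$ whose perturbation from $u$ is negative in both energy and the directional cost $\inp{z}{v-u}$, thereby violating the subgradient inequality and giving a contradiction. First, since $u \in D(\partial \cE^-)$, \cref{prop:static-solution-gradient-to-neumann} provides some $z \in L^2(I)$ with $-z \in \partial \cE^-(u)$ such that $(u, z)$ is a weak solution of the static Neumann problem, and elliptic regularity (\cref{lemma:elliptic-regularity}) upgrades this to $\downf{u} \in H^2(I)$ with $z = \partial_x \partial_x \downf{u}$ and $\partial_x \downf{u}(0) = \partial_x \downf{u}(1) = 0$. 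Since $H^2(I)$ embeds into $C^1(\overline I)$, $\partial_x \downf{u}$ has a continuous representative, so the hypothesis $\partial_x \downf{u}(x) < 0$ for all $x \in (a, b)$ is genuinely pointwise information.

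Assume for contradiction that $\upf{u}$ is not constant on $(a, b)$. Since the nondecreasing representative of $\upf{u}$ has at most countably many jumps, I can choose continuity points $c < d$ of $\upf{u}$ in $(a, b)$ with $\eta := \upf{u}(d) - \upf{u}(c) > 0$, and by continuity of $\partial_x \downf{u}$ on $[c, d]$ I set $\delta := -\max_{[c, d]} \partial_x \downf{u} > 0$. Let $\phi \in L^\infty(I)$ be the bounded, continuous, nondecreasing function defined by $\phi = 0$ on $[0, c]$, $\phi(x) = \upf{u}(x) - \upf{u}(c)$ on $(c, d)$, and $\phi = \eta$ on $[d, 1]$; the distributional derivative $\odif \phi$ is the restriction of $\odif \upf{u}$ to $(c, d)$, a positive Radon measure of total mass $\eta$. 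Setting $v := u - \phi$, the candidate decomposition $(\onef{v}, \twof{v}) := (\upf{u} - \phi, \downf{u})$ lies in $\cR(v)$: indeed $\twof{v} = \downf{u}$ trivially inherits the required properties, and $\onef{v}$ equals $\upf{u}$ on $(0, c)$, the constant $\upf{u}(c)$ on $(c, d)$, and $\upf{u} - \eta$ on $(d, 1)$, with continuity at $c, d$ from the choice of continuity points, so $\onef{v}$ is nondecreasing. In particular, $v \in \cU$ and $\cE^-(v) \le \cD^-(\onef{v}, \twof{v}) = \cE^-(u)$.

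Finally, I would integrate by parts against the measure $\odif \phi$, using the continuity of $\partial_x \downf{u}$ to pair it with the finite Radon measure and the Neumann boundary condition to kill the boundary terms, to obtain
\begin{equation*}
    \inp{z}{v - u}
    = -\int_I z \phi \odif x
    = \int_{(c, d)} \partial_x \downf{u} \odif \upf{u}
    \le -\delta \eta < 0 \,.
\end{equation*}
Combining the two bounds gives $\cE^-(v) - \cE^-(u) + \inp{z}{v - u} \le -\delta \eta < 0$, which contradicts the subgradient inequality $\cE^-(v) - \cE^-(u) \ge -\inp{z}{v - u}$ implied by $-z \in \partial \cE^-(u)$. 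The main technical step I anticipate is justifying the integration by parts when $\upf{u}$ has genuine jumps and therefore $\phi \in \mathrm{BV}(I) \setminus H^1(I)$; my approach is to mollify $\phi$ to $\phi_\epsilon \in C^\infty(I)$, apply the classical IBP, and pass to the limit using $L^2$ convergence of $\phi_\epsilon$ for the left-hand integral and weak-$*$ convergence of $\partial_x \phi_\epsilon$ to $\odif \phi$ paired against the continuous function $\partial_x \downf{u}$ for the right-hand integral.
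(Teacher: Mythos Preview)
Your proposal is correct and takes a cleaner route than the paper. Both proofs start from the same ingredients---elliptic regularity gives $\downf{u}\in H^2(I)$, $z=\partial_x\partial_x\downf{u}$, Neumann boundary data---and both build a competitor $v$ with $\twof{v}=\downf{u}$ so that $\cE^-(v)\le\cE^-(u)$. But the paper then proves the \emph{reverse} inequality $\cE^-(v)\ge\cE^-(u)$ as well (this is the long paragraph constructing $(\onef{u},\twof{u})$ from an arbitrary $(\onef{w},\twof{w})\in\cR(v)$), in order to identify $(\upf{v},\downf{v})$ with the explicit decomposition and cancel the energy terms in the subgradient inequality, reducing to $\inp{z}{\upf{u}}_{L^2(a,b)}\le 0$; it then invokes a separately stated lemma (\cref{res:inner-product-bound}) for the contradictory lower bound. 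Your approach skips the reverse-inequality work entirely: since the subgradient inequality reads $\cE^-(v)-\cE^-(u)+\inp{z}{v-u}\ge 0$, the easy inequality $\cE^-(v)\le\cE^-(u)$ together with $\inp{z}{v-u}<0$ already gives the contradiction. The integration-by-parts against the Radon measure $\odif\phi$ that you sketch via mollification is exactly the technical content of the paper's \cref{lemma:inner-product-bound}, so you are not bypassing any real work there---but you are packaging it more economically.

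One small correction: $\phi$ is not continuous in general, since $\upf{u}$ may jump inside $(c,d)$; you only arranged continuity at the endpoints $c,d$. This does not affect the argument---$\phi$ is bounded and BV, which is all you use---but the word ``continuous'' in your description of $\phi$ should be dropped.
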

\begin{proof}
    Let $z \in L^2(I)$ be such that $-z \in \partial \cE^-(u)$, which is nonempty by hypothesis.
    By \cref{prop:static-solution-gradient-to-neumann} and \cref{lemma:elliptic-regularity},
    we conclude that $\downf{u} \in H^2(I)$ (which in particular justifies writing the condition
    that $\partial_x \downf{u} < 0$ on $(a,b)$, which we take to mean in terms of the continuous
    representative of $\partial_x \downf{u} \in H^1(I)$).

    By the continuity of $\partial_x \downf{u}$ and the boundary conditions $\partial_x \downf{u}(0)
    = \partial_x \downf{u}(1) = 0$ (which follows from \cref{lemma:elliptic-regularity}), we may
    assume without loss of generality (by extending the interval $(a,b)$ if necessary) that
    $\partial_x \downf{u}(a) = \partial_x \downf{u}(b) = 0$.

    Let
    \[
        Z \define \begin{cases}
            \inf_{[a,b)} \upf{u} & \text{if } a > 0 \\
            \frac{1}{b-a} \int_{(a,b)} \upf{u} \odif x & \text{if } a = 0 \,,
        \end{cases}
    \]
    which is finite in the first case because the infimum must be no smaller than (say)
    $\upf{u}(\frac{a}{2})$, and in the second case because $\upf{u} \in L^2(I)$. Next, define $v \in
    L^2(I)$ by
    \[
        v(x) \define \begin{cases}
            u(x) & \text{if } x \in (0, a) \cup [b, 1) \\
            \downf{u}(x) + Z & \text{if } x \in [a, b) \,.
        \end{cases}
    \]
    We first observe that $v \in \cU$, since $v = \onef{v} + \twof{v}$ with $\twof{v} \define
    \downf{u}$ and $\onef{v} \in L^2(I)$ given by
    \[
        \onef{v}(x) \define \begin{cases}
            \upf{u}(x) & \text{if } x \in (0, a) \cup [b, 1) \\
            Z & \text{if } x \in [a, b) \,,
        \end{cases}
    \]
    which is nondecreasing because $\upf{u}$ is and by the definition of $Z$. Therefore $v \in
    D(\cE^-)$ and
    \[
        \cE^-(v) \le \cD^-(\onef{v}, \twof{v}) 
        = \frac{1}{2} \int_I (\partial_x \twof{v})^2 \odif x
        = \frac{1}{2} \int_i (\partial_x \downf{u})^2 \odif x
        = \cD^-(\upf{u}, \downf{u})
        = \cE^-(u) \,.
    \]
    On the other hand, we claim that $\cE^-(v) \ge \cE^-(u)$ as well. Indeed, let $(\onef{w},
    \twof{w}) \in \cR(v)$ be arbitrary. Construct $\onef{u} \in L^2(I)$ and $\twof{u} \in H^1(I)$ by
    $\twof{u} \define \twof{w}$ and
    \[
        \onef{u}(x) \define \begin{cases}
            \onef{w}(x) & \text{if } x \in (0, a) \cup [b, 1) \\
            u(x) - \twof{u}(x) & \text{if } x \in [a, b) \,,
        \end{cases}
    \]
    That $u = \onef{u} + \twof{u}$ is clear by construction. To show that $(\onef{u}, \twof{u}) \in
    \cR(u)$, it remains to show that $\onef{u}$ is nondecreasing. Certainly it is nondecreasing on
    $(0, a) \cup [b, 1)$ since $\onef{w}$ is nondecreasing. Note that for $x \in [a, b)$, we have
    \[
        \onef{u}(x)
        = u(x) - \twof{w}(x) - \onef{w}(x) + \onef{w}(x)
        = u(x) - v(x) + \onef{w}(x)
        = \upf{u}(x) + \onef{w}(x) - Z \,.
    \]
    Thus $\onef{u}$ is nondecreasing on $[a, b)$, since $\upf{u}$ and $\onef{w}$ are. For $x \in (0,
    a)$ and $y \in [a, b)$, which only applies when $a > 0$, the inequality $\onef{u}(x) \le
    \onef{u}(y)$ follows from the nondecreasing monotonicity of $\onef{w}$ and the inequality
    $\upf{u}(y) \ge Z$, which holds by the (first case of the) definition of $Z$.

    Now, we must show that $\onef{u}(x) \le \onef{u}(y)$ if $x \in [a, b)$ and $y \in [b, 1)$. In
    fact, it suffices to consider $y \in (b, 1)$, because if the inequality holds in all such cases,
    then $\onef{u}$ can be made monotone on all of $I$ by possibly changing its value at $b$, which
    does not affect the \almev equality $u = \onef{u} + \twof{u}$. Therefore let $x \in [a, b)$ and
    $y \in (b, 1)$, which in particular only applies when $b < 1$. Recalling that every monotone
    function has limits from the left and from the right at every point, the key observation is that
    \begin{align*}
        \lim_{s \to b^+} \onef{w}(s) - \lim_{s \to b^-} \onef{w}(s)
        &= \left[ \lim_{s \to b^+} v(s) - \twof{w}(s) \right]
            - \left[ \lim_{s \to b^-} v(s) - \twof{w}(s) \right] \\
        &= \left[ \lim_{s \to b^+} \upf{u}(s) + \downf{u}(s) \right] - \twof{w}(b)
            - \left[ \lim_{s \to b^-} \downf{u}(s) + Z \right] + \twof{w}(b) \\
        &= \lim_{s \to b^+} \upf{u}(s) - Z \,,
    \end{align*}
    where we used the fact that $(\onef{w}, \twof{w}) \in \cR(v)$ in the first equality, the
    definition of $v$ and continuity of $\twof{w}$ in the second equality, and the continuity of
    $\downf{u}$ in the third equality. Now, we have
    \begin{align*}
        \onef{u}(x) \le \onef{u}(y)
        &\iff \upf{u}(x) + \downf{u}(x) - \twof{w}(x) + Z - Z \le \onef{w}(y) \\
        &\iff \upf{u}(x) + \onef{w}(x) - Z \le \onef{w}(y) \\
        &\impliedby \upf{u}(x) + \onef{w}(x) - Z \le \lim_{s \to b^+} \onef{w}(s) \\
        &\iff \upf{u}(x) + \onef{w}(x) - Z
            \le \lim_{s \to b^-} \onef{w}(s) + \lim_{s \to b^+} \upf{u}(s) - Z \\
        &\impliedby \upf{u}(x) \le \lim_{s \to b^+} \upf{u}(s)
            \quad \text{and} \quad \onef{w}(x) \le \lim_{s \to b^-} \onef{w}(s) \,,
    \end{align*}
    which is true because $\upf{u}$ and $\onef{w}$ are nondecreasing and $x < b$. This establishes
    that $\onef{u}$ is nondecreasing and thus $(\onef{u}, \twof{u}) \in \cR(u)$. It follows that
    \[
        \cE^-(u)
        \le \cD^-(\onef{u}, \twof{u})
        = \frac{1}{2} \int_I (\partial_x \twof{u})^2 \odif x
        = \frac{1}{2} \int_I (\partial_x \twof{w})^2 \odif x
        = \cD^-(\onef{w}, \twof{w}) \,,
    \]
    and thus $\cE^-(u) \le \cE^-(v)$ as claimed. Thus $\cE^-(u) = \cE^-(v)$, which in particular
    implies that $\cE^-(v) = \cD^-(\onef{v}, \twof{v})$ and hence $(\upf{v}, \downf{v}) = (\onef{v},
    \twof{v})$.

    By definition of subdifferential, the fact that $-z \in \partial \cE^-(u)$ implies that
    \[
        \frac{1}{2} \int_I (\partial_x \downf{v})^2 \odif x
        = \cE^-(v)
        \ge \cE^-(u) + \inp{-z}{v-u}
        = \frac{1}{2} \int_I (\partial_x \downf{u})^2 \odif x - \inp{z}{v-u} \,.
    \]
    Since $v = u$ on $I \setminus [a,b]$ and $\downf{v} = \downf{u}$, we conclude that
    \[
        0 \le \inp{z}{v-u}
        = \inpspace{z}{v-u}{L^2(a,b)}
        = \inpspace{z}{Z - \upf{u}}{L^2(a,b)} \,.
    \]
    Now, since $z = \partial_x \partial_x \downf{u}$ \almev and $\partial_x \downf{u}(a) =
    \partial_x \downf{u}(b) = 0$, we have
    \[
        \inpspace{z}{Z}{L^2(a,b)}
        = Z \inpspace{\partial_x \partial_x \downf{u}}{1}{L^2(a,b)}
        = Z(\partial_x \downf{u}(b) - \partial_x \downf{u}(a))
        = 0 \,,
    \]
    and hence
    \[
        \label{eq:inner-product-inequality-1}
        \inpspace{z}{\upf{u}}{L^2(a,b)} \le 0 \,.
    \]
    Now, suppose for a contradiction that $\upf{u}$ is not constant in $(a, b)$. Then since
    $\upf{u}$ is nondecreasing, there must exist $a', b'$ with $a < a' < b' < b$ and $\upf{u}(a') <
    \upf{u}(b')$, \ie $\delta \define \upf{u}(b') - \upf{u}(a') > 0$. Let $\alpha \define
    -\sup_{(a',b')} \partial_x \downf{u}$, and note that $\alpha > 0$ by the extreme value theorem
    together with the continuity of $\partial_x \downf{u}$ and the fact that $\partial_x \downf{u} <
    0$ in $[a',b']$. Then \cref{res:inner-product-bound} applied to $f = \upf{u}$, $g = \partial_x
    \downf{u}$ and $a', b' \in J = (a,b)$ implies that $\inpspace{z}{\upf{u}}{L^2(a,b)} =
    \inpspace{\partial_x \partial_x \downf{u}}{\upf{u}}{L^2(a,b)} \ge \alpha \delta > 0$, which is
    the desired contradiction.
\end{proof}

\begin{lemma}
    \label{lemma:inner-product-bound}
    Let $J \subset \bR$ be a finite, nonempty open interval. Let $f \in L^\infty(J)$ be
    nondecreasing and let $g \in H^1(J)$ be such that $g < 0$ in $J$ and $g = 0$ on $\partial J$,
    \ie the endpoints of $J$. Let $[a,b] \subset J$, let $\delta \define f(b) - f(a)$ and let
    $\alpha \define -\sup_{(a,b)} g$. Then $\inpspace{f}{\partial_x g}{L^2(J)} \ge \alpha \delta$.
\end{lemma}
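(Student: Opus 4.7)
The plan is to express $\inpspace{f}{\partial_x g}{L^2(J)}$ in terms of the positive Lebesgue-Stieltjes measure associated with $f$, and then exploit the lower bound $-g \ge \alpha$ on $[a,b]$ together with the positivity of that measure.

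First I would observe that a nondecreasing bounded representative of $f$ has distributional derivative equal to a finite positive Radon measure $\mu$ on $J$; by definition this means
\[
    -\int_J f \, \partial_x \varphi \odif x = \int_J \varphi \odif \mu
        \qquad \text{for every } \varphi \in C_c^\infty(J).
\]
Since $g \in H^1(J)$ and $g = 0$ on $\partial J$, the one-dimensional Sobolev embedding produces a continuous representative of $g$ vanishing at the endpoints, so $g \in H_0^1(J)$ and is approximable in $H^1(J)$ by a sequence $(\varphi_n) \subset C_c^\infty(J)$. The convergence $\varphi_n \to g$ in $H^1(J)$ gives uniform convergence on $\overline J$ (hence $\int_J \varphi_n \odif \mu \to \int_J g \odif \mu$ since $\mu$ is finite) as well as $\partial_x \varphi_n \to \partial_x g$ in $L^2(J)$, and the latter pairs with $f \in L^\infty(J) \subset L^2(J)$ by Cauchy-Schwarz. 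Passing to the limit in the display therefore yields
\[
    \inpspace{f}{\partial_x g}{L^2(J)} = -\int_J g \odif \mu = \int_J (-g) \odif \mu.
\]

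The conclusion would then follow from two elementary estimates. Since $g < 0$ on $J$, the integrand $-g$ is positive; by continuity of $g$ and the definition of $\alpha$, we have $-g \ge \alpha$ on the closed interval $[a,b]$. Restricting the nonnegative integral to $[a,b]$ gives
\[
    \int_J (-g) \odif \mu \ge \alpha\, \mu([a,b]).
\]
I would then verify that $\mu([a,b]) \ge f(b) - f(a) = \delta$: choosing (say) a right-continuous nondecreasing representative of $f$, one has $\mu([a,b]) = f(b) - f(a^-) \ge f(b) - f(a)$ because $f(a^-) \le f(a)$, and a symmetric bound holds for any other choice of representative. Chaining these inequalities yields the desired $\inpspace{f}{\partial_x g}{L^2(J)} \ge \alpha \delta$.

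The main technical obstacle is the integration by parts step: because $f$ need only be a bounded monotone function rather than a Sobolev function, it may have countably many jump discontinuities and will not belong to $W^{1,1}(J)$ in general, so the usual Sobolev integration by parts does not apply directly. The Stieltjes/distributional viewpoint outlined above sidesteps this by treating the derivative of $f$ as a positive measure and combining it with the fact that $g$, as an element of $H^1_0(J)$, can be tested against such a measure via the density of $C_c^\infty(J)$ in $H^1_0(J)$.
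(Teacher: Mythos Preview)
Your proposal is correct and takes a genuinely different route from the paper's proof. The paper mollifies $f$ to obtain smooth nondecreasing approximants $f_\epsilon^*$, integrates by parts against $g$ for each $\epsilon$ (using $g = 0$ on $\partial J$), restricts the resulting integral $-\int (\partial_x f_\epsilon^*) g$ to slightly larger intervals $(a_n,b_n) \supset [a,b]$ chosen so that $f_\epsilon^*$ converges pointwise at the endpoints, and then passes to the limit first in $\epsilon$ (via weak $L^2$ convergence $f_\epsilon^* \weakto f$) and then in $n$. You instead bypass mollification entirely by recognizing the distributional derivative of $f$ as a finite positive Lebesgue--Stieltjes measure $\mu$ and approximating on the $g$ side, using the density of $C_c^\infty(J)$ in $H_0^1(J)$ together with the 1D Sobolev embedding to justify $\inp{f}{\partial_x g} = \int_J (-g)\odif\mu$; the conclusion then drops out from $-g \ge \alpha$ on $[a,b]$ and $\mu([a,b]) \ge f(b^+)-f(a^-) \ge \delta$. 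Your argument is cleaner and avoids the double limit, at the cost of invoking the measure-theoretic interpretation of the derivative of a monotone function; the paper's approach is more self-contained but correspondingly longer.
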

\begin{proof}
    Without loss of generality, we may assume that $J = I$.
    \ignore{
        Indeed suppose the lemma holds for $I$, and let $J = (c, d)$ be arbitrary. Define $f^* \in
        L^\infty(I)$ and $g^* \in H^1(I)$ by $f^*(x) \define f(h(x))$ and $g^*(x) \define g(h(x))$,
        where $h : I \to J$ is given by $h(x) = c + (d-c)x$. Note that $\partial_x g^*(x) =
        \partial_y g(h(x)) \partial_x h(x) = (d-c) \partial_y g(h(x))$. Now, $f^*$ and $g^*$ satisfy
        the same conditions as $f$ and $g$ with respect to $\alpha$ and $\delta$ and inner interval
        $[h^{-1}(a), h^{-1}(b)]$, where the inverse function $h^{-1} : J \to I$ is given by
        $h^{-1}(y) = \frac{y-c}{d-c}$. Hence $\inpspace{f^*}{\partial_x g^*}{L^2(I)} \ge \alpha
        \delta$, and by change of variables we obtain
        \begin{align*}
            \inpspace{f}{\partial_y g}{L^2(J)}
            &= \int_J f(y) (\partial_y g(y)) \odif y
            = \int_{(c,d)} f^*(h^{-1}(y))
                \left( \frac{1}{d-c} \partial_x g^*(h^{-1}(y)) \right) \odif y \\
            &= \int_{(0,1)} \frac{1}{d-c} f^*(x) (\partial_x g^*(x)) (d-c) \odif x
            = \int_I f^* (\partial_x g^*) \odif x
            = \inpspace{f^*}{\partial_x g^*}{L^2(I)}
            \ge \alpha \delta \,.
        \end{align*}
    }%
    Now, for each sufficiently small $\epsilon > 0$, recall that $I_\epsilon = (\epsilon,
    1-\epsilon)$ and let $f_\epsilon \in C^\infty(I_\epsilon)$ be the mollification of $f$. Let
    $f_\epsilon^* : I \to \bR$ be given by
    \[
        f_\epsilon^*(x) \define \begin{cases}
            f_{\epsilon/2}(\epsilon)   & \text{if } x \in (0, \epsilon) \\
            f_{\epsilon/2}(x)          & \text{if } x \in [\epsilon, 1-\epsilon] \\
            f_{\epsilon/2}(1-\epsilon) & \text{if } x \in (1-\epsilon, 1) \,.
        \end{cases}
    \]
    Note that each $f_\epsilon^* \in H^1(I)$, in particular because the piecewise definition is
    continuous and $f_{\epsilon/2}$ is smooth on $[\epsilon, 1-\epsilon]$. Also, each
    $f_{\epsilon/2}$ is nondecreasing and hence so is each $f_\epsilon^*$. Moreover, $f_\epsilon^*
    \to f$ \almev as $\epsilon \to 0$ since this is true of $(f_\epsilon)_{\epsilon > 0}$. Finally,
    we have $f_\epsilon^* \to f$ in $L^2_\loc(I)$ since $f_\epsilon \to f$ in $L^2_\loc(\Omega)$,
    and since $\|f_\epsilon^*\|_{L^\infty(I)} \le \|f_{\epsilon/2}\|_{L^\infty(I)} \le
    \|f\|_{L^\infty(I)}$, we have that $(f_\epsilon^*)_{\epsilon > 0}$ is bounded in $L^2(I)$ and
    \cref{lemma:bounded-loc-to-weak} implies that $f_\epsilon^* \weakto f$ weakly in $L^2(I)$ as
    $\epsilon \to 0$.

    Let $(a_n)_{n \in \bN}, (b_n)_{n \in \bN}$ be two sequences such that $a_n \uparrow a$, $b_n
    \downarrow b$, and moreover, for every $n \in \bN$, $f_\epsilon^*(a_n) \to f(a_n)$ and
    $f_\epsilon^*(b_n) \to f(b_n)$ as $\epsilon \to 0$; the existence of such sequences is
    guaranteed by the fact that $f_\epsilon^* \to f$ almost everywhere. For each $n \in \bN$, let
    $\delta_n \define f(b_n) - f(a_n)$ and $\alpha_n \define -\sup_{(a_n,b_n)} g$. Note that
    $\delta_n \ge \delta$ because $f$ is nondecreasing while $a_n \le a$ and $b \le b_n$, and that
    $\alpha_n \to \alpha$ as $n \to \infty$ by the continuity of $g$.

    For each $\epsilon > 0$, integration by parts gives
    \[
        \inp{f_\epsilon^*}{\partial_x g}
        = \int_I f_\epsilon^* (\partial_x g) \odif x
        = \left. f_\epsilon^* g \right|_0^1 - \int_I (\partial_x f_\epsilon^*) g \odif x \,.
    \]
    Recall that $g = 0$ on $\{0,1\}$. Moreover, since $f_\epsilon^*$ is nondecreasing while $g < 0$
    in $I$, the integrand in the RHS above is nonpositive. Hence we can only make the RHS smaller by
    restricting the range of integration. Thus, fixing any $n \in \bN$ and using the definition of
    $\alpha_n$,
    \[
        \inp{f_\epsilon^*}{\partial_x g}
        \ge - \int_{(a_n,b_n)} (\partial_x f_\epsilon^*) g \odif x
        \ge - \int_{(a_n,b_n)} (\partial_x f_\epsilon^*) (-\alpha_n) \odif x
        = \alpha_n \left. f_\epsilon^* \right|_{a_n}^{b_n} \,.
    \]
    Since $f_\epsilon^* \weakto f$ weakly in $L^2(I)$, we obtain
    \[
        \inp{f}{\partial_x g}
        = \lim_{\epsilon \to 0} \inp{f_\epsilon^*}{\partial_x g}
        \ge \lim_{\epsilon \to 0} \alpha_n \left. f_\epsilon^* \right|_{a_n}^{b_n}
        = \alpha_n \delta_n
        \ge \alpha_n \delta \,,
    \]
    the second equality by the choice of sequences $(a_n)_n, (b_n)_n$ and definition of $\delta_n$.
    We conclude that
    \[
        \inp{f}{\partial_x g} \ge \lim_{n \to \infty} \alpha_n \delta = \alpha \delta \,.
        \qedhere
    \]
\end{proof}

\begin{lemma}
    \label{res:inner-product-bound}
    The statement of \cref{lemma:inner-product-bound} still holds if we replace the condition $f \in
    L^\infty(J)$ with $f \in L^2(J)$.
\end{lemma}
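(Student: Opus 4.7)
The plan is to reduce this statement directly to the already-established $L^\infty$ version (\cref{lemma:inner-product-bound}) by a standard truncation argument. The only thing stopping us from applying that lemma as a black box is that $f \in L^2(J)$ need not be bounded near $\partial J$; however, since $f$ is nondecreasing and $[a,b] \subset J$ is strictly interior to $J$, the values $f(a)$ and $f(b)$ (via, say, the Lebesgue representative) are finite, and $f(a) \le f(x) \le f(b)$ forces $f$ to be bounded on $[a,b]$. So any unboundedness of $f$ occurs only near $\partial J$, which is exactly the part of $J$ that we can afford to chop off without altering the quantities $\delta = f(b) - f(a)$ and $\alpha = -\sup_{(a,b)} g$ appearing in the conclusion.

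First, I would define the truncated functions $f_n \define (f \land n) \lor (-n)$ for each integer $n \ge 1$. Each $f_n$ is nondecreasing, bounded, and hence in $L^\infty(J)$. For every $n > |f(a)| \lor |f(b)|$, since $f_n$ agrees with $f$ wherever $|f| \le n$ and in particular on all of $[a,b]$, we have $f_n(a) = f(a)$ and $f_n(b) = f(b)$, so the quantity $\delta_n \define f_n(b) - f_n(a)$ equals $\delta$; also $\alpha$ is unchanged because it depends only on $g$. Applying \cref{lemma:inner-product-bound} to $f_n$, $g$ and the same interval $[a,b] \subset J$ therefore yields
\[
    \inpspace{f_n}{\partial_x g}{L^2(J)} \ge \alpha\, \delta
    \qquad \text{for all sufficiently large } n.
\]

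Second, I would pass to the limit $n \to \infty$. Pointwise we have $f_n \to f$ \almev, and $|f_n| \le |f|$ \almev, so dominated convergence (with majorant $|f|^2 \in L^1(J)$) gives $f_n \to f$ in $L^2(J)$. Since $\partial_x g \in L^2(J)$, Cauchy--Schwarz implies
\[
    \inpspace{f_n}{\partial_x g}{L^2(J)} \longrightarrow \inpspace{f}{\partial_x g}{L^2(J)} \,,
\]
and taking the limit of the previous display delivers $\inpspace{f}{\partial_x g}{L^2(J)} \ge \alpha \delta$, as required. I do not anticipate a real obstacle: the only moving parts are (i) confirming that the Lebesgue representative of a nondecreasing $L^2$ function is finite at interior points so that the truncation level can be chosen large enough to leave $f$ untouched on $[a,b]$, and (ii) the $L^2$ dominated-convergence step, both of which are routine.
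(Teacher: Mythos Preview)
Your proposal is correct and follows essentially the same approximation-and-limit strategy as the paper: approximate $f$ by bounded nondecreasing functions that agree with $f$ on $[a,b]$, apply \cref{lemma:inner-product-bound} to each approximant, and pass to the limit using $L^2$ convergence. The only cosmetic difference is the choice of truncation---you clip at level $\pm n$ and invoke dominated convergence, whereas the paper replaces $f$ by constants outside expanding subintervals $[a_n,b_n]\supset[a,b]$ and checks $L^2$ convergence by an explicit estimate; your version is if anything slightly cleaner.
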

\begin{proof}
    Again let $J = I$ without loss of generality. We proceed by an approximation argument. Let
    $(a_n)_{n \in \bN}$ be a strictly decreasing sequence satisfying $a_1 = a$ and $a_n \to 0$.
    Similarly, let $(b_n)_{n \in \bN}$ be a strictly increasing sequence satisfying $b_1 = b$ and
    $b_n \to 1$. For each $n \in \bN$, define $f_n \in L^\infty(I)$ by
    \[
        f_n(x) \define \begin{cases}
            \inf_{[a_n, b_n]} f & \text{if } x \in (0, a_n) \\
            f(x)                & \text{if } x \in [a_n, b_n] \\
            \sup_{[a_n, b_n]} f & \text{if } x \in (b_n, 1) \,.
        \end{cases}
    \]
    Note that the infimum and supremum above are finite by virtue of the monotonicity of $f$ and the
    observation that each $[a_n, b_n] \subset I$; thus we indeed have $f_n \in L^\infty(I)$.
    Moreover, each $f_n$ is nondecreasing and, since $a_n \le a < b \le b_n$, we have $f_n(a) =
    f(a)$ and $f_n(b) = f(b)$. \cref{lemma:inner-product-bound} implies that $\inp{f_n}{\partial_x
    g} \ge \alpha \delta$ for $\alpha$ as in that statement and $\delta = f_n(b) - f_n(a)
    = f(b) - f(a)$. Finally, we have $f_n \to f$ in $L^2(I)$; indeed, letting $c \define
    \frac{a+b}{2}$ for convenience and using the monotonicity of $f$,
    \begin{align*}
        \|f - f_n\|_{L^2(I)}^2
        &= \int_{(0,a_n)} \left( f(x) - \inf_{[a_n,b_n]} f \right)^2 \odif x
            + \int_{(b_n,1)} \left( f(x) - \sup_{[a_n,b_n]} f \right)^2 \odif x \\
        &\le \int_{(0,a_n)} \left( f(x) - f(c) \right)^2 \odif x
            + \int_{(b_n,1)} \left( f(x) - f(c) \right)^2 \odif x \\
        &\le 2\left[ \|f\|_{L^2(0,a_n)}^2 + \|f\|_{L^2(b_n,1)}^2 + (a_n + 1 - b_n) f(c)^2 \right]
        \to 0 \,,
    \end{align*}
    the last step by the continuity of the functions $x \to \|f\|_{L^2(0,x)}^2$ and $x \to
    \|f\|_{L^2(x,1)}^2$ and the fact that $a_n + (1-b_n) \to 0$ as $n \to \infty$. Hence
    $\inp{f_n}{\partial_x g} \to \inp{f}{\partial_x g}$, and the conclusion follows.
\end{proof}

The following auxiliary result confirms the natural expectation that, if $u$ has a first derivative
and some degree of regularity, then its components $\upf{u}$ and $\downf{u}$ enjoy the same
regularity and, moreover, can be obtained by taking the positive and negative parts of the
derivative $\partial_x u$, respectively.

\begin{proposition}
    \label{prop:w1p-representation}
    Let $1 \le p \le \infty$, and let $u \in \cU \cap W^{1,p}(I)$. Then $\upf{u}, \downf{u} \in
    W^{1,p}(I)$ and, moreover, we have $\partial_x \upf{u} = \partial_x^+ u$ and $\partial_x
    \downf{u} = \partial_x^- u$ \almev in $I$.
\end{proposition}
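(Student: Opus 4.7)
The plan is to construct, explicitly and in closed form, the candidate optimal representation from the weak derivative $\partial_x u \in L^p(I)$ and then verify both that it lies in $\cR(u)$ with the claimed regularity, and that it minimizes $\cD^-$ -- whence by \cref{prop:optimal-representations} it must coincide with the canonical representation $(\upf{u}, \downf{u})$.

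First I would exploit the assumption $u \in \cU$ to harvest an a priori bound: fix any $(\onef{u}, \twof{u}) \in \cR(u)$. Writing the AC representative of $u \in W^{1,p}(I)$ as a sum of the AC representative of $\twof{u} \in H^1(I)$ and a leftover function, I would argue that $\onef{u}$ itself has an AC, nondecreasing representative (since any AC function agreeing \almev with a monotone function is itself monotone, by examining its integrable weak derivative). Consequently $\partial_x u = \partial_x \onef{u} + \partial_x \twof{u}$ \almev with $\partial_x \onef{u} \ge 0$ and $\partial_x \twof{u} \le 0$, and the elementary pointwise estimate $(a+b)^- \le b^-$ for $a \ge 0$ then yields $(\partial_x u)^- \le -\partial_x \twof{u}$ \almev. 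Since $\partial_x \twof{u} \in L^2(I)$, this gives the crucial fact that $\partial_x^- u \in L^2(I)$ -- regardless of whether $p \ge 2$ or $p < 2$.

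Next I would define
\[
    v^{(2)}(x) \define \int_0^x \partial_x^- u(y) \odif y + c, \qquad v^{(1)} \define u - v^{(2)},
\]
with $c$ chosen so that $\int_I v^{(2)} \odif x = 0$. By construction $\partial_x v^{(2)} = \partial_x^- u \le 0$ \almev, so $v^{(2)}$ is nonincreasing, and since $\partial_x^- u \in L^2(I) \cap L^p(I)$ (the $L^p$ bound following from $|\partial_x^- u| \le |\partial_x u|$), we obtain $v^{(2)} \in H^1(I) \cap W^{1,p}(I)$. Similarly, $\partial_x v^{(1)} = \partial_x u - \partial_x^- u = \partial_x^+ u \ge 0$ \almev, so $v^{(1)}$ is nondecreasing with $\partial_x v^{(1)} \in L^p(I)$, and by Sobolev embedding in one dimension $v^{(1)} \in L^\infty(I) \subset L^2(I)$, so $v^{(1)} \in W^{1,p}(I)$. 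Thus $(v^{(1)}, v^{(2)}) \in \cR(u)$.

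Finally, to conclude $(v^{(1)}, v^{(2)}) = (\upf{u}, \downf{u})$, I would show this candidate minimizes $\cD^-$: for any $(\onef{w}, \twof{w}) \in \cR(u)$, the argument in the first paragraph applied to $\twof{w}$ gives $|\partial_x^- u| = (\partial_x u)^- \le -\partial_x \twof{w} = |\partial_x \twof{w}|$ \almev, so $\int_I (\partial_x v^{(2)})^2 \odif x \le \int_I (\partial_x \twof{w})^2 \odif x$, \ie $\cD^-(v^{(1)}, v^{(2)}) \le \cD^-(\onef{w}, \twof{w})$. By uniqueness of the minimizer (\cref{prop:optimal-representations}), $(\upf{u}, \downf{u}) = (v^{(1)}, v^{(2)})$, yielding both the $W^{1,p}$ regularity and the identities $\partial_x \upf{u} = \partial_x^+ u$ and $\partial_x \downf{u} = \partial_x^- u$. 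The main subtlety -- which I would be most careful about -- is the $p < 2$ case of the preceding paragraph: there is no reason a priori that $\partial_x^- u$ should be square integrable just from $u \in W^{1,p}$, and it is precisely membership in $\cU$ (which supplies \emph{some} $\twof{u} \in H^1$ dominating $\partial_x^- u$) that makes the construction work.
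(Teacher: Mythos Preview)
Your proposal is correct and takes essentially the same approach as the paper: both construct the explicit candidate with derivatives $\partial_x^+ u$ and $\partial_x^- u$, establish the pointwise bound $|\partial_x^- u| \le |\partial_x \twof{w}|$ for every $(\onef{w},\twof{w}) \in \cR(u)$ via the observation that $\onef{w} = u - \twof{w}$ is necessarily AC, and conclude by uniqueness of the minimizer. The only difference is organizational: you front-load the fact that $\partial_x^- u \in L^2(I)$ (extracted from membership in $\cU$) before constructing the candidate, whereas the paper constructs first and folds the $L^2$ bound into the minimization step; the underlying argument is identical.
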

\begin{proof}
    Let $\onef{u}, \twof{u} : I \to \bR$ be given by
    \[
        \onef{u}(y) \define u(0) + C + \int_{(0,y)} \partial_x^+ u(x) \odif x
        \qquad \text{and} \qquad
        \twof{u}(y) \define -C + \int_{(0,y)} \partial_x^- u(x) \odif x
    \]
    for each $y \in I$, where $C \in \bR$ is implicitly defined so as to satisfy
    \[
        \int_I \twof{u} \odif x = 0 \,.
    \]
    Then $\onef{u}$ is nondecreasing, $\twof{u}$ is nonincreasing, both are absolutely continuous
    with $\partial_x \onef{u} = \partial_x^+ u$ and $\partial_x \twof{u} = \partial_x^- u$ \almev in
    $I$, and
    \[
        (\onef{u} + \twof{u})(y)
        = u(0) + \int_{(0,y)} \left( \partial_x^+ u(x) + \partial_x^- u(x) \right) \odif x
        = u(x)
    \]
    for all $y \in I$ by the absolute continuity of $u$. It is also clear that $\onef{u}, \twof{u}
    \in L^\infty(I) \subset L^p(I)$, since they are pointwise upper bounded in magnitude by
    $\abs*{u(0)} + \abs*{C} + \|\partial_x u\|_{L^1(I)} < +\infty$. Moreover, since $u \in
    W^{1,p}(I)$, we have $\partial_x u \in L^p(I)$ and hence $\partial_x^+ u, \partial_x^- u \in
    L^p(I)$, yielding that $\onef{u}, \twof{u} \in W^{1,p}(I)$.

    The proof will be concluded if we show that $(\upf{u}, \downf{u}) = (\onef{u}, \twof{u})$.
    However, at this point we cannot even state that $(\onef{u}, \twof{u}) \in \cR(u)$ because we
    have not established that $\twof{u} \in H^1(I) = W^{1,2}(I)$ (unless $p \ge 2$, of course).
    However, if we can show that any $(\onef{v}, \twof{v}) \in \cR(u)$ satisfies
    \begin{equation}
        \label{eq:wts-rep}
        \int_I (\partial_x \twof{v})^2 \odif x \gequestion \int_I (\partial_x \twof{u})^2 \odif x
        \,,
    \end{equation}
    then using the assumption that $u \in \cU$ and the definition of $\upf{u}, \downf{u}$, we will
    conclude that indeed $(\upf{u}, \downf{u}) = (\onef{u}, \twof{u})$, as needed.

    Let $(\onef{v}, \twof{v}) \in \cR(u)$. We claim that $\abs*{\partial_x \twof{v}(x)} \ge
    \abs*{\partial_x \twof{u}(x)}$ for \almev $x \in I$, which will imply \eqref{eq:wts-rep}. Recall
    that $\twof{v} \in H^1(I) \subset W^{1,1}(I)$. Since $u = \onef{v} + \twof{v}$ \almev and $u \in
    W^{1,p}(I) \subset W^{1,1}(I)$, we conclude that $\onef{v} \in W^{1,1}(I)$. Hence $u, \onef{u},
    \twof{u}, \onef{v}, \twof{v}$ are all absolutely continuous. The fundamental theorem of calculus
    for the Lebesgue integral implies that, almost everywhere in $I$, these functions are all
    differentiable, and their classical and weak derivatives agree with $\partial_x u = \partial_x
    \onef{u} + \partial_x \twof{u} = \partial_x \onef{v} + \partial_x \twof{v}$. On any such point
    $x$, the monotonicity of $\onef{u}, \twof{u}, \onef{v}, \twof{v}$ implies that $\partial_x
    \onef{u}(x), \partial_x \onef{v}(x) \ge 0$ and $\partial_x \twof{u}(x), \partial_x \twof{v}(x)
    \le 0$. Therefore
    \begin{align*}
        \abs*{\partial_x \twof{v}(x)}
        &= -\partial_x \twof{v}(x)
        = -\partial_x u(x) + \partial_x \onef{v}(x)
        = -\partial_x \twof{u}(x) - \partial_x \onef{u}(x) + \partial_x \onef{v}(x) \\
        &= \abs*{\partial_x \twof{u}(x)} - \partial_x \onef{u}(x) + \partial_x \onef{v}(x) \,.
    \end{align*}
    Now, if $\partial_x u(x) \ge 0$, then $\partial_x \twof{u}(x) = 0$ by definition, so
    $\abs*{\partial_x \twof{v}(x)} \ge \abs*{\partial_x \twof{u}(x)}$ holds trivially. Otherwise, we
    conversely have $\partial_x \onef{u}(x) = 0$ while $\partial_x \onef{v} \ge 0$, and hence, by
    the above,
    \[
        \abs*{\partial_x \twof{v}(x)} \ge \abs*{\partial_x \twof{u}(x)} \,,
    \]
    which concludes the proof.
\end{proof}

The following standard facts will also be useful.

\begin{fact}[See \eg {\cite[Theorem 4.4]{EG15}}]
    \label{res:derivative-on-zero}
    Let $1 \le p < \infty$ and let $f \in W^{1,p}(I)$. Then $\partial_x f = 0$ \almev on $\{f =
    0\}$.
\end{fact}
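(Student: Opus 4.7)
The plan is to combine the characterization of $W^{1,p}(I)$ in one dimension as the space of absolutely continuous functions (so that $f$ admits a continuous representative, which is classically differentiable \almev, with classical derivative agreeing \almev with the weak derivative $\partial_x f$) with the Lebesgue density theorem applied to the level set $E \define \{f = 0\}$. I work with the continuous representative of $f$ throughout.

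The argument proceeds in three steps. First, I would define $A \subseteq E$ as the set of points $x_0 \in E$ such that (i)~$f$ is classically differentiable at $x_0$ and (ii)~$x_0$ is a point of Lebesgue density $1$ for $E$. Condition (i) excludes only a null subset of $I$ by absolute continuity, while the Lebesgue density theorem guarantees that (ii) holds for \almev $x_0 \in E$; hence $\cL(E \setminus A) = 0$. Second, for any fixed $x_0 \in A$, the density-$1$ property at $x_0$ ensures that $\cL(E \cap (x_0 - r, x_0 + r)) \ge r$ for all sufficiently small $r > 0$, so that one can extract a sequence $x_n \to x_0$ with $x_n \in E \setminus \{x_0\}$. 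Since $f(x_n) = f(x_0) = 0$ for every $n$, the classical difference quotients $\frac{f(x_n) - f(x_0)}{x_n - x_0}$ are identically zero along this sequence, and classical differentiability of $f$ at $x_0$ forces
\[
    f'(x_0) = \lim_{n \to \infty} \frac{f(x_n) - f(x_0)}{x_n - x_0} = 0 \,.
\]
Third, since $f'$ agrees \almev with the weak derivative $\partial_x f$ and $\cL(E \setminus A) = 0$, this yields $\partial_x f = 0$ \almev on $E$.

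I do not anticipate a substantial obstacle in this proof: it is essentially a bookkeeping exercise stacking two standard tools (the Sobolev--absolute continuity correspondence on an interval, and the Lebesgue density theorem). The only point that warrants a line of care is the extraction of the approximating sequence $x_n \in E$, which is immediate from the definition of a density point but is the sole place where the measure-theoretic content of $E$ is actually used.
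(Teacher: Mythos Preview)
Your argument is correct and is one of the standard proofs of this fact. Note, however, that the paper does not actually prove this statement: it is recorded as a \emph{Fact} with a citation to \cite[Theorem~4.4]{EG15} and no proof is given. So there is no ``paper's proof'' to compare against; your write-up simply supplies a self-contained justification where the paper defers to a reference.
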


The following fact is an immediate application of the Sobolev embedding theorem:

\begin{fact}
    \label{res:sobolev-embedding}
    Let $f \in H^2(I)$. Then $f \in C^{1,1/2}(\overline I)$. In particular, $f$ is continuously
    differentiable.
\end{fact}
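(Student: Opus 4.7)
The plan is to invoke the Sobolev embedding theorem in the one-dimensional setting, specifically its Morrey-type version. The basic fact I would use is that in dimension $n=1$, for a bounded interval $I$, one has the continuous embedding $W^{1,2}(I) \hookrightarrow C^{0,1/2}(\overline{I})$, which follows from Morrey's inequality (noting that the critical threshold is $k - n/p = 1 - 1/2 = 1/2$, matching the Hölder exponent).

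The argument then has two short steps. First, since $f \in H^2(I) = W^{2,2}(I)$, the weak derivative $\partial_x f$ belongs to $H^1(I) = W^{1,2}(I)$. Applying the embedding above to $\partial_x f$ produces a representative of $\partial_x f$ in $C^{0,1/2}(\overline{I})$. Second, writing $f(x) = f(0) + \int_{(0,x)} \partial_x f(y) \odif y$ via the absolute continuity of $H^1(I)$ functions in one dimension, the right-hand side is a classically differentiable function of $x$ whose derivative equals the continuous representative of $\partial_x f$. Hence $f$ has a representative in $C^{1,1/2}(\overline{I})$, which is in particular continuously differentiable.

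No step here poses a genuine obstacle: Morrey's inequality in one dimension is a standard textbook fact (see \eg \cite[Theorem~9.12]{Bre11}), and the bootstrap from $H^2$ to $C^{1,1/2}$ via one application to $\partial_x f$ is the canonical use case. The only thing to be careful about is working with the correct continuous representatives throughout, so that the statement ``$f$ is continuously differentiable'' is interpreted in the almost-everywhere-equal sense consistent with the conventions set up in \cref{section:pde-preliminaries}.
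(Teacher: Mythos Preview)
Your proposal is correct. The paper does not actually prove this fact: it states it as ``an immediate application of the Sobolev embedding theorem'' and moves on. Your argument via Morrey's inequality applied to $\partial_x f \in H^1(I)$ followed by the fundamental theorem of calculus is exactly the standard proof underlying that embedding, so you are simply unpacking what the paper treats as a black box.
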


\begin{fact}[See \eg {\cite[Chapter II, Corollary 2.1]{Bar76}}]
    \label{fact:domains}
    The set $D(\partial \cE^-)$ is a dense subset of $D(\cE^-) = \cU$.
\end{fact}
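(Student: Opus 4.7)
The plan is to prove density via the standard Moreau-Yosida regularization of the convex functional $\cE^-$. For each $\lambda > 0$ and each $u \in L^2(I)$, I would define the resolvent $J_\lambda u \in L^2(I)$ as the unique minimizer of the variational problem
\[
v \mapsto \cE^-(v) + \frac{1}{2\lambda} \|v - u\|_{L^2(I)}^2.
\]
Existence and uniqueness of $J_\lambda u$ follow from standard Hilbert-space convex analysis: the penalty term is strictly convex and coercive in $v$, and $\cE^-$ is proper, convex, and lower semicontinuous (as already established), so the sum is proper, strictly convex, coercive, and lower semicontinuous on the reflexive space $L^2(I)$, and hence attains its infimum at a unique point.

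The first key step is to observe that $J_\lambda u \in D(\partial \cE^-)$. Indeed, the optimality of $J_\lambda u$ means that for every $w \in L^2(I)$,
\[
\cE^-(w) + \frac{1}{2\lambda} \|w - u\|_{L^2(I)}^2
\ge \cE^-(J_\lambda u) + \frac{1}{2\lambda} \|J_\lambda u - u\|_{L^2(I)}^2.
\]
Expanding the squared norms and rearranging yields
\[
\cE^-(w) \ge \cE^-(J_\lambda u) + \inpspace{\tfrac{u - J_\lambda u}{\lambda}}{w - J_\lambda u}{L^2(I)} - \frac{1}{2\lambda}\|w - J_\lambda u\|_{L^2(I)}^2,
\]
and applying this inequality with $w$ replaced by $J_\lambda u + t(w - J_\lambda u)$ for $t \in (0,1]$ and taking $t \to 0^+$ gives $\tfrac{u - J_\lambda u}{\lambda} \in \partial \cE^-(J_\lambda u)$. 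Thus $J_\lambda u \in D(\partial \cE^-)$ for every $\lambda > 0$ and every $u \in L^2(I)$.

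The second step is to show that $J_\lambda u \to u$ in $L^2(I)$ as $\lambda \to 0^+$ whenever $u \in D(\cE^-) = \cU$. Using $u$ itself as a test value against the minimizer,
\[
\cE^-(J_\lambda u) + \frac{1}{2\lambda} \|u - J_\lambda u\|_{L^2(I)}^2
\le \cE^-(u) + 0 = \cE^-(u) < +\infty,
\]
and since $\cE^- \ge 0$ this gives $\|u - J_\lambda u\|_{L^2(I)}^2 \le 2\lambda \cE^-(u) \to 0$. Therefore the sequence $(J_{1/n} u)_{n \in \bN} \subset D(\partial \cE^-)$ converges to $u$ in $L^2(I)$, establishing density. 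The only delicate point is the first step (turning the optimality condition into an element of $\partial \cE^-$), but this is the standard Moreau-Yosida argument and is available for any proper, convex, lower semicontinuous functional on a Hilbert space, so no property specific to the directed setting is needed beyond what was already verified for $\cE^-$.
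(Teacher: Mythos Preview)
Your argument is correct and is precisely the standard Moreau--Yosida/resolvent proof that $D(\partial\varphi)$ is dense in $D(\varphi)$ for any proper, convex, lower semicontinuous $\varphi$ on a Hilbert space. The paper does not prove this statement at all: it records it as a Fact with a citation to \cite{Bar76}, so there is no in-paper argument to compare against; the proof you give is essentially the one behind the cited reference. One small expository point: in your first step, after substituting $w_t = J_\lambda u + t(w-J_\lambda u)$ you implicitly use the convexity of $\cE^-$ (already established in the paper) to bound $\cE^-(w_t)\le (1-t)\cE^-(J_\lambda u)+t\cE^-(w)$ before dividing by $t$ and sending $t\to 0^+$; making this step explicit would leave no gap.
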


\begin{observation}
    \label{obs:closure-of-domains}
    Since $\cU$ contains $H^1(I)$, which is dense in $L^2(I)$, \cref{fact:domains} implies that
    \[
        \overline{D(\partial \cE^-)} = \overline{D(\cE^-)} = \overline{\cU} = L^2(I) \,.
    \]
\end{observation}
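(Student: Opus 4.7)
The plan is to verify each of the three equalities in turn, working from right to left, since each is an essentially immediate consequence of the information listed in the hypothesis of the observation together with earlier definitions.

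First I would establish $\overline{\cU} = L^2(I)$. The inclusion $\overline{\cU} \subseteq L^2(I)$ is automatic because $\cU \subseteq L^2(I)$ by \cref{def:down-h1-functions} and $L^2(I)$ is closed (in itself). For the reverse inclusion, I would use the stated fact that $H^1(I) \subseteq \cU$, from which $\overline{\cU} \supseteq \overline{H^1(I)}$, combined with the standard density result that $H^1(I)$ is dense in $L^2(I)$ (for instance because $C^\infty_c(I) \subset H^1(I)$ is already dense in $L^2(I)$).

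Next I would handle $\overline{D(\cE^-)} = \overline{\cU}$. This is immediate from the definition of the functional $\cE^-$ in \cref{def:energy-functional}: by construction $\cE^-(u) < +\infty$ if and only if $u \in \cU$, so $D(\cE^-) = \{u \in L^2(I) : \cE^-(u) < +\infty\} = \cU$, and therefore the closures coincide as well.

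Finally, for $\overline{D(\partial \cE^-)} = \overline{D(\cE^-)}$, I would invoke \cref{fact:domains}, which states that $D(\partial \cE^-)$ is a dense subset of $D(\cE^-)$. Density of $D(\partial \cE^-)$ in $D(\cE^-)$ (with respect to the ambient $L^2(I)$ topology) means $D(\cE^-) \subseteq \overline{D(\partial \cE^-)}$, and hence $\overline{D(\cE^-)} \subseteq \overline{D(\partial \cE^-)}$; the reverse inclusion follows from $D(\partial \cE^-) \subseteq D(\cE^-)$. Chaining the three equalities gives the observation. There is no real obstacle here — this is a bookkeeping step — and one might even just write it inline as a single sentence rather than as a separate proof.
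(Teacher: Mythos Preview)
Your proposal is correct and matches the paper's approach: the paper treats this as a self-explanatory observation with no separate proof, relying on exactly the facts you cite ($H^1(I) \subset \cU$, density of $H^1(I)$ in $L^2(I)$, $D(\cE^-) = \cU$, and \cref{fact:domains}). Your expansion into three separate verifications is more detailed than necessary but perfectly sound, and you rightly note it could be collapsed to a single sentence.
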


\subsection{Preservation of $H^1$ regularity}
\label{section:preservation-of-regularity}

We wish to show that if the initial state $u$ is in $H^1(I)$, then $P_t u$ remains in $H^1(I)$ for
all times $t > 0$. To that end, define $\varphi : L^2(I) \to [0, +\infty]$ by
\[
    \varphi(u) \define \begin{cases}
        \|\partial_x u\|_{L^2(I)}^2 & \text{if } u \in H^1(I) \\
        +\infty                     & \text{otherwise.}
    \end{cases}
\]

The theory of maximal monotone operators gives us a recipe to establish that $t \mapsto \varphi(P_t
u)$ is nonincreasing. The key ingredients are
\cref{claim:varphi-properties,lemma:varphi-inequality}.

\begin{claim}
    \label{claim:varphi-properties}
    The functional $\varphi$ is convex, proper and lower semicontinuous.
\end{claim}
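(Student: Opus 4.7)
The plan is to verify each of the three properties in turn, each via a standard argument.

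First, properness is immediate: $\varphi$ is nonnegative by definition, and $\varphi(0) = 0 < +\infty$ since the zero function lies in $H^1(I)$, so $\varphi$ is not identically $+\infty$ and never attains $-\infty$.

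For convexity, let $u, v \in L^2(I)$ and $\lambda \in [0, 1]$. If either of $u$ or $v$ is not in $H^1(I)$, then the RHS $\lambda \varphi(u) + (1-\lambda) \varphi(v)$ equals $+\infty$ and the inequality is trivial. Otherwise $u, v \in H^1(I)$, so $\lambda u + (1-\lambda) v \in H^1(I)$ as well (since $H^1(I)$ is a vector space), and the inequality
\[
    \|\partial_x(\lambda u + (1-\lambda) v)\|_{L^2(I)}^2
    \le \lambda \|\partial_x u\|_{L^2(I)}^2 + (1-\lambda) \|\partial_x v\|_{L^2(I)}^2
\]
follows from the linearity of the weak derivative operator combined with the convexity of $w \mapsto \|w\|_{L^2(I)}^2$.

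The only nontrivial step is lower semicontinuity, which I would establish by a standard weak-compactness argument. Since $L^2(I)$ is a metric space, it suffices to show sequential lower semicontinuity: given $u_n \to u$ in $L^2(I)$, we want $\varphi(u) \le \liminf_n \varphi(u_n)$. The only relevant case is when the right-hand side is finite, so by passing to a subsequence we may assume $\varphi(u_n) \to L < +\infty$ with each $u_n \in H^1(I)$. Then $(\partial_x u_n)_n$ is bounded in $L^2(I)$, and since $u_n \to u$ in $L^2(I)$ the sequence $(u_n)_n$ is itself bounded in $L^2(I)$; together these give that $(u_n)_n$ is bounded in $H^1(I)$. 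By the reflexivity of $H^1(I)$ we extract a weakly convergent subsequence $u_{n_k} \weakto v$ in $H^1(I)$, which in particular converges weakly in $L^2(I)$; combined with the strong convergence $u_{n_k} \to u$ in $L^2(I)$ this forces $v = u$, so $u \in H^1(I)$. Finally, the weak lower semicontinuity of the $L^2(I)$ norm applied to $\partial_x u_{n_k} \weakto \partial_x u$ yields
\[
    \varphi(u) = \|\partial_x u\|_{L^2(I)}^2
    \le \liminf_{k \to \infty} \|\partial_x u_{n_k}\|_{L^2(I)}^2 = L
    = \liminf_{n \to \infty} \varphi(u_n) \,,
\]
which is the desired inequality. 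No real obstacle is anticipated; this is a textbook argument entirely analogous to the lower semicontinuity proof already carried out for $\cE^-$ earlier in the section, only simpler because we do not need to track the canonical decomposition.
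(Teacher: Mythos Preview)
Your proof is correct and follows essentially the same approach as the paper: both dispose of convexity and properness quickly, then prove lower semicontinuity by extracting an $H^1$-weakly convergent subsequence and invoking weak lower semicontinuity of the norm. The only cosmetic difference is that you apply weak lower semicontinuity directly to $\partial_x u_{n_k} \weakto \partial_x u$ in $L^2(I)$, whereas the paper applies it to the full $H^1$ norm and then subtracts off the $L^2$ part using strong convergence; your version is slightly more direct.
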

\begin{proof}
    Convexity and properness are straightforward; it remains to verify lower semicontinuity. Since
    $L^2(I)$ is a metric space, it suffices to check sequential lower semicontinuity. Let $(u_n)_{n
    \in \bN}$ be a sequence in $L^2(I)$ such that $u_n \to u$ in $L^2(I)$. We need to show that
    \[
        \varphi(u) \lequestion \liminf_{n \to \infty} \varphi(u_n) \,.
    \]
    The only relevant case is when the RHS above is finite, so suppose there exists a subsequence
    $(u_{n_k})_{k \in \bN}$ such that $\lim_{k \to \infty} \varphi(u_{n_k}) = A < +\infty$. By
    extracting a subsequence if necessary, we may assume that $\varphi(u_{n_k}) < +\infty$, and thus
    $u_{n_k} \in H^1(I)$, for every $k$.

    We claim that $(u_{n_k})_k$ is bounded in $H^1(I)$. Indeed suppose this is not the case. Then
    since $\|u_{n_k}\|_{H^1(I)}^2 = \|u_{n_k}\|_{L^2(I)}^2 + \|\partial_x u_{n_k}\|_{L^2(I)}^2$ and
    $\|\partial_x u_{n_k}\|_{L^2(I)}^2$ remains bounded due to the fact that $\varphi(u_{n_k}) \to
    A$, we conclude that $\|u_{n_k}\|_{L^2(I)}$ gets arbitrarily large as $k \to \infty$. But this
    contradicts the fact that $u_{n_k} \to u$ in $L^2(I)$, so the claim holds.

    It follows that we may extract a weakly convergent subsequence $(u_{n_{k_\ell}})_{\ell \in
    \bN}$, and by uniqueness of weak limits, we obtain that $u \in H^1(I)$ and $u_{n_{k_\ell}}
    \weakto u$ weakly in $H^1(I)$. By weak lower semicontinuity the norm in $H^1(I)$,
    \[
        \|u\|_{L^2(I)}^2 + \|\partial_x u\|_{L^2(I)}^2
        = \|u\|_{H^1(I)}^2
        \le \liminf_{\ell \to \infty} \|u_{n_{k_\ell}}\|_{H^1(I)}^2
        = \liminf_{\ell \to \infty}
            \|u_{n_{k_\ell}}\|_{L^2(I)}^2 + \|\partial_x u_{n_{k_\ell}}\|_{L^2(I)}^2 \,.
    \]
    Since $u_{n_{k_\ell}} \to u$ in $L^2(I)$, we have $\|u_{n_{k_\ell}}\|_{L^2(I)}^2 \to
    \|u\|_{L^2(I)}^2$ and hence
    \[
        \varphi(u)
        = \|\partial_x u\|_{L^2(I)}^2
        \le -\|u\|_{L^2(I)}^2 + \liminf_{\ell \to \infty}
            \|u_{n_{k_\ell}}\|_{L^2(I)}^2 + \|\partial_x u_{n_{k_\ell}}\|_{L^2(I)}^2
        = \liminf_{\ell \to \infty} \varphi(u_{n_{k_\ell}}) \,.
        \qedhere
    \]
\end{proof}

\begin{lemma}
    \label{lemma:varphi-inequality}
    Let $u \in D(\partial \cE^-)$ and let $z \in -\partial \cE^-(u)$. Then for all $\lambda > 0$,
    $\varphi(u - \lambda z) \ge \varphi(u)$.
\end{lemma}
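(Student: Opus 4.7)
The approach is to expand $\varphi(u - \lambda z)$ and show that the cross-term $\inpspace{\partial_x u}{\partial_x z}{L^2(I)}$ is non-positive. If $v := u - \lambda z \notin H^1(I)$ then $\varphi(v) = +\infty$ and the inequality is trivial, so assume $v \in H^1(I)$. By \cref{prop:static-solution-gradient-to-neumann} and \cref{lemma:elliptic-regularity}, $\downf{u} \in H^2(I)$ with $\partial_x \downf{u}(0) = \partial_x \downf{u}(1) = 0$ and $z = \partial_x \partial_x \downf{u}$ a.e.; let $A := \{x \in I : \partial_x \downf{u}(x) < 0\}$, which is open by continuity of $\partial_x \downf{u} \in H^1(I)$, and $B := I \setminus A$.

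The first task is a regularity bootstrap showing $u \in H^1(I)$ (and hence $z = (u - v)/\lambda \in H^1(I)$). By \cref{lemma:in-domain-well-behaved}, $\upf{u}$ is locally constant on $A$, so the distributional derivative of $\upf{u}$ is supported in $B$. By \cref{res:derivative-on-zero} applied to $\partial_x \downf{u} \in H^1(I)$, $z = 0$ a.e.\ on $B$ and $\downf{u}$ is locally constant on each connected component of $\mathrm{int}(B)$, so on any such component $J$ we have $\upf{u}|_J = (v - \downf{u})|_J \in H^1(J)$. The subdifferential condition $-z \in \partial \cE^-(u)$, tested against suitable indicator-type perturbations (which go beyond the first-order weak Neumann equation provided by \cref{prop:static-solution-gradient-to-neumann}), eliminates any singular contribution of $\upf{u}$'s derivative concentrated on the measure-zero boundary $\partial A$, yielding $\upf{u} \in H^1(I)$.

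With $u, z \in H^1(I)$, I decompose $\partial_x u = \partial_x \upf{u} + \partial_x \downf{u}$ via \cref{prop:w1p-representation}. Integration by parts using the Neumann boundary conditions and $z = \partial_x \partial_x \downf{u}$ gives $\inpspace{\partial_x \downf{u}}{\partial_x z}{L^2(I)} = -\|z\|_{L^2(I)}^2$. Moreover, $\partial_x \upf{u}$ and $\partial_x z$ have disjoint supports up to measure zero---$\partial_x \upf{u} = 0$ a.e.\ on $A$ by \cref{lemma:in-domain-well-behaved}, while $\partial_x z = 0$ a.e.\ on $\{z = 0\} \supseteq B$ by \cref{res:derivative-on-zero} applied to $z \in H^1(I)$---so $\inpspace{\partial_x \upf{u}}{\partial_x z}{L^2(I)} = 0$. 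Combining yields
\[
    \varphi(u - \lambda z) - \varphi(u)
    = -2\lambda \inpspace{\partial_x u}{\partial_x z}{L^2(I)} + \lambda^2 \|\partial_x z\|_{L^2(I)}^2
    = 2\lambda \|z\|_{L^2(I)}^2 + \lambda^2 \|\partial_x z\|_{L^2(I)}^2 \ge 0,
\]
as desired. The main obstacle is the regularity bootstrap: the first-order weak Neumann equation alone is insufficient to preclude singular mass in $\upf{u}$'s distributional derivative at $\partial A$, and the full subdifferential inequality must be invoked; the rest is a clean integration-by-parts argument powered by the structural properties of $D(\partial \cE^-)$.
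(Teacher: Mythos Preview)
Your reduction to the case $u \in H^1(I)$ followed by the integration-by-parts computation is exactly the paper's Case~1, and that part of your argument is correct and matches the paper's. The genuine gap is the ``regularity bootstrap'' paragraph. You correctly identify that the obstruction is possible singular mass of the distributional derivative of $\upf{u}$ on $\partial A$, but then you dispatch it in one sentence by invoking ``suitable indicator-type perturbations'' of the subdifferential inequality, without saying what those perturbations are or how they force the singular part to vanish. This is precisely the hard part of the lemma, and the paper spends three full cases on it: Case~2 ($u \in W^{1,1}(I) \setminus H^1(I)$) is ruled out by a delicate $L^1$ accounting using that $(\partial_x \upf{u})(\partial_x z) = 0$ a.e.; Case~3 ($u$ continuous but not AC) and Case~4 ($u$ not continuous) each require a separate auxiliary lemma (\cref{lemma:case-continuous} and \cref{lemma:case-not-continuous}) that constructs, near any point where $\upf{u}$ fails to be AC or jumps, points $a',b'$ where $z$ has the right sign so that $u - \lambda z$ inherits the same irregularity. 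These arguments use the pointwise structure of $z = \partial_x\partial_x \downf{u}$ and the Neumann boundary condition in an essential way; they are not a consequence of a generic variational inequality tested against indicators.

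In short: your outline is sound and your endgame is the paper's Case~1, but the regularity step you flag as ``the main obstacle'' is not actually proved in your proposal. If you want to pursue a direct argument via the subdifferential inequality rather than the paper's case analysis, you would need to exhibit, for each potential jump point $x_0 \in \partial A$ of $\upf{u}$, a specific competitor $v$ (e.g.\ $u$ plus a small Heaviside at $x_0$) for which $\cE^-(v) - \cE^-(u) < \inp{-z}{v-u}$, contradicting $-z \in \partial\cE^-(u)$; and similarly rule out Cantor-type singular-continuous pieces. This is plausible but not obviously simpler than the paper's route.
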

\begin{proof}
    By \cref{prop:static-solution-gradient-to-neumann}, $z$ is a weak solution to the static Neumann
    problem, and by \cref{lemma:elliptic-regularity} we have $\downf{u} \in H^2(I)$ with $\partial_x
    \downf{u} = 0$ on $\{0,1\}$ and $z = \partial_x \partial_x \downf{u}$ in $L^2(I)$.

    Note that the result holds trivially if $u - \lambda z \not\in H^1(I)$, in which case $\varphi(u
    - \lambda z) = +\infty$. Therefore assume that $u - \lambda z \in H^1(I)$. We consider four
    cases: $u \in H^1(I)$; $u \in W^{1,1}(I) \setminus H^1(I)$, \ie $u$ is AC but not in $H^1(I)$;
    $u$ is continuous but not AC; and $u$ is not continuous.\footnote{As usual, phrases such as
    ``$u$ is continuous'' should be understood as ``the object $u \in L^2(I)$ has a continuous
    representative'', and in particular the condition ``$u$ is continuous but not AC'' makes sense
    because the continuous representative, if it exists, is unique.}

    \paragraph*{Case 1.} Suppose $u \in H^1(I)$. Note that in this case the assumption that $u -
    \lambda z \in H^1(I)$ implies that $z \in H^1(I)$ as well, so in particular $\partial_x u,
    \partial_x z \in L^2(I)$. Additionally, \cref{lemma:elliptic-regularity} also yields $\downf{u}
    \in H^3(I)$. We have
    \[
        \varphi(u - \lambda z)
        = \int_I \left( \partial_x (u - \lambda z) \right)^2 \odif x
        = \varphi(u) + \lambda^2 \|\partial_x z\|_{L^2(I)}^2
            - 2\lambda \int_I (\partial_x u)(\partial_x z) \odif x \,.
    \]
    Therefore it suffices to show that
    \begin{equation}
        \label{eq:inequality-preservation}
        \int_I (\partial_x u) (\partial_x z) \odif x \lequestion 0 \,.
    \end{equation}
    Recalling that $u = \upf{u} + \downf{u}$ with $\upf{u}, \downf{u} \in H^1(I) = W^{1,2}(I)$ by
    \cref{prop:w1p-representation}, we have
    \[
        \int_I (\partial_x u) (\partial_x z) \odif x
        = \int_I (\partial_x \upf{u}) (\partial_x z) \odif x
            + \int_I (\partial_x \downf{u}) (\partial_x z) \odif x \,,
    \]
    with the second term in the RHS satisfying
    \[
        \int_I (\partial_x \downf{u}) (\partial_x z) \odif x
        = \left. (\partial_x \downf{u}) z \right|_0^1
            - \int_I (\partial_x \partial_x \downf{u}) z \odif x
        = - \|\partial_x \partial_x \downf{u}\|_{L^2(I)}^2
        \le 0 \,.
    \]
    By \cref{prop:w1p-representation}, $\partial_x \upf{u} = \partial_x^+ u$ and $\partial_x
    \downf{u} = \partial_x^- u$ in $L^2(I)$. Since $z = \partial_x \partial_x \downf{u}$ in
    $L^2(I)$, we get
    \[
        \int_I (\partial_x \upf{u}) (\partial_x z) \odif x
        = \int_I (\partial_x^+ u) (\partial_x \partial_x \partial_x \downf{u}) \odif x \,.
    \]
    We claim that the quantity above is zero. Indeed, fixing any representative of $\partial_x u \in
    L^2(I)$, let $S \define \{ x \in I : \partial_x u > 0 \}$. First, we have $(\partial_x^+ u)
    (\partial_x \partial_x \partial_x^- u) = \partial_x^+ u = 0$ on $I \setminus S$. Second, we have
    $\partial_x \downf{u} = \partial_x^- u = 0$ on $S$, so applying \cref{res:derivative-on-zero}
    twice (recall that $\downf{u} \in H^3(I)$) gives that $\partial_x \partial_x \partial_x
    \downf{u} = 0$ \almev in $S$, thus establishing the claim. Thus
    \eqref{eq:inequality-preservation} indeed holds, which concludes the proof in Case~1.

    \paragraph*{Case 2.} Suppose $u \in W^{1,1}(I) \setminus H^1(I)$. We will derive a
    contradiction, showing that this case cannot happen. Note that, since $u - \lambda z \in H^1(I)$
    by assumption, we conclude that $z \in W^{1,1}(I) \setminus H^1(I)$, and in particular $u, z$
    are AC with $\partial_x u, \partial_x z \in L^1(I)$. As in the previous case, we have
    \begin{align}
        \varphi(u - \lambda z)
        &= \int_I \left( \partial_x (u - \lambda z) \right)^2 \odif x \\
        &= \int_I \left( \partial_x u - \lambda \partial_x z \right)^2 \odif x \\
        \label{eq:integral-case-2}
        &= \int_I \left(
            (\partial_x u)^2 - 2\lambda (\partial_x u)(\partial_x z) + \lambda^2 (\partial_x z)^2
            \right) \odif x \,.
    \end{align}
    We claim that the function $(\partial_x u)(\partial_x z) \in L^1(I)$. First, by
    \cref{prop:w1p-representation} we have $\upf{u} \in W^{1,1}(I)$ (while $\downf{u} \in H^1(I)$
    since $\downf{u} \in H^2(I)$), as well as $\partial_x \upf{u} = \partial_x^+ u$ and $\partial_x
    \downf{u} = \partial_x^- u$ \almev in $I$. Hence
    \[
        \int_I \abs*{(\partial_x u) (\partial_x z)} \odif x
        = \int_I \abs*{(\partial_x \upf{u})(\partial_x z)
                            + (\partial_x \downf{u})(\partial_x z)} \odif x
        \le \int_I \Big[ \abs*{(\partial_x \upf{u})(\partial_x z)}
                        + \abs*{(\partial_x \downf{u})(\partial_x z)} \Big] \odif x \,.
    \]
    \sloppy We claim that $(\partial_x \upf{u})(\partial_x z), (\partial_x \downf{u})(\partial_x z)
    \in L^1(I)$. First, we again have that $(\partial_x \upf{u})(\partial_x z) = (\partial_x
    \upf{u})(\partial_x \partial_x \partial_x \downf{u}) = 0$ \almev as in the previous case, where
    in particular we are allowed to apply \cref{res:derivative-on-zero} twice to $\partial_x
    \downf{u}$ because $\partial_x \downf{u} \in W^{2,1}(I)$ by virtue of the fact that $\partial_x
    \partial_x \downf{u} = z \in W^{1,1}(I)$ in the current case. Hence $(\partial_x
    \upf{u})(\partial_x z) \in L^1(I)$. Second, note that $\partial_x \downf{u}$ is AC and hence
    bounded, while $\partial_x z \in L^1(I)$ since $z$ is AC. Therefore
    \[
        \int_I \abs*{(\partial_x \downf{u})(\partial_x z)} \odif x
        \le \|\partial_x \downf{u}\|_{L^\infty(I)} \|\partial_x z\|_{L^1(I)}
        < +\infty \,,
    \]
    and hence $(\partial_x \downf{u})(\partial_x z) \in L^1(I)$. Hence $\abs*{(\partial_x
    \upf{u})(\partial_x z)} + \abs*{(\partial_x \downf{u})(\partial_x z)} \in L^1(I)$ and
    \[
        \int_I \abs*{(\partial_x u) (\partial_x z)} \odif x
        \le \int_I \abs*{(\partial_x \upf{u})(\partial_x z)} \odif x
                + \int_I \abs*{(\partial_x \downf{u})(\partial_x z)} \odif x
        < +\infty \,,
    \]
    so $(\partial_x u)(\partial_x z) \in L^1(I)$ as claimed. Now, since $\varphi(u - \lambda z) <
    +\infty$ by assumption, \eqref{eq:integral-case-2} shows that $(\partial_x u)^2 - 2\lambda
    (\partial_x u)(\partial_x z) + \lambda^2 (\partial_x z)^2 \in L^1(I)$, while we have just
    established that $2\lambda (\partial_x u)(\partial_x z) \in L^1(I)$. We conclude that
    $(\partial_x u)^2 + \lambda^2 (\partial_x z)^2 \in L^1(I)$, \ie
    \[
        \int_I \left( (\partial_x u)^2 + \lambda^2 (\partial_x z)^2 \right) \odif x < +\infty \,.
    \]
    On the other hand, the fact that $u, z \in W^{1,1}(I) \setminus H^1(I)$ implies that
    \[
        \int_I (\partial_x u)^2 \odif x = +\infty
        \qquad \text{and} \qquad
        \int_I (\partial_x z)^2 \odif x = +\infty \,,
    \]
    which is the desired contradiction. This concludes the proof in Case~2.

    \paragraph*{Case 3.} Suppose $u$ is continuous but not AC (and hence the same is true of
    $\upf{u}$). Then by the definition of absolute continuity, there exists $\epsilon > 0$ such
    that, for all $\delta > 0$, there exists a set of pairwise disjoint intervals $\left( (a_i, b_i)
    \right)_{i \in [k]}$ in $I$ such that $\sum_{i=1}^k (b_i-a_i) < \delta$ and $\sum_{i=1}^k
    \abs*{u(a_i) - u(b_i)} > \epsilon$.

    We claim that, moreover, the sequences $\left( (a_i, b_i) \right)_{i \in [k]}$ above can always
    be taken to satisfy $u(a_i) < u(b_i)$ for every $i \in [k]$. Indeed, let $\epsilon > 0$ be as in
    the paragraph above, let $\delta > 0$, and let $\left( (a_i, b_i) \right)_{i \in [k]}$ be the
    corresponding sequence. Let $S \define \{ i \in [k] : u(a_i) \ge u(b_i) \}$. Then for each $i
    \in S$, we have
    \[
        \abs*{u(a_i) - u(b_i)}
        = u(a_i) - u(b_i)
        = \underbrace{\Big[ \downf{u}(a_i) - \downf{u}(b_i) \Big]}_{\ge 0}
            + \underbrace{\Big[ \upf{u}(a_i) - \upf{u}(b_i) \Big]}_{\le 0}
        \le \abs*{\downf{u}(a_i) - \downf{u}(b_i)} \,.
    \]
    Now, since $\downf{u}$ is AC, let $\delta > 0$ be small enough so that
    \[
        \sum_{i \in S} \abs*{\downf{u}(a_i) - \downf{u}(b_i)} \le \frac{\epsilon}{2} \,.
    \]
    It follows that
    \[
        \epsilon < \sum_{i=1}^k \abs*{u(a_i) - u(b_i)}
        \le \sum_{i \in S} \abs*{\downf{u}(a_i) - \downf{u}(b_i)}
            + \sum_{i \in [k] \setminus S} \abs*{u(a_i) - u(b_i)}
        \le \frac{\epsilon}{2} + \sum_{i \in [k] \setminus S} \abs*{u(a_i) - u(b_i)} \,,
    \]
    and hence
    \[
        \sum_{i \in [k] \setminus S} \abs*{u(a_i) - u(b_i)} > \frac{\epsilon}{2} \,,
    \]
    and of course $u(a_i) < u(b_i)$ for each $i \in [k] \setminus S$ and $\sum_{i \in [k] \setminus
    S} (b_i - a_i) < \delta$. This establishes the claim.

    Let $v \define u - \lambda z$ for convenience, and fix any sequence $\left( (a_i, b_i)
    \right)_{i \in [k]}$ of pairwise disjoint intervals satisfying $u(a_i) < u(b_i)$ for each $i \in
    [k]$. Using \cref{lemma:case-continuous}, we map each interval $(a_i, b_i)$ into an interval
    $(a'_i, b'_i) \subseteq (a_i, b_i)$ such that
    \[
        \abs*{v(a'_i) - v(b'_i)} \ge \frac{\abs*{u(a_i) - u(b_i)}}{2} \,.
    \]
    This implies that $v$ is not AC and hence $\varphi(v) = +\infty$, thus concluding the proof in
    this case.

    \paragraph*{Case 4.} Suppose $u$ is not continuous. By \cref{lemma:case-not-continuous}, $u -
    \lambda z$ is not continuous, which implies that $u - \lambda z \not\in H^1(I)$ and $\varphi(u -
    \lambda z) = +\infty$, concluding the proof.
\end{proof}

\begin{lemma}
    \label{lemma:case-continuous}
    Let $u \in D(\partial \cE^-)$, suppose $u$ is continuous, and let $z \in -\partial \cE^-(u)$.
    Let $\lambda > 0$ and let $v \define u - \lambda z$. Let $0 < a < b < 1$ and suppose $u(a) <
    u(b)$. Then for all $\epsilon > 0$, there exist $a', b'$ with $a \le a' < b' \le b$ such that
    \[
        v(b') - v(a') \ge \abs*{u(a) - u(b)} - \epsilon \,.
    \]
\end{lemma}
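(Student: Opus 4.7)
The plan is to construct explicit points $a', b' \in [a, b]$ at which the canonical decomposition $u = \upf{u} + \downf{u}$ and the structural result \cref{lemma:in-domain-well-behaved} cooperate to force the inequality to hold, in fact without any slack at all.

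First I would extract regularity from $-z \in \partial \cE^-(u)$: combining \cref{prop:static-solution-gradient-to-neumann} with \cref{lemma:elliptic-regularity} shows that $g := \partial_x \downf{u}$ is continuous on $\overline I$ with $g \le 0$, $g(0) = g(1) = 0$, and $z = \partial_x g$ a.e.\ in $I$. Since $u$ and $\downf{u}$ are continuous, so is $\upf{u} = u - \downf{u}$, which is nondecreasing, and $M := \upf{u}(b) - \upf{u}(a) \ge u(b) - u(a) > 0$ because $\downf{u}(b) \le \downf{u}(a)$.

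Next I would use \cref{lemma:in-domain-well-behaved}: on each connected component $(c_i, d_i)$ of $\{g < 0\}$ the function $\upf{u}$ is constant, hence also on $[c_i, d_i]$ by continuity. I pick $a^*, b^* \in A := \{g = 0\} \cap [a,b]$ with $\upf{u}(a^*) = \upf{u}(a)$ and $\upf{u}(b^*) = \upf{u}(b)$ as follows: if $a \in A$ set $a^* := a$, else ``slide $a$ to the right'' to the right endpoint $d_i$ of its component of $\{g<0\}$ (which must lie in $[a,b]$ since otherwise $\upf{u}$ would be constant on $[a,b]$, contradicting $M > 0$); construct $b^*$ symmetrically by sliding $b$ to the left. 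Disjointness of the components combined with $\upf{u}(b^*) - \upf{u}(a^*) = M > 0$ forces $a \le a^* < b^* \le b$.

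The conclusion then follows essentially for free. Since $g \le 0$ and $[a^*, b^*] \subseteq [a,b]$,
\[
u(b^*) - u(a^*) = M + \int_{a^*}^{b^*} g \odif x \;\ge\; M + \int_a^b g \odif x = u(b) - u(a).
\]
Because $g \in H^1(I)$, \cref{res:derivative-on-zero} applied to the level set $\{g = 0\} = A$ yields $z = \partial_x g = 0$ a.e.\ on $A$. Choosing the representative of $z$ modified on a null set of $A$ so as to vanish on all of $A$, the corresponding $v = u - \lambda z$ satisfies $v(a^*) = u(a^*)$ and $v(b^*) = u(b^*)$, hence $v(b^*) - v(a^*) = u(b^*) - u(a^*) \ge u(b) - u(a) = |u(a) - u(b)|$, which trivially exceeds $|u(a)-u(b)| - \epsilon$ for every $\epsilon > 0$. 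The one genuine subtlety is making sense of pointwise values $v(a'), v(b')$ for the $L^2$-class $v$; this is exactly what the choice of representative above, justified by \cref{res:derivative-on-zero}, accomplishes.
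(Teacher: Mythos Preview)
Your construction of $a^*, b^*$ is clean and the inequality $u(b^*)-u(a^*)\ge u(b)-u(a)$ is correct, but the final step---``choosing the representative of $z$ modified on a null set of $A$ so as to vanish on all of $A$''---does not establish the lemma as it is used. The lemma is invoked in Case~3 of \cref{lemma:varphi-inequality} under the standing assumption $v=u-\lambda z\in H^1(I)$; the contradiction there must be derived for the \emph{continuous} representative of $v$ (equivalently, of $z$). Your modification of $z$ on the null subset of $A$ where $z\neq 0$ need not coincide with that continuous representative at the two specific points $a^*,b^*$, which sit on the boundary of components of $\{g<0\}$ and may even be isolated in $A$. So you have shown the inequality only for a hand-picked representative, not for the one that matters.

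The paper avoids this by never pinning down $a',b'$ as fixed points of $A$. Instead it locates $a'$ inside a set of \emph{positive measure} on which $z\ge 0$ (and symmetrically $b'$ where $z\le 0$), using only that $z=\partial_x\partial_x\downf{u}$ a.e.; this works for any representative of $z$. Your idea can be repaired along the same lines: when $a\notin A$, take $a'$ slightly to the left of your $a^*=d_i$ inside the component $(c_i,d_i)$, where $g<0$, $\upf{u}$ is still constant (so $\upf{u}(a')=\upf{u}(a)$), and $z\ge 0$ on a positive-measure set since $g$ must climb back to $0$ at $d_i$; when $a\in A$ one needs a short case split (either $g=0$ on a right-neighbourhood of $a$, giving $z=0$ a.e.\ there, or $a$ is the left endpoint of a component of $\{g<0\}$ contained in $(a,b)$, where again $z$ takes both signs). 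This is exactly the case analysis the paper carries out, and it is where the $\epsilon$ in the statement is actually spent---so ``without any slack at all'' is too optimistic.
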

\begin{proof}
    Recall that, by \cref{prop:static-solution-gradient-to-neumann}, $u, z$ form a weak solution to
    the static Neumann problem, and by \cref{lemma:elliptic-regularity} we have $\downf{u} \in
    H^2(I)$ with $\partial_x \downf{u} = 0$ on $\{0,1\}$ and $z = \partial_x \partial_x \downf{u}$
    in $L^2(I)$. In particular, $\downf{u}$ is continuously differentiable by
    \cref{res:sobolev-embedding}.

    We first observe that it cannot be the case that $\partial_x \downf{u}(x) < 0$ for all $x \in
    (a, b)$, since otherwise \cref{lemma:in-domain-well-behaved} would imply that $\upf{u}$ is
    constant in $(a, b)$, which by continuity would imply that $\upf{u}(a) = \upf{u}(b)$ and hence
    $u(a) \ge u(b)$, a contradiction.

    We first construct $a' \in (a, b)$. In particular, we wish $a'$ to satisfy
    \begin{equation}
        \label{eq:a'-satisfies}
        u(a') \lequestion u(a) + \frac{\epsilon}{2}
        \qquad \text{and} \qquad
        z(a') \gequestion 0 \,.
    \end{equation}
    Let $x^* \in [a, b)$ be given by
    \[
        x^* \define \inf \{ x \in (a, b) : \partial_x \downf{u}(x) = 0 \} \,,
    \]
    which is well-defined by the observation above and the fact that $\partial_x \downf{u} \le 0$ in
    all of $I$ (since $\downf{u}$ is nonincreasing and continuously differentiable). By the
    continuity of $\partial_x \downf{u}$, we have that $\partial_x \downf{u}(x^*) = 0$. Since
    $\partial_x \downf{u}(a) \le 0$, it must be the case that either $a = x^*$ or
    \begin{equation}
        \label{eq:positive-measure}
        \cL \left\{ x \in (a, x^*) : \partial_x \partial_x \downf{u}(x) \ge 0 \right\} > 0 \,.
    \end{equation}
    We consider each case separately. First, suppose $a = x^*$. We claim that for all $\delta > 0$,
    there exists $x \in (a, a+\delta)$ such that $z(x) \ge 0$. Suppose for a contradiction that this
    is not the case, and fix $\delta > 0$ such that for all $x \in (a, a+\delta)$, $z(x) < 0$. Then
    since $z = \partial_x \partial_x \downf{u}$ almost everywhere, we conclude that $\partial_x
    \partial_x \downf{u} < 0$ \almev in $(a, a+\delta)$. Hence $\partial_x \downf{u}(x) < 0$ for all
    $x \in (a, a+\delta)$, contradicting the assumption that $a = x^*$ given the definition of
    $x^*$. Thus the claim holds. Now, using the continuity of $u$, choose $\delta > 0$ small enough
    and choose $a' \in (a, a+\delta)$ so that $z(a') \ge 0$ and moreover $u(a') \le u(a) +
    \epsilon/2$. This choice of $a'$ satisfies \eqref{eq:a'-satisfies}.

    Second, suppose $a < x^*$ and \eqref{eq:positive-measure} holds. Since $z = \partial_x
    \partial_x \downf{u}$ almost everywhere, choose $a' \in (a, x^*)$ such that $z(a') \ge 0$, which
    is possible by \eqref{eq:positive-measure}. Now, since $\partial_x \downf{u} < 0$ for all $x \in
    (a, a') \subset (a, x^*)$ by the choice of $x^*$, \cref{lemma:in-domain-well-behaved} implies
    that $\upf{u}$ is constant in $(a, a')$. By the continuity of $\upf{u}$, we conclude that
    $\upf{u}(a) = \upf{u}(a')$ and hence, since $\downf{u}$ is nonincreasing, we have
    \[
        u(a') = \upf{u}(a') + \downf{u}(a')
        \le \upf{u}(a) + \downf{u}(a)
        = u(a) \,,
    \]
    and again \eqref{eq:a'-satisfies} is satisfied. This concludes the choice of $a'$.

    Now, we may assume without loss of generality that $\epsilon < \abs*{u(a) - u(b)}/2$. Therefore
    our choice of $a'$ yields an interval $(a', b)$ which, using \eqref{eq:a'-satisfies} and
    recalling that $u(a) < u(b)$, satisfies $u(a') < u(b)$. Therefore, repeating a symmetric version
    of the argument above yields a choice of $b' \in (a', b)$ satisfying
    \begin{equation}
        \label{eq:b'-satisfies}
        u(b') \ge u(b) - \frac{\epsilon}{2}
        \qquad \text{and} \qquad
        z(b') \le 0 \,.
    \end{equation}
    \ignore{ 
    Let us verify this claim for completeness. We first observe, as above, that it cannot be the
    case that $\partial_x \downf{u}(x) < 0$ for all $x \in (a', b)$, since otherwise
    \cref{lemma:in-domain-well-behaved} would imply that $\upf{u}$ is constant in $(a', b)$, which
    by continuity would imply that $\upf{u}(a') = \upf{u}(b)$ and hence $u(a') \ge u(b)$, a
    contradiction. Let $x^* \in (a, b]$ be given by
    \[
        x^* \define \sup \{ x \in (a', b) : \partial_x \downf{u}(x) = 0 \} \,,
    \]
    which is well-defined by the observation above and the fact that $\partial_x \downf{u} \le 0$ in
    all of $I$ (since $\downf{u}$ is nonincreasing and continuously differentiable). By the
    continuity of $\partial_x \downf{u}$ (recall that $\downf{u} \in H^2(I)$), we have that
    $\partial_x \downf{u}(x^*) = 0$. Since $\partial_x \downf{u}(b) \le 0$, it must be the case that
    either $b = x^*$ or
    \begin{equation}
        \label{eq:positive-measure-b}
        \cL \left\{ x \in (x^*, b) : \partial_x \partial_x \downf{u}(x) \le 0 \right\} > 0 \,.
    \end{equation}
    We consider each case separately. First, suppose $b = x^*$. We claim that for all $\delta > 0$,
    there exists $x \in (b-\delta, b)$ such that $z(x) \le 0$. Suppose for a contradiction that this
    is not the case, and fix $\delta > 0$ such that for all $x \in (b-\delta, b)$, $z(x) > 0$. Then
    since $z = \partial_x \partial_x \downf{u}$ almost everywhere, we conclude that $\partial_x
    \partial_x \downf{u} > 0$ \almev in $(b-\delta, b)$. Hence $\partial_x \downf{u}(x) < 0$ for all
    $x \in (b-\delta, b)$, contradicting the assumption that $b = x^*$ given the definition of
    $x^*$. Thus the claim holds. Now, using the continuity of $u$, choose $\delta > 0$ small enough
    and choose $b' \in (b-\delta, b)$ so that $z(b') \le 0$ and moreover $u(b') \ge u(b) -
    \epsilon/2$. This choice of $b'$ satisfies \eqref{eq:b'-satisfies}.

    Second, suppose $x^* < b$ and \eqref{eq:positive-measure-b} holds. Since $z = \partial_x
    \partial_x \downf{u}$ almost everywhere, choose $b' \in (x^*, b)$ such that $z(b') \le 0$, which
    is possible by \eqref{eq:positive-measure-b}. Now, since $\partial_x \downf{u} < 0$ for all $x
    \in (b', b) \subset (x^*, b)$ by the choice of $x^*$, \cref{lemma:in-domain-well-behaved}
    implies that $\upf{u}$ is constant in $(b', b)$. By the continuity of $\upf{u}$, we conclude
    that $\upf{u}(b') = \upf{u}(b)$ and hence, since $\downf{u}$ is nonincreasing, we have
    \[
        u(b') = \upf{u}(b') + \downf{u}(b')
        \ge \upf{u}(b) + \downf{u}(b)
        = u(b) \,,
    \]
    and again \eqref{eq:b'-satisfies} is satisfied. This concludes the choice of $b'$.
    } 
    Combining \eqref{eq:a'-satisfies} and \eqref{eq:b'-satisfies}, we conclude that
    \[
        v(b') - v(a')
        = \Big[ u(b') - u(a') \Big]
            - \lambda \underbrace{\Big[ z(b') - z(a') \Big]}_{\le 0}
        \ge u(b) - u(a) - \epsilon
        = \abs*{u(a) - u(b)} - \epsilon \,. \qedhere
    \]
\end{proof}

\begin{lemma}
    \label{lemma:case-not-continuous}
    Let $u \in D(\partial \cE^-)$, suppose $u$ is not continuous, and let $z \in -\partial
    \cE^-(u)$. Let $\lambda > 0$. Then $u - \lambda z$ is not continuous.
\end{lemma}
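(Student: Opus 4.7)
The plan is to argue by contradiction. Suppose $v \define u - \lambda z$ has a continuous representative $g$. By \cref{prop:static-solution-gradient-to-neumann} and \cref{lemma:elliptic-regularity}, we have $\downf{u} \in H^2(I)$, $\partial_x \downf{u} = 0$ on $\{0,1\}$, and $z = \partial_x \partial_x \downf{u}$ a.e.; in particular $\downf{u} \in C^1(\overline I)$ by \cref{res:sobolev-embedding}, so $\downf{u}$ is continuous. Since $u$ is not continuous, the nondecreasing representative $\upf{u}$ must have a jump discontinuity at some $x_0 \in I$, with limits $\alpha \define \lim_{x \to x_0^-} \upf{u}(x) < \lim_{x \to x_0^+} \upf{u}(x) \eqqcolon \beta$ and jump size $\delta \define \beta - \alpha > 0$.

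The first key step is to show that $\partial_x \downf{u}(x_0) = 0$. Since $\partial_x \downf{u}$ has a continuous representative on $\overline I$ and is nonpositive (as $\downf{u}$ is nonincreasing), if $\partial_x \downf{u}(x_0) < 0$ then there is an open interval $(x_0 - \epsilon, x_0 + \epsilon) \subset I$ on which $\partial_x \downf{u} < 0$. \cref{lemma:in-domain-well-behaved} would then force $\upf{u}$ to be constant on that interval, contradicting the jump at $x_0$. Hence $\partial_x \downf{u}(x_0) = 0$, so $x_0$ is a maximum of $\partial_x \downf{u}$.

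The second key step is to define the pointwise representative $\tilde z \define (u - g)/\lambda = (\upf{u} + \downf{u} - g)/\lambda$ of $z$. Because $\downf{u}$ and $g$ are continuous and $\upf{u}$ is nondecreasing with one-sided limits $\alpha, \beta$ at $x_0$, the function $\tilde z$ has one-sided limits at $x_0$ whose difference is $\tilde z(x_0^+) - \tilde z(x_0^-) = \delta/\lambda > 0$.

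The third step derives the contradiction. Since $\partial_x \downf{u} \in H^1(I)$ is absolutely continuous with weak derivative $z$ (and $\tilde z = z$ a.e.), for every small $\epsilon > 0$
\[
    -\partial_x \downf{u}(x_0 - \epsilon)
    = \partial_x \downf{u}(x_0) - \partial_x \downf{u}(x_0 - \epsilon)
    = \int_{x_0 - \epsilon}^{x_0} \tilde z(x) \odif x \,,
\]
and the LHS is nonnegative because $\partial_x \downf{u} \le 0$. Dividing by $\epsilon$ and letting $\epsilon \to 0^+$, the existence of the one-sided limit implies that $\tilde z(x_0^-) \ge 0$. The symmetric argument on $(x_0, x_0+\epsilon)$, using $\partial_x \downf{u}(x_0 + \epsilon) \le 0 = \partial_x \downf{u}(x_0)$, gives $\tilde z(x_0^+) \le 0$. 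Therefore $\tilde z(x_0^+) - \tilde z(x_0^-) \le 0$, contradicting $\tilde z(x_0^+) - \tilde z(x_0^-) = \delta/\lambda > 0$. The main subtlety, and arguably the only place where conceptual input (rather than bookkeeping about representatives) is required, is the use of \cref{lemma:in-domain-well-behaved} to pin down $\partial_x \downf{u}(x_0) = 0$: this is what forces the sign of the putative jump in $\tilde z$ to be incompatible with the sign dictated by the jump in $\upf{u}$.
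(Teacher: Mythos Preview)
Your proof is correct and follows essentially the same route as the paper: identify a jump of $\upf{u}$ at some $x_0$, use \cref{lemma:in-domain-well-behaved} to force $\partial_x \downf{u}(x_0)=0$, and then derive a sign conflict between the jump and the behaviour of $z=\partial_x\partial_x\downf{u}$ near this maximum. The only cosmetic difference is that the paper argues directly (finding points $x_-<x_0<x_+$ with $z(x_-)\ge 0$, $z(x_+)\le 0$ and computing the gap $(u-\lambda z)(x_+)-(u-\lambda z)(x_-)$), whereas you argue by contradiction, leveraging the assumed continuous representative $g$ to build a specific representative $\tilde z$ with one-sided limits and then reading off the sign via averaged integrals; both executions encode the same idea.
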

\begin{proof}
    Recall that, by \cref{prop:static-solution-gradient-to-neumann}, $u, z$ form a weak solution to
    the static Neumann problem, and by \cref{lemma:elliptic-regularity} we have $\downf{u} \in
    H^2(I)$ with $\partial_x \downf{u} = 0$ on $\{0,1\}$ and $z = \partial_x \partial_x \downf{u}$
    in $L^2(I)$. In particular, $\downf{u}$ is continuously differentiable by
    \cref{res:sobolev-embedding}.

    Since $\downf{u} \in H^2(I)$ is continuous while $u$ is not continuous, we conclude that
    $\upf{u}$ is not continuous and, since it is monotone, it contains only jump discontinuities.
    Let $x_0 \in I$ be a point of jump discontinuity of $\upf{u}$, \ie a point such that $L_- < L_+$
    where
    \[
        L_- \define \lim_{x \to x_0^-} \upf{u}(x)
        \qquad \text{and} \qquad
        L_+ \define \lim_{x \to x_0^+} \upf{u}(x) \,.
    \]
    We first claim that $\partial_x \downf{u}(x_0) = 0$. Indeed, it is clear that $\partial_x
    \downf{u} \le 0$ since $\downf{u}$ is nonincreasing and continuously differentiable. If we had
    $\partial_x \downf{u}(x_0) < 0$, then by continuity $\partial_x \downf{u}$ would be strictly
    negative in a neighbourhood of $x_0$, in which case \cref{lemma:in-domain-well-behaved} would
    imply that $\upf{u}$ is constant in a neighbourhood of $x_0$, contradicting the fact that $L_- <
    L_+$. Hence the claim holds.

    We now claim that, for all $\delta > 0$, there exists $x \in (x_0, x_0+\delta)$ such that $z(x)
    \le 0$. Suppose for a contradiction that there exists $\delta > 0$ such that for all $x \in
    (x_0, x_0+\delta)$, $z(x) > 0$. Then, since $z = \partial_x \partial_x \downf{u}$ \almev and
    $\partial_x \downf{u}(x_0) = 0$, we conclude that $\partial_x \downf{u} > 0$ in $(x_0,
    x_0+\delta)$, contradicting the fact that $\partial_x \downf{u} \le 0$. Hence the claim holds.
    By the same reasoning, we conclude that for all $\delta > 0$ there exists $x \in (x_0-\delta,
    x_0)$ such that $z(x) \ge 0$.

    Thus, for all $\delta > 0$ there exist points $x_- \in (x_0-\delta, x_0), x_+ \in (x_0,
    x_0+\delta)$ such that $z(x_+) - z(x_-) \le 0$ and hence
    \begin{align*}
        (u - \lambda z)(x_+) - (u - \lambda z)(x_-)
        &= \Big[ \upf{u}(x_+) - \upf{u}(x_-) \Big]
            + \Big[ \downf{u}(x_+) - \downf{u}(x_-) \Big]
            - \lambda \Big[ z(x_+) - z(x_-) \Big] \\
        &\ge \Big[ \upf{u}(x_+) - \upf{u}(x_-) \Big]
            + \Big[ \downf{u}(x_+) - \downf{u}(x_-) \Big] \,.
    \end{align*}
    Since $\upf{u} \to L_+$ as $x \downarrow x_0$ and $\upf{u} \to L_-$ as $x \uparrow x_0$ and
    $\upf{u}$ is nondecreasing, we have $\upf{u}(x_+) - \upf{u}(x_-) \ge L_+ - L_-$. By the
    continuity of $\downf{u}$, we can let $\delta > 0$ be small enough so that $\downf{u}(x_+) -
    \downf{u}(x_-) \ge -\frac{L_+ - L_-}{2}$. We conclude that, for all sufficiently small $\delta >
    0$, there exist points $x_- \in (x_0-\delta, x_0), x_+ \in (x_0, x_0+\delta)$ such that
    \[
        (u - \lambda z)(x_+) - (u - \lambda z)(x_-) \ge \frac{L_+ - L_-}{2} \,,
    \]
    and thus $u - \lambda z$ is not continuous.
\end{proof}

\begin{lemma}[Specialization of {\cite[Theorem~4.4]{Bre73}}]
    \label{res:brezis-preservation}
    Let $H$ be a Hilbert space and let $\tau : H \to [0, +\infty]$ be a convex, proper, and lower
    semicontinuous functional such that $\tau(\mathrm{Proj}_{\overline{D(A)}} x) \le \tau(x)$ for
    all $x \in H$. Let $A : H \to 2^H$ be a maximal monotone operator and let $S_t$ be the semigroup
    generated by $-A$. Then the following are equivalent:
    \begin{enumerate}
        \item $\tau((I + \lambda A)^{-1} x) \le \tau(x)$ for all $x \in H$ and $\lambda > 0$; and
        \item $\tau(S_t x) \le \tau(x)$ for all $x \in \overline{D(A)}$ and $t \ge 0$.
    \end{enumerate}
\end{lemma}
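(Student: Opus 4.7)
The plan is to obtain this lemma as a direct application of Theorem~4.4 of \cite{Bre73}, which establishes precisely this equivalence for maximal monotone operators and convex lower semicontinuous functionals under a compatibility condition between $\tau$ and the closure of the domain of $A$. All general hypotheses ($A$ maximal monotone, $S_t$ generated by $-A$, and $\tau$ convex, proper, and lower semicontinuous) are stated directly in our setup, and the projection hypothesis $\tau(\mathrm{Proj}_{\overline{D(A)}} x) \le \tau(x)$ is exactly the form of compatibility needed to transfer properties on $\overline{D(A)}$ (where the semigroup lives) to all of $H$ (where the resolvent is defined). The remaining work is therefore to verify the correspondence of hypotheses and invoke the general theorem.

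For the implication $(1) \Rightarrow (2)$, I would use the Crandall--Liggett exponential formula $S_t x = \lim_{n \to \infty} (I + (t/n) A)^{-n} x$ for $x \in \overline{D(A)}$, which is a standard output of maximal monotone semigroup theory. Iterating the assumed resolvent inequality $\tau((I + \lambda A)^{-1} x) \le \tau(x)$ gives $\tau\bigl((I + (t/n) A)^{-n} x\bigr) \le \tau(x)$ for every $n \in \bN$, and passing to the limit via the lower semicontinuity of $\tau$ yields $\tau(S_t x) \le \tau(x)$. Note that we only need this for $x \in \overline{D(A)}$, matching the quantification in (2).

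For the converse $(2) \Rightarrow (1)$, the argument recovers the resolvent $J_\lambda := (I + \lambda A)^{-1}$ from the semigroup. For $x \in \overline{D(A)}$, one uses the standard approximation of $J_\lambda$ via the semigroup together with lower semicontinuity of $\tau$ and the decay hypothesis (2) to conclude $\tau(J_\lambda x) \le \tau(x)$. To upgrade to arbitrary $x \in H$, one first projects onto $\overline{D(A)}$: the projection hypothesis gives $\tau(\mathrm{Proj}_{\overline{D(A)}} x) \le \tau(x)$, and combining this with the previous case and the nonexpansiveness of $J_\lambda$ produces the required inequality. The main (indeed only) obstacle is bibliographic: one must match the formulations carefully, since \cite{Bre73} phrases the compatibility in terms of a slightly different-looking but equivalent ``$\tau$-invariance of $\overline{D(A)}$'' condition, and the projection form we use is the most convenient specialization for our later applications to $\tau = \varphi$ and to the Lipschitz preservation functional.
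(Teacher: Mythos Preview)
Your proposal is correct in spirit and matches the paper's treatment: the paper does not actually prove this lemma at all, but simply states it as a specialization of \cite[Theorem~4.4]{Bre73} and moves on. Your opening sentence (``obtain this lemma as a direct application of Theorem~4.4 of \cite{Bre73}'') is therefore exactly what the paper does, and nothing more is required. The additional sketch you give of the two implications (exponential formula plus lower semicontinuity for $(1)\Rightarrow(2)$, resolvent recovery plus projection for $(2)\Rightarrow(1)$) is a reasonable outline of Brezis's argument, but it goes beyond what the paper presents and is not needed here.
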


In the statement above, $(I + \lambda A)^{-1} : H \to D(A)$ is the \emph{resolvent} of $A$ (also
denoted by $J_\lambda$, see \eg \cite[p.~563]{Eva10}).

\begin{proposition}[$\varphi$-monotonicity of solutions]
    \label{prop:varphi-monotonicity}
    Let $u_0 \in \cU$ and let $\bm{u} \in C([0, +\infty); L^2(I))$ be the solution to the gradient
    flow problem with initial data $u_0$. Then for all $0 \le t_1 \le t_2 < +\infty$, we have
    $\varphi(\bm{u}(t_1)) \ge \varphi(\bm{u}(t_2))$.
\end{proposition}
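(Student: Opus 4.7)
The plan is to apply \cref{res:brezis-preservation} directly with Hilbert space $H = L^2(I)$, maximal monotone operator $A = \partial \cE^-$, corresponding semigroup $S_t = P_t$, and functional $\tau = \varphi$. The functional $\varphi$ is already known to be convex, proper, and lower semicontinuous by \cref{claim:varphi-properties}, so two hypotheses of \cref{res:brezis-preservation} remain to be verified: the projection condition, and the resolvent inequality $\varphi((I + \lambda \partial \cE^-)^{-1} x) \le \varphi(x)$ for every $x \in L^2(I)$ and $\lambda > 0$.

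The projection condition is immediate from \cref{obs:closure-of-domains}: since $\overline{D(\partial \cE^-)} = L^2(I)$, the orthogonal projection onto that closure is the identity, so $\varphi(\mathrm{Proj}_{\overline{D(\partial \cE^-)}} x) = \varphi(x)$ trivially. For the resolvent condition, fix $x \in L^2(I)$ and $\lambda > 0$, and set $u \define (I + \lambda \partial \cE^-)^{-1} x$. By the definition of the resolvent, we have $u \in D(\partial \cE^-)$ and $\tfrac{x - u}{\lambda} \in \partial \cE^-(u)$, i.e.\ $z \define \tfrac{u - x}{\lambda} \in -\partial \cE^-(u)$, so that $u - \lambda z = x$. \cref{lemma:varphi-inequality} applied to $u$ and $z$ then yields $\varphi(x) = \varphi(u - \lambda z) \ge \varphi(u) = \varphi((I + \lambda \partial \cE^-)^{-1} x)$, as required.

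Having verified both hypotheses, \cref{res:brezis-preservation} gives $\varphi(P_t w) \le \varphi(w)$ for every $w \in \overline{D(\partial \cE^-)} = L^2(I)$ and every $t \ge 0$. To obtain the stated monotonicity, we use the semigroup property $P_{t_2} = P_{t_2 - t_1} \circ P_{t_1}$: since $\bm{u}(t) = P_t u_0$ for every $t \ge 0$, setting $w = \bm{u}(t_1) = P_{t_1} u_0 \in L^2(I)$ and applying the inequality above at time $t_2 - t_1 \ge 0$ gives
\[
    \varphi(\bm{u}(t_2)) = \varphi(P_{t_2 - t_1} w) \le \varphi(w) = \varphi(\bm{u}(t_1)) \,,
\]
which is the desired conclusion.

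The bulk of the difficulty has already been absorbed into \cref{lemma:varphi-inequality}, whose four-case analysis (especially the AC-but-not-$H^1$ and discontinuous cases) is where the subtlety of the $\upf{u}, \downf{u}$ decomposition really matters; the present proposition only needs to package that lemma with the abstract gradient-flow machinery. The single point requiring care is checking that the projection hypothesis of \cref{res:brezis-preservation} applies, which is why \cref{obs:closure-of-domains} -- stating that $D(\partial \cE^-)$ is dense in $L^2(I)$ via the inclusion $H^1(I) \subset \cU$ -- is the small but essential ingredient that lets us invoke Brézis's theorem in its cleanest form rather than having to argue on a proper subspace.
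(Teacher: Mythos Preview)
Your proof is correct and follows essentially the same approach as the paper: both apply \cref{res:brezis-preservation} with $H = L^2(I)$, $A = \partial \cE^-$, verify the projection hypothesis via \cref{obs:closure-of-domains}, and reduce the resolvent inequality to \cref{lemma:varphi-inequality}. Your version is slightly more explicit in spelling out the semigroup step $\bm{u}(t_2) = P_{t_2-t_1}\bm{u}(t_1)$ to pass from $\varphi(P_t w) \le \varphi(w)$ to full monotonicity in $t$, which the paper leaves implicit.
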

\begin{proof}
    This is a direct consequence of \cref{res:brezis-preservation}. Indeed, letting $H \define
    L^2(I)$ and $A \define \partial \cE^-$ (which is maximal monotone as remarked earlier), we first
    observe that indeed $\varphi(\mathrm{Proj}_{\overline{D(A)}} f) \le \varphi(f)$ for all $f \in
    L^2(I)$, since by \cref{obs:closure-of-domains} we have $\mathrm{Proj}_{\overline{D(A)}} f = f$.
    Thus it suffices to show that for all $f \in L^2(I)$ and all $\lambda > 0$,
    \[
        \varphi((I + \lambda A)^{-1} f) \lequestion \varphi(f) \,.
    \]
    But this is equivalent to showing that, for all $u
    \in D(A)$, $-z \in A(u)$ and $\lambda > 0$,
    \[
        \varphi(u) \lequestion \varphi(u - \lambda z) \,,
    \]
    which is precisely \cref{lemma:varphi-inequality}.
\end{proof}

\begin{corollary}[Preservation of $H^1$ regularity]
    \label{cor:preservation-of-regularity}
    Suppose $u_0 \in H^1(I)$, and let $\bm{u} \in C([0, +\infty); L^2(I))$ be the solution to the
    gradient flow problem with initial data $u_0$. Then $\bm{u}(t) \in H^1(I)$ for all $t > 0$.
\end{corollary}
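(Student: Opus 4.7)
The proof plan is essentially a one-line application of \cref{prop:varphi-monotonicity} to the functional $\varphi$ defined just above it. The key observation is that, by the remark following \cref{def:down-h1-functions}, $H^1(I) \subset \cU$, so the hypothesis $u_0 \in H^1(I)$ legitimately places us in the setting of the gradient flow problem, and \cref{res:solution} furnishes the unique solution $\bm{u}$ with $\bm{u}(0) = u_0$.

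Next, I would observe that $\varphi(u_0) = \|\partial_x u_0\|_{L^2(I)}^2 < +\infty$ by the definition of $\varphi$ and the fact that $u_0 \in H^1(I)$. Then \cref{prop:varphi-monotonicity}, applied with $t_1 = 0$ and $t_2 = t$ for any $t > 0$, yields
\[
    \varphi(\bm{u}(t)) \le \varphi(u_0) < +\infty \,.
\]
By the very definition of $\varphi$ (which assigns $+\infty$ to any function not in $H^1(I)$), the finiteness of $\varphi(\bm{u}(t))$ forces $\bm{u}(t) \in H^1(I)$, which is exactly the desired conclusion.

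There is no serious obstacle here, since all the technical work has been absorbed into \cref{prop:varphi-monotonicity} (whose proof in turn relies on \cref{claim:varphi-properties,lemma:varphi-inequality} and the abstract preservation result \cref{res:brezis-preservation}). The only thing to double-check is that the definition of $\varphi$ correctly encodes ``membership in $H^1(I)$ with controlled derivative norm'', so that a finite $\varphi$ value is equivalent to $H^1$ regularity — but this is immediate from the piecewise definition of $\varphi$.
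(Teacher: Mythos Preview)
Your proposal is correct and matches the paper's proof essentially verbatim: the paper simply states that the corollary is an immediate consequence of \cref{prop:varphi-monotonicity} and the definition of $\varphi$, which is exactly the argument you outline.
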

\begin{proof}
    This is an immediate consequence of \cref{prop:varphi-monotonicity} and the definition of
    $\varphi$.
\end{proof}


\subsection{Preservation of Lipschitz regularity}
\label{section:preservation-of-lipschitz}

It will also be useful to control the Lipschitz regularity of solutions. At a high level, we follow
a similar strategy as in \cref{section:preservation-of-regularity}. Recall that $W^{1,\infty}(I)$ is
equivalent to the space of Lipschitz real-valued functions on $I$, up to identification of almost
everywhere equal functions. Define $\psi : L^2(I) \to [0, +\infty]$ by
\[
    \psi(u) \define \begin{cases}
        \|\partial_x u\|_{L^\infty(I)} & \text{if } u \in W^{1,\infty}(I) \\
        +\infty                        & \text{otherwise.}
    \end{cases}
\]

\begin{fact}
    \label{fact:lipschitz}
    Let $u \in L^2(I)$ and let $M \in \bR_{\ge 0}$. Then $\psi(u) \le M$ if and only if $u = f$
    \almev for some $M$-Lipschitz function $f : I \to \bR$. Moreover, in this case $f$ is the
    (unique) continuous representative of $u$ and $\psi(u)$ is the Lipschitz constant of $f$.
\end{fact}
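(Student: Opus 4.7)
The plan is to exploit the one-dimensional characterization of $W^{1,\infty}(I)$ as the space of Lipschitz functions (up to identification of almost everywhere equal representatives), which is standard but deserves to be spelled out in both directions since our definition of $\psi$ is phrased via weak derivatives while the statement is phrased via Lipschitz representatives.

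For the forward direction, suppose $\psi(u) \le M$. Then $u \in W^{1,\infty}(I)$, so $u \in W^{1,1}(I)$ as well, and hence $u$ admits an absolutely continuous representative $f : \overline I \to \bR$ (the standard 1D characterization of Sobolev functions, see \eg \cite[Theorem~8.2]{Bre11}). For any $x, y \in \overline I$ with $x < y$, the fundamental theorem of calculus for AC functions gives
\[
    f(y) - f(x) = \int_{(x,y)} \partial_x u(s) \odif s \,,
\]
and since $\abs*{\partial_x u} \le M$ \almev by assumption, we obtain $\abs*{f(y) - f(x)} \le M (y-x)$. Hence $f$ is an $M$-Lipschitz representative of $u$, as desired.

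For the backward direction, suppose $u = f$ \almev for some $M$-Lipschitz $f : I \to \bR$. Then $f$ is absolutely continuous, so it is differentiable \almev with classical derivative bounded in magnitude by $M$, and this classical derivative is also the weak derivative of $f$. Therefore $\partial_x u \in L^\infty(I)$ with $\|\partial_x u\|_{L^\infty(I)} \le M$, which is to say $u \in W^{1,\infty}(I)$ and $\psi(u) \le M$.

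Combining the two directions yields that $\psi(u)$ equals the infimum of $M \ge 0$ for which $u$ admits an $M$-Lipschitz representative, which is precisely the Lipschitz constant of the (unique) continuous representative $f$; uniqueness of the continuous representative is a standard consequence of the fact that two continuous functions on $I$ agreeing \almev must agree everywhere by density of any full-measure subset. There is no substantive obstacle here; the only care required is to keep track of the distinction between $u$ as an equivalence class and $f$ as its pointwise Lipschitz representative, which is why both directions must be argued separately.
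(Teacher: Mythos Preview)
Your proof is correct. The paper does not actually supply a proof for this statement: it is recorded as a \texttt{fact} environment and relies on the standard identification $W^{1,\infty}(I) = \Lip(I)$ already announced in the preliminaries, so there is nothing in the paper to compare against beyond noting that your argument is the expected one.
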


\begin{claim}
    \label{claim:psi-properties}
    The functional $\psi$ is convex, proper and lower semicontinuous.
\end{claim}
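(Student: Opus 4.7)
The plan is to verify each of the three properties in turn, parallel to the argument for $\varphi$ in \cref{claim:varphi-properties} but with the $L^\infty$ norm of the derivative replacing the $L^2$ norm. \emph{Convexity} follows immediately: if $u, v \in W^{1,\infty}(I)$ and $\lambda \in [0,1]$, then $\lambda u + (1-\lambda) v \in W^{1,\infty}(I)$ with weak derivative $\lambda \partial_x u + (1-\lambda) \partial_x v$, and the triangle inequality in $L^\infty(I)$ gives $\psi(\lambda u + (1-\lambda)v) \le \lambda \psi(u) + (1-\lambda)\psi(v)$; when either argument leaves $W^{1,\infty}(I)$ the inequality is trivial. \emph{Properness} is immediate since $\psi(0) = 0$.

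The real content is \emph{lower semicontinuity}. Since $L^2(I)$ is a metric space, it suffices to establish sequential l.s.c.: given $u_n \to u$ in $L^2(I)$, show $\psi(u) \le \liminf_n \psi(u_n)$. Only the case of finite liminf matters, so I would extract a subsequence $(u_{n_k})_k$ with $\psi(u_{n_k}) \to A < +\infty$ and each $\psi(u_{n_k})$ finite. By \cref{fact:lipschitz}, each $u_{n_k}$ admits a continuous representative (also denoted $u_{n_k}$) that is $\psi(u_{n_k})$-Lipschitz, and these Lipschitz constants are uniformly bounded by some $M$ for large $k$.

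From $u_{n_k} \to u$ in $L^2(I)$, extract a further subsequence (still denoted $u_{n_k}$) converging almost everywhere to $u$, and let $E \subset I$ be the full-measure set of convergence. For any $x, y \in E$, passing to the limit in $|u_{n_k}(x) - u_{n_k}(y)| \le \psi(u_{n_k}) \, |x - y|$ yields $|u(x) - u(y)| \le A |x-y|$, so $u$ is $A$-Lipschitz on the dense set $E$ and hence extends uniquely to an $A$-Lipschitz function on $\overline I$. By \cref{fact:lipschitz}, $u \in W^{1,\infty}(I)$ with $\psi(u) \le A = \liminf_n \psi(u_n)$, completing the proof.

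The only subtle point is the transition from $L^2$ convergence to pointwise a.e.\ convergence on a full-measure set, but this is a standard consequence of extracting an a.e.-convergent subsequence; once this is in place, the uniform Lipschitz bound does all the work. No weak-$*$ compactness in $W^{1,\infty}(I)$ is needed, since the pointwise argument is cleaner in one dimension.
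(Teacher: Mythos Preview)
Your proof is correct and follows essentially the same approach as the paper: reduce to sequential lower semicontinuity, pass to a subsequence with finite Lipschitz constants converging to $A$, extract an a.e.-convergent subsequence, and conclude that $u$ is $A$-Lipschitz on a full-measure set and hence (after extension) everywhere. The only cosmetic difference is that the paper explicitly invokes \cref{lemma:almost-lipschitz-extension} for the extension from a full-measure set to all of $I$, whereas you subsume this into the sentence ``extends uniquely to an $A$-Lipschitz function on $\overline I$''.
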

\begin{proof}
    Convexity and properness are straightforward; it remains to verify lower semicontinuity. Since
    $L^2(I)$ is a metric space, it suffices to check sequential lower semicontinuity. Let $(u_n)_{n
    \in \bN}$ be a sequence in $L^2(I)$ such that $u_n \to u$ in $L^2(I)$. We need to show that
    \[
        \psi(u) \lequestion \liminf_{n \to \infty} \psi(u_n) \,.
    \]
    The only relevant case is when the RHS above is finite, so suppose there exists a subsequence
    $(u_{n_k})_{k \in \bN}$ such that $\lim_{k \to \infty} \psi(u_{n_k}) = M < +\infty$. By
    extracting a subsequence if necessary, we may assume that $\psi(u_{n_k}) < +\infty$, and thus
    $u_{n_k} \in W^{1,\infty}(I)$, for every $k$. For simplicity and using \cref{fact:lipschitz},
    fix for each $n_k$ the continuous representative of $u_{n_k}$, which we denote by the same name.
    Then each $u_{n_k}$ is $M_k$-Lipschitz with $M_k \to M$.

    Since $u_{n_k} \to u$ in $L^2(I)$, it is standard that we may extract a subsequence that
    converges to $u$ pointwise almost everywhere. Denote this further subsequence again by
    $(u_{n_k})_k$, and let $N \subset I$ be a measure zero set such that $u_{n_k} \to u$ pointwise
    in $I \setminus N$. Then for each $x \ne y$ in $I \setminus N$, we have
    \[
        \abs*{u(x) - u(y)}
        = \lim_{k \to \infty} \abs*{u_{n_k}(x) - u_{n_k}(y)}
        \le \lim_{k \to \infty} M_k |x-y|
        = M |x-y| \,.
    \]
    By \cref{lemma:almost-lipschitz-extension}, $u$ is \almev equal to an $M$-Lipschitz function,
    so $\psi(u) \le M$ by \cref{fact:lipschitz}, as needed.
\end{proof}

\begin{definition}
    For any $w : I \to \bR$ and distinct $x, y \in I$, let
    \[
        \slope_w(x, y) \define \frac{w(y) - w(x)}{y - x} \,.
    \]
\end{definition}

\begin{lemma}
    \label{lemma:psi-inequality}
    Let $u \in D(\partial \cE^-)$ and let $z \in -\partial \cE^-(u)$. Then for all $\lambda > 0$,
    $\psi(u - \lambda z) \ge \psi(u)$.
\end{lemma}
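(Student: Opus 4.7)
The plan is to mirror the structure of \cref{lemma:varphi-inequality}, replacing the $H^1$ seminorm with the Lipschitz seminorm $\psi$. First, invoke \cref{prop:static-solution-gradient-to-neumann} and \cref{lemma:elliptic-regularity} to obtain $\downf{u} \in H^2(I)$ with $\partial_x \downf{u}|_{\{0,1\}} = 0$ and $z = \partial_x \partial_x \downf{u}$ \almev. Write $v \define u - \lambda z$. If $v \notin W^{1,\infty}(I)$ then $\psi(v) = +\infty$ and we are done, so assume $v \in W^{1,\infty}(I)$. The auxiliary cases on the regularity of $u$ itself proceed exactly as in \cref{lemma:varphi-inequality}: if $u$ is not continuous, \cref{lemma:case-not-continuous} immediately contradicts $v \in W^{1,\infty}(I)$; and if $u$ is continuous but not Lipschitz, I would extract pairs $(a_n, b_n)$ with $|\slope_u(a_n, b_n)| \to \infty$ and promote them via \cref{lemma:case-continuous} (for positive slopes) and its symmetric counterpart described below (for negative slopes) to pairs with $|\slope_v(a'_n, b'_n)| \to \infty$, contradicting $v \in W^{1,\infty}(I)$.

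In the main case $u \in W^{1,\infty}(I)$, \cref{prop:w1p-representation} gives $\upf{u}, \downf{u} \in W^{1,\infty}(I)$ with $\partial_x \upf{u} = \partial_x^+ u$ and $\partial_x \downf{u} = \partial_x^- u$ \almev. By \cref{lemma:in-domain-well-behaved}, on each connected component of $\{\partial_x \downf{u} < 0\}$ the function $\upf{u}$ is constant, so the essential supports of $\partial_x \upf{u}$ and $\partial_x \downf{u}$ are disjoint and hence $M \define \psi(u) = \max(\|\partial_x \upf{u}\|_{L^\infty(I)}, \|\partial_x \downf{u}\|_{L^\infty(I)})$. I would fix $\epsilon > 0$ and split on which term attains the maximum. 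If $\|\partial_x \upf{u}\|_{L^\infty(I)} \ge M - \epsilon$, a Lebesgue-point argument (using continuity of $\downf{u}$ so that $\slope_{\downf{u}} \to 0$ near a Lebesgue point of $\partial_x \upf{u}$ attaining a near-maximum value) produces $a < b$ in $I$ with $\slope_u(a, b) \ge M - 2\epsilon$, and applying \cref{lemma:case-continuous} with tolerance $\epsilon(b - a)$ yields $a', b'$ with $\slope_v(a', b') \ge M - 3\epsilon$, so $\psi(v) \ge M - 3\epsilon$.

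If instead $\|\partial_x \downf{u}\|_{L^\infty(I)} \ge M - \epsilon$, the key observation is that $v \in W^{1,\infty}(I)$ forces $z = (u - v)/\lambda \in W^{1,\infty}(I) \subset H^1(I)$, so elliptic regularity bootstraps via \cref{lemma:elliptic-regularity} to $\downf{u} \in H^3(I) \subset C^2(\overline{I})$; in particular $z$ is continuous and $\partial_x \downf{u}$ is continuously differentiable. The continuous function $\partial_x \downf{u}$ attains its minimum $-M^* \le -(M - \epsilon)$ at some interior point $x^* \in (0,1)$ (interior by the Neumann boundary conditions), and $z(x^*) = 0$ by the first-derivative test for an interior minimizer. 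On the connected component $(\alpha, \beta) \ni x^*$ of $\{\partial_x \downf{u} < 0\}$, $\upf{u}$ is constant by \cref{lemma:in-domain-well-behaved}, so $u$ and $\downf{u}$ differ only by a constant there. For any small $\delta > 0$,
\[
    \int_{x^* - \delta}^{x^*} z \odif y = \partial_x \downf{u}(x^*) - \partial_x \downf{u}(x^* - \delta) \le 0
\]
(the inequality since $-M^*$ is the minimum), so by continuity of $z$ there exists $a_\delta \in (x^* - \delta, x^*)$ with $z(a_\delta) \le 0$. Then
\[
    \slope_v(a_\delta, x^*) = \slope_{\downf{u}}(a_\delta, x^*) + \frac{\lambda z(a_\delta)}{x^* - a_\delta} \le \slope_{\downf{u}}(a_\delta, x^*) \to -M^*
\]
as $\delta \to 0$ by continuity of $\partial_x \downf{u}$ at $x^*$, yielding $\psi(v) \ge M^* - o(1) \ge M - \epsilon - o(1)$; letting $\delta, \epsilon \to 0$ gives $\psi(v) \ge M$.

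The main obstacle is precisely this negative-slope sub-case: the asymmetry of $\partial_x^-$ (which sees only decreasing behaviour of $u$) means that the positive-slope argument of \cref{lemma:case-continuous} does not symmetrize transparently, and extracting a slope-witnessing pair requires combining the elliptic regularity bootstrap $\downf{u} \in H^3(I)$ (only available because of the reduction to $v \in W^{1,\infty}(I)$) with the interior-minimum structure of $\partial_x \downf{u}$, rather than any direct surgery on points as in \cref{lemma:case-continuous}.
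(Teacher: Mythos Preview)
Your argument is correct and takes a genuinely different route from the paper in the negative-slope case. The paper treats all continuous $u$ uniformly: for every pair $(a,b)$ with $u(a) > u(b)$ it constructs ``left-favourable'' and ``right-favourable'' endpoints $a^*, b^*$ (points in every neighbourhood of which $z$ has the required sign) by a case analysis on whether $\partial_x \downf{u}(a)$ lies above or below $\slope_u(a,b)$, moving to the right or left accordingly and using \cref{lemma:in-domain-well-behaved} to control $\upf{u}$ where $\partial_x \downf{u} < 0$; this promotes each pair for $u$ to one for $v$ with at least the same absolute slope. You instead split off the case $u \in W^{1,\infty}(I)$ and, for the negative branch, bootstrap via $z = (u-v)/\lambda \in W^{1,\infty}(I) \subset H^1(I)$ to $\downf{u} \in H^3(I) \subset C^2(\overline I)$, then read off the answer directly from the interior minimiser $x^*$ of $\partial_x \downf{u}$, where $z(x^*) = 0$ by the first-derivative test. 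This is cleaner and sidesteps the paper's endpoint-chasing casework, at the price of the extra case split and the regularity bootstrap.

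One small loose end: in the ``continuous but not Lipschitz'' case, your reference to the ``symmetric counterpart described below'' does not actually apply, since that argument as you present it needs $u \in W^{1,\infty}$ for the bootstrap and in any event produces only the fixed bound $\|\partial_x \downf{u}\|_{L^\infty}$ rather than promoting a given large-negative-slope pair. The fix is immediate: since $\downf{u} \in H^2(I)$ is always Lipschitz while $\upf{u}$ is nondecreasing, every slope of $u$ is bounded below by $-\|\partial_x \downf{u}\|_{L^\infty(I)}$, so if $u$ fails to be Lipschitz it is the \emph{positive} slopes that are unbounded, and \cref{lemma:case-continuous} alone suffices. (Minor aside: the disjoint-support claim follows directly from $\partial_x^+ u \cdot \partial_x^- u = 0$ via \cref{prop:w1p-representation}; \cref{lemma:in-domain-well-behaved} is not needed there.)
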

\begin{proof}
    We follow the proof outline from \cref{lemma:varphi-inequality}, but some of the technical
    details are different. As in that proof, we have $\downf{u} \in H^2(I)$ with $\partial_x
    \downf{u} = 0$ on $\{0,1\}$ and $z = \partial_x \partial_x \downf{u}$ in $L^2(I)$.

    We may assume that $u - \lambda z \in W^{1,\infty}(I)$, since otherwise $\psi(u - \lambda z) =
    +\infty$ and there is nothing to prove. In particular, this establishes that $u - \lambda z$ is
    continuous.

    When $u$ is not continuous, \cref{lemma:case-not-continuous} yields that $u - \lambda z$ is not
    continuous, a contradiction. Therefore we may assume that $u$ is continuous. Since both $u$ and
    $u - \lambda z$ are continuous, we conclude that $z$ is also continuous. Let $v \define u -
    \lambda z$. By \cref{fact:lipschitz}, it suffices to show that for every $(a, b) \subset I$ and
    all $\epsilon > 0$, there exists $(a', b') \subset I$ such that
    \begin{equation}
        \label{eq:desired-slope}
        \abs*{\slope_v(a', b')} \gequestion \abs*{\slope_u(a, b)} - \epsilon \,.
    \end{equation}
    If $u(a) < u(b)$, then the interval $(a', b') \subseteq (a, b)$ given by
    \cref{lemma:case-continuous} with parameter $\epsilon(b-a)$ satisfies \eqref{eq:desired-slope};
    and if $u(a) = u(b)$, then any interval will do. Therefore suppose $u(a) > u(b)$, so in
    particular $\slope_u(a, b) < 0$.

    Note that a sufficient condition for \eqref{eq:desired-slope} is that $\abs*{\slope_u(a', b')}
    \ge \abs*{\slope_u(a, b)} - \epsilon$ with $u(a') > u(b')$, $z(a') \le 0$ and $z(b') \ge 0$,
    since in that case, 
    \begin{align*}
        \abs*{\slope_v(a', b')}
        &= \frac{\abs*{(u - \lambda z)(b') - (u - \lambda z)(a')}}{b' - a'}
        = \Big| \underbrace{u(b') - u(a')}_{< 0} - \lambda \underbrace{(z(b') - z(a'))}_{\ge 0}
            \Big| \left(\frac{1}{b' - a'}\right) \\
        &\ge \frac{\abs*{u(b') - u(a')}}{b' - a'}
        = \abs*{\slope_u(a', b')}
        \ge \abs*{\slope_u(a, b)} - \epsilon \,,
    \end{align*}
    which is \eqref{eq:desired-slope}. We now find $a'$ and $b'$ satisfying the aforementioned
    conditions.

    For each point $x \in I$, say $x$ is \emph{left-favourable} if every neighbourhood $(x-\delta,
    x+\delta)$ contains a point $x'$ such that $z(x') \le 0$. Similarly, say $x$ is
    \emph{right-favourable} if every neighbourhood $(x-\delta, x+\delta)$ contains a point $x'$ such
    that $z(x') \ge 0$.

    We claim that there exist points $a^* < b^*$ in $I$ such that $a^*$ is left-favourable, $b^*$ is
    right-favourable, and $\abs*{\slope_u(a^*, b^*)} \ge \abs*{\slope_u(a, b)}$ with $u(a^*) >
    u(b^*)$. Let us first show that this claim yields the desired points $a'$ and $b'$, and then
    proceed to prove the claim. Suppose we have $a^*$ and $b^*$ as claimed. Then by the continuity
    of $u$, we may fix sufficiently small $\delta > 0$ and find points $a' \in (a^*-\delta,
    a^*+\delta)$ and $b' \in (b^*-\delta, b^*+\delta)$ such that $a' < b'$, $z(a') \le 0$, $z(b')
    \ge 0$, $u(a') > u(b')$, and, for $\alpha \define \frac{\epsilon}{2} \cdot \frac{b^* -
    a^*}{u(a^*) - u(b^*)}$ and $\beta \define \frac{\epsilon (b^* - a^*)}{2}$,
    \begin{align*}
        \abs*{\slope_u(a', b')}
        &= \frac{u(a') - u(b')}{b' - a'}
        \ge \frac{u(a^*) - u(b^*) - \beta}{(b^* - a^*)(1 + \alpha)}
        \ge \frac{u(a^*) - u(b^*)}{b^* - a^*}(1 - \alpha) - \frac{\epsilon/2}{1 + \alpha} \\
        &\ge \frac{u(a^*) - u(b^*)}{b^* - a^*} - \epsilon
        = \abs*{\slope_u(a^*, b^*)} - \epsilon
        \ge \abs*{\slope_u(a, b)} - \epsilon \,,
    \end{align*}
    as desired, where we used the inequality $\frac{1}{1 + \alpha} \ge 1 - \alpha$. We now establish
    the existence of $a^*$ and $b^*$.

    We first find $a^*$. If $a$ is left-favourable, then choose $a^* = a$; note that, of course, we
    have $u(a^*) > u(b)$ and $\abs*{\slope_u(a^*, b)} \ge \abs*{\slope_u(a, b)}$. Otherwise, $a$ is
    not left-favourable, which gives some $\delta > 0$ such that $z(x) > 0$ for every $x \in
    (a-\delta, a+\delta)$, and hence $\partial_x \partial_x \downf{u} > 0$ \almev in $(a-\delta,
    a+\delta)$. We now consider two cases. Recall that $\slope_u(a, b) < 0$.

    First, suppose $\partial_x \downf{u}(a) \ge \slope_u(a, b)$. Note that it cannot be the case
    that $\partial_x \downf{u}(x) > \slope_u(a, b)$ for all $x \in (a, b)$, since otherwise we would
    have
    \[
        u(b)
        = \upf{u}(b) + \downf{u}(b)
        \ge \upf{u}(a) + \downf{u}(a) + \int_{(a,b)} \partial_x \downf{u} \odif x
        > u(a) + (b-a) \slope_u(a, b)
        = u(b) \,,
    \]
    a contradiction. Therefore we may let
    \[
        a^* \define \inf\left\{ x \in (a, b) : \partial_x \downf{u}(x) \le \slope_u(a, b) \right\}
        \,.
    \]
    By the continuity of $\partial_x \downf{u}$ (recall that $\downf{u} \in H^2(I)$ is continuously
    differentiable), we conclude that $\partial_x \downf{u}(a^*) = \slope_u(a, b)$. We also observe
    that we must have $a^* > a$ since, as noted above, we have $\partial_x \partial_x \downf{u}(x) >
    0$ \almev in some neighbourhood $(a-\delta, a+\delta)$, which implies that $\partial_x
    \downf{u}(x) > \partial_x \downf{u}(a) \ge \slope_u(a, b)$ for all $x \in (a, a+\delta)$. We
    claim that $a^*$ is left-favourable. Indeed, otherwise there would be some $\delta > 0$ such
    that $z(x) > 0$ for all $x \in (a^*-\delta, a^*+\delta)$, and since $z = \partial_x \partial_x
    \downf{u}$ \almev we would conclude that $\partial_x \downf{u}(x) < \partial_x \downf{u}(a^*) =
    \slope_u(a, b)$ for $x \in (a^*-\delta, a^*)$, contradicting the choice of $a^*$. Hence we have
    found a left-favourable $a^*$ in this case; we claim that $a^*$ also satisfies $u(a^*) > u(b)$
    and $\abs*{\slope_u(a^*, b)} \ge \abs*{\slope_u(a, b)}$. The first inequality holds since
    \begin{align*}
        u(a^*) - u(b)
        &= \downf{u}(a^*) + \upf{u}(a^*) - \big[ u(a) + (b-a) \slope_u(a, b) \big] \\
        &\ge \downf{u}(a) + \int_{(a,a^*)} \partial_x \downf{u} \odif x
            + \upf{u}(a) - \upf{u}(a) - \downf{u}(a) - (b-a) \slope_u(a, b) \\
        &= \int_{(a, a^*)} \left( \underbrace{\partial_x \downf{u}(x)}_{> \slope_u(a, b)}
                - \slope_u(a, b) \right) \odif x
            - (b-a^*) \underbrace{\slope_u(a, b)}_{< 0} \\
        &> 0 \,,
    \end{align*}
    and the second inequality holds since
    \begin{align*}
        &(b-a^*)(b-a) \Big( \abs*{\slope_u(a^*, b)} - \abs*{\slope_u(a, b)} \Big) \\
        &\qquad = (b-a^*)(b-a) \left(
            \frac{u(a^*) - u(b)}{b - a^*} - \frac{u(a) - u(b)}{b - a} \right) \\
        &\qquad = \Big( u(a^*) - u(b) \Big)(b - a) - \Big( u(a) - u(b) \Big)(b - a^*) \\
        &\qquad = \Bigg(
                \upf{u}(a^*) + \downf{u}(a)
                + \int_{(a,a^*)} \underbrace{\partial_x \downf{u}(x)}_{> \slope_u(a, b)} \odif x
                - u(b)
            \Bigg)(b - a)
            - \Big( u(a) - u(b) \Big)(b - a^*) \\
        &\qquad > \Big(
                \upf{u}(a) + \downf{u}(a) + (a^*-a) \slope_u(a, b) - u(b)
            \Big)(b - a)
            - \Big( u(a) - u(b) \Big)(b - a^*) \\
        &\qquad = \Big(
                u(a) + (a^*-a) \frac{u(b) - u(a)}{b - a} - u(b)
            \Big)(b - a)
            - \Big( u(a) - u(b) \Big)(b - a^*) \\
        &\qquad = \Big( u(a) - u(b) \Big) \Big( (b-a) - (b-a^*) - (a^*-a) \Big) \\
        &\qquad = 0 \,.
    \end{align*}
    Therefore in the first case we have found a left-favourable $a^* \in (a, b)$ such that $u(a^*) >
    u(b)$ and $\abs*{\slope_u(a^*, b)} \ge \abs*{\slope_u(a, b)}$.

    Second, suppose $\partial_x \downf{u}(a) < \slope_u(a, b)$. We proceed similarly, but with
    points to the left of $a$ instead. Namely, it cannot be the case that $\partial_x \downf{u}(x) <
    \slope_u(a, b)$ for all $x \in (0, a)$, since by continuity this would imply that $\partial_x
    \downf{u}(0) \le \slope_u(a, b) < 0$, a contradiction. Hence we may define
    \[
        a^* \define \sup\left\{ x \in (0, a) : \partial_x \downf{u}(x) \ge \slope_u(a, b) \right\}
        \,.
    \]
    By continuity, we conclude that $\partial_x \downf{u}(a^*) = \slope_u(a, b)$, which also implies
    that $a^* < a$. We claim that $a^*$ is left-favourable. Indeed, if it was not, then for some
    $\delta > 0$ we would have $z(x) > 0$ for all $x \in (a^*-\delta, a^*+\delta)$, and since $z =
    \partial_x \partial_x \downf{u}$ \almev we would conclude that $\partial_x \downf{u}(x) >
    \partial_x \downf{u}(a^*) = \slope_u(a, b)$ for all $x \in (a^*, a^*+\delta)$, contradicting the
    choice of $a^*$. We now claim that $a^*$ also satisfies $u(a^*) > u(b)$ and $\abs*{\slope_u(a^*,
    b)} \ge \abs*{\slope_u(a, b)}$.

    To prove the first inequality, we first observe that $\partial_x \downf{u}(x) < \slope_u(a, b) <
    0$ for all $x \in (a^*, a)$, which by \cref{lemma:in-domain-well-behaved} implies that $\upf{u}$
    is constant in $(a^*, a)$. Since $\upf{u}$ is continuous (because $u$ and $\downf{u}$ are), we
    conclude that $\upf{u}(a^*) = \upf{u}(a)$. Therefore we have
    \[
        u(a^*) - u(a) = \downf{u}(a^*) - \downf{u}(a) \ge 0
    \]
    since $\downf{u}$ is nonincreasing, and hence $u(a^*) \ge u(a) > u(b)$. As for the second
    inequality, we have
    {\allowdisplaybreaks
    \begin{align*}
        &(b-a^*)(b-a) \Big( \abs*{\slope_u(a^*, b)} - \abs*{\slope_u(a, b)} \Big) \\
        &\qquad = (b-a^*)(b-a) \left(
            \frac{u(a^*) - u(b)}{b - a^*} - \frac{u(a) - u(b)}{b - a} \right) \\
        &\qquad = \Big( u(a^*) - u(b) \Big)(b - a) - \Big( u(a) - u(b) \Big)(b - a^*) \\
        &\qquad = \Bigg( \upf{u}(a) + \downf{u}(a)
                - \int_{(a^*,a)} \underbrace{\partial_x \downf{u}(x)}_{< \slope_u(a,b)} \odif x
                - u(b) \Bigg)(b - a)
            - \Big( u(a) - u(b) \Big)(b - a^*) \\
        &\qquad > \Bigg( u(a) - (a-a^*) \frac{u(b) - u(a)}{b-a} - u(b) \Bigg)(b - a)
            - \Big( u(a) - u(b) \Big)(b - a^*) \\
        &\qquad = \Big( u(a) - u(b) \Big) \Big( (b-a) - (b-a^*) + (a-a^*) \Big) \\
        &\qquad = 0 \,.
    \end{align*}
    }%
    Therefore in any case we have found a left-favourable $a^* \in (0, b)$ such that $u(a^*) > u(b)$
    and $\abs*{\slope_u(a^*, b)} \ge \abs*{\slope_u(a, b)}$.

    Repeating an analogous argument for the right endpoint of the interval $(a^*, b)$, we find a
    right-favourable $b^* \in (a^*, 1)$ such that $u(a^*) > u(b^*)$ and $\abs*{\slope_u(a^*, b^*)}
    \ge \abs*{\slope_u(a^*, b)}$, which concludes the proof as explained above.
    \ignore{ 
    Although the argument proceeds very similarly, a priori there is some nuance due to the
    asymmetry between $\upf{u}$ and $\downf{u}$ and the left/right directions of the argument, so we
    include the steps here for completeness.

    As before, if $b$ is right-favourable then we are done. Suppose $b$ is not right-favourable, so
    there exists $\delta > 0$ such that $z(x) < 0$ for $x \in (b-\delta, b+\delta)$, and thus
    $\partial_x \partial_x \downf{u} < 0$ \almev in $(b-\delta, b+\delta)$. We consider two cases.

    First, suppose $\partial_x \downf{u} \ge \slope_u(a^*, b)$. It cannot be the case that
    $\partial_x \downf{u}(x) > \slope_u(a^*, b)$ for all $x \in (a^*, b)$, since otherwise we would
    have
    \[
        u(b)
        = \upf{u}(b) + \downf{u}(b)
        \ge \upf{u}(a^*) + \downf{u}(a^*) + \int_{(a^*,b)} \partial_x \downf{u} \odif x
        > u(a^*) + (b-a^*) \slope_u(a^*, b)
        = u(b) \,,
    \]
    a contradiction. Hence define
    \[
        b^* \define \sup\left\{ x \in (a^*, b) : \partial_x \downf{u}(x) \le \slope_u(a^*, b)
        \right\} \,.
    \]
    By the continuity of $\partial_x \downf{u}$, we have $\partial_x \downf{u}(b^*) = \slope_u(a^*,
    b)$. We conclude that $b^* < b$ because, since $\partial_x \partial_x \downf{u}(x) < 0$ for
    \almev $x \in (b-\delta, b+\delta)$, we have $\partial_x \downf{u}(x) > \partial_x \downf{u}(b)
    \ge \slope_u(a^*, b)$ for all $x \in (b-\delta, b)$. We claim that $b^*$ is right-favourable.
    Indeed, otherwise we would have some $\delta > 0$ such that $z(x) < 0$ for all $x \in
    (b^*-\delta, b^*+\delta)$, and since $z = \partial_x \partial_x \downf{u}$ \almev we would
    conclude that $\partial_x \downf{u}(x) < \partial_x \downf{u}(b^*) = \slope_u(a^*, b)$ for $x
    \in (b^*, b^*+\delta)$, contradicting the choice of $b^*$. We now claim that $b^*$ also
    satisfies $u(a^*) > u(b^*)$ and $\abs*{\slope_u(a^*, b^*)} \ge \abs*{\slope_u(a^*, b)}$. The
    first inequality holds since
    \begin{align*}
        &u(a^*) - u(b^*) \\
        &\qquad = u(b) - (b - a^*) \slope_u(a^*, b) - \upf{u}(b^*)
            - \downf{u}(a^*) - \int_{(a^*,b^*)} \partial_x \downf{u} \odif x \\
        &\qquad \ge u(b) - (b - a^*) \slope_u(a^*, b) - \upf{u}(b) - \downf{u}(a^*)
            - \int_{(a^*,b)} \partial_x \downf{u}(x) \odif x
            + \int_{(b^*,b)} \underbrace{\partial_x \downf{u}(x)}_{> \slope_u(a^*,b)} \odif x \\
        &\qquad > u(b) - (b - a^*) \slope_u(a^*, b) - \upf{u}(b) - \downf{u}(a^*)
            - \int_{(a^*,b)} \partial_x \downf{u}(x) \odif x
            + (b - b^*) \slope_u(a^*, b) \\
        &\qquad = u(b) - (b^* - a^*) \underbrace{\slope_u(a^*, b)}_{< 0}
            - \upf{u}(b) - \downf{u}(b) \\
        &\qquad > 0 \,.
    \end{align*}
    The second inequality holds since
    \begin{align*}
        &(b - a^*)(b^* - a^*) \Big( \abs*{\slope_u(a^*,b^*)} - \abs*{\slope_u(a^*,b)} \Big) \\
        &\qquad = (b - a^*)(b^* - a^*) \Bigg(
            \frac{u(a^*) - u(b^*)}{b^* - a^*} - \frac{u(a^*) - u(b)}{b - a^*} \Bigg) \\
        &\qquad = (b-a^*) \Big( u(a^*) - u(b^*) \Big)
            - (b^*-a^*) \Big( u(a^*) - u(b) \Big) \\
        &\qquad = (b-a^*) \Big( u(a^*)
                - \upf{u}(b^*) - \downf{u}(b)
                + \int_{(b^*,b)} \underbrace{\partial_x \downf{u}(x)}_{> \slope_u(a^*,b)} \odif x
            \Big)
            - (b^*-a^*) \Big( u(a^*) - u(b) \Big) \\
        &\qquad > (b-a^*) \Bigg( u(a^*)
                - \upf{u}(b) - \downf{u}(b)
                + (b-b^*) \frac{u(b) - u(a^*)}{b - a^*}
            \Bigg)
            - (b^*-a^*) \Big( u(a^*) - u(b) \Big) \\
        &\qquad = \Big( u(a^*) - u(b) \Big) \Big( (b-a^*) - (b^*-a^*) - (b-b^*) \Big) \\
        &\qquad = 0 \,.
    \end{align*}
    Thus $b^*$ satisfies the claims.

    Second, suppose $\partial_x \downf{u}(b) < \slope_u(a^*, b)$. It cannot be the case that
    $\partial_x \downf{u}(x) < \slope_u(a^*, b)$ for all $x \in (b, 1)$, since by continuity this
    would imply that $\partial_x \downf{u}(1) \le \slope_u(a^*, b) < 0$, a contradiction. Hence
    define
    \[
        b^* \define \inf\left\{ x \in (b, 1) : \partial_x \downf{u}(x) \ge \slope_u(a^*, b) \right\}
        \,.
    \]
    By continuity, we conclude that $\partial_x \downf{u}(b^*) = \slope_u(a^*, b)$, which also
    implies that $b^* > b$. We claim that $b^*$ is right-favourable. Indeed, otherwise we would have
    some $\delta > 0$ such that $z(x) < 0$ for all $x \in (b^*-\delta, b^*+\delta)$, and since $z =
    \partial_x \partial_x \downf{u}$ \almev we would conclude that $\partial_x \downf{u}(x) >
    \partial_x \downf{u}(b^*) \ge \slope_u(a^*, b)$ for $x \in (b^*-\delta, b^*)$, contradicting the
    choice of $b^*$. We now claim that $b^*$ also satisfies $u(a^*) > u(b^*)$ and
    $\abs*{\slope_u(a^*, b^*)} \ge \abs*{\slope_u(a^*, b)}$.

    To prove the first inequality, first note that since $\partial_x \downf{u} < \slope_u(a^*, b) <
    0$ for all $x \in (b, b^*)$ by the choice of $b^*$, \cref{lemma:in-domain-well-behaved} implies
    that $\upf{u}$ is constant in $(b, b^*)$, which by continuity implies that $\upf{u}(b) =
    \upf{u}(b^*)$. Then we have
    \[
        u(b) - u(b^*)
        = \downf{u}(b) - \downf{u}(b^*)
        \ge 0
    \]
    since $\downf{u}$ is nonincreasing, and hence $u(a^*) > u(b) \ge u(b^*)$. As for the second
    inequality, we have
    \begin{align*}
        &(b - a^*)(b^* - a^*) \Big( \abs*{\slope_u(a^*,b^*)} - \abs*{\slope_u(a^*,b)} \Big) \\
        &\qquad = (b - a^*)(b^* - a^*) \Bigg(
            \frac{u(a^*) - u(b^*)}{b^* - a^*} - \frac{u(a^*) - u(b)}{b - a^*} \Bigg) \\
        &\qquad = (b-a^*) \Big( u(a^*) - u(b^*) \Big)
            - (b^*-a^*) \Big( u(a^*) - u(b) \Big) \\
        &\qquad = (b-a^*) \Big( u(a^*)
            - \upf{u}(b) - \downf{u}(b)
            - \int_{(b,b^*)} \underbrace{\partial_x \downf{u}(x)}_{< \slope_u(a^*, b)} \odif x
            \Big)
            - (b^*-a^*) \Big( u(a^*) - u(b) \Big) \\
        &\qquad > (b-a^*) \Big( u(a^*) - u(b)
            - (b^* - b) \frac{u(b) - u(a^*)}{b - a^*}
            \Big)
            - (b^*-a^*) \Big( u(a^*) - u(b) \Big) \\
        &\qquad = \Big( u(a^*) - u(b) \Big) \Big( (b-a^*) - (b^*-a^*) + (b^*-b) \Big) \\
        &\qquad = 0 \,.
    \end{align*}
    Hence in any case we have found a right-favourable $b^* \in (a^*, 1)$ such that $u(a^*) >
    u(b^*)$ and $\abs*{\slope_u(a^*, b^*)} \ge \abs*{\slope_u(a^*, b)}$, which concludes the proof.
    } 
\end{proof}

\begin{proposition}[$\psi$-monotonicity of solutions]
    \label{prop:psi-monotonicity}
    Let $u_0 \in \cU$ and let $\bm{u} \in C([0, +\infty); L^2(I))$ be the solution to the gradient
    flow problem with initial data $u_0$. Then for all $0 \le t_1 \le t_2 < +\infty$, we have
    $\psi(\bm{u}(t_1)) \ge \psi(\bm{u}(t_2))$.
\end{proposition}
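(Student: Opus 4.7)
The plan is to proceed in complete parallel with the proof of \cref{prop:varphi-monotonicity}, applying \cref{res:brezis-preservation} to the functional $\psi$ in place of $\varphi$. The strategy is to verify the hypotheses of that theorem: we need $\psi$ to be convex, proper, lower semicontinuous, and satisfy the projection inequality $\psi(\mathrm{Proj}_{\overline{D(\partial \cE^-)}} f) \le \psi(f)$ for all $f \in L^2(I)$; then we need to check the resolvent inequality $\psi((I + \lambda \partial \cE^-)^{-1} f) \le \psi(f)$ for all $f \in L^2(I)$ and $\lambda > 0$. Once these are in hand, \cref{res:brezis-preservation} yields $\psi(P_t u_0) \le \psi(u_0)$ for all $t \ge 0$ and $u_0 \in \overline{D(\partial \cE^-)}$, and applying this fact with initial time shifted to $t_1$ (using the semigroup property $P_{t_2} u_0 = P_{t_2 - t_1}(P_{t_1} u_0)$ and the fact that $P_{t_1} u_0 \in D(\partial \cE^-) \subseteq \overline{D(\partial \cE^-)}$) gives the desired monotonicity.

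First, I would note that convexity, properness, and lower semicontinuity of $\psi$ have already been verified in \cref{claim:psi-properties}. Second, the projection condition is immediate from \cref{obs:closure-of-domains}: since $\overline{D(\partial \cE^-)} = L^2(I)$, the projection operator is the identity and the inequality is trivial. Third, the resolvent inequality is equivalent to the statement that for all $u \in D(\partial \cE^-)$, $z \in -\partial \cE^-(u)$, and $\lambda > 0$, we have $\psi(u) \le \psi(u - \lambda z)$, which is precisely \cref{lemma:psi-inequality}.

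Thus the only nontrivial content lies in the hypotheses already established, and the proof reduces to assembling these pieces and invoking \cref{res:brezis-preservation}. I do not expect any obstacle here, since the analog reasoning for $\varphi$ goes through verbatim. The final step is simply to remark that the monotonicity between arbitrary times $0 \le t_1 \le t_2$ follows from the monotonicity from time zero by viewing $\bm{u}(t_2) = P_{t_2 - t_1}(\bm{u}(t_1))$ and noting that $\bm{u}(t_1) \in D(\partial \cE^-) \subseteq \overline{D(\partial \cE^-)}$ when $t_1 > 0$, while the case $t_1 = 0$ is the direct statement of \cref{res:brezis-preservation}.
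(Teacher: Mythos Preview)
Your proposal is correct and follows essentially the same approach as the paper, which states only that the result follows from \cref{lemma:psi-inequality,res:brezis-preservation} exactly as in the proof of \cref{prop:varphi-monotonicity}. Your explicit mention of the semigroup step to pass from $\psi(P_t u_0) \le \psi(u_0)$ to monotonicity between arbitrary times $t_1 \le t_2$ is a reasonable elaboration that the paper leaves implicit.
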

\begin{proof}
    As in the proof of \cref{prop:varphi-monotonicity}, this follows from
    \cref{lemma:psi-inequality,res:brezis-preservation}.
\end{proof}

\begin{corollary}[Preservation of Lipschitz regularity]
    \label{cor:preservation-of-lipschitz}
    Suppose $u_0 \in W^{1,\infty}(I)$, and let $\bm{u} \in C([0, +\infty); L^2(I))$ be the solution
    to the gradient flow problem with initial data $u_0$. Then $\bm{u}(t) \in W^{1,\infty}(I)$ for
    all $t > 0$.
\end{corollary}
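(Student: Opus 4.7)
The plan is to deduce the corollary directly from \cref{prop:psi-monotonicity} together with the definition of $\psi$. Concretely, the initial state $u_0 \in W^{1,\infty}(I)$ means by definition that $\psi(u_0) = \|\partial_x u_0\|_{L^\infty(I)}$ is finite, and the monotonicity statement $\psi(\bm{u}(t_1)) \ge \psi(\bm{u}(t_2))$ for $0 \le t_1 \le t_2$ applied with $t_1 = 0$ and $t_2 = t > 0$ gives $\psi(\bm{u}(t)) \le \psi(u_0) < +\infty$. By the definition of $\psi$, the value $\psi(\bm{u}(t))$ can only be finite when $\bm{u}(t) \in W^{1,\infty}(I)$, which is exactly the desired conclusion.

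The one prerequisite to double-check is that \cref{prop:psi-monotonicity} applies with $t_1 = 0$, since the gradient flow framework only tells us directly that $\bm{u}(t) \in D(\partial \cE^-)$ for $t > 0$; however, the proposition is stated for $0 \le t_1 \le t_2 < +\infty$, and by the continuity $\bm{u}(0) = u_0 \in \cU$ (which holds because $H^1(I) \subset \cU$ and $W^{1,\infty}(I) \subset H^1(I)$ on the bounded interval $I$), the endpoint case is included. So no additional approximation is needed.

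There is essentially no obstacle here: the technical work has already been done in establishing \cref{claim:psi-properties} (convexity, properness, lower semicontinuity of $\psi$) and \cref{lemma:psi-inequality} (the key resolvent inequality $\psi(u - \lambda z) \ge \psi(u)$), which were then combined via the general preservation result \cref{res:brezis-preservation} to yield \cref{prop:psi-monotonicity}. Thus the corollary is purely a restatement of what monotonicity of $\psi$ along the flow says in terms of the space $W^{1,\infty}(I)$, and the proof is a single line invoking \cref{prop:psi-monotonicity} and unpacking the definition of $\psi$.
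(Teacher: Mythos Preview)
Your proposal is correct and matches the paper's proof exactly: the paper also writes that this is an immediate consequence of \cref{prop:psi-monotonicity} and the definition of $\psi$. Your additional check that $t_1=0$ is admissible (via $W^{1,\infty}(I)\subset H^1(I)\subset\cU$) is a reasonable clarification, though the statement of \cref{prop:psi-monotonicity} already covers $t_1=0$.
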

\begin{proof}
    This is an immediate consequence of \cref{prop:psi-monotonicity} and the definition of $\psi$.
\end{proof}

\subsection{Nonexpansiveness and order preservation}
\label{section:nonexpansiveness}

Our goal in this section is to establish that the semigroup $P_t$ is nonexpansive and order
preserving. These properties will help us show that applying $P_t$ to line restrictions in the
multidimensional setting behaves as expected by ``making progress'' toward monotonicity with each
application.

In the context of PDEs, one desirable way to show that a property is preserved through time is to
differentiate in time, and then pass the derivative inside the integral to exploit the definition of
the PDE. However, we need to establish some technical results before we can justify such
calculations.

We start with the following lemma, slightly adapted from \cite{CT80}, which reveals a close
connection between order preservation and nonexpansiveness (here, in the supremum norm) of
operators. Hence our strategy will be to establish order preservation, and conclude
nonexpansiveness.

\begin{lemma}
    \label{lemma:ct80-prop2}
    Let $(\Omega, \Sigma, \mu)$ be a finite measure space, and let $1 \le p \le +\infty$. Let
    $L^\infty(\Omega) \subset C \subset L^p(\Omega)$ have the property that for all $f \in C$ and $r
    \in \bR$, $f + r \in C$. Let $T : C \to L^p(\Omega)$ be continuous as a map from $L^p(\Omega)$
    to $L^p(\Omega)$ (the continuity requirement may be dropped if $p = \infty$) satisfying
    \begin{equation}
        \label{eq:translation-invariance}
        T(f + r) = T(f) + r \text{ \almev}
    \end{equation}
    for every $f \in C$ and $r \in \bR$. Then the following are equivalent:
    \begin{enumerate}[label=(\alph*)]
        \item \label{item:ct-a}
            For all $f, g \in C$, if $f \le g$ \almev then $Tf \le Tg$ \almev.
        \item \label{item:ct-b}
            For all $f, g \in C$, $(Tf - Tg)^+ \le \esssup (f-g)^+$ \almev.
        \item \label{item:ct-c}
            For all $f, g \in C$, $\abs*{Tf - Tg} \le \esssup \abs*{f-g}$ \almev.
    \end{enumerate}
\end{lemma}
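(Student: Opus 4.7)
The plan is to establish the cycle $\text{(a)} \Rightarrow \text{(b)} \Rightarrow \text{(c)} \Rightarrow \text{(a)}$. The first two implications will be essentially mechanical consequences of translation invariance; the real content lies in $\text{(c)} \Rightarrow \text{(a)}$.

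For $\text{(a)} \Rightarrow \text{(b)}$, I will set $r := \esssup (f-g)^+$, assume $r < \infty$ (otherwise the conclusion is vacuous), note that $f \le g + r$ a.e., and combine order preservation with the translation identity $T(g+r) = Tg + r$ to obtain $Tf \le Tg + r$ a.e., which is exactly (b). For $\text{(b)} \Rightarrow \text{(c)}$, I will apply (b) to both orderings $(f,g)$ and $(g,f)$, getting $(Tf - Tg)^+ \le \esssup (f-g)^+$ and $(Tg - Tf)^+ \le \esssup (g-f)^+$; then pointwise $|Tf - Tg|$ coincides with one of the two positive parts (whichever is nonzero), and each $\esssup(f-g)^\pm$ is dominated by $\esssup|f-g|$, yielding (c).

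The main obstacle is $\text{(c)} \Rightarrow \text{(a)}$, because (c) is symmetric in $f$ and $g$ and by itself says nothing about the sign of $Tf - Tg$ when $f \le g$. The trick is to break this symmetry by applying (c) to the asymmetric pair $(g, f + M)$, where $M := \esssup(g - f)$, under the temporary assumption $M < \infty$ (automatic when $p = \infty$, since then $C \subseteq L^\infty$). The key point is that $g \le f + M$ a.e., so the absolute value inside (c) disappears: $|g - (f+M)| = (f+M) - g$ a.e. Combining with $T(f+M) = Tf + M$ produces the pointwise inequality $|Tg - Tf - M| \le M - \essinf(g - f)$, whose lower half unpacks as $Tg - Tf \ge \essinf(g-f) \ge 0$, which is exactly (a). When $M = \infty$, which can only occur for $p < \infty$, I will approximate by truncation: set $f_n := (-n) \vee f \wedge n$ and $g_n := (-n) \vee g \wedge n$, observe that both lie in $L^\infty \subset C$, satisfy $f_n \le g_n$, and have bounded spread; apply the finite-$M$ case to conclude $T f_n \le T g_n$ a.e.; then pass to the limit using $f_n \to f$ and $g_n \to g$ in $L^p$ (dominated convergence, since $|f_n| \le |f|$, $|g_n| \le |g|$), continuity of $T$ from $L^p$ to $L^p$, and extraction of an a.e.-convergent subsequence.
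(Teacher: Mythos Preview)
Your proof is correct and follows essentially the same route as the paper: the cycle $(\mathrm{a})\Rightarrow(\mathrm{b})\Rightarrow(\mathrm{c})\Rightarrow(\mathrm{a})$, with the key trick for $(\mathrm{c})\Rightarrow(\mathrm{a})$ being to shift one function by $M=\esssup(g-f)$ so the absolute value in (c) unwinds, followed by truncation and continuity of $T$ for the case $M=\infty$. The only cosmetic differences are that the paper applies (c) to $(f+r,g)$ rather than $(g,f+M)$, and passes the limit via $\|(Tf_n-Tg_n)^+ - (Tf-Tg)^+\|_{L^p}\to 0$ rather than extracting an a.e.-convergent subsequence; both arguments are equivalent.
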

\begin{proof}
    We show \ref*{item:ct-a} $\implies$ \ref*{item:ct-b} $\implies$ \ref*{item:ct-c} $\implies$
    \ref*{item:ct-a}. First, assume \ref*{item:ct-a} holds, and let $f, g \in C$. Suppose $r \define
    \esssup (f-g)^+ < +\infty$, since otherwise there is nothing to prove. Note that $g + r \ge f$
    and $g + r \ge g$ both \almev. Thus by \ref*{item:ct-a} and \eqref{eq:translation-invariance},
    we have
    \[
        Tg + (Tf - Tg)^+ = Tf \lor Tg \le T(g+r) = Tg + r \text{ \almev} \,,
    \]
    so $(Tf - Tg)^+ \le r$ \almev as needed. Next, suppose \ref*{item:ct-b} holds, and let $f, g \in
    C$. Let $r_1 \define \esssup (f-g)^+$ and $r_2 \define \esssup (g-f)^+$; note that $\esssup
    \abs*{f-g} = \max\{ r_1, r_2 \}$, so we may assume that $r_1, r_2 < +\infty$, since otherwise
    there is nothing to show. Then indeed, using \ref*{item:ct-b},
    \[
        \abs*{Tf - Tg}
        = \max\left\{ (Tf - Tg)^+, (Tg - Tf)^+ \right\}
        \le \max\{ r_1, r_2 \}
        = \esssup \abs*{f-g} \text{ \almev,}
    \]
    as needed. Finally, suppose \ref*{item:ct-c} holds, and let $f, g \in C$ satisfy $f \le g$
    \almev. First, suppose $f, g \in L^\infty(\Omega)$. Let $r \define \esssup (g-f)$, so that $0
    \le r < +\infty$ by assumption. Then, by \ref*{item:ct-c} and \eqref{eq:translation-invariance},
    \begin{align*}
        \esssup\{ Tf - Tg + r \}
        &= \esssup\{ T(f+r) - Tg \}
        \le \esssup \abs*{T(f+r) - Tg} \\
        &\le \esssup \abs*{f - g + r}
        \le r \,,
    \end{align*}
    so $Tf \le Tg$ \almev as needed.

    Now, for general $f, g \in C$ with $f \le g$ \almev, we may approximate them by sequences
    $(f_n)_n, (g_n)_n$ in $L^\infty(\Omega) \subset C$ such that $f_n \to f$ and $g_n \to g$ in
    $L^p(\Omega)$ and moreover $f_n \le g_n$ \almev for each $n$; indeed, letting $f_n \define
    \max\{ -n, \min\{ n, f \} \}$ and likewise for $g_n$, it is immediate that $f_n, g_n \in
    L^\infty(\Omega)$ with $f_n \le g_n$ \almev, and the convergences $f_n \to f$ and $g_n \to g$ in
    $L^p(\Omega)$ follow from the dominated convergence theorem: since $\abs*{f-f_n}^p \to 0$
    pointwise and $\abs*{f-f_n} \le \abs*{f}$ pointwise for each $n$, we have $\lim_{n \to \infty}
    \int_\Omega \abs*{f_n-f}^p \odif \mu = 0$.

    The previous case yields $Tf_n \le Tg_n$ \almev for each $n$, and we claim that the continuity
    of $T$ implies that $Tf \le Tg$ \almev as well. Indeed, we have $Tf_n \to Tf$ an $Tg_n \to Tg$
    in $L^p(\Omega)$ and thus
    \begin{align*}
        \left\| (Tf-Tg)^+ - (Tf_n - Tg_n)^+ \right\|_{L^p(\Omega)}
        &\le \left\| (Tf-Tg) - (Tf_n - Tg_n) \right\|_{L^p(\Omega)} \\
        &\le \|Tf - Tf_n\|_{L^p(\Omega)} + \|Tg - Tg_n\|_{L^p(\Omega)} \to 0 \,,
    \end{align*}
    which means that $(Tf_n - Tg_n)^+ \to (Tf - Tg)^+$ in $L^p(\Omega)$. But since $(Tf_n - Tg_n)^+
    = 0$ \almev and hence in $L^p(\Omega)$ for each $n$, we conclude that $(Tf - Tg)^+ = 0$ \almev,
    \ie $Tf \le Tg$ \almev as needed.
\end{proof}

The following lemma is a specific formulation of the well-known Leibniz rule for differentiating
under the integral sign. It is a specialization of the version stated in \cite{Che13}, and can be
proved by a standard argument using the Fubini-Tonelli theorem.

\begin{lemma}[Differentiating under the integral sign]
    \label{lemma:leibniz}
    Let $(\Omega, \Sigma, \mu)$ be a measure space and let $[a, b] \subset \bR$. Let $f : [a, b]
    \times \Omega \to \bR$ be a jointly measurable function satisfying the following:
    \begin{enumerate}
        \item $f(t, \cdot) \in L^1(\Omega)$ for \almev $t \in (a, b)$.
        \item $f(\cdot, x)$ is AC for \almev $x \in \Omega$. (Its \almev defined partial derivative
            is denoted by $\partial_t f$ as usual.)
        \item It holds that
            \[
                \int_{(a, b)} \int_\Omega \abs*{\partial_t f(t, x)} \odif \mu \odif t < +\infty \,.
            \]
    \end{enumerate}
    Then the function $t \mapsto \int_\Omega f(t, x) \odif \mu$ is AC and
    \[
        \partial_t \int_\Omega f(t, x) \odif \mu
        = \int_\Omega \partial_t f(t, x) \odif \mu
        \qquad \text{for \almev $t \in (a, b)$.}
    \]
\end{lemma}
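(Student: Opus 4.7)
The plan is to apply the fundamental theorem of calculus (FTC) for absolutely continuous functions along the $t$-direction pointwise in $x$, and then interchange the order of integration via Fubini-Tonelli.

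First I would fix a reference point $t_0 \in (a, b)$ with $f(t_0, \cdot) \in L^1(\Omega)$, which exists by hypothesis~1. For each $x$ in the full-measure subset of $\Omega$ on which $f(\cdot, x)$ is AC (hypothesis~2), the FTC yields
\[
    f(t, x) - f(t_0, x) = \int_{(t_0, t)} \partial_s f(s, x) \odif s
    \qquad \text{for all } t \in [a, b] \,.
\]
Integrating this identity in $x$ over $\Omega$, the left-hand side equals $\int_\Omega f(t, x) \odif \mu - \int_\Omega f(t_0, x) \odif \mu$ whenever $f(t, \cdot) \in L^1(\Omega)$, which holds for a.e.\ $t$ by hypothesis~1.

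Next, hypothesis~3 asserts absolute integrability of $\partial_s f$ on $(a, b) \times \Omega$, so Fubini-Tonelli applies and lets me swap the order of integration on the right-hand side to obtain
\[
    \int_\Omega \int_{(t_0, t)} \partial_s f(s, x) \odif s \odif \mu
    = \int_{(t_0, t)} h(s) \odif s \,,
    \qquad \text{where } h(s) \define \int_\Omega \partial_s f(s, x) \odif \mu \,,
\]
and $h \in L^1(a, b)$ by another application of Fubini-Tonelli to $|\partial_s f|$. Setting $G(t) \define \int_\Omega f(t, x) \odif \mu$ (defined for a.e.\ $t$), we obtain $G(t) = G(t_0) + \int_{(t_0, t)} h(s) \odif s$ for a.e.\ $t \in (a, b)$. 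The right-hand side is an AC function of $t$ with derivative equal to $h$ almost everywhere, which gives both conclusions of the lemma.

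The main obstacle is essentially bookkeeping around measure-zero issues: hypothesis~1 only yields $f(t, \cdot) \in L^1(\Omega)$ for a.e.\ $t$, so $G$ is a priori only defined a.e., and the conclusion ``$G$ is AC'' must be read as ``$G$ agrees a.e.\ with an AC function''---the argument above constructs that representative. One must also confirm that $\partial_t f$ is jointly measurable in order to invoke Fubini-Tonelli, but this is routine since $\partial_t f$ arises as a pointwise a.e.\ limit of the jointly measurable difference quotients of $f$.
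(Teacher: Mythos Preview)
Your proposal is correct and takes exactly the approach the paper indicates: the paper does not spell out a proof but merely states that this lemma ``can be proved by a standard argument using the Fubini-Tonelli theorem,'' citing \cite{Che13}. Your outline---FTC for AC functions pointwise in $x$, then Fubini-Tonelli via hypothesis~3 to swap integrals and identify $G$ a.e.\ with an indefinite integral of $h \in L^1$---is precisely that standard argument, and your remarks on the measure-zero bookkeeping and joint measurability of $\partial_t f$ are appropriate.
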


The following fact is a standard consequence of the definition of absolute continuity:

\begin{fact}
    \label{fact:max-ac}
    Let $J \subset \bR$ be a compact interval. If $f, g : J \to \bR$ are AC, then $f \lor g$ is AC.
\end{fact}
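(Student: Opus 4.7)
The plan is to prove this by checking the definition of absolute continuity directly, exploiting the elementary inequality
\[
    |\max(a,c) - \max(b,d)| \le \max\bigl(|a-b|,\,|c-d|\bigr)
    \le |a-b| + |c-d|,
\]
which holds for all real $a,b,c,d$ (verifiable by a short case analysis on the ordering of $a,b,c,d$, or by noting that $\max(a,c) = \tfrac{1}{2}(a+c+|a-c|)$ together with the reverse triangle inequality).

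Given $\epsilon > 0$, I would invoke the absolute continuity of $f$ and of $g$ to obtain $\delta_f, \delta_g > 0$ such that, for any finite family of pairwise disjoint subintervals $\{(a_i,b_i)\}_{i=1}^k$ of $J$ with total length at most $\delta_f$ (resp.\ $\delta_g$), one has $\sum_i |f(b_i) - f(a_i)| < \epsilon/2$ (resp.\ $< \epsilon/2$ for $g$). Setting $\delta \define \min\{\delta_f, \delta_g\}$, any family with total length at most $\delta$ satisfies both bounds simultaneously.

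Then, applying the inequality pointwise to $(f \lor g)(b_i) - (f \lor g)(a_i)$ and summing,
\[
    \sum_{i=1}^k \bigl| (f \lor g)(b_i) - (f \lor g)(a_i) \bigr|
    \le \sum_{i=1}^k |f(b_i) - f(a_i)| + \sum_{i=1}^k |g(b_i) - g(a_i)|
    < \epsilon,
\]
which verifies the definition of absolute continuity for $f \lor g$ on $J$. There is no real obstacle here; the only thing to be careful about is the pointwise inequality relating the increments of the maximum to the increments of the summands, which is what makes the argument short.
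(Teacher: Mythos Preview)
Your proof is correct and is precisely the ``standard consequence of the definition of absolute continuity'' that the paper alludes to; the paper states this fact without proof, and your argument fills in exactly the expected $\epsilon$--$\delta$ verification via the pointwise inequality $|\max(a,c)-\max(b,d)|\le|a-b|+|c-d|$.
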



\begin{fact}[See \eg {\cite[Theorem 4.4]{EG15}}]
    \label{fact:derivative-of-positive-part}
    Let $J \subset \bR$ be a compact interval. Let $1 \le p < \infty$ and let $f \in W^{1,p}(J)$.
    Then $f^+ \in W^{1,p}(J)$ and $\partial_x (f^+) = \chi_{\{f > 0\}} \partial_x f$ \almev in $J$.
\end{fact}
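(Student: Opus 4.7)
The plan is to reduce the problem to passing a chain-rule identity to the limit along a smooth approximation of $s \mapsto s^+$. Since $|f^+| \le |f|$ pointwise, we immediately have $f^+ \in L^p(J)$, so the substantive task is to identify its weak derivative; membership of $f^+$ in $W^{1,p}(J)$ will then follow automatically because the candidate $\chi_{\{f > 0\}} \partial_x f$ is pointwise dominated by $|\partial_x f| \in L^p(J)$.

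For each $\epsilon > 0$ I would define the $C^1$-smoothing
\[
    g_\epsilon(s) \define \begin{cases}
        \sqrt{s^2 + \epsilon^2} - \epsilon & \text{if } s > 0, \\
        0                                  & \text{if } s \le 0,
    \end{cases}
\]
which is $C^1(\bR)$ (at $s=0$ both one-sided derivatives vanish), globally Lipschitz with constant at most $1$, and which satisfies $g_\epsilon(s) \to s^+$ together with $g_\epsilon'(s) \to \chi_{\{s > 0\}}(s)$ pointwise as $\epsilon \to 0^+$, with the uniform bounds $|g_\epsilon(s)| \le |s|$ and $|g_\epsilon'(s)| \le 1$. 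The standard chain rule for a $C^1$ Lipschitz map composed with a $W^{1,p}$ function then yields $g_\epsilon \circ f \in W^{1,p}(J)$ with $\partial_x(g_\epsilon(f)) = (g_\epsilon'\circ f)(\partial_x f)$ almost everywhere, so for every test function $\phi \in C_c^\infty(J)$,
\[
    \int_J g_\epsilon(f)\, \partial_x \phi \odif x
    = -\int_J (g_\epsilon'(f))(\partial_x f) \phi \odif x.
\]

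To conclude, I would pass $\epsilon \to 0^+$ on both sides via the dominated convergence theorem. On the left, the integrand is dominated by $|f||\partial_x \phi| \in L^1(J)$ and converges pointwise to $f^+ \partial_x \phi$; on the right, the integrand is dominated by $|\partial_x f||\phi| \in L^1(J)$ and converges pointwise to $\chi_{\{f > 0\}}(\partial_x f)\phi$. This yields
\[
    \int_J f^+\, \partial_x \phi \odif x = -\int_J \chi_{\{f > 0\}}(\partial_x f)\phi \odif x,
\]
identifying $\chi_{\{f > 0\}}\partial_x f$ as the weak derivative of $f^+$ and completing the proof. The only delicate point---not really an obstacle---is choosing the smoothing so that $g_\epsilon'(0) = 0$, which guarantees the strict indicator $\chi_{\{f > 0\}}$ in the limit rather than $\chi_{\{f \ge 0\}}$; in view of \cref{res:derivative-on-zero} the two agree a.e., but this choice matches the statement verbatim.
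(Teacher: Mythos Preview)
The paper does not give its own proof of this statement: it is stated as a \emph{Fact} with a citation to \cite[Theorem~4.4]{EG15} and no further argument. Your proposal is correct and is precisely the standard textbook argument (indeed the one in the cited reference), including the careful choice of a $C^1$ smoothing with $g_\epsilon'(0)=0$ to land on the strict indicator; there is nothing to compare against in the paper itself.
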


We also use the following standard formulation of a chain rule for Sobolev functions. This version
follows \eg from \cite[Theorem 4.4]{EG15}, which is stated for globally Lipschitz functions $F$, by
using the fact that the image of $f$ on $J$ is bounded, so that the local Lipschitz condition
suffices (\eg extend $F$ linearly outside the image of $f$ to obtain a $C^1(\bR)$, globally
Lipschitz function $\tilde F$).

\begin{fact}
    \label{fact:weak-derivative-chain-rule}
    Let $J \subset \bR$ be a compact interval. Let $1 \le p < \infty$, let $f \in W^{1,p}(J)$, and
    let $F \in C^1(\bR)$ be locally Lipschitz. Then $F \circ f \in W^{1,p}(J)$ and $\partial_x (F
    \circ f)(x) = F'(f(x)) \partial_x f(x)$ for \almev $x \in J$.
\end{fact}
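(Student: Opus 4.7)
The plan is to reduce the local Lipschitz case to the global Lipschitz case (as covered by \cite[Theorem 4.4]{EG15}) via a truncation-and-extension argument, exploiting that $J$ is compact.

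First I would observe that, since $J$ is a compact interval and $f \in W^{1,p}(J)$ for $1 \le p < \infty$, in one dimension every $W^{1,p}(J)$ function has an absolutely continuous representative (by the characterization of one-dimensional Sobolev functions via AC functions). In particular, $f$ is continuous on the compact set $\overline J$, so its image $f(\overline J)$ is contained in some bounded interval $[-R, R]$. Since $F$ is locally Lipschitz, there exists a constant $L > 0$ such that $F$ is $L$-Lipschitz on $[-R-1, R+1]$, and since $F \in C^1(\bR)$, $F'$ is bounded by $L$ on this interval.

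Next I would construct a globally Lipschitz $C^1(\bR)$ extension $\tilde F$ of $F$ on the interval $[-R, R]$. Concretely, one can define $\tilde F$ to agree with $F$ on $[-R, R]$ and extend it outside this interval by the affine continuations with slopes $F'(-R)$ and $F'(R)$, respectively. Then $\tilde F \in C^1(\bR)$, $\tilde F$ is globally Lipschitz with constant at most $L$, $\tilde F = F$ on $[-R, R]$, and $\tilde F' = F'$ on $(-R, R)$. By \cite[Theorem 4.4]{EG15} applied to $\tilde F$, we obtain $\tilde F \circ f \in W^{1,p}(J)$ with
\[
    \partial_x (\tilde F \circ f)(x) = \tilde F'(f(x)) \, \partial_x f(x)
    \qquad \text{for \almev $x \in J$.}
\]

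Finally, since $f(x) \in [-R, R]$ for every $x \in \overline J$, we have $\tilde F \circ f = F \circ f$ pointwise on $J$, so $F \circ f \in W^{1,p}(J)$; and since $\tilde F'(y) = F'(y)$ for all $y \in (-R, R)$, the identity $\tilde F'(f(x)) \, \partial_x f(x) = F'(f(x)) \, \partial_x f(x)$ holds at every $x$ with $f(x) \in (-R, R)$. To handle points where $f(x) \in \{-R, R\}$, I would use \cref{res:derivative-on-zero} applied to the Sobolev function $(f - R) \lor 0$ and its negative analogue, which gives $\partial_x f = 0$ \almev on $\{f = \pm R\}$; hence both sides vanish \almev on the exceptional set, completing the proof. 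The only real obstacle, which is mild, is verifying that the extension $\tilde F$ can be taken genuinely $C^1$ (so that \cite[Theorem 4.4]{EG15} applies cleanly), but the affine extension above has this property by construction.
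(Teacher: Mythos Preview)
Your proposal is correct and follows exactly the approach the paper sketches in the remark preceding the Fact: reduce to the globally Lipschitz case of \cite[Theorem 4.4]{EG15} by extending $F$ affinely outside the bounded image of $f$. One minor simplification: since your affine extension has slopes $F'(\pm R)$, the extension $\tilde F$ is $C^1$ with $\tilde F' = F'$ on all of $[-R,R]$ (not just the open interval), so the separate handling of the level sets $\{f = \pm R\}$ via \cref{res:derivative-on-zero} is unnecessary.
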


The following lemma essentially says that, given an element of a Bochner space, we can get a handle
on a concrete jointly measurable function ``representing'' that element in a precise sense. This
makes concrete the intuitive expectation that a solution $\bm{u}$ to our PDEs, which maps each point
in time to an element of $L^2(I)$, should also give us a specific value at each point in time and
space ``$\bm{u}(t,x)$''.

\begin{lemma}{{\cite[Theorem 17, p. 198]{DS58}}}
    \label{lemma:ds58}
    Let $(S, \Sigma_S, \mu)$ and $(T, \Sigma_T, \lambda)$ be measure spaces which are either both
    finite or both positive and $\sigma$-finite, and let $(R, \Sigma_R, \rho)$ be their product. Let
    $1 \le p \le \infty$ and let $F$ be a $\mu$-integrable function on $S$ to $L^p(T, \Sigma_T,
    \lambda, \cX)$ where $\cX$ is a real or complex Banach space. Then there is a $\rho$-measurable
    function $f$ on $R$ to $\cX$, which is uniquely determined except for a set of $\rho$-measure
    zero, such that $f(s, \cdot) = F(s)$ for $\mu$-almost all $s$ in $S$. Moreover $f(\cdot, t)$ is
    $\mu$-integrable on $S$ for $\lambda$-almost all $t$ and the integral $\int_S f(s, t) \mu(\odif
    s)$, as a function of $t$ is equal to the element $\int_S F(s) \mu(\odif s)$ of $L^p(T,
    \Sigma_T, \lambda, \cX)$.
\end{lemma}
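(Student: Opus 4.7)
\textbf{Proof proposal for \cref{lemma:ds58}.} The plan is the standard Dunford--Schwartz approach: establish the statement first for simple Bochner functions $F$, then extend to the general Bochner integrable case by an approximation and subsequence extraction argument. Uniqueness is addressed separately at the end.

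First I would reduce to the case where $F$ is a Bochner simple function $F = \sum_{i=1}^n \chi_{A_i} g_i$ with $A_i \in \Sigma_S$ of finite $\mu$-measure and $g_i \in L^p(T, \Sigma_T, \lambda, \cX)$. By definition of the Bochner space $L^p(T; \cX)$, each $g_i$ admits a strongly $\lambda$-measurable representative $\tilde g_i : T \to \cX$. Define $f(s,t) \define \sum_{i=1}^n \chi_{A_i}(s)\, \tilde g_i(t)$. This $f$ is jointly $\rho$-measurable because it is a finite linear combination of products of a $\mu$-measurable function on $S$ and a strongly $\lambda$-measurable function on $T$, and the measurability of such a product is a standard Fubini-type fact for Banach-valued functions. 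Moreover $f(s, \cdot) = F(s)$ in $L^p(T; \cX)$ for every $s$, and the integration identity holds by direct computation: $\int_S F(s)\mu(ds) = \sum_i \mu(A_i) g_i$ which, as an element of $L^p(T; \cX)$, agrees with the function $t \mapsto \sum_i \mu(A_i)\tilde g_i(t) = \int_S f(s,t)\mu(ds)$.

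Next I would handle the general case. Since $F$ is $\mu$-integrable, there exist Bochner simple functions $(F_n)_{n \in \bN}$ on $S$ with $F_n \to F$ both in the Bochner norm on $L^1(S; L^p(T;\cX))$ and $\mu$-almost everywhere. By Step 1, each $F_n$ yields a jointly $\rho$-measurable $f_n : R \to \cX$ with the properties above. The convergence $F_n \to F$ in $L^1(S; L^p(T;\cX))$ allows us to pass to a subsequence, which I will still denote $(F_n)$, such that $\sum_n \|F_{n+1} - F_n\|_{L^p(T;\cX)}$ is finite for $\mu$-a.a.\ $s$. For each such $s$, passing to a further (possibly $s$-dependent, but arranged uniformly through a standard diagonal argument combined with the scalar-valued Fubini applied to $\|f_{n+1}(s,t) - f_n(s,t)\|_\cX^p$) subsequence yields that $f_n(s,t)$ is Cauchy in $\cX$ for $\lambda$-a.a.\ $t$. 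Define $f(s,t)$ as its limit (and zero on the exceptional set); this $f$ is $\rho$-measurable as a pointwise a.e.\ limit of $\rho$-measurable functions, and $f(s, \cdot)$ agrees with the $L^p$-limit $F(s)$ for $\mu$-a.a.\ $s$. The integration identity transfers from the simple $F_n$ to $F$ by continuity of the Bochner integral and the scalar-valued dominated convergence theorem applied slicewise in $t$.

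The hard part, and the step I expect to require the most care, is arranging \emph{joint} $\rho$-measurability of $f$ and the compatibility of the various $\mu$-a.e.\ and $\lambda$-a.e.\ exceptional sets. The naive construction only gives $\lambda$-a.e.\ convergence of $f_n(s,\cdot)$ for each fixed $s$ in a $\mu$-full set, which does not immediately give $\rho$-almost everywhere convergence on $R$. The correct way to handle this, which I would adopt, is to apply the scalar-valued Fubini--Tonelli theorem to the nonnegative, jointly measurable functions $(s,t) \mapsto \|f_{n+1}(s,t) - f_n(s,t)\|_\cX^p$: the $L^1$-summability in $s$ of the $L^p$-norms in $t$ translates (after subsequence extraction) into $\rho$-a.e.\ pointwise summability, hence $\rho$-a.e.\ Cauchyness of $(f_n)$, which yields the desired joint measurability of the limit $f$.

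Finally, uniqueness is straightforward: if $f^{(1)}$ and $f^{(2)}$ both satisfy the conclusion, then for $\mu$-a.a.\ $s$ we have $f^{(1)}(s,\cdot) = F(s) = f^{(2)}(s,\cdot)$ in $L^p(T;\cX)$, i.e.\ $f^{(1)}(s,t) = f^{(2)}(s,t)$ for $\lambda$-a.a.\ $t$; applying the scalar Fubini theorem to $(s,t) \mapsto \ind{f^{(1)}(s,t) \ne f^{(2)}(s,t)}$ shows that $f^{(1)} = f^{(2)}$ $\rho$-almost everywhere.
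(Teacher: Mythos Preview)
The paper does not prove this lemma; it is quoted verbatim from Dunford--Schwartz \cite[Theorem~17, p.~198]{DS58} and used as a black box. Your proposal outlines the standard Dunford--Schwartz argument (simple functions, then approximation with a subsequence giving $\rho$-a.e.\ convergence via scalar Fubini--Tonelli applied to $\|f_{n+1}-f_n\|_\cX^p$), which is correct and is essentially the proof in the cited reference, so there is nothing to compare against in the present paper.
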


\begin{corollary}
    \label{cor:jointly-measurable-function}
    Let $[a, b] \subset \bR$ be a compact interval endowed with the Lebesgue measure, and let $F \in
    L^1(a, b; L^2(I))$. Then there exists a jointly measurable function $f^* : [a, b] \times I \to
    \bR$ satisfying
    \begin{enumerate}
        \item $f^*(s, \cdot) = F(s)$ in $L^2(I)$ for \almev $s \in (a, b)$.
        \item $f^*(\cdot, x) \in L^1(a, b)$ for all $x \in I$.
        \item For all $s \in [a, b]$, the functions $x \mapsto \int_{(a,s)} f^*(r, x) \odif r$ and
            $\int_{(a,s)} F(r) \odif r$ are equal in $L^2(I)$.
    \end{enumerate}
\end{corollary}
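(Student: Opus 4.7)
The plan is to invoke \cref{lemma:ds58} directly with $(S, \mu) = ([a,b], \text{Lebesgue})$, $(T, \lambda) = (I, \text{Lebesgue})$, $p = 2$, and $\cX = \bR$. The hypothesis $F \in L^1(a,b; L^2(I))$ supplies exactly what the lemma requires, producing a jointly measurable $f : [a,b] \times I \to \bR$ satisfying $f(s, \cdot) = F(s)$ in $L^2(I)$ for \almev $s \in (a,b)$, $f(\cdot, x) \in L^1(a,b)$ for \almev $x \in I$, and $\int_{[a,b]} f(s, \cdot) \odif s = \int_{[a,b]} F(s) \odif s$ in $L^2(I)$. This already gives property~1 and gives property~3 at the single endpoint $s = b$; the remaining work is to upgrade ``\almev $x$'' in property~2 to ``all $x$'', and to extend property~3 to every $s \in [a,b]$.

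For the first gap, I would let $N \subseteq I$ denote the null set of points $x$ where $f(\cdot, x) \not\in L^1(a,b)$, and redefine $f^*(s, x) \define f(s, x) \chi_{I \setminus N}(x)$. This remains jointly measurable because $N$ is Lebesgue measurable; property~1 survives since $f^*$ and $f$ differ only on the product-null set $[a,b] \times N$; and property~2 now holds for every $x \in I$ by construction, trivially so when $x \in N$.

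For property~3, I would fix any $s \in [a,b]$ and apply \cref{lemma:ds58} a second time to the restriction $F|_{(a,s)} \in L^1(a,s; L^2(I))$, obtaining a jointly measurable $g_s : [a,s] \times I \to \bR$ satisfying in particular $\int_{(a,s)} g_s(r, \cdot) \odif r = \int_{(a,s)} F(r) \odif r$ in $L^2(I)$. Since $f^*|_{[a,s] \times I}$ is also jointly measurable and agrees with $F(r)$ in $L^2(I)$ for \almev $r \in (a,s)$, the uniqueness part of \cref{lemma:ds58} forces $f^*|_{[a,s] \times I} = g_s$ off a set of product measure zero. Combined with the fact that, thanks to property~2, the integral $\int_{(a,s)} f^*(r, x) \odif r$ is well-defined for every $x$, Fubini-Tonelli then yields $\int_{(a,s)} f^*(r, \cdot) \odif r = \int_{(a,s)} g_s(r, \cdot) \odif r = \int_{(a,s)} F(r) \odif r$ in $L^2(I)$, which is property~3. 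The only delicate point is this use of the uniqueness clause to transport the integral equality from $g_s$ to $f^*$ uniformly in $s$; everything else is bookkeeping atop \cref{lemma:ds58}.
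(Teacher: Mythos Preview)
Your proposal is correct and essentially identical to the paper's proof: both apply \cref{lemma:ds58} once to get the basic function, re-apply it on each subinterval $[a,s]$ and invoke the uniqueness clause to obtain property~3 for all $s$, and zero out on the null set $N$ to upgrade property~2 from ``\almev $x$'' to ``all $x$''. The only difference is that the paper performs the property~3 extension before the $N$-modification and you do it after, which is immaterial since the modification affects only a product-null set.
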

\begin{proof}
    Let condition $\tilde 2$ denote condition 2 with ``all $x \in I$'' replaced by ``\almev $x \in
    I$''\!\!. We first apply \cref{lemma:ds58} with $S = [a, b]$ and $T = I$, both endowed with the
    Lebesgue measure, to obtain a function $f$ satisfying conditions 1 and $\tilde 2$, as well
    condition 3 for $s = b$.

    We now show that condition 3 also holds for other values of $s \in [a, b)$ using the \almev
    uniqueness given by \cref{lemma:ds58}. For any $s \in [a, b)$, we may apply that lemma with
    $\tilde S = [a, s]$ instead to obtain a function $\tilde f$ satisfying $x \mapsto \int_{(a,s)}
    {\tilde f}(r, x) \odif r = \int_{(a,s)} F(r) \odif r$ in $L^2(I)$. But since both $g = \tilde f$
    and $g = f$ satisfy that $g(r, \cdot) = F(r)$ in $L^2(I)$ for \almev $r \in (a, s)$,
    \cref{lemma:ds58} implies that $\tilde f = f$ except for a subset of $[a, s] \times I$ of joint
    measure zero. We conclude that, for \almev $x \in I$, we have $\tilde f(\cdot, x) = f(\cdot, x)$
    \almev in $(a, s)$ and hence $\int_{(a,s)} f(r, x) \odif r = \int_{(a,s)} {\tilde f}(r, x) \odif
    r$. It follows that $x \mapsto \int_{(a,s)} f(r, x) \odif r = x \mapsto \int_{(a,s)} {\tilde
    f}(r, x) \odif r = \int_{(a,s)} F(r)$ in $L^2(I)$, as claimed.

    The final step is to construct a jointly measurable function $f^* : [a, b] \times I \to \bR$
    satisfying condition 2 rather than just $\tilde 2$, while preserving conditions 1 and 3. Let $N
    \subset I$ be a measure zero set such that $f(\cdot, x) \in L^1(a, b)$ for all $x \in I
    \setminus N$. Define $f^*$ by
    \[
        f^*(s, x) \define \begin{cases}
            f(s, x) & \text{if } x \in I \setminus N \\
            0       & \text{otherwise.}
        \end{cases}
    \]
    We note that $f^*$ is jointly measurable; indeed, this follows from the facts that $f = f^*$ on
    $[a, b] \times (I \setminus N)$ and that the set $[a, b] \times (I \setminus N)$ is jointly
    measurable. By construction, $f^*$ satisfies condition 2, and it is clear that $f^*$ also
    satisfies conditions 1 and 3 since $f$ does and $N$ is a null set.
\end{proof}

\begin{lemma}
    \label{lemma:joint-representation}
    Let $0 \le a < b < +\infty$, and let $\bm{u} \in C([a, b]; L^2(I))$ be the restriction to domain
    $[a, b]$ of any solution to the gradient flow problem, with $\bm{u'} : [a, b] \to L^2(I)$ its
    weak derivative restricted in the same way. Then there exists a jointly measurable function
    $\bm{\tilde u'} : [a, b] \times I \to \bR$ satisfying
    \begin{enumerate}
        \item $\bm{\tilde u'}(t, \cdot) = \bm{u'}(t)$ in $L^2(I)$ for \almev $t \in (a, b)$.
        \item $\bm{\tilde u'}(\cdot, x) \in L^1(a, b)$ for all $x \in I$.
        \item For all $t \in [a, b]$, the functions $x \mapsto \int_{(a,t)} \bm{\tilde u'}(s, x)
            \odif s$ and $\int_{(a,t)} \bm{u'}(s) \odif s$ are equal in $L^2(I)$.
    \end{enumerate}
    Moreover, fixing any representative of $\bm{u}(a) \in L^2(I)$, the function $\bm{\tilde u} : [a,
    b] \times I \to \bR$ given by
    \[
        \bm{\tilde u}(t, x) \define \bm{u}(a)(x) + \int_{(a,t)} \bm{\tilde u'}(s, x) \odif s
    \]
    is jointly measurable and satisfies
    \begin{enumerate}
            \setcounter{enumi}{3}
        \item For each $t \in [a, b]$, $\bm{\tilde u}(t, \cdot) = \bm{u}(t)$ in $L^2(I)$.
        \item For each $x \in I$, $\bm{\tilde u}(\cdot, x)$ is absolutely continuous with weak
            derivative $\partial_t \bm{\tilde u}(t, x) = \bm{\tilde u'}(t, x)$.
    \end{enumerate}
\end{lemma}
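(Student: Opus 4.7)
The plan is to apply \cref{cor:jointly-measurable-function} to $F = \bm{u'}$ in order to obtain the jointly measurable $\bm{\tilde u'}$, and then use Tonelli's theorem to lift joint measurability and the integral representation to $\bm{\tilde u}$.

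First, I would verify that $\bm{u'} \in L^1(a, b; L^2(I))$, which is required to invoke \cref{cor:jointly-measurable-function}. This follows directly from \cref{def:nice-evolution-functions}: on $[a, b]$ with $a > 0$, $\bm{u'}$ is in $L^\infty(a, b; L^2(I))$ and hence in $L^1$ since the interval is finite; on $[0, b]$, $\bm{u'}$ is in $L^2(0, b; L^2(I))$, which again embeds into $L^1$ by Cauchy-Schwarz (the domain has finite measure). Applying \cref{cor:jointly-measurable-function} then yields a jointly measurable function $\bm{\tilde u'} : [a, b] \times I \to \bR$ satisfying properties 1, 2, and 3 of the statement.

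Next, having fixed a representative of $\bm{u}(a)$ and defined $\bm{\tilde u}(t, x) = \bm{u}(a)(x) + \int_{(a, t)} \bm{\tilde u'}(s, x) \odif s$, I would establish joint measurability by observing that the map $(t, x) \mapsto \bm{u}(a)(x)$ is jointly measurable (it is constant in $t$ and measurable in $x$), and that the map $(t, x) \mapsto \int_{(a, t)} \bm{\tilde u'}(s, x) \odif s$ is jointly measurable as a consequence of Tonelli's theorem applied to the jointly measurable integrand $(s, x) \mapsto \bm{\tilde u'}(s, x) \chi_{(a, t)}(s)$ together with the standard fact that the indefinite integral of a jointly measurable function is jointly measurable in the remaining variables. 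Property 5 follows immediately for each fixed $x \in I$: by property 2 of $\bm{\tilde u'}$, the function $s \mapsto \bm{\tilde u'}(s, x)$ is in $L^1(a, b)$, so its indefinite integral is AC in $t$ with weak derivative $\bm{\tilde u'}(t, x)$ \almev, and the constant shift by $\bm{u}(a)(x)$ preserves this.

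Finally, for property 4, I would combine the absolute continuity of $\bm{u}$ as a Banach-valued function (given by $\bm{u}(t) = \bm{u}(a) + \int_{(a, t)} \bm{u'}(s) \odif s$ in $L^2(I)$, from \cref{def:nice-evolution-functions}) with property 3 of $\bm{\tilde u'}$: for each $t \in [a, b]$,
\[
    \bm{\tilde u}(t, \cdot)
    = \bm{u}(a) + \int_{(a, t)} \bm{\tilde u'}(s, \cdot) \odif s
    = \bm{u}(a) + \int_{(a, t)} \bm{u'}(s) \odif s
    = \bm{u}(t)
\]
in $L^2(I)$, as desired. The main subtlety (and where most care is needed) is in the joint measurability argument and in checking that property 3 of \cref{cor:jointly-measurable-function}, which is stated in terms of equality in $L^2(I)$ of a Bochner integral and a pointwise integral, can be cleanly used together with Tonelli to yield the pointwise-in-$x$ identity of property 5; this is where fixing the concrete representative $\bm{u}(a)(x)$ up front matters.
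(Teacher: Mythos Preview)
Your proposal is correct and follows essentially the same approach as the paper: apply \cref{cor:jointly-measurable-function} to $\bm{u'}$ to get properties 1--3, then verify properties 4 and 5 directly from the definition of $\bm{\tilde u}$ and the absolute continuity of $\bm{u}$. The only minor difference is that the paper obtains joint measurability of $\bm{\tilde u}$ by observing (after establishing properties 4 and 5) that $\bm{\tilde u}$ is a Carath\'eodory function, whereas you argue directly via Tonelli on the indefinite integral; both routes are standard and equally valid.
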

\begin{proof}
    Apply \cref{cor:jointly-measurable-function} to $\bm{u'}$, which is integrable since it is the
    weak derivative of $\bm{u}$, to obtain jointly measurable $\bm{\tilde u'} : [a, b] \times I \to
    \bR$ satisfying properties 1--3. We now verify that $\bm{\tilde u}$ satisfies properties 4 and
    5. Note that these two properties then imply that $\bm{\tilde u}$ is a Carathéodory function and
    hence jointly measurable (see \eg \cite[Lemma 4.51]{AB06}).

    For each $t \in [a, b]$, the definition of $\bm{u}$ (in particular its absolute continuity)
    implies that
    \[
        \bm{u}(t) - \bm{u}(a)
        = \left[ \bm{u}(0) + \int_{(0,t)} \bm{u'}(s) \odif s \right]
            - \left[ \bm{u}(0) + \int_{(0,a)} \bm{u'}(s) \odif s \right]
        = \int_{(a,t)} \bm{u'}(s) \odif s
    \]
    in $L^2(I)$. Property 3 implies that, for \almev $x \in I$,
    \[
        \bm{u}(t)(x)
        = \bm{u}(a)(x) + \int_{(a,t)} \bm{\tilde u'}(s, x) \odif s
        = \bm{\tilde u}(t, x) \,,
    \]
    which is property 4. Finally, property 5 is an immediate consequence of the definition of
    $\bm{\tilde u}$.
\end{proof}

We are now prepared to differentiate in time in order to establish that $P_t$ is order preserving.

\begin{proposition}
    \label{prop:pt-diff-nonincreasing}
    Let $u_0, v_0 \in H^1(I)$, and let $\bm{u}, \bm{v}$ be the solutions to the gradient flow
    problem with initial data $u_0, v_0$ respectively. Then the function $\Delta : [0, +\infty) \to
    [0, +\infty)$ given by
    \[
        \Delta(t) \define \frac{1}{2} \int_I \left[ (\bm{u}(t) - \bm{v}(t))^+ \right]^2 \odif x
    \]
    is nonincreasing.
\end{proposition}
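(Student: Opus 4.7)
The goal is to differentiate $\Delta(t)$ in time and show $\partial_t \Delta(t) \le 0$ for \almev $t > 0$; together with continuity of $\Delta$ (inherited from the $L^2$-continuity of $\bm{u}, \bm{v}$) this will give that $\Delta$ is nonincreasing on $[0, +\infty)$. To set up the differentiation rigorously, I would fix $0 < a < b$, apply \cref{lemma:joint-representation} to both $\bm{u}$ and $\bm{v}$ on $[a, b]$ to obtain jointly measurable representatives $\bm{\tilde u}, \bm{\tilde v}$ with absolutely continuous slices in $t$, and let $w \define \bm{\tilde u} - \bm{\tilde v}$ with time derivative $w' = \bm{\tilde u'} - \bm{\tilde v'}$. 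For each $x$, $w(\cdot, x)$ is AC, and by \cref{fact:max-ac} so is $w(\cdot, x)^+$, hence also $\frac{1}{2}(w^+)^2$ (which is AC as the composition of an AC function with a locally Lipschitz map, and bounded on compact intervals). Its weak derivative in $t$ is $w^+ w'$. Then \cref{lemma:leibniz} applies (using Cauchy–Schwarz and the fact that $\bm{u'}, \bm{v'} \in L^\infty(a, b; L^2(I))$ since $a > 0$, together with continuity of $w$ in $L^2(I)$) to give
\[
    \partial_t \Delta(t) = \inpspace{(\bm{u}(t) - \bm{v}(t))^+}{\bm{u'}(t) - \bm{v'}(t)}{L^2(I)}
\]
for \almev $t \in (a, b)$.

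Next I would use the weak form of the Neumann problem (\cref{def:static-neumann-problem}) with test function $\phi \define (\bm{u}(t) - \bm{v}(t))^+$, which is valid because \cref{cor:preservation-of-regularity} gives $\bm{u}(t), \bm{v}(t) \in H^1(I)$, so $\phi \in H^1(I)$ by \cref{fact:derivative-of-positive-part}. Subtracting the two weak formulations and using the identity $\partial_x \phi = \chi_{\{\bm{u}(t) > \bm{v}(t)\}}\,\partial_x(\bm{u}(t) - \bm{v}(t))$ from the same fact yields
\[
    \partial_t \Delta(t) = -\int_{\{\bm{u}(t) > \bm{v}(t)\}}
        \bigl(\partial_x \downf{\bm{u}(t)} - \partial_x \downf{\bm{v}(t)}\bigr)
        \bigl(\partial_x \bm{u}(t) - \partial_x \bm{v}(t)\bigr) \odif x .
\]

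The main structural step is then a pointwise nonnegativity check of this integrand. Writing $\partial_x u = \partial_x \upf{u} + \partial_x \downf{u}$ and using \cref{prop:w1p-representation}, at \almev\ point we have $\partial_x \upf{u} = \partial_x^+ u \ge 0$, $\partial_x \downf{u} = \partial_x^- u \le 0$, and crucially $(\partial_x \upf{u})(\partial_x \downf{u}) = 0$, and the same holds for $v$. Setting $a = \partial_x \upf{u}$, $b = \partial_x \downf{u}$, $c = \partial_x \upf{v}$, $d = \partial_x \downf{v}$, the integrand equals
\[
    (b - d)\bigl((a - c) + (b - d)\bigr) = (b-d)^2 + (a - c)(b - d) = (b-d)^2 - ad - bc ,
\]
where the last equality uses $ab = cd = 0$. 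Since $a, c \ge 0$ and $b, d \le 0$, both $-ad$ and $-bc$ are nonnegative, so the integrand is $\ge 0$ \almev, giving $\partial_t \Delta(t) \le 0$.

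\textbf{Main obstacle.} The conceptual content is the pointwise algebra in the third step, which relies on the orthogonality $(\partial_x \upf{u})(\partial_x \downf{u}) = 0$ implicit in \cref{prop:w1p-representation}; without it the cross term $(a-c)(b-d)$ would have no definite sign. The technical overhead lies in justifying the time differentiation: getting a jointly measurable $\bm{\tilde u}$ whose time-slices are AC (so that $(w^+)^2$ is AC in $t$ and \cref{lemma:leibniz} applies), verifying the $L^1$ bound on $\partial_t f$ via $\bm{u'}, \bm{v'} \in L^\infty(a, b; L^2(I))$, and using $(\bm{u}(t) - \bm{v}(t))^+$ as a test function in the weak Neumann formulation, which requires \cref{cor:preservation-of-regularity} to ensure $H^1$ regularity is maintained. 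The restriction to $t \in [a, b]$ with $a > 0$ sidesteps the weaker near-zero integrability of $\bm{u'}$, and continuity of $\Delta$ at $t = 0$ then extends the monotonicity to all of $[0, +\infty)$.
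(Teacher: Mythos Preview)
Your proposal is correct and follows essentially the same approach as the paper: differentiate under the integral via \cref{lemma:joint-representation} and \cref{lemma:leibniz}, use \cref{cor:preservation-of-regularity} and \cref{fact:derivative-of-positive-part} to test against $(\bm{u}(t)-\bm{v}(t))^+$, and conclude by a pointwise sign check. The only cosmetic differences are that the paper takes $a=0$ directly (using $\bm{u'}\in L^2(0,b;L^2(I))$ rather than the $L^\infty$ bound away from $0$) and phrases the pointwise step as ``$\alpha\ge\beta\Rightarrow(\alpha\wedge 0)\ge(\beta\wedge 0)$'' instead of your orthogonality-based expansion $(b-d)^2-ad-bc$; both arguments are equivalent.
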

\begin{proof}
    Let $0 \le a < b < +\infty$, so that it suffices to show that $\Delta(a) \ge \Delta(b)$. Apply
    \cref{lemma:joint-representation} to $\bm{u}$ and $\bm{v}$ to obtain functions $\bm{\tilde u'}$,
    $\bm{\tilde u}$, $\bm{\tilde v'}$, and $\bm{\tilde v}$ with the properties stated in that lemma.
    Define the function $f : [a, b] \times I \to \bR$ by
    \[
        f(t, x) \define
        \frac{1}{2} \left[ (\bm{\tilde u}(t, x) - \bm{\tilde v}(t, x))^+ \right]^2 \,.
    \]
    Note that $f$ is jointly measurable, since both $\bm{\tilde u}$ and $\bm{\tilde v}$ are. We also
    obtain that, for each $t \in [a, b]$,
    \[
        \Delta(t) = \int_I f(t, x) \odif x \,.
    \]
    We now verify that $f$ satisfies the conditions of \cref{lemma:leibniz}. Let us verify the first
    condition. Since $\bm{\tilde u}(t, \cdot), \bm{\tilde v}(t, \cdot) \in L^2(I)$, it follows that
    $f(t, \cdot) \in L^1(I)$, as desired.

    We now verify the second condition. We already have that, for all $x \in I$, $\bm{\tilde
    u}(\cdot, x)$ and $\bm{\tilde v}(\cdot, x)$ are AC. \cref{fact:derivative-of-positive-part}
    implies that, for all $x \in I$, the function $t \mapsto (\bm{\tilde u}(\cdot, x) - \bm{\tilde
    v}(\cdot, x))^+$ is AC. But the function $y \mapsto y^2$ is locally Lipschitz, so
    \cref{fact:weak-derivative-chain-rule} implies that $f(\cdot, x)$ is AC for each $x$, so the
    second condition is satisfied. Moreover, using
    \cref{fact:weak-derivative-chain-rule,fact:derivative-of-positive-part}, its weak derivative is
    \begin{align*}
        \partial_t f(t, x)
        &= \left[ (\bm{\tilde u}(t, x) - \bm{\tilde v}(t, x))^+ \right]
            \partial_t \left[ (\bm{\tilde u}(t, x) - \bm{\tilde v}(t, x))^+ \right] \\
        &= \left[ (\bm{\tilde u}(t, x) - \bm{\tilde v}(t, x))^+ \right]
            \chi_{\{\bm{\tilde u}(t, x) > \bm{\tilde v}(t, x)\}}
            \left[ \partial_t \bm{\tilde u}(t, x) - \partial_t \bm{\tilde v}(t, x) \right] \\
        &= \chi_{\{\bm{\tilde u}(t, x) > \bm{\tilde v}(t, x)\}}
            \left[ \bm{\tilde u}(t, x) - \bm{\tilde v}(t, x) \right]
            \left[ \bm{\tilde u'}(t, x) - \bm{\tilde v'}(t, x) \right] \,.
    \end{align*}
    We claim that $\int_{(a, b)} \int_I \abs*{\partial_t f(t, x)} \odif x \odif t < +\infty$.
    Indeed, we have
    \begin{align*}
        &\int_{(a, b)} \int_I \abs*{\partial_t f(t, x)} \odif x \odif t \\
        &\quad = \int_{(a, b)} \int_I \Big|
            \chi_{\{\bm{\tilde u}(t, x) > \bm{\tilde v}(t, x)\}}
            \left[ \bm{\tilde u}(t, x) - \bm{\tilde v}(t, x) \right]
            \left[ \bm{\tilde u'}(t, x) - \bm{\tilde v'}(t, x) \right] \Big|
            \odif x \odif t \\
        &\quad \le
            \frac{1}{2} \int_{(a, b)} \int_I
                (\bm{\tilde u}(t, x) - \bm{\tilde v}(t, x))^2 \odif x \odif t
            + \frac{1}{2} \int_{(a, b)} \int_I
                (\bm{\tilde u'}(t, x) - \bm{\tilde v'}(t, x))^2 \odif x \odif t \\
        &\quad \le
            \int_{(a, b)} \int_I \bm{\tilde u}(t, x)^2 \odif x \odif t
            + \int_{(a, b)} \int_I \bm{\tilde v}(t, x)^2 \odif x \odif t
            + \int_{(a, b)} \int_I \bm{\tilde u'}(t, x)^2 \odif x \odif t
            + \int_{(a, b)} \int_I \bm{\tilde v'}(t, x)^2 \odif x \odif t \,.
    \end{align*}
    We claim that each of the four terms above is finite. First, using Tonelli's theorem, the
    definition of $\bm{\tilde u}$ and Jensen's inequality, and letting $L \define b-a$, we have
    \begin{align*}
        \int_{(a, b)} \int_I \bm{\tilde u}(t, x)^2 \odif x \odif t
        &= \int_I \int_{(a, b)} \left(
                \bm{u}(a)(x) + \int_{(a,t)} \bm{\tilde u'}(s, x) \odif s
            \right)^2 \odif t \odif x \\
        &\le 2 \int_I \int_{(a, b)} \bm{u}(a)(x)^2 \odif t \odif x
            + 2L \int_I \int_{(a, b)}
                \int_{(a,t)} \bm{\tilde u'}(s, x)^2 \odif s \odif t \odif x \\
        &\le 2 \int_{(a, b)} \int_I \bm{u}(a)(x)^2 \odif x \odif t
            + 2L \int_{(a, b)} \int_{(a, b)} \int_I
                \bm{\tilde u'}(s, x)^2 \odif x \odif s \odif t \\
        &= 2L \|\bm{u}(a)\|_{L^2(I)}^2
            + 2L^2 \int_{(a, b)} \int_I \bm{\tilde u'}(s, x)^2 \odif x \odif s \,,
    \end{align*}
    where in the last line we write $\|\bm{u}(a)\|_{L^2(I)}^2$, which is finite, since $\bm{u}(a)
    \in L^2(I)$. Hence, since the argument for the $\bm{v}$ terms proceeds identically, all we need
    to show is that
    \[
        \int_{(a, b)} \int_I \bm{\tilde u'}(t, x)^2 \odif x \odif t \ltquestion +\infty \,.
    \]
    Since $\bm{\tilde u'}(t, \cdot) = \bm{u'}(t)$ in $L^2(I)$ for \almev $t \in (a, b)$, this is
    equivalent to showing that
    \begin{equation}
        \label{eq:ineq-uprime-norm}
        \int_{(a, b)} \|\bm{u'}(t)\|_{L^2(I)}^2 \odif t \ltquestion +\infty \,.
    \end{equation}
    Recall that, by \cref{def:nice-evolution-functions}, $\bm{u'} \in L^2(0, b; L^2(I))$, so in
    particular $\bm{u'} \in L^2(a, b; L^2(I))$, which by definition implies
    \eqref{eq:ineq-uprime-norm}. Hence the claim holds and the third condition of
    \cref{lemma:leibniz} is satisfied.

    Therefore, \cref{lemma:leibniz} implies that $\Delta(\cdot) = \int_I f(\cdot, x) \odif x$ is AC
    and, for \almev $t \in (a, b)$,
    \begin{align*}
        \partial_t \Delta(t)
        &= \partial_t \int_I f(t, x) \odif x
        = \int_I \partial_t f(t, x) \odif x \\
        &= \int_I \chi_{\{\bm{\tilde u}(t, x) > \bm{\tilde v}(t, x)\}}
                \left[ \bm{\tilde u}(t, x) - \bm{\tilde v}(t, x) \right]
                \left[ \bm{\tilde u'}(t, x) - \bm{\tilde v'}(t, x) \right] \odif x
            & \text{(Shown above)} \\
        &= \int_I \chi_{\{\bm{u}(t) > \bm{v}(t)\}}
                \left[ \bm{u}(t) - \bm{v}(t) \right]
                \left[ \bm{u'}(t) - \bm{v'}(t) \right] \odif x
            & \text{(By \cref{lemma:joint-representation})} \\
        &= \int_I \left[ \bm{u}(t) - \bm{v}(t) \right]^+
                \partial_x \partial_x \left[ \downf{\bm{u}(t)} - \downf{\bm{v}(t)} \right] \odif x
            & \text{(By \cref{lemma:elliptic-regularity}).}
    \end{align*}
    By \cref{cor:preservation-of-regularity}, $\bm{u}(t), \bm{v}(t) \in H^1(I)$ and hence are AC,
    and by \cref{fact:derivative-of-positive-part}, $\left[ \bm{u}(t) - \bm{v}(t) \right]^+$ is AC.
    Therefore we can integrate by parts and, using \cref{fact:derivative-of-positive-part} and the
    boundary condition $\partial_x \downf{\bm{u}(t)} = \partial_x \downf{\bm{v}(t)} = 0$ on
    $\{0,1\}$ from \cref{lemma:elliptic-regularity}, we obtain
    \begin{align*}
        \partial_t \Delta(t)
        &= -\int_I \chi_{\{\bm{u}(t) > \bm{v}(t)\}}
            \left[ \partial_x (\bm{u}(t) - \bm{v}(t)) \right]
            \left[ \partial_x (\downf{\bm{u}(t)} - \downf{\bm{u}(t)}) \right] \\
        &= -\int_I \chi_{\{\bm{u}(t) > \bm{v}(t)\}}
            \left[ \partial_x \bm{u}(t) - \partial_x \bm{v}(t) \right]
            \left[ \partial_x^- \bm{u}(t) - \partial_x^- \bm{u}(t) \right]
            & \text{(By \cref{prop:w1p-representation}).}
    \end{align*}
    But for any numbers $\alpha, \beta \in \bR$, it is the case that $\alpha \ge \beta \implies
    (\alpha \land 0) \ge (\beta \land 0)$ and $\alpha \le \beta \implies (\alpha \land 0) \le (\beta
    \land 0)$. Hence the integrand above is pointwise nonnegative, and we conclude that $\partial_t
    \Delta(t) \le 0$ for \almev $t \in (a, b)$. Hence $\Delta(a) \ge \Delta(b)$, as needed.
\end{proof}

\begin{corollary}[``Directed nonexpansiveness'' of $P_t$]
    \label{cor:directed-nonexpansiveness}
    Let $u, v \in L^2(I)$. Then for all $t > 0$,
    \[
        \int_I \left[ (P_t u - P_t v)^+ \right]^2 \odif x
        \le \int_I \left[ (u - v)^+ \right]^2 \odif x \,.
    \]
\end{corollary}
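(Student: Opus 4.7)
The plan is to reduce this corollary to Proposition \ref{prop:pt-diff-nonincreasing} combined with the nonexpansiveness of the semigroup $P_t$ and a density argument, since \cref{prop:pt-diff-nonincreasing} is stated only for initial data in $H^1(I)$ while the corollary allows arbitrary $u,v \in L^2(I)$.

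First, I would handle the case $u_0, v_0 \in H^1(I)$. Letting $\bm{u}, \bm{v}$ be the solutions to the gradient flow problem with these initial data and setting
\[
    \Delta(t) \define \tfrac{1}{2} \int_I \left[(\bm{u}(t) - \bm{v}(t))^+\right]^2 \odif x \,,
\]
Proposition \ref{prop:pt-diff-nonincreasing} asserts that $\Delta$ is nonincreasing on $[0,+\infty)$. Since $\bm{u}(0) = u_0 = u$, $\bm{v}(0) = v_0 = v$, $\bm{u}(t) = P_t u$, and $\bm{v}(t) = P_t v$, the monotonicity $\Delta(t) \le \Delta(0)$ is exactly the desired inequality (up to the harmless factor of $1/2$ on both sides).

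To extend to general $u, v \in L^2(I)$, I would use the density of $H^1(I)$ in $L^2(I)$ and the $L^2$-continuity of $P_t$ (property 3 in the list after \cref{res:solution}, \ie the nonexpansiveness $\|P_t u - P_t v\|_{L^2(I)} \le \|u-v\|_{L^2(I)}$). Choose sequences $(u_n), (v_n) \subset H^1(I)$ with $u_n \to u$ and $v_n \to v$ in $L^2(I)$. Then $P_t u_n \to P_t u$ and $P_t v_n \to P_t v$ in $L^2(I)$. The map $w \mapsto w^+$ is $1$-Lipschitz from $L^2(I)$ to $L^2(I)$, so $(P_t u_n - P_t v_n)^+ \to (P_t u - P_t v)^+$ and $(u_n - v_n)^+ \to (u-v)^+$ in $L^2(I)$; in particular, their squared $L^2$ norms converge. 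Applying the $H^1$ case to each $(u_n, v_n)$ and passing to the limit yields the desired inequality.

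There is no real obstacle here: all the nontrivial analytic work lives inside \cref{prop:pt-diff-nonincreasing}, and the passage to general $L^2(I)$ data uses only continuity of $P_t$ and of the map $w \mapsto \|w^+\|_{L^2(I)}^2$, both of which are immediate. The only thing to be mindful of is that $H^1(I) \subseteq \cU$, so that \cref{prop:pt-diff-nonincreasing} applies to any $H^1$ approximants we pick; this is stated in the remark just after \cref{def:down-h1-functions}.
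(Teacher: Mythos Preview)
Your proposal is correct and follows essentially the same approach as the paper: apply \cref{prop:pt-diff-nonincreasing} for $H^1(I)$ data, then pass to general $L^2(I)$ by density and the $L^2$-continuity of $P_t$. The paper's proof is just a two-line sketch of exactly this argument, and you have filled in the details correctly.
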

\begin{proof}
    This is an immediate consequence of \cref{prop:pt-diff-nonincreasing} when $u, v \in H^1(I)$,
    and the general case follows by approximating $u$ and $v$ by $H^1(I)$ functions and using the
    continuity of the map $w \mapsto P_t w$ from $L^2(I)$ to $L^2(I)$.
\end{proof}

\begin{corollary}[$P_t$ is order preserving]
    \label{cor:pt-order-preserving}
    Let $u, v \in L^2(I)$, and suppose $u \le v$ \almev. Then $P_t u \le P_t v$ \almev for all $t >
    0$.
\end{corollary}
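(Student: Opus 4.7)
The plan is to derive this as an immediate consequence of the directed nonexpansiveness estimate in \cref{cor:directed-nonexpansiveness}. The hypothesis $u \le v$ \almev implies that the function $(u-v)^+$ equals $0$ \almev on $I$, so its $L^2$ norm vanishes:
\[
    \int_I \left[ (u - v)^+ \right]^2 \odif x = 0 \,.
\]
Applying \cref{cor:directed-nonexpansiveness} to this pair $u, v$ then gives
\[
    \int_I \left[ (P_t u - P_t v)^+ \right]^2 \odif x \le 0
\]
for every $t > 0$. Since the integrand is nonnegative, this forces $(P_t u - P_t v)^+ = 0$ \almev in $I$, which is precisely the statement that $P_t u \le P_t v$ \almev. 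There is no genuine obstacle here; all the work is in \cref{prop:pt-diff-nonincreasing} and its promotion to general $L^2(I)$ initial data via continuity of $P_t$ in \cref{cor:directed-nonexpansiveness}.
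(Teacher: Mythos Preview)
Your proof is correct and follows the same approach as the paper, which simply states that the result is an immediate consequence of \cref{cor:directed-nonexpansiveness}. Your version spells out the one-line deduction explicitly, but the idea is identical.
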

\begin{proof}
    This is an immediate consequence of \cref{cor:directed-nonexpansiveness}.
\end{proof}

\begin{lemma}
    \label{lemma:stationary-points}
    Every nondecreasing $u \in L^2(I)$ is a stationary point of $P_t$, \ie $P_t u = u$ for all $t$.
\end{lemma}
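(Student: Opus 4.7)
The plan is to identify nondecreasing $u$ as a point where the directed Dirichlet energy attains its global minimum of zero, and then to invoke the uniqueness of solutions to the gradient flow problem to conclude that $P_t u$ remains fixed at $u$ for all times.

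First, I would verify that $u \in \cU$ with a particularly simple canonical representation. Since $u$ is nondecreasing and lies in $L^2(I)$, the pair $(\onef{u}, \twof{u}) = (u, 0)$ meets every requirement of \cref{def:down-h1-functions}: the first component is nondecreasing in $L^2(I)$, the second is trivially nonincreasing, lies in $H^1(I)$, has mean zero, and $u = \onef{u} + \twof{u}$ almost everywhere. This representation achieves $\cD^-(u, 0) = 0$, which is a global minimum of $\cD^-$ since this functional is manifestly nonnegative. By uniqueness of the optimal representation (\cref{prop:optimal-representations}), this is the canonical one, so $\cE^-(u) = 0$.

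Next, I would show directly that $u \in D(\partial \cE^-)$ with $0 \in \partial \cE^-(u)$. From the nonnegativity of $\cE^-$ and the fact that $\cE^-(u) = 0$, the subdifferential inequality
\[
    \cE^-(v) \ge 0 = \cE^-(u) + \inp{0}{v - u}
\]
holds trivially for every $v \in L^2(I)$, which is exactly the statement that $0 \in \partial \cE^-(u)$.

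Finally, I would exhibit the constant-in-time trajectory $\bm{u}(t) := u$ for all $t \ge 0$ as a solution to the gradient flow problem with initial data $u$. This trajectory is a nice evolution function with weak derivative $\bm{u'}(t) = 0$, satisfies $\bm{u}(0) = u$, has $\bm{u}(t) = u \in D(\partial \cE^-)$ for all $t > 0$, and has $\bm{u'}(t) = 0 \in -\partial \cE^-(u)$ for every $t > 0$. Invoking the uniqueness portion of \cref{res:solution} then forces $P_t u = \bm{u}(t) = u$ for all $t \ge 0$, as desired. No real obstacle arises; the only point requiring a moment's verification is the admissibility of the representation $(u, 0)$ under \cref{def:down-h1-functions}, which is immediate.
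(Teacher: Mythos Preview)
Your proposal is correct and follows essentially the same approach as the paper: observe that the representation $(\upf{u},\downf{u})=(u,0)$ yields $\cE^-(u)=0$, deduce $0\in\partial\cE^-(u)$ from nonnegativity of $\cE^-$, and conclude by uniqueness that the constant trajectory is the gradient flow solution. The paper's version is terser but the underlying argument is identical.
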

\begin{proof}
    By definition, $u = \upf{u} + \downf{u}$ with $\upf{u} = u$ and $\downf{u} = 0$, so $\cE^-(u) =
    0$. Thus for every $v \in L^2(I)$,
    \[
        \cE^-(v) \ge 0 = \cE^-(u) + \inp{0}{v-u} \,,
    \]
    so $0 \in \partial \cE^-(u)$. Thus $\bm{u}(t) = u$ is the solution to the gradient flow problem
    with initial state $u$.
\end{proof}

\begin{observation}
    \label{obs:uniformly-bounded-away-from-zero}
    \cref{cor:pt-order-preserving} in particular implies that if $u \ge a$ \almev for some $a \in
    \bR$, then $P_t u \ge P_t a = a$ \almev for all $t > 0$, the equality by
    \cref{lemma:stationary-points}.
\end{observation}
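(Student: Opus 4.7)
The statement is an immediate consequence of the two cited results, so my plan is to spell out the straightforward combination, the only subtlety being the identification of the scalar $a$ with a constant function in $L^2(I)$. Concretely, I would introduce the function $v : I \to \bR$ defined by $v(x) \equiv a$. Since $I = (0,1)$ has finite measure, $v \in L^2(I)$; and being constant, $v$ admits a nondecreasing representative (trivially, itself). Therefore \cref{lemma:stationary-points} applies and yields $P_t v = v$ in $L^2(I)$ for every $t \ge 0$, i.e., $P_t v = a$ \almev on $I$.

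Next, I would translate the hypothesis $u \ge a$ \almev as the $L^2(I)$ relation $v \le u$ \almev, and invoke \cref{cor:pt-order-preserving} to obtain $P_t v \le P_t u$ \almev for all $t > 0$. Combining this with the identity $P_t v = a$ \almev from the previous paragraph gives $a \le P_t u$ \almev, which is the desired conclusion.

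There is essentially no obstacle to overcome: the only conceptual care is treating the real number $a$ and the constant function $x \mapsto a$ interchangeably, which is unambiguous on the bounded domain $I$. I would also note in passing that the symmetric argument, using $w(x) \equiv b$ in place of $v$, shows that upper bounds are likewise preserved: if $u \le b$ \almev, then $P_t u \le b$ \almev. Together, these facts say that essential infimum and supremum bounds on the initial state propagate unchanged to $P_t u$ for every $t > 0$, which is precisely the kind of control needed later when the transport-energy inequality requires working with densities bounded above and bounded away from zero.
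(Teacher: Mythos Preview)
Your proposal is correct and follows exactly the reasoning the paper embeds in the observation itself: apply \cref{cor:pt-order-preserving} to the pair $v \equiv a \le u$ and use \cref{lemma:stationary-points} to identify $P_t a = a$. The only addition you make is to spell out the identification of the scalar $a$ with the constant function in $L^2(I)$, which is harmless and arguably clarifying.
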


We also observe below that $P_t$ is degree one positively homogeneous and additive when one argument
is a constant function, as the following results show.

\begin{lemma}
    \label{lemma:shift-rep}
    Let $u \in \cU$, $\beta \in \bR$, and $v \define u + \beta$. Then $v \in \cU$ with $\upf{v} =
    \upf{u} + \beta$ and $\downf{v} = \downf{u}$, and thus $\cE^-(v) = \cE^-(u)$.
\end{lemma}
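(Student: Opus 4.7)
The plan is to exhibit a bijection between $\cR(u)$ and $\cR(v)$ that preserves the directed Dirichlet energy $\cD^-$, and then observe that this bijection must carry the unique minimizer of $\cD^-$ on $\cR(u)$ to the unique minimizer on $\cR(v)$.

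First, I would define the map $\Phi : \cR(u) \to \cR(v)$ by $\Phi(\onef{u}, \twof{u}) \define (\onef{u} + \beta, \twof{u})$. The quick check is that this lands in $\cR(v)$: if $\onef{u} \in L^2(I)$ is nondecreasing then so is $\onef{u} + \beta$; $\twof{u}$ is unchanged, so it is still nonincreasing, still in $H^1(I)$, and still satisfies $\int_I \twof{u} \odif x = 0$; finally, $(\onef{u} + \beta) + \twof{u} = u + \beta = v$ a.e. An inverse $\Phi^{-1}$ is given symmetrically by $(\onef{v}, \twof{v}) \mapsto (\onef{v} - \beta, \twof{v})$, so $\Phi$ is a bijection. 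Crucially, $\cD^-(\Phi(\onef{u}, \twof{u})) = \tfrac{1}{2} \int_I (\partial_x \twof{u})^2 \odif x = \cD^-(\onef{u}, \twof{u})$, because the ``decreasing part'' component is preserved verbatim.

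Since $u \in \cU$, the set $\cR(u)$ is nonempty, and hence $\cR(v) = \Phi(\cR(u))$ is also nonempty, so $v \in \cU$. By \cref{prop:optimal-representations}, $\cR^*(u)$ and $\cR^*(v)$ each contain a unique element, namely the canonical representations $(\upf{u}, \downf{u})$ and $(\upf{v}, \downf{v})$. Since $\Phi$ is an energy-preserving bijection, it maps the minimizer of $\cD^-$ on $\cR(u)$ to the minimizer of $\cD^-$ on $\cR(v)$; that is,
\[
    (\upf{v}, \downf{v}) = \Phi(\upf{u}, \downf{u}) = (\upf{u} + \beta, \downf{u}) \,.
\]
The equality $\cE^-(v) = \cE^-(u)$ then follows directly from \cref{def:canonical-representation,def:energy-functional}, since both $u, v \in \cU$ and $\cE^-(v) = \cD^-(\upf{v}, \downf{v}) = \cD^-(\upf{u}, \downf{u}) = \cE^-(u)$. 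No step here is expected to be an obstacle — the proof is essentially a matter of checking that the translation action on the first coordinate preserves all the structure recorded in \cref{def:down-h1-functions}.
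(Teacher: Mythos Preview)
Your proof is correct and is precisely the argument the paper has in mind; the paper's own proof is the single sentence ``This is a straightforward consequence of the definition of $\upf{v}, \downf{v}$,'' and your proposal spells out exactly that straightforward check.
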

\begin{proof}
    This is a straightforward consequence of the definition of $\upf{v}, \downf{v}$.
\end{proof}

\begin{lemma}
    \label{lemma:scale-rep}
    Let $u \in \cU$, $\alpha > 0$, and $v \define \alpha u$. Then $v \in \cU$ with $\upf{v} = \alpha
    \upf{u}$ and $\downf{v} = \alpha \downf{u}$, and thus $\cE^-(v) = \alpha^2 \cE^-(u)$.
\end{lemma}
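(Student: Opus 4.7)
The plan is to verify membership in $\cU$ via an explicit candidate representation, then argue optimality using the scaling behaviour of $\cD^-$ together with the uniqueness part of \cref{prop:optimal-representations}.

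First I would exhibit $(\alpha \upf{u}, \alpha \downf{u})$ as a valid element of $\cR(v)$. Since $\upf{u} \in L^2(I)$ and $\downf{u} \in H^1(I)$, scaling by $\alpha > 0$ keeps them in these spaces; monotonicity is preserved because $\alpha > 0$; the mean-zero condition $\int_I \alpha \downf{u} \odif x = \alpha \int_I \downf{u} \odif x = 0$ is immediate; and $\alpha \upf{u} + \alpha \downf{u} = \alpha u = v$ almost everywhere. This shows $v \in \cU$ and $(\alpha \upf{u}, \alpha \downf{u}) \in \cR(v)$, giving in particular
\[
    \cD^-(\alpha \upf{u}, \alpha \downf{u})
    = \frac{1}{2} \int_I (\alpha \partial_x \downf{u})^2 \odif x
    = \alpha^2 \cE^-(u) \,.
\]

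Next I would show this candidate is optimal, so that it actually equals $(\upf{v}, \downf{v})$ by the uniqueness part of \cref{prop:optimal-representations}. The key observation is a symmetric scaling argument: given any $(\onef{v}, \twof{v}) \in \cR(v)$, the pair $(\alpha^{-1} \onef{v}, \alpha^{-1} \twof{v})$ lies in $\cR(\alpha^{-1} v) = \cR(u)$ by the same verification as above (applied to $\alpha^{-1} > 0$). Hence
\[
    \cD^-(\onef{v}, \twof{v})
    = \alpha^2 \cD^-(\alpha^{-1} \onef{v}, \alpha^{-1} \twof{v})
    \ge \alpha^2 \cE^-(u)
    = \cD^-(\alpha \upf{u}, \alpha \downf{u}) \,,
\]
so $(\alpha \upf{u}, \alpha \downf{u})$ attains the infimum of $\cD^-$ over $\cR(v)$. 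By uniqueness of the optimal representation, $\upf{v} = \alpha \upf{u}$ and $\downf{v} = \alpha \downf{u}$. Finally, $\cE^-(v) = \cD^-(v) = \cD^-(\alpha \upf{u}, \alpha \downf{u}) = \alpha^2 \cE^-(u)$.

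There is no real obstacle here — everything follows directly from the definitions plus the uniqueness in \cref{prop:optimal-representations}. The only mild subtlety is remembering that one cannot simply assert $(\alpha \upf{u}, \alpha \downf{u}) = (\upf{v}, \downf{v})$ from the fact that the former is a valid representation; the minimality check via the reverse scaling $\alpha^{-1}$ is what closes the argument.
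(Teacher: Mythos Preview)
Your proof is correct and is precisely the verification the paper has in mind; the paper simply declares this ``a straightforward consequence of the definitions of $\upf{v}, \downf{v}$ and $\cE^-$'' without writing out the details you supply.
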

\begin{proof}
    This is a straightforward consequence of the definitions of $\upf{v}, \downf{v}$ and $\cE^-$.
\end{proof}

\begin{lemma}
    \label{lemma:shift-subdiff}
    Let $u \in D(\partial \cE^-)$, $\beta \in \bR$, and $v \define u + \beta$. Then $\partial
    \cE^-(u) = \partial \cE^-(v)$.
\end{lemma}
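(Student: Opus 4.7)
The plan is to use the fact that translation by a constant is an invariance of $\cE^-$: by \cref{lemma:shift-rep}, if $w \in \cU$ then $w + \beta \in \cU$ with $\cE^-(w + \beta) = \cE^-(w)$, and the same is true in the converse direction (take $w \mapsto w - \beta$). In particular, $\cE^-(v) = \cE^-(u)$, and more importantly the functional $\cE^-(\cdot)$ is invariant under the bijection $w \mapsto w + \beta$ of $L^2(I)$ (treating both $\cU$ and $L^2(I) \setminus \cU$, where $\cE^- = +\infty$).

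Given this invariance, the proof is essentially a change of variables in the defining inequality of the subdifferential. First I would show the inclusion $\partial \cE^-(u) \subseteq \partial \cE^-(v)$. Let $z \in \partial \cE^-(u)$, so for all $w \in L^2(I)$,
\[
\cE^-(w) \ge \cE^-(u) + \inp{z}{w - u} \,.
\]
To verify $z \in \partial \cE^-(v)$, fix an arbitrary $w' \in L^2(I)$ and set $w \define w' - \beta$. The translation invariance from \cref{lemma:shift-rep} gives $\cE^-(w') = \cE^-(w)$ (both finite or both $+\infty$) and $\cE^-(v) = \cE^-(u)$, while the key algebraic identity is $w' - v = (w + \beta) - (u + \beta) = w - u$. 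Substituting into the inequality above yields
\[
\cE^-(w') = \cE^-(w) \ge \cE^-(u) + \inp{z}{w - u} = \cE^-(v) + \inp{z}{w' - v} \,,
\]
which is precisely the subdifferential inequality for $v$. The reverse inclusion $\partial \cE^-(v) \subseteq \partial \cE^-(u)$ follows by the same argument applied to the shift by $-\beta$ (noting that $u = v + (-\beta)$).

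There is no real obstacle here; the argument is essentially a one-line symmetry observation once \cref{lemma:shift-rep} is in place. The only minor care is to check that the subdifferential inequality is automatic when $w \not\in \cU$ (since then both sides are $+\infty$ after applying translation invariance), so the substitution $w = w' - \beta$ is valid regardless of whether $w'$ lies in $\cU$.
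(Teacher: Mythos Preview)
Your proof is correct and essentially identical to the paper's: both use \cref{lemma:shift-rep} to get translation invariance of $\cE^-$, then perform the change of variables $w \mapsto w - \beta$ in the subdifferential inequality to show one inclusion, and appeal to symmetry for the other. The only cosmetic difference is that the paper quantifies over $w \in \cU$ (where the inequality is nontrivial) rather than all of $L^2(I)$, which you already note is harmless.
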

\begin{proof}
    By symmetry, it suffices to prove that $\partial \cE^-(u) \subseteq \partial \cE^-(v)$. Let $z
    \in \partial \cE^-(u)$. Then for all $w \in \cU$, using \cref{lemma:shift-rep} twice and the
    definition of subdifferential,
    \[
        \cE^-(w)
        = \cE^-(w - \beta)
        \ge \cE^-(u) + \inp{z}{w - \beta - u}
        = \cE^-(v) + \inp{z}{w - v} \,,
    \]
    and hence $z \in \partial \cE^-(v)$ as desired.
\end{proof}

\begin{lemma}
    \label{lemma:scale-subdiff}
    Let $u \in D(\partial \cE^-)$, $\alpha > 0$, and $v \define \alpha u$. Then $\partial \cE^-(v) =
    \alpha \partial \cE^-(u)$.
\end{lemma}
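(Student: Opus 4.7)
The plan is to verify both inclusions $\alpha \partial \cE^-(u) \subseteq \partial \cE^-(v)$ and $\partial \cE^-(v) \subseteq \alpha \partial \cE^-(u)$ directly from the definition of subdifferential, using the scaling identity $\cE^-(\alpha w) = \alpha^2 \cE^-(w)$ from \cref{lemma:scale-rep} as the main ingredient. Since $\alpha > 0$, the argument for the two directions is symmetric (the reverse direction follows by applying the forward direction to $v$ with scaling factor $1/\alpha$), so I will focus on the forward inclusion.

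For the forward inclusion, suppose $z \in \partial \cE^-(u)$, so that for all $w \in L^2(I)$ we have $\cE^-(w) \ge \cE^-(u) + \inp{z}{w-u}$. Given an arbitrary test point $w' \in L^2(I)$, set $w \define w'/\alpha$ and substitute into this inequality to obtain $\cE^-(w'/\alpha) \ge \cE^-(u) + \inp{z}{w'/\alpha - u}$. Applying \cref{lemma:scale-rep} to both $w'/\alpha$ (yielding $\cE^-(w') = \alpha^2 \cE^-(w'/\alpha)$) and to $u$ (yielding $\cE^-(v) = \alpha^2 \cE^-(u)$), then multiplying through by $\alpha^2 > 0$, gives
\[
    \cE^-(w') \ge \cE^-(v) + \alpha^2 \inp{z}{w'/\alpha - u}
    = \cE^-(v) + \inp{\alpha z}{w' - \alpha u}
    = \cE^-(v) + \inp{\alpha z}{w' - v} \,,
\]
which shows $\alpha z \in \partial \cE^-(v)$. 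Hence $\alpha \partial \cE^-(u) \subseteq \partial \cE^-(v)$.

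The reverse inclusion is obtained by applying the forward inclusion already established to the pair $(v, u) = (v, (1/\alpha) v)$ with scaling factor $1/\alpha > 0$: this gives $(1/\alpha) \partial \cE^-(v) \subseteq \partial \cE^-(u)$, which rearranges to $\partial \cE^-(v) \subseteq \alpha \partial \cE^-(u)$. No step is expected to present a real obstacle here; the argument is a direct bookkeeping exercise that simply tracks how the quadratic scaling of $\cE^-$ under $u \mapsto \alpha u$ propagates through the defining inequality of the subdifferential.
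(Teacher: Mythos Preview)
Your proof is correct and follows essentially the same approach as the paper: both reduce to one inclusion by symmetry (in $\alpha \leftrightarrow 1/\alpha$), take $z \in \partial \cE^-(u)$, substitute $w'/\alpha$ into the subdifferential inequality, and use the scaling identity $\cE^-(\alpha w) = \alpha^2 \cE^-(w)$ from \cref{lemma:scale-rep} to conclude $\alpha z \in \partial \cE^-(v)$. The paper's write-up quantifies over $w \in \cU$ rather than all of $L^2(I)$, but since $\cE^-(w') = +\infty$ for $w' \notin \cU$ this makes no difference.
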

\begin{proof}
    By symmetry, it suffices to prove that $\alpha \partial \cE^-(u) \subseteq \partial \cE^-(v)$.
    Let $z \in \partial \cE^-(u)$; we claim that $\alpha z \in \partial \cE^-(v)$. Indeed for all $w
    \in \cU$, using \cref{lemma:scale-rep} twice and the definition of subdifferential,
    \[
        \cE^-(w)
        = \alpha^2 \cE^-\left( \tfrac{1}{\alpha} w \right)
        \ge \alpha^2 \left( \cE^-(u) + \inp{z}{\tfrac{1}{\alpha} w - u} \right)
        = \cE^-(v) + \inp{\alpha z}{w - v} \,,
    \]
    and hence $\alpha z \in \partial \cE^-(v)$ as desired.
\end{proof}

\begin{proposition}[Effect of certain affine transformations on $P_t$]
    \label{prop:pt-transformations}
    Let $u \in L^2(I)$, $\alpha > 0$, and $\beta \in \bR$. Then $P_t (\alpha u + \beta) = \alpha P_t
    u + \beta$.
\end{proposition}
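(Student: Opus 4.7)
The plan is to verify directly that the candidate function $\bm{v}(t) \define \alpha P_t u + \beta$ solves the gradient flow problem with initial data $\alpha u + \beta$, and then invoke uniqueness from \cref{res:solution}. The affine-transformation lemmas (\cref{lemma:shift-rep}, \cref{lemma:scale-rep}, \cref{lemma:shift-subdiff}, \cref{lemma:scale-subdiff}) are designed precisely to make this work.

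First, I would handle the case $u \in \cU$. Let $\bm{u}(t) \define P_t u$, which by \cref{res:solution} is the unique solution to the gradient flow problem with initial data $u$, and let $\bm{v}(t) \define \alpha \bm{u}(t) + \beta$. Then $\bm{v}$ inherits from $\bm{u}$ the structure of a nice evolution function in the sense of \cref{def:nice-evolution-functions}, with weak derivative $\bm{v'}(t) = \alpha \bm{u'}(t)$ (the constant $\beta$ contributes nothing, and the scalar $\alpha$ preserves the Bochner integrability/essential-boundedness conditions). Clearly $\bm{v}(0) = \alpha u + \beta$. For $t > 0$, \cref{lemma:scale-subdiff} gives $\partial \cE^-(\alpha \bm{u}(t)) = \alpha\, \partial \cE^-(\bm{u}(t))$, and in particular $\alpha \bm{u}(t) \in D(\partial \cE^-)$; then \cref{lemma:shift-subdiff} gives $\partial \cE^-(\bm{v}(t)) = \partial \cE^-(\alpha \bm{u}(t) + \beta) = \partial \cE^-(\alpha \bm{u}(t)) = \alpha\, \partial \cE^-(\bm{u}(t))$, and in particular $\bm{v}(t) \in D(\partial \cE^-)$. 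Finally, for \almev $t > 0$ we have $\bm{u'}(t) \in -\partial \cE^-(\bm{u}(t))$, so
\[
    \bm{v'}(t) = \alpha \bm{u'}(t) \in -\alpha\, \partial \cE^-(\bm{u}(t)) = -\partial \cE^-(\bm{v}(t)) \,.
\]
Thus $\bm{v}$ solves the gradient flow problem with initial data $\alpha u + \beta$, and by uniqueness $\bm{v}(t) = P_t(\alpha u + \beta)$, establishing the identity for $u \in \cU$.

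To extend to arbitrary $u \in L^2(I)$, I would use the density of $\cU$ in $L^2(I)$ together with the continuity of the maps $P_t : L^2(I) \to L^2(I)$ recorded after \cref{res:solution}. Pick any sequence $(u_n) \subset \cU$ with $u_n \to u$ in $L^2(I)$; then $\alpha u_n + \beta \to \alpha u + \beta$ in $L^2(I)$, and applying the already-established identity gives $P_t(\alpha u_n + \beta) = \alpha P_t u_n + \beta$. Passing to the limit in $L^2(I)$ on both sides, using continuity of $P_t$ twice, yields $P_t(\alpha u + \beta) = \alpha P_t u + \beta$, as desired.

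There is no substantive obstacle here; the work has already been absorbed into the subdifferential-transformation lemmas. The only point requiring slight care is verifying that scaling and shifting preserve the technical definition of a nice evolution function (absolute continuity, the regularity at $t=0$ versus $t \ge \delta$), but this is immediate from linearity of the weak derivative and the fact that additive constants vanish under differentiation.
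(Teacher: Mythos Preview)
Your proposal is correct and follows essentially the same approach as the paper: verify that the affinely transformed evolution solves the gradient flow problem via \cref{lemma:shift-subdiff} and \cref{lemma:scale-subdiff}, invoke uniqueness from \cref{res:solution}, and then extend from $\cU$ to $L^2(I)$ by density and continuity of $P_t$. The only cosmetic difference is that the paper treats the shift and the scaling in two separate steps rather than composing them as you do.
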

\begin{proof}
    We suffices to prove the statement for $u \in \cU$; the general case then follows by
    approximating $\cU$ by $H^1(I)$ functions and the continuity of $P_t$ from $L^2(I)$ to $L^2(I)$.
    For $v \define u + \beta$, we observe that $\bm{v}(t) \define \bm{u}(t) + \beta$ with
    $\bm{v'}(t) \define \bm{u'}(t)$ is a solution to the gradient flow problem with initial state
    $v$, since for each $t > 0$ for which $-\bm{u'}(t) \in \partial \cE^-(\bm{u}(t))$,
    \cref{lemma:shift-subdiff} implies that $-\bm{v'}(t) \in \partial \cE^-(\bm{v}(t))$.

    Similarly, let $w \define \alpha u$. Then $\bm{w}(t) \define \alpha \bm{u}(t)$ with $\bm{w'}(t)
    \define \alpha \bm{u'}(t)$ is the solution to the gradient flow problem with initial state $w$,
    since for each $t > 0$ for which $-\bm{u'}(t) \in \partial \cE^-(\bm{u}(t))$,
    \cref{lemma:scale-subdiff} implies that $-\bm{w'}(t) \in \partial \cE^-(w)$.
\end{proof}

\subsection{Convergence to monotone equilibrium}
\label{section:convergence}

Since the directed Dirichlet energy $\cE^-(P_t u)$ decays over time and $P_t u$ converges to some
limit as $t \to \infty$ (by \cref{lemma:strong-convergence}), we expect this limit to be a monotone
function. Let us establish this fact and other properties of that limit.

\begin{lemma}
    \label{lemma:minimizers-nondecreasing}
    Let $u \in \cU$ satisfy $\cE^-(u) = 0$. Then $u$ is nondecreasing.
\end{lemma}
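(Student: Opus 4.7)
The plan is to exploit the structure of the canonical representation of $u$ directly, since $\cE^-(u) = 0$ forces the nonincreasing component $\downf{u}$ to be trivial.

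First, I would invoke the fact that $u \in \cU$ admits its canonical decomposition $u = \upf{u} + \downf{u}$ given by \cref{def:canonical-representation}, where $\upf{u} \in L^2(I)$ has a nondecreasing representative, $\downf{u} \in H^1(I)$ has a nonincreasing representative, and $\int_I \downf{u} \odif x = 0$. By definition of $\cE^-$ on $\cU$ (\cref{def:energy-functional}) together with the definition of $\cD^-$, we have
\[
    \cE^-(u) = \cD^-(\upf{u}, \downf{u}) = \frac{1}{2} \int_I (\partial_x \downf{u})^2 \odif x \,.
\]

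The hypothesis $\cE^-(u) = 0$ then gives $\int_I (\partial_x \downf{u})^2 \odif x = 0$, so $\partial_x \downf{u} = 0$ \almev in $I$. Since $\downf{u} \in H^1(I)$ has a vanishing weak derivative, it is (almost everywhere equal to) a constant. The zero-mean condition $\int_I \downf{u} \odif x = 0$ then forces $\downf{u} = 0$ \almev.

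Consequently $u = \upf{u}$ in $L^2(I)$, and since $\upf{u}$ has a nondecreasing representative by membership in the canonical representation, $u$ itself has a nondecreasing representative, i.e., $u$ is nondecreasing in the sense of the remark following \cref{def:down-h1-functions}. I do not anticipate any real obstacle here — the entire argument is unpacking the definitions — the only point that requires the slightest care is noting that ``$\partial_x \downf{u} = 0$ \almev implies $\downf{u}$ is \almev constant'' uses the one-dimensional characterization of absolutely continuous functions (equivalently, $H^1(I)$ functions) as integrals of their weak derivatives.
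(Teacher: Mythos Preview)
Your proof is correct and follows essentially the same approach as the paper: from $\cE^-(u)=0$ deduce $\partial_x \downf{u}=0$ \almev, hence $\downf{u}$ is constant, so $u=\upf{u}+\text{const}$ is nondecreasing. The paper does not bother to invoke the mean-zero condition to pin the constant at zero (a nondecreasing function plus a constant is still nondecreasing), but this is a cosmetic difference only.
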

\begin{proof}
    Since $\cE^-(u) = 0$, we have $\partial_x \downf{u} = 0$ \almev in $I$. Hence $\downf{u}$ is a
    constant function, while $\upf{u}$ is nondecreasing by definition.
\end{proof}

\begin{proposition}
    \label{prop:convergence-to-equilibrium}
    Let $u \in L^2(I)$. Then there exists a nondecreasing $u^* \in \cU$, unique as an element of
    $L^2(I)$, such that $P_t u \to u^*$ in $L^2(I)$ as $t \to \infty$.
\end{proposition}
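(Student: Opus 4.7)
The plan is to handle $u \in \cU$ first via the strong convergence result already established, and then extend to arbitrary $u \in L^2(I)$ by density and nonexpansiveness.

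For $u \in \cU$, \cref{lemma:strong-convergence} already gives some $u^* \in L^2(I)$ with $P_t u \to u^*$ in $L^2(I)$. To show $u^* \in \cU$ and that $u^*$ is nondecreasing, I would combine the exponential energy decay with lower semicontinuity of $\cE^-$: using that $\cE^-$ is convex, proper and lower semicontinuous (established right after \cref{def:energy-functional}) together with \cref{prop:exponential-decay},
\[
    \cE^-(u^*) \le \liminf_{t \to \infty} \cE^-(P_t u) \le \liminf_{t \to \infty} e^{-Kt} \cE^-(u) = 0 \,.
\]
Hence $\cE^-(u^*) = 0 < +\infty$, so $u^* \in D(\cE^-) = \cU$, and \cref{lemma:minimizers-nondecreasing} then gives that $u^*$ is nondecreasing.

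For general $u \in L^2(I)$, I would use the fact that $\cU$ is dense in $L^2(I)$ (remark after \cref{def:down-h1-functions}). Pick a sequence $(u_n)_n \subset \cU$ with $u_n \to u$ in $L^2(I)$, and let $u_n^* \in \cU$ be the limits of $P_t u_n$ given by the previous paragraph. The nonexpansiveness of $P_t$ from \cref{res:solution} yields
\[
    \|P_s u - P_t u\|_{L^2(I)}
    \le 2\|u - u_n\|_{L^2(I)} + \|P_s u_n - P_t u_n\|_{L^2(I)}
\]
for any $s, t > 0$, so an $\epsilon/3$ argument shows that $(P_t u)_{t \ge 0}$ is Cauchy in $L^2(I)$ and converges to some $u^* \in L^2(I)$. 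Applying nonexpansiveness again, $\|u^* - u_n^*\|_{L^2(I)} \le \|u - u_n\|_{L^2(I)} \to 0$, so $u_n^* \to u^*$ in $L^2(I)$; since each $u_n^*$ is nondecreasing, \cref{lemma:weak-limit-monotonic} (using that strong convergence implies weak convergence) gives that $u^*$ is nondecreasing, and once more lower semicontinuity yields $\cE^-(u^*) \le \liminf_n \cE^-(u_n^*) = 0$, so $u^* \in \cU$.

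Uniqueness of $u^*$ as an element of $L^2(I)$ is automatic from uniqueness of limits in the metric space $L^2(I)$. The only mildly delicate point is making sure that monotonicity passes to the $L^2$ limit, but this is exactly what \cref{lemma:weak-limit-monotonic} is used for elsewhere in the paper; everything else is a routine combination of the energy decay inequality, lower semicontinuity, and nonexpansiveness of the semigroup.
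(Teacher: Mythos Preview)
Your proof is correct and takes a somewhat different route from the paper. The paper invokes an external result (\cite[Theorem~2, p.~160]{AC84}) to obtain weak convergence of $P_t u$ to a minimizer of $\cE^-$ for arbitrary $u \in L^2(I)$, and then combines this with \cref{lemma:strong-convergence} (applied to $P_{t_0} u \in \cU$ for some $t_0 > 0$) to upgrade to strong convergence and identify the limit. Your approach instead starts from \cref{lemma:strong-convergence} and uses lower semicontinuity of $\cE^-$ together with the exponential energy decay to show directly that the strong limit has zero energy; this is more self-contained and avoids the external citation entirely. Your extension from $\cU$ to all of $L^2(I)$ via density and nonexpansiveness is valid but slightly more laborious than necessary: since $P_{t_0} u \in D(\partial \cE^-) \subseteq \cU$ for every $t_0 > 0$ (as recorded in the discussion after \cref{res:solution}), you could simply apply your $\cU$ case to $P_{t_0} u$ and be done, bypassing the density and $\epsilon/3$ argument.
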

\begin{proof}
    Since $\cE^-$ achieves its minimum (namely $0$, on \eg constant functions) and $P_t u \in
    D(\partial \cE^-)$ for all $t > 0$, \cite[Theorem 2, p. 160]{AC84} implies that there exists a
    minimizer $u^*$ of $\cE^-$ such that $P_t u \weakto u^*$ weakly in $L^2(I)$. This means that
    $u^* \in \cU$ with $\cE^-(u^*) = 0$, so by \cref{lemma:minimizers-nondecreasing} $u^*$ is
    nondecreasing. By \cref{lemma:strong-convergence} (which we may apply because, fixing any $t_0 >
    0$, we have $P_{t_0} u \in D(\partial \cE^-) \subseteq D(\cE^-) = \cU$), $P_t u$ also converges
    strongly in $L^2(I)$, and it is standard that the weak and strong limits agree and that this
    limit is unique.
\end{proof}

Therefore the following definition is justified:

\begin{definition}[Monotone equilibrium]
    \label{def:p-infty}
    Let $P_\infty : L^2(I) \to \cU$ be the operator mapping each $u \in L^2(I)$ to the unique (as an
    element of $L^2(I)$) nondecreasing $u^* \in \cU$ such that $P_t u \to u^*$ in $L^2(I)$ as $t \to
    \infty$. We call $P_\infty u$ the \emph{monotone equilibrium} of $u$.
\end{definition}

We now pass to the limit $P_\infty$ some useful properties of $P_t$.

\begin{proposition}[``Directed nonexpansiveness'' of $P_\infty$]
    \label{prop:p-infty-directed-nonexpansiveness}
    Let $u, v \in L^2(I)$. Then
    \[
        \int_I \left[ (P_\infty u - P_\infty v)^+ \right]^2 \odif x
        \le \int_I \left[ (u - v)^+ \right]^2 \odif x \,.
    \]
\end{proposition}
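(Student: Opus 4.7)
The plan is to obtain this inequality as a direct limiting consequence of the analogous inequality for $P_t$ established in \cref{cor:directed-nonexpansiveness}. Concretely, I would fix $u, v \in L^2(I)$ and consider the integrals
\[
    I(t) \define \int_I \left[ (P_t u - P_t v)^+ \right]^2 \odif x
\]
for $t > 0$, together with the corresponding quantity $I(\infty)$ defined using $P_\infty$. The target inequality is $I(\infty) \le \int_I [(u-v)^+]^2 \odif x$, and by \cref{cor:directed-nonexpansiveness} we already have $I(t) \le \int_I [(u-v)^+]^2 \odif x$ for every $t > 0$. Hence it suffices to prove the convergence $I(t) \to I(\infty)$ as $t \to \infty$.

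The convergence is an immediate consequence of \cref{def:p-infty} together with two elementary continuity facts in $L^2(I)$. First, by the definition of $P_\infty$, we have $P_t u \to P_\infty u$ and $P_t v \to P_\infty v$ in $L^2(I)$, so by the triangle inequality $P_t u - P_t v \to P_\infty u - P_\infty v$ in $L^2(I)$. Second, the map $w \mapsto w^+$ is $1$-Lipschitz as a map $L^2(I) \to L^2(I)$ (pointwise, $|a^+ - b^+| \le |a-b|$, and this integrates), so $(P_t u - P_t v)^+ \to (P_\infty u - P_\infty v)^+$ in $L^2(I)$. Finally, the squared $L^2$ norm $w \mapsto \|w\|_{L^2(I)}^2$ is continuous, yielding $I(t) \to I(\infty)$.

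There is no real obstacle here: the substantive content was already encoded in the finite-time nonexpansiveness of \cref{cor:directed-nonexpansiveness} and the strong convergence in \cref{def:p-infty} (which itself relied on \cref{lemma:strong-convergence}). The only point that warrants a line of care is observing that strong $L^2$ convergence is preserved by taking positive parts; everything else is immediate passage to the limit in a stationary inequality.
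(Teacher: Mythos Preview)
Your proposal is correct and follows essentially the same route as the paper's proof: apply \cref{cor:directed-nonexpansiveness} for each $t>0$, use the strong $L^2$ convergence $P_t w \to P_\infty w$ from \cref{def:p-infty}, observe that $w \mapsto w^+$ is $1$-Lipschitz on $L^2(I)$, and pass to the limit. The paper presents the same argument in nearly identical form.
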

\begin{proof}
    This follows by a standard limit argument as follows. Let $u, v \in L^2(I)$. By
    \cref{cor:directed-nonexpansiveness}, $\|(P_t u - P_t v)^+\|_{L^2(I)} \le \|(u -
    v)^+\|_{L^2(I)}$ for all $t > 0$. By definition of $P_\infty$, we have $P_t u \to P_\infty u$
    and $P_t v \to P_\infty v$ in $L^2(I)$ as $t \to \infty$. Using the triangle inequality, we
    obtain
    \begin{align*}
        \|(P_t u - P_t v)^+ - (P_\infty u - P_\infty v)^+\|_{L^2(I)}
        &\le \|(P_t u - P_t v) - (P_\infty u - P_\infty v)\|_{L^2(I)} \\
        &\le \|P_t u - P_\infty u\|_{L^(I)} + \|P_t v - P_\infty v\|_{L^2(I)}
        \to 0 \,,
    \end{align*}
    so $(P_t u - P_t v)^+ \to (P_\infty u - P_\infty v)^+$ in $L^2(I)$ as $t \to \infty$. Since
    $\|(P_t u - P_t v)^+\|_{L^2(I)} \le \|(u - v)^+\|_{L^2(I)}$ for all $t > 0$, we conclude that
    $\|(P_\infty u - P_\infty v)^+\|_{L^2(I)} \le \|(u - v)^+\|_{L^2(I)}$, which gives the
    conclusion.
\end{proof}

\begin{corollary}[$P_\infty$ is order preserving]
    \label{cor:p-infty-order-preserving}
    Let $u, v \in L^2(I)$, and suppose $u \le v$ \almev. Then $P_\infty u \le P_\infty v$ \almev.
\end{corollary}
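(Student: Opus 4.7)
The plan is to derive this corollary as an immediate consequence of the directed nonexpansiveness of $P_\infty$ (\cref{prop:p-infty-directed-nonexpansiveness}), which was established just above. The strategy mirrors how \cref{cor:pt-order-preserving} was obtained from \cref{cor:directed-nonexpansiveness}: a directed nonexpansiveness statement involving the positive part of a difference trivially implies order preservation when the hypothesis kills the right-hand side.

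Concretely, suppose $u \le v$ \almev in $I$. Then $(u-v)^+ = 0$ \almev, so
\[
\int_I \left[ (u-v)^+ \right]^2 \odif x = 0 \,.
\]
Applying \cref{prop:p-infty-directed-nonexpansiveness} yields
\[
\int_I \left[ (P_\infty u - P_\infty v)^+ \right]^2 \odif x \le 0 \,,
\]
which forces $(P_\infty u - P_\infty v)^+ = 0$ \almev, \ie $P_\infty u \le P_\infty v$ \almev.

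There is no real obstacle here -- the content of the corollary is already baked into the preceding proposition, and this corollary simply unpacks it in the special case where the ``excess'' of $u$ over $v$ vanishes. The only thing worth noting is that the proof does not need to go back through the approximation-and-passage-to-the-limit machinery used to establish \cref{prop:p-infty-directed-nonexpansiveness} itself; once that proposition is in hand, the corollary follows in one line.
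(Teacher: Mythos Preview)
Your proposal is correct and matches the paper's approach exactly: the paper simply states that this is a direct consequence of \cref{prop:p-infty-directed-nonexpansiveness}, and you have correctly spelled out the one-line unpacking of that implication.
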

\begin{proof}
    This is a direct consequence of \cref{prop:p-infty-directed-nonexpansiveness}.
\end{proof}

\begin{proposition}
    \label{prop:p-infty-nonexpansive}
    $P_\infty$ is nonexpansive, and therefore continuous, as an $L^2(I) \to L^2(I)$ map.
\end{proposition}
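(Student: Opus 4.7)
The plan is to derive nonexpansiveness of $P_\infty$ directly from the ``directed nonexpansiveness'' property already established in \cref{prop:p-infty-directed-nonexpansiveness}, by applying that inequality in both directions (to $(u,v)$ and to $(v,u)$) and then recombining the positive and negative parts of $P_\infty u - P_\infty v$ back into its full $L^2$ norm. The underlying algebraic identity is simply $|a|^2 = (a^+)^2 + ((-a)^+)^2$, applied pointwise to $a = P_\infty u - P_\infty v$.

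Concretely, let $u, v \in L^2(I)$. I would write
\[
    \|P_\infty u - P_\infty v\|_{L^2(I)}^2
    = \int_I \left[(P_\infty u - P_\infty v)^+\right]^2 \odif x
        + \int_I \left[(P_\infty v - P_\infty u)^+\right]^2 \odif x \,,
\]
and then apply \cref{prop:p-infty-directed-nonexpansiveness} to the first term with the pair $(u,v)$ and to the second term with the pair $(v,u)$, obtaining
\[
    \|P_\infty u - P_\infty v\|_{L^2(I)}^2
    \le \int_I \left[(u - v)^+\right]^2 \odif x
        + \int_I \left[(v - u)^+\right]^2 \odif x
    = \|u - v\|_{L^2(I)}^2 \,.
\]
This establishes nonexpansiveness, and continuity follows immediately as a standard consequence of nonexpansiveness (given any $\epsilon > 0$, take $\delta = \epsilon$).

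Since every step is a direct invocation of an already-proved property together with a pointwise algebraic identity, there is no real obstacle here: the entire content of the statement was already packaged into \cref{prop:p-infty-directed-nonexpansiveness}, and this proposition merely records the symmetrized corollary in the standard $L^2$ norm.
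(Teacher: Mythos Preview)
Your proof is correct, but it takes a different route from the paper's. The paper does not use \cref{prop:p-infty-directed-nonexpansiveness} twice and recombine; instead it goes back to the finite-time semigroup: from the convergence $(P_t u - P_t v) \to (P_\infty u - P_\infty v)$ in $L^2(I)$ (established inside the proof of \cref{prop:p-infty-directed-nonexpansiveness}), it deduces $\|P_t u - P_t v\|_{L^2(I)} \to \|P_\infty u - P_\infty v\|_{L^2(I)}$, and then passes the nonexpansiveness of $P_t$ (from \cref{res:solution}) to the limit.

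Your argument is arguably cleaner in that it uses only the \emph{statement} of \cref{prop:p-infty-directed-nonexpansiveness} rather than something extracted from its proof, and it relies only on the pointwise identity $a^2 = (a^+)^2 + ((-a)^+)^2$. The paper's route, on the other hand, makes explicit that the nonexpansiveness of $P_\infty$ is inherited from the nonexpansiveness of the semigroup $(P_t)_{t \ge 0}$, which is conceptually where the property originates. Either approach is perfectly valid here.
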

\begin{proof}
    Let $u, v \in L^2(I)$. In the proof of \cref{prop:p-infty-directed-nonexpansiveness}, we showed
    that $\|(P_t u - P_t v) - (P_\infty u - P_\infty v)\|_{L^2(I)} \to 0$, which implies that $\|P_t
    u - P_t v\|_{L^2(I)} \to \|P_\infty u - P_\infty v\|_{L^2(I)}$. But $\|P_t u - P_t v\|_{L^2(I)}
    \le \|u - v\|_{L^2(I)}$ for all $t > 0$ since $(P_t)_{t \ge 0}$ is a nonexpansive semigroup, so
    the conclusion follows.
\end{proof}

\begin{lemma}
    \label{lemma:p-infty-stationary-points}
    For every nondecreasing $u \in L^2(I)$, we have $P_\infty u = u$.
\end{lemma}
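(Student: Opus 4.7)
The plan is essentially a direct combination of Lemma \ref{lemma:stationary-points} with the definition of $P_\infty$. First I would verify that any nondecreasing $u \in L^2(I)$ lies in $\cU$: taking $\onef{u} = u$ and $\twof{u} = 0$ gives a valid element of $\cR(u)$, since $\onef{u}$ is nondecreasing, $\twof{u}$ is (trivially) nonincreasing with $\int_I \twof{u}\odif x = 0$, and their sum equals $u$. Hence $u \in \cU = D(\cE^-)$.

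Next, I would invoke Lemma \ref{lemma:stationary-points}, which says precisely that every nondecreasing $u \in L^2(I)$ is a stationary point of the semigroup, i.e.\ $P_t u = u$ in $L^2(I)$ for every $t \ge 0$. This is already an established fact, so no additional work is required here.

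Finally, I would pass to the limit. By Definition \ref{def:p-infty} (and Proposition \ref{prop:convergence-to-equilibrium}), $P_\infty u$ is characterized as the $L^2(I)$ limit of $P_t u$ as $t \to \infty$. Since the sequence $(P_t u)_{t \ge 0}$ is constantly equal to $u$, the limit is $u$, so $P_\infty u = u$ in $L^2(I)$.

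There is no real obstacle here: the result is essentially a corollary of the $P_t$-version (Lemma \ref{lemma:stationary-points}) together with the definition of $P_\infty$ as the long-time limit, and the only care needed is the routine check that nondecreasing $L^2$ functions live in the domain $\cU$ of $\cE^-$.
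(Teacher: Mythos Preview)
Your proposal is correct and follows essentially the same approach as the paper, which simply cites \cref{lemma:stationary-points} together with the convergence $P_t u \to P_\infty u$ in $L^2(I)$. Your additional verification that nondecreasing $u \in \cU$ is already contained in the proof of \cref{lemma:stationary-points}, so it is redundant but harmless.
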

\begin{proof}
    This follows from \cref{lemma:stationary-points} along with the convergence $P_t u \to P_\infty
    u$ in $L^2(I)$.
\end{proof}

Strong convergence to the monotone equilibrium, together with the preservation of regularity results
from \cref{section:preservation-of-regularity}, allows us to obtain regularity of the monotone
equilibrium as well:

\begin{proposition}[$H^1$ regularity of the monotone equilibrium]
    \label{prop:regularity-monotone-equilibrium}
    Let $u \in H^1(I)$. Then $P_\infty u \in H^1(I)$ with $\varphi(P_\infty u) \le \varphi(u)$.
\end{proposition}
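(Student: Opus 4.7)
The plan is to exploit the preservation of $H^1$ regularity along the flow (\cref{prop:varphi-monotonicity} / \cref{cor:preservation-of-regularity}) together with the lower semicontinuity of $\varphi$ established in \cref{claim:varphi-properties}, passing to the limit $t\to\infty$ via \cref{def:p-infty}.

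Concretely, let $u\in H^1(I)$. First I would invoke \cref{prop:varphi-monotonicity}: since $u\in H^1(I)\subseteq \cU$ and $t\mapsto \varphi(\bm{u}(t))$ is nonincreasing along the solution $\bm{u}$ to the gradient flow problem with initial datum $u$, we obtain $\varphi(P_t u) \le \varphi(u)$ for every $t\ge 0$. In particular the family $(\varphi(P_t u))_{t\ge 0}$ is uniformly bounded by $\varphi(u) = \|\partial_x u\|_{L^2(I)}^2 < +\infty$.

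Next I would use \cref{def:p-infty}, which tells us that $P_t u \to P_\infty u$ strongly in $L^2(I)$ as $t\to \infty$. By the lower semicontinuity of $\varphi$ on $L^2(I)$ (\cref{claim:varphi-properties}),
\[
    \varphi(P_\infty u) \le \liminf_{t\to\infty} \varphi(P_t u) \le \varphi(u) \,.
\]
Since $\varphi(P_\infty u) < +\infty$, the definition of $\varphi$ forces $P_\infty u \in H^1(I)$, and the bound $\varphi(P_\infty u)\le \varphi(u)$ is precisely the desired inequality.

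There is no real obstacle here: all the analytical work has already been done in establishing $\varphi$-monotonicity along the flow and the lower semicontinuity of $\varphi$. The only subtlety worth verifying is that it is legitimate to apply \cref{prop:varphi-monotonicity} with initial datum $u\in H^1(I)$ (which requires $u\in \cU$, and indeed $H^1(I)\subset \cU$ as noted after \cref{def:down-h1-functions}), and that $P_t u$ coincides with the solution $\bm{u}(t)$ for $u\in\cU$, so that $\varphi(P_tu) \le \varphi(u)$ follows literally from the statement. The passage to the limit is then automatic from lower semicontinuity plus $L^2$-convergence.
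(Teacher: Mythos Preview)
Your proposal is correct and follows essentially the same argument as the paper: use \cref{prop:varphi-monotonicity} to get $\varphi(P_t u)\le\varphi(u)$, then pass to the limit using $P_t u\to P_\infty u$ in $L^2(I)$ and the lower semicontinuity of $\varphi$ from \cref{claim:varphi-properties}. The only cosmetic difference is that the paper cites \cref{lemma:strong-convergence} for the $L^2$-convergence rather than \cref{def:p-infty}, but these amount to the same thing.
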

\begin{proof}
    Recall the functional $\varphi$ from \cref{section:preservation-of-regularity}. Since $u \in
    H^1(I)$, $\varphi(u) < +\infty$. By \cref{prop:varphi-monotonicity}, we conclude that
    $\varphi(P_t u) \le \varphi(u) < +\infty$ for all $t > 0$. Since $\varphi$ is lower
    semicontinuous by \cref{claim:varphi-properties} and $P_t u \to P_\infty u$ in $L^2(I)$ by
    \cref{lemma:strong-convergence}, we conclude that $\varphi(P_\infty u) \le \varphi(u) <
    +\infty$. Hence $P_\infty u \in H^1(I)$.
\end{proof}

\begin{proposition}[Lipschitz regularity of the monotone equilibrium]
    \label{prop:lipschitz-regularity-monotone-equilibrium}
    Let $u \in W^{1,\infty}(I)$. Then $P_\infty u \in W^{1,\infty}(I)$ with $\psi(P_\infty u) \le
    \psi(u)$.
\end{proposition}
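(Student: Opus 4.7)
The proof will proceed in complete parallel with that of the $H^1$ regularity result (Proposition \ref{prop:regularity-monotone-equilibrium}), replacing the functional $\varphi$ with $\psi$ throughout. The three ingredients I will combine are: (i) preservation of Lipschitz regularity along the semigroup, which was already established as $\psi$-monotonicity (\cref{prop:psi-monotonicity}); (ii) lower semicontinuity of $\psi$ on $L^2(I)$ (\cref{claim:psi-properties}); and (iii) strong convergence $P_t u \to P_\infty u$ in $L^2(I)$ as $t \to \infty$, which is built into the definition of $P_\infty$ via \cref{lemma:strong-convergence}.

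Concretely, the plan is as follows. Since $u \in W^{1,\infty}(I)$, we have $\psi(u) < +\infty$. \cref{prop:psi-monotonicity} then yields $\psi(P_t u) \le \psi(u)$ for every $t > 0$, so in particular $\limsup_{t \to \infty} \psi(P_t u) \le \psi(u)$. On the other hand, $P_t u \to P_\infty u$ strongly in $L^2(I)$, and \cref{claim:psi-properties} gives the sequential lower semicontinuity of $\psi$ along $L^2(I)$-convergent sequences; applied to any sequence $t_n \to \infty$, this yields
\[
    \psi(P_\infty u) \le \liminf_{n \to \infty} \psi(P_{t_n} u) \le \psi(u) < +\infty \,.
\]
In particular, $P_\infty u \in W^{1,\infty}(I)$ and the quantitative bound $\psi(P_\infty u) \le \psi(u)$ is precisely the claim.

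There is no real obstacle: every technical piece has already been set up in the preceding subsections. The only subtlety worth noting is that $\psi$ is only defined on $L^2(I)$ (rather than, say, $L^\infty(I)$), so one must make sure that the lower semicontinuity in \cref{claim:psi-properties} is being used along $L^2(I)$-convergence, which is exactly the mode of convergence supplied by the definition of $P_\infty u$. Once this is observed, the proof is a three-line argument mirroring \cref{prop:regularity-monotone-equilibrium}.
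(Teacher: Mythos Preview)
Your proposal is correct and matches the paper's proof essentially line for line: both combine $\psi$-monotonicity (\cref{prop:psi-monotonicity}), lower semicontinuity of $\psi$ (\cref{claim:psi-properties}), and the $L^2(I)$ convergence $P_t u \to P_\infty u$ (\cref{lemma:strong-convergence}) to conclude $\psi(P_\infty u) \le \psi(u) < +\infty$.
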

\begin{proof}
    Recall the functional $\psi$ from \cref{section:preservation-of-lipschitz}. Since $u \in
    W^{1,\infty}(I)$, $\psi(u) < +\infty$. By \cref{prop:psi-monotonicity}, we conclude that
    $\psi(P_t u) \le \psi(u) < +\infty$ for all $t > 0$. Since $\psi$ is lower semicontinuous by
    \cref{claim:psi-properties} and $P_t u \to P_\infty u$ in $L^2(I)$ by
    \cref{lemma:strong-convergence}, we conclude that $\psi(P_\infty u) \le \psi(u) < +\infty$.
    Hence $P_\infty u \in W^{1,\infty}(I)$.
\end{proof}

We observe that $P_\infty$ also behaves nicely under the appropriate class of affine
transformations.

\begin{proposition}[Effect of certain affine transformations on $P_\infty$]
    \label{prop:p-infty-transformations}
    Let $u \in L^2(I)$, $\alpha > 0$, and $\beta \in \bR$. Then $P_\infty (\alpha u + \beta) =
    \alpha P_\infty u + \beta$.
\end{proposition}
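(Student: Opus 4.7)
The plan is to pass to the limit $t \to \infty$ in the identity already established for finite times in \cref{prop:pt-transformations}. Concretely, let $u \in L^2(I)$, $\alpha > 0$, $\beta \in \bR$, and set $v \define \alpha u + \beta$. By \cref{prop:pt-transformations}, $P_t v = \alpha P_t u + \beta$ in $L^2(I)$ for every $t > 0$.

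For the left-hand side, the definition of $P_\infty$ (\cref{def:p-infty}) applied to $v$ gives $P_t v \to P_\infty v$ in $L^2(I)$ as $t \to \infty$. For the right-hand side, the same definition applied to $u$ gives $P_t u \to P_\infty u$ in $L^2(I)$, and since the map $w \mapsto \alpha w + \beta$ is continuous (indeed affine) from $L^2(I)$ to $L^2(I)$, we get $\alpha P_t u + \beta \to \alpha P_\infty u + \beta$ in $L^2(I)$ as well. By uniqueness of limits in $L^2(I)$, the two limits must agree, yielding $P_\infty v = \alpha P_\infty u + \beta$.

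There is essentially no obstacle here: the work has already been done in establishing the finite-time identity (\cref{prop:pt-transformations}) and the strong-convergence definition of $P_\infty$ (\cref{prop:convergence-to-equilibrium}, \cref{def:p-infty}). The only thing to verify is the elementary continuity of $w \mapsto \alpha w + \beta$ on $L^2(I)$, which is immediate from $\|(\alpha w_1 + \beta) - (\alpha w_2 + \beta)\|_{L^2(I)} = \alpha \|w_1 - w_2\|_{L^2(I)}$.
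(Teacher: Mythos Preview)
Your proof is correct and follows exactly the same approach as the paper: apply \cref{prop:pt-transformations} for each finite $t$ and pass to the $L^2(I)$-limit using the definition of $P_\infty$ and continuity of the affine map $w \mapsto \alpha w + \beta$.
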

\begin{proof}
    Since $P_t v \to P_\infty v$ as $t \to \infty$ for each $v \in L^2(I)$, applying
    \cref{prop:pt-transformations} yields
    \[
        P_\infty (\alpha u + \beta)
        = \lim_{t \to \infty} P_t (\alpha u + \beta)
        = \lim_{t \to \infty} \left[ \alpha P_t u + \beta \right]
        = \alpha \left[ \lim_{t \to \infty} P_t u \right] + \beta
        = \alpha P_\infty u + \beta \,. \qedhere
    \]
\end{proof}

We can finally conclude, via \cref{lemma:ct80-prop2}, that $P_\infty$ is nonexpansive in the
$L^\infty$ norm.

\begin{proposition}[$P_\infty$ is nonexpansive in $L^\infty$ norm]
    \label{prop:p-infty-l-infty-nonexpansive}
    Let $u, v \in L^2(I)$. Then $\abs*{P_\infty u - P_\infty v} \le \esssup \abs*{u - v}$ \almev in
    $I$.
\end{proposition}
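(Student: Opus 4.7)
The plan is to apply \cref{lemma:ct80-prop2} directly, with $T = P_\infty$, $p = 2$, and $C = L^2(I)$, since $\Omega = I$ is a finite measure space and clearly $L^\infty(I) \subset L^2(I)$ with $L^2(I)$ closed under addition of constants. Property~\ref*{item:ct-c} of that lemma is exactly the desired conclusion, so it suffices to verify the hypotheses together with either property~\ref*{item:ct-a} or~\ref*{item:ct-b}.

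First I would check the translation invariance hypothesis \eqref{eq:translation-invariance}: for every $u \in L^2(I)$ and $r \in \bR$, $P_\infty(u+r) = P_\infty u + r$. This is an immediate consequence of \cref{prop:p-infty-transformations} applied with $\alpha = 1$ and $\beta = r$. Next I would verify the continuity hypothesis: $P_\infty : L^2(I) \to L^2(I)$ is continuous by \cref{prop:p-infty-nonexpansive}.

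Finally, I would establish property~\ref*{item:ct-a} of \cref{lemma:ct80-prop2}, namely that $u \le v$ a.e.\ implies $P_\infty u \le P_\infty v$ a.e.; this is precisely \cref{cor:p-infty-order-preserving}. With all hypotheses verified, \cref{lemma:ct80-prop2} yields property~\ref*{item:ct-c}: $\abs*{P_\infty u - P_\infty v} \le \esssup \abs*{u - v}$ a.e., which is the statement of the proposition.

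There is essentially no obstacle here, as every ingredient has already been established in the preceding subsections; the proof is a short application of the abstract lemma. The only mild care needed is to make sure the hypotheses on $C$ in \cref{lemma:ct80-prop2} are met, which they are trivially for $C = L^2(I)$ on the finite measure space $I$.
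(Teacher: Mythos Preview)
Your proposal is correct and follows essentially the same approach as the paper: apply \cref{lemma:ct80-prop2} with $T = P_\infty$, $p = 2$, $C = L^2(I)$, verify translation invariance via \cref{prop:p-infty-transformations}, continuity via \cref{prop:p-infty-nonexpansive}, and order preservation via \cref{cor:p-infty-order-preserving}, then read off property~\ref*{item:ct-c}. There is nothing to add.
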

\begin{proof}
    We verify the conditions of \cref{lemma:ct80-prop2} with $(\Omega, \Sigma, \mu)$ the set $I$
    endowed with the Lebesgue measure, $p = 2$, $C = L^2(I)$, and $T = P_\infty$. It is clear that
    for all $f \in L^2(I)$ and $r \in \bR$, $f + r \in L^2(I)$. By \cref{prop:p-infty-nonexpansive},
    $P_\infty$ is a continuous $L^2(I) \to L^2(I)$ map. Moreover, for all $f \in L^2(I)$ and $r \in
    \bR$, the condition $P_\infty(f + r) = P_\infty f + r$ holds by
    \cref{prop:p-infty-transformations}. Finally, for all $f, g \in L^2(I)$ with $f \le g$ \almev,
    we have that $P_\infty f \le P_\infty g$ \almev by \cref{cor:p-infty-order-preserving}. The
    conclusion follows.
\end{proof}

\section{Directed transport-energy inequality}
\label{section:transport-energy-inequality}

In this section, we establish a connection between the PDE studied above and the Wasserstein
distance between the initial state $u$ and its monotone equilibrium $P_\infty u$, via the dynamical
approach embodied by the \emph{Benamou-Brenier formula}, and then \emph{tensorize} this result from
one dimension to $[0,1]^d$. For optimal transport concepts, we follow the presentation and formalism
of \cite{San15}, and also refer to \cite{AGS05,Vil09}.

This section is organized as follows. After preliminary considerations in
\cref{section:preliminaries-optimal-transport}, \cref{section:benamou-brenier} shows that the PDE
results from \cref{section:directed-heat-semigroup} imply a result about optimal transport
(Wasserstein) distance, namely a one-dimensional \emph{transport-energy inequality}, thereby
implementing the ideas described in \cref{section:overview-transport-energy-one-dimension} of the
proof overview. The idea is to show that our solution to the directed heat equation
\[
    \partial_t u = \partial_x \partial_x^- u
\]
satisfies the so-called \emph{continuity equation}
\[
    \partial_t \varrho_t + \partial_x (\varrho_t v_t) = 0 \,,
\]
which prescribes that ``particles'' (distributed according to measure $\varrho_t$) move according to
the velocity field $v_t$; in our setting, we have $\odif \varrho_t = u \odif x$ and $v_t = -
\frac{\partial_x \downf{u}}{u}$ (\cref{prop:pde-solves-continuity}), so that indeed $\partial_x
(\varrho_t v_t) = -\partial_x \partial_x \downf{u} = -\partial_t \varrho_t$. For functions $u$
bounded close to $1$, we can informally disregard the denominator in $v_t$ and imagine, for
simplicity, that $v_t \approx -\partial_x \downf{u}$. Then thanks to the exponential decay of
\[
    \cE^-(u) = \frac{1}{2} \int (\partial_x \downf{u})^2 \odif x
\]
from the previous section, we conclude that the right-hand side of the Benamou-Brenier formula
\[
    W_2^2(\mu, \nu)
    = \min\left\{ \int_0^1 \|v_t\|_{L^2(\varrho_t)}^2 \odif t :
                    \partial_t \varrho_t + \partial_x (\varrho_t v_t) = 0,
                    \varrho_0 = \mu, \varrho_1 = \nu \right\}
\]
from \cref{prop:benamou-brenier} is roughly upper bounded by $\cE^-(u)$, where (informally) the
measures $\odif \mu = u \odif x$ and $\odif \nu = (P_\infty u) \odif x$ correspond to the initial
state and monotone equilibrium of the directed heat equation, as probability measures\footnote{More
    formally, we decompose the evolution from time $t=0$ to time $t \to \infty$ into intervals of
    unit time length, but by the exponential energy decay, this summation is dominated by the
initial term $\cE^-(u)$.}\!\!. Therefore we conclude in \cref{thm:undirected-transport-energy} the
transport-energy inequality in one dimension,
\[
    W_2^2(\mu, \mu_\infty) \lesssim \cE^-(u) \,.
\]
Finally, the continuity equation also lets us conclude natural and useful properties of solutions of
the directed heat equation, such as mass conservation
(\cref{cor:pt-mass-conserving,cor:p-infty-mass-conserving}) and the property that solutions can only
ever lose mass in any given prefix of the interval $[0,1]$ (\cref{cor:mu-infty-dominated}).

The next goal is to establish the \emph{tensorized} (multidimensional) transport-energy inequality,
thereby implementing the ideas described in \cref{section:overview-tensorizing} of the proof
overview. In \cref{section:directed-optimal-transport}, we develop the definitions and technical
tools required to show our ``Pythagorean composition'' result
(\cref{lemma:pythagorean-composition}), which yields a recipe for proving multidimensional
transport-energy inequalities (with respect to the ``directed Wasserstein distance'') by induction
(\cref{lemma:induction}).

Finally, \cref{section:tensorization} ties together the tensorization recipe described above with
the one-dimensional results specific to our PDE from the previous sections. We show that, by taking
an initial state $f : [0,1]^d \to \bR$ with corresponding measure $\odif \mu = f \odif x$, applying
the directed heat semigroup one coordinate at a time, and putting together the one-dimensional
transport-energy inequalities at each of these steps via properties of our PDE such as order
preservation and nonexpansiveness (which enable the induction to go through), we indeed obtain a
\emph{monotone} final state $f^* : [0,1]^d \to \bR$ with corresponding measure $\odif \mu^* = f^*
\odif x$ for which the multidimensional transport-energy inequality applies
(\cref{thm:transport-energy}):
\[
        W_2^2(\mu \to \mu^*) \lesssim \int_{\closedInt^d} \left| \grad^- f \right|^2 \odif x \,.
\]

\subsection{Preliminaries for optimal transport}
\label{section:preliminaries-optimal-transport}

We start by introducing notation and definitions relevant to the theory of optimal transport.

\paragraph*{Projections.} Say $X \times Y$ is a product space and $z = (x,y)$ is an element in this
space. Then the projection operator onto the first coordinate, denoted interchangeably by $\pi_1$ or
$\pi_x$ depending on the context, is given by $\pi_1(z) = \pi_x(z) = x$. Similarly, $\pi_2(z) =
\pi_y(z) = y$. We extend this definition in the natural way to projections from larger product
spaces onto smaller product spaces. For example, given the product space $X_1 \times \dotsm \times
X_d$ and index set $I \subset [d]$, $I = \{i_1, \dotsc, i_n\}$, the operator $\pi_I$ projects any
element in this space down to an element of $X_{i_1} \times \dotsm \times X_{i_n}$. We also use the
shorthand $\pi_{-I} \define \pi_{[d] \setminus I}$.

Now, let $\Omega \subset \bR^d$ be a Borel set and let $I \subseteq [d]$. For a point $x \in
\Omega$, we write the projections $x_I \define \pi_I(x)$ and $x_{-I} \define \pi_{-I}(x)$, and we
also write $x = (x_I, x_{-I})$. Then, for the set $\Omega$, we write the projections $\Omega_I
\define \{ x_I : x \in \Omega \}$ and $\Omega_{-I} \define \{ x_{-I} : x \in \Omega \}$. For small
sets $I$ or $[d] \setminus I$, we also use shorthand notation such as $x_i \define x_{\{i\}}$ and
$\Omega_{-i-j} \define \Omega_{-\{i,j\}}$.

\paragraph*{Pushforward measure.} Given measurable spaces $(X, \Sigma_X)$ and $(Y, \Sigma_Y)$,
measure $\mu$ on $X$, and measurable map $T : X \to Y$, the \emph{pushforward measure}
$\transmap{T}{\mu}$ on $Y$ is the measure satisfying
\begin{align*}
    (\transmap{T}{\mu})(B) &= \mu(T^{-1}(B))
        &\text{for all $B \in \Sigma_Y$, or equivalently,} \\
    \int_Y \phi(y) \odif {(\transmap{T}{\mu})(y)} &= \int_X \phi(T(x)) \odif \mu(x)
        &\text{for all measurable $\phi : Y \to \bR$.}
\end{align*}

For a measure $\gamma$ on product space $X \times Y$, we say that $\transmap{(\pi_1)}{\gamma}$ and
$\transmap{(\pi_2)}{\gamma}$, which are measures on $X$ and $Y$ respectively, are the first and
second \emph{marginals} of $\gamma$, respectively.

If $(\Omega_3, \Sigma_3)$ is another measurable space and $S : Y \to Z$ a measurable map, then it
holds that $\transmap{(S \circ T)}{\mu} = \transmap{S}{(\transmap{T}{\mu})}$.

\paragraph*{Transport plans.}
For two probability spaces $(\Omega_1, \Sigma_1, \mu_1)$ and $(\Omega_2, \Sigma_2, \mu_2)$, we write
$\Pi(\mu_1, \mu_2)$ for the set of \emph{couplings}, or \emph{transport plans}, between $\mu_1$ and
$\mu_2$, namely probability measures $\gamma$ on the product space $\Omega_1 \times \Omega_2$ whose
first and second marginals are $\mu_1$ and $\mu_2$, respectively.

\paragraph*{Space of probability measures.}
For any Borel set $\Omega \subset \bR^d$, let $P(\Omega)$ denote the space of all (Borel)
probability measures on $\Omega$. We endow $P(\Omega)$ with the weak topology, which is the topology
of weak convergence with respect to bounded continuous functionals. Namely, we say $\mu_n$ converges
weakly to $\mu$ in $P(\Omega)$, and write $\mu_n \weakto \mu$, if $\int_\Omega \phi \odif \mu_n \to
\int_\Omega \phi \odif \mu$ for all bounded continuous $\phi : \Omega \to \bR$.

\paragraph*{Wasserstein distances.}
Let $\Omega \subset \bR^d$ be a \emph{bounded} Borel set and let $p \in [1, +\infty)$. Let $\mu, \nu
\in P(\Omega)$. Given transport plan $\gamma \in \Pi(\mu, \nu)$, we define the \emph{cost}
$C_p(\gamma)$ by
\[
    C_p(\gamma) \define \left( \int_{\Omega \times \Omega} |x-y|^p \odif \gamma(x,y) \right)^{1/p}
    \,.
\]
We then define the \emph{$p$-Wasserstein} distance between $\mu$ and $\nu$ by
\[
    W_p(\mu, \nu) \define \inf_{\gamma \in \Pi(\mu, \nu)} C_p(\gamma) \,,
\]
and we also often refer to the quantity $W_p^p(\mu, \nu) \define W_p(\mu, \nu)^p$. It is standard
that $W_p(\cdot, \cdot)$ is a distance metric on $P(\Omega)$.

\begin{fact}[Wasserstein distance metrizes weak convergence; see \eg
    {\cite[Theorems~5.10~and~5.11]{San15}}]
    \label{fact:wp-metrizes}
    Let $\Omega \subset \bR^d$ be a bounded Borel set, and let $p \in [1, \infty)$. Then for a
    sequence $(\mu_n)_n$ in $P(\Omega)$ and $\mu \in P(\Omega)$, we have $\mu_n \weakto \mu$ if and
    only if $W_p(\mu_n, \mu) \to 0$.
\end{fact}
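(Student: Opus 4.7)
The plan is to prove each direction separately, exploiting the boundedness of $\Omega$ (so $\overline{\Omega}$ is compact) in essential ways on both sides.

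For the forward direction ($W_p(\mu_n,\mu)\to 0 \implies \mu_n \weakto \mu$), I would first reduce to the case $p=1$. By Jensen's inequality applied to the concave function $t \mapsto t^{1/p}$, any coupling satisfies $C_1(\gamma) \le C_p(\gamma)$, so $W_1(\mu_n,\mu) \le W_p(\mu_n,\mu) \to 0$. Next, I would invoke the Kantorovich--Rubinstein duality, which gives $W_1(\mu_n,\mu) = \sup\{|\int \phi \,d\mu_n - \int \phi\,d\mu| : \phi \text{ is } 1\text{-Lipschitz}\}$; hence $\int \phi\,d\mu_n \to \int \phi\,d\mu$ for every Lipschitz $\phi$. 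Finally, since $\overline{\Omega}$ is compact, the Stone--Weierstrass theorem (or a direct mollification argument) shows that Lipschitz functions are uniformly dense in $C(\overline{\Omega})$, and an $\epsilon/3$ approximation upgrades convergence against Lipschitz test functions to convergence against all bounded continuous test functions, i.e.\ $\mu_n \weakto \mu$.

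For the reverse direction ($\mu_n \weakto \mu \implies W_p(\mu_n, \mu) \to 0$), my plan is to exhibit explicit couplings whose cost vanishes via the Skorokhod representation theorem. Since $\Omega$ is bounded (in particular a separable metric space) and $\mu_n \weakto \mu$, the Skorokhod representation theorem produces a probability space $(S,\Sigma,\bP)$ supporting random variables $\bm{X}_n, \bm{X}: S \to \overline{\Omega}$ with $\bm{X}_n \sim \mu_n$, $\bm{X} \sim \mu$, and $\bm{X}_n \to \bm{X}$ almost surely. The joint law $\gamma_n$ of $(\bm{X}_n, \bm{X})$ lies in $\Pi(\mu_n, \mu)$, so it is an admissible coupling. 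By boundedness of $\Omega$, the integrand $|\bm{X}_n - \bm{X}|^p$ is uniformly bounded by $(\mathrm{diam}\,\Omega)^p$; almost sure convergence plus dominated convergence then yields
\[
    W_p^p(\mu_n, \mu) \le C_p(\gamma_n)^p = \bE\left[ |\bm{X}_n - \bm{X}|^p \right] \to 0 \,.
\]

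The main obstacle, such as it is, lies in the reverse direction: one must produce couplings $\gamma_n \in \Pi(\mu_n, \mu)$ whose cost tends to zero, which does not follow from soft compactness arguments alone (a tightness/extraction argument would produce a cluster point $\gamma^* \in \Pi(\mu,\mu)$, but showing $\gamma^*$ concentrates on the diagonal seems to require more than just weak convergence of the marginals). Skorokhod's theorem bypasses this by building the couplings via almost sure convergence on a common probability space, which is really the crux; the forward direction is comparatively routine given Kantorovich--Rubinstein duality and density of Lipschitz functions in $C(\overline{\Omega})$.
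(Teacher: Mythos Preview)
The paper does not provide a proof here; the statement is recorded as a Fact with a citation to Santambrogio's textbook, so there is no argument in the paper to compare against. Your outline is correct and standard.

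One minor caveat on the forward direction: the $\epsilon/3$ step, as you phrase it, relies on Lipschitz functions being uniformly dense in $C_b(\Omega)$, which can fail when $\Omega$ is bounded but not closed (for instance, $\sin(1/x)$ on $(0,1)$ lies in $C_b((0,1))$ but is not uniformly approximable by Lipschitz functions). This is harmless, since it is a standard form of the Portmanteau theorem that convergence of $\int \phi\, d\mu_n$ against all bounded Lipschitz $\phi$ already implies weak convergence of probability measures on a separable metric space; no density step is needed. With that small adjustment your argument goes through as written.
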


\begin{corollary}
    \label{cor:wp-continuous}
    Let $\Omega \subset \bR^d$ be a bounded Borel set, and let $p \in [1, \infty)$. Then for
    sequences $(\mu_n)_n$ and $(\nu_n)_n$ in $P(\Omega)$ with $\mu_n \weakto \mu$ and $\nu_n \weakto
    \nu$ for $\mu, \nu \in P(\Omega)$, we have $W_p(\mu_n, \nu_n) \to W_p(\mu, \nu)$. In other
    words, the $p$-Wasserstein distance is continuous in (the weak topology on) $P(\Omega)$.
\end{corollary}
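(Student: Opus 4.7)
The plan is straightforward: combine Fact \ref{fact:wp-metrizes} with the fact that $W_p$ is a metric, so that joint continuity follows from sequential continuity in each argument via the triangle inequality.

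More concretely, I would first invoke Fact \ref{fact:wp-metrizes} twice to translate the weak convergence hypotheses into the Wasserstein statements $W_p(\mu_n, \mu) \to 0$ and $W_p(\nu_n, \nu) \to 0$. Next, I would apply the triangle inequality for the metric $W_p$ (which is valid on $P(\Omega)$ since $\Omega$ is bounded and hence $W_p$ is well-defined and indeed a metric, as noted in the preliminaries) to obtain
\[
    \bigl| W_p(\mu_n, \nu_n) - W_p(\mu, \nu) \bigr|
    \le W_p(\mu_n, \mu) + W_p(\nu_n, \nu) \,.
\]
The right-hand side tends to zero by the previous step, which yields $W_p(\mu_n, \nu_n) \to W_p(\mu, \nu)$ and concludes the proof.

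There is no real obstacle here; the only subtlety is ensuring the triangle inequality is applicable in the correct form, which requires splitting
\[
    W_p(\mu_n, \nu_n) \le W_p(\mu_n, \mu) + W_p(\mu, \nu) + W_p(\nu, \nu_n)
\]
and the symmetric inequality with $\mu_n, \nu_n$ and $\mu, \nu$ swapped. These two inequalities together give the absolute-value bound above.
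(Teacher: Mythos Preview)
Your proposal is correct and essentially identical to the paper's proof: both apply \cref{fact:wp-metrizes} to convert weak convergence into $W_p$-convergence, and then sandwich $W_p(\mu_n,\nu_n)$ using the triangle inequality to conclude.
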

\begin{proof}
    By the triangle inequality, we have
    \[
        W_p(\mu, \nu) - W_p(\mu, \mu_n) - W_p(\nu_n, \nu)
        \le W_p(\mu_n, \nu_n)
        \le W_p(\mu_n, \mu) + W_p(\mu, \nu) + W_p(\nu, \nu_n) \,.
    \]
    By \cref{fact:wp-metrizes}, the LHS and RHS converge to $W_p(\mu, \nu)$ as $n \to \infty$, so
    the conclusion follows.
\end{proof}

\subsection{Optimal transport via Benamou-Brenier}
\label{section:benamou-brenier}

Let $\{\varrho_t : t \in [0,T]\}$ be a family of measures on $\overline I$ and $\{ v_t : t \in
[0,T]\}$ be a family of \emph{velocity fields} such that $v_t \in L^1(\varrho_t)$ for each $t$. We
start by defining what it means for the family $(\varrho_t, v_t)$ to solve the \emph{continuity
equation}
\[
    \partial_t \varrho_t + \partial_x (\varrho_t v_t) = 0 \,.
\]

\begin{definition}[Weak solution; see {\cite[Section 4.1.2]{San15}}]
    \label{def:continuity-weak}
    Let $(\varrho_t, v_t)$ be a family of measure/velocity field pairs indexed by $t \in [0, T]$
    such that $v_t \in L^1(\varrho)$ for each $t$. We say that $(\varrho_t, v_t)$ is a \emph{weak
    solution to the continuity equation} if, for every test function $\psi \in C^1(\overline I)$,
    the function $t \mapsto \int_{\overline I} \psi \odif \varrho_t$ is absolutely continuous in $t$
    and, for \almev $t$, we have
    \begin{equation}
        \label{eq:continuity-weak}
        \partial_t \int_{\overline I} \psi \odif \varrho_t
        = \int_{\overline I} (\partial_x \psi) v_t \odif \varrho_t \,.
    \end{equation}
    In this case, we call $\varrho_0$ and $\varrho_T$ the \emph{initial} and \emph{final states} of
    the solution, respectively (this makes sense because the above implies that $t \mapsto
    \varrho_t$ is continuous for the weak convergence of measures).
\end{definition}

\begin{remark}
    We will only work with absolutely continuous measures $\odif \varrho_t = u \odif x$, $u \in
    L^2(I)$.
\end{remark}

\begin{remark}
    Applying \eqref{eq:continuity-weak} with a constant test function $\psi$ shows that every
    solution to the continuity equation is \emph{mass conserving}. In particular,
    \cref{prop:pde-solves-continuity} below implies that this is true of the semigroup $P_t$; see
    \cref{cor:pt-mass-conserving}.
\end{remark}

The continuity equation is connected to Wasserstein distance via the \emph{Benamou-Brenier formula}:

\begin{proposition}[Benamou-Brenier formula; see \eg {\cite[Theorem 5.28]{San15}}]
    \label{prop:benamou-brenier}
    Let $\mu, \nu$ be probability measures on $\overline I$. Then
    \[
        W_2^2(\mu, \nu)
        = \min\left\{ \int_0^1 \|v_t\|_{L^2(\varrho_t)}^2 \odif t :
                        \partial_t \varrho_t + \partial_x (\varrho_t v_t) = 0,
                        \varrho_0 = \mu, \varrho_1 = \nu \right\} \,,
    \]
    where the constraint above means that $(\varrho_t, v_t)_{t \in [0,1]}$ is a weak solution to the
    continuity equation with initial state $\mu$ and final state $\nu$.
\end{proposition}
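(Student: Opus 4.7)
The plan is to prove the two inequalities separately: an upper bound via an explicit displacement interpolation, and a lower bound via the identification of weak solutions to the continuity equation with absolutely continuous curves in $(P(\overline I), W_2)$. For the upper bound, I would use the one-dimensional Brenier theorem: since $\overline I = [0,1]$ is totally ordered and $\mu, \nu \in P(\overline I)$, the monotone rearrangement $T : \overline I \to \overline I$ with $\transmap{T}{\mu} = \nu$ is optimal, i.e.\ $\int_{\overline I} |T(x) - x|^2 \odif \mu(x) = W_2^2(\mu, \nu)$. Setting $T_t(x) \define (1-t)x + tT(x)$, $\varrho_t \define \transmap{(T_t)}{\mu}$, and defining a velocity field by $v_t(T_t(x)) \define T(x) - x$, a direct calculation (integrate $\psi \in C^1(\overline I)$ against $\varrho_t$, differentiate in $t$, change variables via $T_t$) shows that $(\varrho_t, v_t)$ is a weak solution of the continuity equation with the required endpoints, and
\[
    \int_0^1 \|v_t\|_{L^2(\varrho_t)}^2 \odif t
    = \int_0^1 \int_{\overline I} |T(x) - x|^2 \odif \mu(x) \odif t
    = W_2^2(\mu, \nu),
\]
attaining the infimum.

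For the lower bound, I would show that every admissible pair $(\varrho_t, v_t)$ makes $t \mapsto \varrho_t$ an absolutely continuous curve in $(P(\overline I), W_2)$ with metric derivative bounded by $\|v_t\|_{L^2(\varrho_t)}$. Concretely, I would regularize $v_t$ in time and space to obtain smooth $v_t^\epsilon$ (and corresponding measures $\varrho_t^\epsilon$), solve the classical flow $\partial_t X_t^\epsilon = v_t^\epsilon(X_t^\epsilon)$, and use the coupling $\transmap{(X_s^\epsilon, X_t^\epsilon)}{\varrho_0^\epsilon}$ to obtain
\[
    W_2(\varrho_s^\epsilon, \varrho_t^\epsilon)^2
    \le \int \left| X_t^\epsilon(x) - X_s^\epsilon(x) \right|^2 \odif \varrho_0^\epsilon(x)
    \le (t-s) \int_s^t \|v_r^\epsilon\|_{L^2(\varrho_r^\epsilon)}^2 \odif r,
\]
by Cauchy-Schwarz applied trajectorywise. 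Passing to the limit $\epsilon \to 0$ using weak compactness of $P(\overline I)$ and lower semicontinuity of $W_2$ preserves this bound for $(\varrho_t, v_t)$. Then partitioning $[0,1]$, applying the triangle inequality for $W_2$, and Cauchy-Schwarz in time yields $W_2(\mu,\nu)^2 \le \int_0^1 \|v_t\|_{L^2(\varrho_t)}^2 \odif t$, giving the reverse inequality.

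The main obstacle is justifying the flow-based coupling in the lower bound when $v_t$ is only $L^2(\varrho_t)$-integrable, which is far too rough for classical Cauchy-Lipschitz theory. The standard resolution is the superposition principle: represent $\varrho_t$ as the law at time $t$ of a random absolutely continuous path in $C([0,1]; \overline I)$ whose pointwise velocity agrees with $v_t$ along \almev trajectory. Tightness in $P(C([0,1]; \overline I))$ is mild because $\overline I$ is compact, but propagating the continuity equation through the mollification and verifying that both the endpoints and the kinetic action are preserved in the limit is the delicate step. Once in place, the dimension-one setting makes life easier still, as one can alternatively express everything through the quantile functions $F_\mu^{-1}, F_\nu^{-1}$ and reduce the identity to a direct computation on $L^2(0,1)$.
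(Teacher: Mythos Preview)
The paper does not prove this proposition: it is quoted as a known result with a reference to \cite[Theorem~5.28]{San15}, and is used as a black box. So there is no ``paper's own proof'' to compare against; your sketch is in fact the standard route taken in the cited reference (and in \cite{AGS05}), namely displacement interpolation for the upper bound and the superposition/AC-curve characterization for the lower bound.

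One small gap worth flagging in your upper bound: the statement is for arbitrary probability measures on $\overline I$, but the Monge map $T$ you invoke need not exist when $\mu$ has atoms. In one dimension the clean fix is exactly the one you allude to at the end: work with the quantile parametrization $s \mapsto ((1-t)F_\mu^{-1}(s) + t F_\nu^{-1}(s))$ on $(0,1)$, which gives a curve $\varrho_t$ and velocity field without ever needing a map $T$ (or first approximate $\mu$ by atomless measures and pass to the limit using lower semicontinuity of the action functional). With that adjustment your outline is complete and matches the textbook argument.
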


We now show our solution to the directed heat equation yields a solution to the continuity equation.
This allows us to conclude several nice properties of this solution, such as mass conservation, as
well as apply the Benamou-Brenier formula to obtain a \emph{transport-energy inequality}.

\begin{condition}
    \label{cond:initial-prob-state}
    $u \in \cU$ is \almev positive and bounded away from zero, and satisfies $\int_I u \odif x = 1$.
\end{condition}

\begin{proposition}
    \label{prop:pde-solves-continuity}
    Let $u \in \cU$ satisfy \cref{cond:initial-prob-state}. Let $\bm{u}(t)$ be the solution to the
    gradient flow problem with initial state $u$, and $\bm{u'}(t)$ its weak derivative for each $t
    \ge 0$. Let $T > 0$ and define the measure/velocity field family $(\varrho_t, v_t)_{t \in
    [0,T]}$ by
    \[
        \odif \varrho_t \define \bm{u}(t) \odif x
        \quad \text{and} \quad
        v_t \define -\frac{\partial_x \downf{\bm{u}(t)}}{\bm{u}(t)}
    \]
    for each $t \in [0, T]$. Then $(\varrho_t, v_t)$ is a weak solution to the continuity equation
    with initial state $\bm{u}(0) \odif x$ and final state $\bm{u}(T) \odif x$.
\end{proposition}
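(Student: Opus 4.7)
The plan is to verify the two requirements of \cref{def:continuity-weak} directly from results already proved about solutions to the gradient flow problem. First I would check that the ingredients are well-defined: by \cref{cond:initial-prob-state} and \cref{obs:uniformly-bounded-away-from-zero}, $\bm{u}(t)$ stays \almev bounded away from zero for all $t \ge 0$, so $v_t = -\partial_x \downf{\bm{u}(t)}/\bm{u}(t)$ is \almev defined; and since $\downf{\bm{u}(t)} \in H^1(I)$ gives $\partial_x \downf{\bm{u}(t)} \in L^2(I) \subset L^1(I)$, the identity $v_t \bm{u}(t) = -\partial_x \downf{\bm{u}(t)}$ yields
\[
    \int_{\overline I} \abs{v_t} \odif \varrho_t
    = \int_I \abs{\partial_x \downf{\bm{u}(t)}} \odif x < +\infty \,,
\]
so $v_t \in L^1(\varrho_t)$ as required.

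Next, for a fixed test function $\psi \in C^1(\overline I)$, I would set $F(t) \define \int_{\overline I} \psi \odif \varrho_t = \inpspace{\psi}{\bm{u}(t)}{L^2(I)}$ and show that $F$ is absolutely continuous with the correct derivative. The idea is to use that $\bm{u} : [0,T] \to L^2(I)$ is absolutely continuous as a Banach-valued function, so $\bm{u}(t) = \bm{u}(0) + \int_{(0,t)} \bm{u'}(s) \odif s$ in $L^2(I)$. Since $\psi \in L^2(I)$ (as $I$ is bounded), pairing with $\psi$ and commuting the continuous linear pairing with the Bochner integral gives
\[
    F(t) = F(0) + \int_{(0,t)} \inpspace{\psi}{\bm{u'}(s)}{L^2(I)} \odif s \,,
\]
where the integrand lies in $L^1(0,T)$ because $\bm{u'} \in L^2(0,T; L^2(I))$ by \cref{def:nice-evolution-functions}. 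This yields absolute continuity of $F$ together with the identification $F'(t) = \inpspace{\psi}{\bm{u'}(t)}{L^2(I)}$ for \almev $t$.

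To close the argument, I would invoke \cref{cor:evolution-solution-gradient-to-neumann}: for \almev $t$, the pair $\bm{u}(t), \bm{u'}(t)$ solves the static Neumann problem weakly, so applying \eqref{eq:weak-solution} with test function $\phi = \psi$ (using $C^1(\overline I) \subset H^1(I)$) gives
\[
    F'(t) = -\int_I (\partial_x \psi)(\partial_x \downf{\bm{u}(t)}) \odif x
          = \int_{\overline I} (\partial_x \psi) v_t \odif \varrho_t \,,
\]
where the last equality unpacks the definitions of $v_t$ and $\varrho_t$. This is exactly \eqref{eq:continuity-weak}. Overall this is essentially a direct chase through the definitions, with no serious obstacle; the only mildly technical step is the Bochner-integral manipulation that transports the time derivative through the pairing with $\psi$, and this is routine given the regularity of $\bm{u'}$ provided by the nice-evolution-function condition.
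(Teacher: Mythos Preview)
Your proposal is correct and follows the same overall outline as the paper---verify well-definedness of $v_t$, show $t \mapsto \int_{\overline I} \psi \odif\varrho_t$ is absolutely continuous with derivative $\inpspace{\psi}{\bm{u'}(t)}{L^2(I)}$, and then invoke the static Neumann weak formulation---but the middle step is executed differently. The paper first passes to concrete jointly measurable representatives $\bm{\tilde u}, \bm{\tilde u'}$ via \cref{lemma:joint-representation}, defines $f(t,x) = \psi(x)\bm{\tilde u}(t,x)$, and then checks the hypotheses of the Leibniz rule (\cref{lemma:leibniz}) to differentiate under the integral sign; this is more hands-on and avoids any abstract Bochner-integral reasoning. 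Your route is more economical: since $\inpspace{\psi}{\cdot}{L^2(I)}$ is a bounded linear functional, it commutes with the Bochner integral $\bm{u}(t) = \bm{u}(0) + \int_{(0,t)} \bm{u'}(s)\odif s$, immediately yielding absolute continuity of $F$ and the formula for $F'$. Both arguments land at the same place and then close identically via \cref{cor:evolution-solution-gradient-to-neumann} (the paper additionally cites \cref{lemma:elliptic-regularity} to justify writing $\bm{u'}(t) = \partial_x\partial_x\downf{\bm{u}(t)}$ and integrating by parts, whereas you go straight through \eqref{eq:weak-solution}, which is equivalent). Your approach is cleaner for this particular proposition; the paper's joint-representation machinery is developed because it is reused elsewhere (e.g.\ in \cref{prop:pt-diff-nonincreasing}), where pointwise-in-$x$ control is genuinely needed.
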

\begin{proof}
    We first note that, by \cref{obs:uniformly-bounded-away-from-zero}, each $\bm{u}(t)$ is \almev
    positive and bounded away from zero, uniformly in $t$. In particular, this justifies the
    denominator in the definition of $v_t$. Furthermore, recall that $\bm{u}(t) \in \cU$ for all $t
    \ge 0$ by \cref{cor:evolution-solution-gradient-to-neumann,def:neumann-evolution}, so the
    numerator in the definition of $v_t$ is defined up to sets of measure zero for each $t$. (In
    fact, by
    \cref{def:gradient-flow,prop:static-solution-gradient-to-neumann,lemma:elliptic-regularity} we
    have $\downf{\bm{u}(t)} \in H^2(I)$ for \almev $t > 0$, in which case the numerator is even
    pointwise well-defined.)

    We now verify that $(\varrho_t, v_t)$ satisfies \cref{def:continuity-weak}. The initial and
    final states are as claimed by construction. Let $\psi \in C^1(\overline I)$. Let $\bm{\tilde
    u'}, \bm{\tilde u} : [0, T] \times I \to \bR$ be obtained by applying
    \cref{lemma:joint-representation} to $\bm{u}$ with time domain $[0, T]$. Define the (jointly
    measurable) function $f : [0, T] \times I \to \bR$ by
    \[
        f(t, x) \define \psi(x) \bm{\tilde u}(t, x) \,.
    \]
    We claim that $f$ satisfies the properties of \cref{lemma:leibniz}. It is clear that $f(t,
    \cdot) \in L^2(I) \subset L^1(I)$ for each $t \in [0, T]$ since $\bm{u}(t, \cdot) \in L^2(I)$
    while $\psi$ is bounded, so the first property is satisfied. Also, for each $x \in I$ we have
    that $\bm{\tilde u}(\cdot, x)$ is AC and hence so is $f(\cdot, x)$, so the second property is
    satisfied. Finally, by properties 1 and 5 of \cref{lemma:joint-representation} and the
    Cauchy-Schwarz inequality,
    \[
        \int_{(0, T)} \int_I \abs*{\partial_t f(t, x)} \odif x \odif t
        = \int_{(0, T)} \int_I \abs*{\psi(x) \bm{\tilde u'}(t, x)} \odif x \odif t
        \le \|\psi\|_{L^2(I)} \int_{(0, T)} \|\bm{u'}(t)\|_{L^2(I)} \odif t < +\infty \,,
    \]
    the last inequality since $\psi \in C^1(\overline I)$ while $\bm{u'} \in L^2(0, T; L^2(I))$ by
    \cref{def:nice-evolution-functions}. Hence the third property is satisfied and
    \cref{lemma:leibniz} applies. Thus the function mapping each $t \in [0, T]$ to
    \[
        \int_{\overline I} \psi \odif \varrho_t
        = \int_I \psi \bm{u}(t) \odif x
        = \int_I \psi(x) \bm{\tilde u}(t, x) \odif x
        = \int_I f(t, x) \odif x
    \]
    is absolutely continuous and, for \almev $t \in (0, T)$,
    \begin{align*}
        \partial_t \int_{\overline I} \psi \odif \varrho_t
        &= \int_I \partial_t f(t, x) \odif x
        = \int_I \psi(x) \bm{\tilde u'}(t, x) \odif x
            & \text{(Property 5 of \cref{lemma:joint-representation})} \\
        &= \int_I \psi \bm{u'}(t) \odif x
            & \text{(Property 1 of \cref{lemma:joint-representation})} \\
        &= \int_I \psi \partial_x \partial_x \downf{\bm{u}(t)} \odif x
            & \text{(\cref{prop:static-solution-gradient-to-neumann,lemma:elliptic-regularity})}
            \\
        &= -\int_I (\partial_x \psi) (\partial_x \downf{\bm u(t)}) \odif x
            & \text{(Integration by parts, \cref{lemma:elliptic-regularity})} \\
        &= \int_I (\partial_x \psi)
                  \left( -\frac{\partial_x \downf{\bm{u}(t)}}{\bm{u}(t)} \right)
                  \bm{u}(t) \odif x
        = \int_{\overline I} (\partial_x \psi) v_t \odif \varrho_t \,. 
        && \qedhere
    \end{align*}
\end{proof}

\begin{corollary}[$P_t$ is mass conserving]
    \label{cor:pt-mass-conserving}
    Let $u \in L^2(I)$. Then for all $t > 0$, $\int_I P_t u \odif x = \int_I u \odif x$.
\end{corollary}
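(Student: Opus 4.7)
The plan is to deduce mass conservation directly from \cref{prop:pde-solves-continuity} via a constant test function, and then to bootstrap from the restricted setting of that proposition (initial states satisfying \cref{cond:initial-prob-state}) to all of $L^2(I)$ via affine transformations and density.

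First I would handle the case $u \in \cU$ satisfying \cref{cond:initial-prob-state}. Apply \cref{prop:pde-solves-continuity} with $T > 0$ arbitrary to obtain the weak solution $(\varrho_t, v_t)_{t \in [0, T]}$ of the continuity equation with $\odif \varrho_t = \bm{u}(t) \odif x$ and $\bm{u}(t) = P_t u$. Taking the test function $\psi \equiv 1 \in C^1(\overline I)$ in \cref{def:continuity-weak}, the fact that $\partial_x \psi = 0$ forces the RHS of \eqref{eq:continuity-weak} to vanish for \almev $t$, while the map $t \mapsto \int_{\overline I} \psi \odif \varrho_t = \int_I P_t u \odif x$ is absolutely continuous. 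Hence this map is constant on $[0, T]$, and so $\int_I P_t u \odif x = \int_I u \odif x$ for all $t \in [0, T]$. Since $T$ is arbitrary, this holds for all $t > 0$.

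Next I would extend to arbitrary $u \in \cU$. Given such $u$, pick a sufficiently large $M > 0$ so that $v \define u + M$ is \almev positive and bounded away from zero (this is possible because $u \in L^2(I)$ so, say, $M = \|u\|_{L^2(I)} + 1$ suffices after modifying on a null set; alternatively use the canonical representation). Then $v \in \cU$ by \cref{lemma:shift-rep}, and $c \define \int_I v \odif x = \int_I u \odif x + M > 0$, so $\tilde v \define v/c$ satisfies \cref{cond:initial-prob-state}. The first step gives $\int_I P_t \tilde v \odif x = 1$, and \cref{prop:pt-transformations} yields $P_t u = P_t(c \tilde v - M) = c P_t \tilde v - M$, so $\int_I P_t u \odif x = c - M = \int_I u \odif x$.

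Finally I would extend to arbitrary $u \in L^2(I)$ by a density-and-continuity argument: since $\cU \supseteq H^1(I)$ is dense in $L^2(I)$ (see the remark after \cref{def:down-h1-functions}), pick a sequence $(u_n)_n$ in $\cU$ with $u_n \to u$ in $L^2(I)$. The functional $w \mapsto \int_I w \odif x = \inpspace{w}{1}{L^2(I)}$ is continuous on $L^2(I)$ by Cauchy-Schwarz, and $P_t$ is continuous from $L^2(I)$ to $L^2(I)$ by the nonexpansive property in \cref{res:solution}. Hence $P_t u_n \to P_t u$ in $L^2(I)$ and, passing to the limit in $\int_I P_t u_n \odif x = \int_I u_n \odif x$ (which holds by the previous step), we conclude $\int_I P_t u \odif x = \int_I u \odif x$. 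There is no real obstacle here; the only mild subtlety is ensuring the affine reduction in the second step preserves \cref{cond:initial-prob-state}, which is immediate given the shift-and-scale behaviour established in \cref{prop:pt-transformations}.
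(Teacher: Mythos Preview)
Your overall three-step strategy is exactly the paper's, and steps 1 and 3 are fine. The gap is in step 2: you assert that for any $u \in \cU$ one can choose $M > 0$ so that $u + M$ is \almev bounded away from zero, suggesting $M = \|u\|_{L^2(I)} + 1$. This conflates the $L^2$ norm with the $L^\infty$ norm. Elements of $\cU \subset L^2(I)$ need not be essentially bounded below: for instance $u(x) = -x^{-1/4}$ is nondecreasing and in $L^2(I)$ (hence in $\cU$), yet $\essinf u = -\infty$, so no finite shift puts it into \cref{cond:initial-prob-state}. The ``canonical representation'' alternative does not rescue this either, since $\upf{u}$ is only guaranteed to lie in $L^2(I)$.

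The paper sidesteps this by running step 2 only for $u \in H^1(I)$, which embeds into $C(\overline I)$ and is therefore bounded; then an affine map $\alpha u + \beta$ genuinely lands in \cref{cond:initial-prob-state}, and \cref{prop:pt-transformations} applies. Step 3 then uses density of $H^1(I)$ (rather than $\cU$) in $L^2(I)$. Your argument is repaired verbatim by making this same restriction in step 2.
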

\begin{proof}
    If $u$ satisfies \cref{cond:initial-prob-state}, then this follows from
    \cref{prop:pde-solves-continuity} by taking any constant test function $\psi$ in
    \cref{def:continuity-weak}.

    If $u \in H^1(I)$, then it is bounded, so let $\alpha > 0$, $\beta \in \bR$ be such that $v
    \define \alpha u + \beta \in H^1(I)$ satisfies \cref{cond:initial-prob-state}. Then $u =
    \frac{1}{\alpha} v - \frac{\beta}{\alpha}$ and, by \cref{prop:pt-transformations} and the above,
    \[
        \int_I P_t u \odif x
        = \int_I \left( \frac{1}{\alpha} P_t v - \frac{\beta}{\alpha} \right) \odif x
        = -\frac{\beta}{\alpha} + \frac{1}{\alpha} \int_I P_t v \odif x
        = -\frac{\beta}{\alpha} + \frac{1}{\alpha} \int_I v \odif x
        = \int_I \left( \frac{1}{\alpha} v - \frac{\beta}{\alpha} \right) \odif x
        = \int_I u \odif x \,.
    \]
    Finally, let $u \in \cU$ be arbitrary. Since $H^1(I)$ is dense in $L^2(I)$, let $(u_n)_n$ be a
    sequence in $H^1(I)$ such that $u_n \to u$ in $L^2(I)$. By the continuity of $P_t$ from $L^2(I)$
    to $L^2(I)$, and using the above,
    \[
        \int_I P_t u \odif x
        = \lim_{n \to \infty} \int_I P_t u_n \odif x
        = \lim_{n \to \infty} \int_I u_n \odif x
        = \int_I u \odif x \,. \qedhere
    \]
\end{proof}

By passing to the limit $P_t u \to P_\infty u$ in $L^2(I)$, we also conclude

\begin{corollary}[$P_\infty$ is mass conserving]
    \label{cor:p-infty-mass-conserving}
    Let $u \in L^2(I)$. Then $\int_I P_\infty u \odif x = \int_I u \odif x$.
\end{corollary}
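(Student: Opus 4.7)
The plan is to pass to the limit $t \to \infty$ in the mass-conservation identity for $P_t$ already established as Corollary \ref{cor:pt-mass-conserving}. Concretely, for every $u \in L^2(I)$ and every $t > 0$ we have
\[
    \int_I P_t u \odif x = \int_I u \odif x \,,
\]
so it suffices to show that the left-hand side converges to $\int_I P_\infty u \odif x$ as $t \to \infty$.

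For this, I would invoke Definition \ref{def:p-infty} (together with Proposition \ref{prop:convergence-to-equilibrium}), which gives that $P_t u \to P_\infty u$ in $L^2(I)$ as $t \to \infty$. Since $I = (0,1)$ has finite Lebesgue measure, the Cauchy–Schwarz inequality yields
\[
    \left| \int_I P_t u \odif x - \int_I P_\infty u \odif x \right|
    \le \int_I \abs*{P_t u - P_\infty u} \odif x
    \le \|P_t u - P_\infty u\|_{L^2(I)} \cdot \cL(I)^{1/2}
    \to 0 \,,
\]
so $\int_I P_t u \odif x \to \int_I P_\infty u \odif x$. Combining with the identity from Corollary \ref{cor:pt-mass-conserving} yields the desired equality.

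There is no real obstacle here; the statement is an immediate consequence of mass conservation along the flow together with strong $L^2$ convergence to the monotone equilibrium, and the only thing to note is that $L^2$-convergence on a set of finite measure implies convergence of the integrals.
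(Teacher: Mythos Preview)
Your proof is correct and takes essentially the same approach as the paper, which simply remarks that the result follows by passing to the limit $P_t u \to P_\infty u$ in $L^2(I)$ in \cref{cor:pt-mass-conserving}. Your added detail using Cauchy--Schwarz to justify convergence of the integrals is a reasonable expansion of this one-line argument.
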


We now upper bound the Wasserstein distance between $u \in \cU$ and its monotone equilibrium by
combining the Benamou-Brenier formula with the exponential decay of the directed Dirichlet energy,
at least as long as $u$ is positive and bounded away from zero.

\thmundirectedtransportenergy
\begin{proof}
    Let $a \define \inf u$. By \cref{cor:pt-order-preserving,cor:pt-mass-conserving}, $P_t u$
    satisfies \cref{cond:initial-prob-state} with $P_t u \ge a$ \almev for all $t \ge 0$. For each
    $t \ge 0$, define the measure $\odif \varrho_t \define (P_t u) \odif x$. Let $\ell \in \bZ_{\ge
    0}$. Then \cref{prop:pde-solves-continuity} applies to time interval $[\ell, \ell+1]$ (we
    refrain from introducing excessive notation for such relabeling of intervals) with initial/final
    states $\varrho_\ell$ and $\varrho_{\ell+1}$, respectively, so that \cref{prop:benamou-brenier}
    gives
    \begin{align*}
        W_2^2(\varrho_\ell, \varrho_{\ell+1})
        &\le \int_\ell^{\ell+1} \int_I
            \left( -\frac{\partial_x \downf{\bm{u}(t)}}{\bm{u}(t)} \right)^2
            \bm{u}(t) \odif x \odif t
        = \int_\ell^{\ell+1} \int_I \frac{(\partial_x \downf{\bm{u}(t)})^2}{\bm{u}(t)}
            \odif x \odif t \\
        &\le \frac{1}{a} \int_\ell^{\ell+1} \int_I (\partial_x \downf{\bm{u}(t)})^2 \odif x \odif t
        = \frac{2}{a} \int_\ell^{\ell+1} \cE^-(P_t u) \odif t \\
        &\le \frac{2}{a} \cE^-(u) \int_\ell^{\ell+1} e^{-Kt} \odif t
        < \frac{2}{a} \cE^-(u) \int_\ell^\infty e^{-Kt} \odif t
        = \frac{2 e^{-K \ell}}{a K} \cE^-(u) \,,
    \end{align*}
    where the penultimate inequality and the constant $K$ come from \cref{prop:exponential-decay}.
    Now, by the triangle inequality (on the distance $W_2$, not the squared distance $W_2^2$), for
    every $\ell \in \bZ_{\ge 1}$ we have
    \begin{align*}
        W_2(\varrho_0, \varrho_\ell)
        &\le \sum_{j=0}^{\ell-1} W_2(\varrho_j, \varrho_{j+1})
        \le \left( \frac{2}{aK} \cE^-(u) \right)^{1/2} \sum_{j=0}^\infty e^{-Kj/2}
        = \left( \frac{2}{aK} \cE^-(u) \right)^{1/2} \frac{1}{1 - e^{-K/2}} \,.
    \end{align*}
    Therefore, extracting the appropriate constant $C$ from the terms involving $K$, we obtain
    \[
        W_2(\varrho_0, \varrho_\ell) \le \frac{C}{a^{1/2}} (\cE^-(u))^{1/2} \,.
    \]
    Recall that $P_t u \weakto P_\infty u$ weakly in $L^2(I)$ by
    \cref{prop:convergence-to-equilibrium}. Hence, defining the absolutely continuous measure
    $\odif \varrho_\infty \define (P_\infty u) \odif x$, we conclude that $\varrho_t \weakto
    \varrho_\infty$ as $t \to \infty$. Then by \cref{cor:wp-continuous},
    \[
        W_2(\varrho_0, \varrho_\infty)
        = \lim_{\ell \to \infty} W_2(\varrho_0, \varrho_\ell)
        \le \frac{C}{a^{1/2}} (\cE^-(u))^{1/2} \,. \qedhere
    \]
\end{proof}

The definition of the velocity field $v_t$ in \cref{prop:pde-solves-continuity}, along with the fact
that $\partial_x \downf{\bm{u}(t)} \le 0$, gives $v_t \ge 0$, which intuitively says that
``particles only move to the right''\!\!. This suggests that the solution $\bm{u}(t)$ can only lose
mass in any prefix of the interval $(0,1)$ over time. The following items make this observation
rigorous.

\begin{lemma}
    \label{lemma:pt-prefix}
    Let $\delta \in I$ and let $J \define (0, \delta)$. Let $u \in L^2(I)$. Then for all $t > 0$,
    \[
        \int_J (P_t u) \odif x \le \int_J u \odif x \,.
    \]
\end{lemma}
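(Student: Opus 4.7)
The plan is to exploit \cref{prop:pde-solves-continuity} to realize the evolution $P_t u$ as a solution to the continuity equation and then test against a smooth approximation of the indicator $\chi_J$. The crucial fact is that the velocity field $v_t = -\partial_x \downf{\bm u(t)} / \bm u(t)$ is \emph{nonnegative} (since $\downf{\bm u(t)}$ is nonincreasing), so particles can only move to the right; this is exactly what should make the mass in the prefix $J$ nonincreasing.

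First I would reduce to the case where $u$ satisfies \cref{cond:initial-prob-state}. By density of $H^1(I)$ in $L^2(I)$ and continuity of $P_t : L^2(I) \to L^2(I)$ (together with the fact that $f \mapsto \int_J f \odif x$ is a continuous linear functional on $L^2(I)$), it suffices to prove the lemma for $u \in H^1(I)$. For such $u$, choose $\alpha > 0$ and $\beta \in \bR$ so that $v \define \alpha u + \beta$ is \almev positive, bounded away from zero (using boundedness of $H^1(I)$ functions), and of unit total mass. Then $v$ satisfies \cref{cond:initial-prob-state}, and by \cref{prop:pt-transformations} we have $P_t u = \frac{1}{\alpha}(P_t v - \beta)$, so the inequality for $v$ transfers to $u$ after rearrangement (the additive $\beta|J|$ terms cancel).

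Next, assume $u$ satisfies \cref{cond:initial-prob-state}. Fix any $T > t$ and apply \cref{prop:pde-solves-continuity} to obtain a weak solution $(\varrho_s, v_s)_{s \in [0,T]}$ of the continuity equation with $\odif \varrho_s = \bm u(s) \odif x$ and $v_s = -\partial_x \downf{\bm u(s)} / \bm u(s)$. For each small $\epsilon \in (0, 1-\delta)$, construct a nonincreasing test function $\psi_\epsilon \in C^1(\overline I)$ with $\psi_\epsilon \equiv 1$ on $[0, \delta]$, $\psi_\epsilon \equiv 0$ on $[\delta+\epsilon, 1]$, and $\partial_x \psi_\epsilon \le 0$ everywhere. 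Plugging into \eqref{eq:continuity-weak},
\[
    \partial_s \int_{\overline I} \psi_\epsilon \odif \varrho_s
    = \int_{\overline I} \underbrace{(\partial_x \psi_\epsilon)}_{\le 0}\, \underbrace{v_s}_{\ge 0}\, \odif \varrho_s
    \le 0
\]
for \almev $s \in (0,T)$, since $\odif \varrho_s \ge 0$. Absolute continuity of $s \mapsto \int \psi_\epsilon \odif \varrho_s$ (given by \cref{def:continuity-weak}) then yields $\int \psi_\epsilon \bm u(t) \odif x \le \int \psi_\epsilon u \odif x$. Letting $\epsilon \downarrow 0$ and using dominated convergence (the integrands are dominated by $\abs{\bm u(t)}$ and $\abs{u}$, respectively, both in $L^1(I)$), we conclude $\int_J \bm u(t) \odif x \le \int_J u \odif x$.

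The argument is essentially structural and I do not expect serious obstacles. The only points requiring care are the density/rescaling reduction at the start (straightforward using \cref{prop:pt-transformations}) and the construction of the test function approximating $\chi_J$ with controlled sign of the derivative, which is a routine mollification.
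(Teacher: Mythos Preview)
Your proposal is correct and follows essentially the same approach as the paper: reduce to \cref{cond:initial-prob-state} via density and \cref{prop:pt-transformations}, then test the continuity equation from \cref{prop:pde-solves-continuity} against a smooth nonincreasing approximation of $\chi_J$, using $v_t \ge 0$ to conclude monotonicity in time. The paper uses the mollification $\chi_{(-\infty,\delta)}^\epsilon$ rather than your explicit cutoff, but this is a cosmetic difference.
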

\begin{proof}
    Suppose $u$ satisfies \cref{cond:initial-prob-state}; the general case will then follow by the
    same arguments as in the proof of \cref{cor:pt-mass-conserving}. Let $J' \define (-\infty,
    \delta)$, let $\chi_{J'} : \bR \to [0,1]$ be the characteristic function for the set $J'$, and
    for each $\epsilon > 0$, let $\chi_{J'}^\epsilon$ be its mollification by the standard
    mollifier. Note that each $\chi_{J'}^\epsilon$ is nonincreasing. Also note that $\chi_{J'} =
    \chi_J$ pointwise in $I$. Now, fix $T > 0$. Since the family $(\varrho_t, v_t)_{t \in [0,T]}$
    given by \cref{prop:pde-solves-continuity} solves the continuity equation, applying
    \eqref{eq:continuity-weak} with test function $\chi_{J'}^\epsilon \in C^\infty(\overline I)$
    gives that $t \mapsto \int_I \chi_{J'}^\epsilon \bm{u}(t) \odif x$ is AC and, for \almev $t \in
    [0, T]$,
    \[
        \partial_t \int_I \chi_{J'}^\epsilon \bm{u}(t) \odif x
        = \int_I
            \underbrace{(\partial_x \chi_{J'}^\epsilon)}_{\le 0}
            \underbrace{v_t}_{\ge 0} 
            \underbrace{\bm{u}(t)}_{\ge 0} \odif x
        \le 0 \,.
    \]
    Hence $t \mapsto \int_I \chi_{J'}^\epsilon \bm{u}(t) \odif x$ is nonincreasing, and obtain
    \[
        \inpspace{\chi_{J'}^\epsilon}{u}{L^2(I)}
        = \int_I \chi_{J'}^\epsilon \bm{u}(0) \odif x
        \ge \int_I \chi_{J'}^\epsilon \bm{u}(T) \odif x
        = \inpspace{\chi_{J'}^\epsilon}{P_T u}{L^2(I)} \,.
    \]
    Now, since $\chi_{J'} \in L^2_\loc(\bR)$, we have that $\chi_{J'}^\epsilon \to \chi_{J'}$ in
    $L^2_\loc(\bR)$, so in particular $\chi_{J'}^\epsilon \to \chi_{J'} = \chi_J$ in $L^2(I)$. Thus
    we have
    \[
        \int_J (P_T u) \odif x
        = \inpspace{\chi_J}{P_T u}{L^2(I)}
        = \lim_{\epsilon \to 0} \inpspace{\chi_{J'}^\epsilon}{P_T u}{L^2(I)}
        \le \lim_{\epsilon \to 0} \inpspace{\chi_{J'}^\epsilon}{u}{L^2(I)}
        = \inpspace{\chi_J}{u}{L^2(I)}
        = \int_J u \odif x \,. \qedhere
    \]
\end{proof}

\begin{corollary}
    \label{cor:p-infty-prefix}
    Let $\delta \in I$ and let $J \define (0, \delta)$. Let $u \in L^2(I)$. Then
    \[
        \int_J (P_\infty u) \odif x \le \int_J u \odif x \,.
    \]
\end{corollary}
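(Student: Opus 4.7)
The plan is to obtain this as a direct consequence of \cref{lemma:pt-prefix} by passing to the limit $t \to \infty$. The only ingredients we need are already in place: \cref{lemma:pt-prefix} gives $\int_J (P_t u) \odif x \le \int_J u \odif x$ for every finite $t > 0$, and \cref{def:p-infty} (together with \cref{prop:convergence-to-equilibrium}) gives that $P_t u \to P_\infty u$ in $L^2(I)$ as $t \to \infty$.

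The key observation is that $\int_J f \odif x = \inpspace{\chi_J}{f}{L^2(I)}$, where $\chi_J \in L^2(I)$ since $J \subset I$ is bounded. Thus, by continuity of the $L^2(I)$ inner product with the fixed element $\chi_J$ (equivalently by the Cauchy--Schwarz inequality), strong convergence $P_t u \to P_\infty u$ in $L^2(I)$ yields
\[
    \int_J (P_t u) \odif x
    = \inpspace{\chi_J}{P_t u}{L^2(I)}
    \longrightarrow \inpspace{\chi_J}{P_\infty u}{L^2(I)}
    = \int_J (P_\infty u) \odif x
\]
as $t \to \infty$. Combining this with the uniform-in-$t$ bound $\inpspace{\chi_J}{P_t u}{L^2(I)} \le \int_J u \odif x$ from \cref{lemma:pt-prefix}, we conclude the desired inequality.

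There is essentially no obstacle here; the result is a clean ``pass-to-the-limit'' corollary of \cref{lemma:pt-prefix}, made possible by strong $L^2$ convergence of $P_t u$ to $P_\infty u$. No approximation argument or further regularity hypothesis on $u$ is required, because the inequality in \cref{lemma:pt-prefix} already applies to every $u \in L^2(I)$.
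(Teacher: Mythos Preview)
Your proof is correct and follows exactly the paper's approach: the paper's proof simply states that the result follows from \cref{lemma:pt-prefix} together with the convergence $P_t u \to P_\infty u$ in $L^2(I)$, and you have spelled out the one-line continuity argument (pairing against $\chi_J \in L^2(I)$) that makes this precise.
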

\begin{proof}
    This follows from \cref{lemma:pt-prefix} along with the convergence $P_t u \to P_\infty u$ in
    $L^2(I)$.
\end{proof}

\begin{definition}
    \label{def:domination}
    Let $\mu$ and $\nu$ be probability measures on $\bR$. We say $\mu$ \emph{dominates} $\nu$, and
    write $\mu \succeq \nu$, if for every $x \in \bR$ we have $\mu(-\infty, x) \ge \nu(-\infty, x)$.
\end{definition}


\begin{corollary}
    \label{cor:mu-infty-dominated}
    Let $u \in \cU$ satisfy \cref{cond:initial-prob-state}. Define the measures $\odif \mu \define u
    \odif x$ and $\odif \mu_\infty \define (P_\infty u) \odif x$. Then $\mu \succeq \mu_\infty$.
\end{corollary}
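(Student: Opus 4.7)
The plan is to reduce the claim directly to \cref{cor:p-infty-prefix}, which already bounds the mass of $P_\infty u$ on prefixes $(0,\delta)$ by the corresponding mass of $u$. The only work is to verify that the distribution-function inequality $\mu(-\infty, x) \ge \mu_\infty(-\infty, x)$ from \cref{def:domination} holds on all of $\bR$, not just on $(0,1)$, and to handle the boundary carefully.

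First I would observe that both $\mu$ and $\mu_\infty$ are absolutely continuous probability measures supported in $\overline I = [0,1]$: for $\mu$ this is immediate from \cref{cond:initial-prob-state}, and for $\mu_\infty$ it follows since $P_\infty u \in \cU \subset L^2(I)$ is nonnegative (by \cref{cor:pt-order-preserving} combined with \cref{obs:uniformly-bounded-away-from-zero}, $P_t u \ge \inf u > 0$ a.e., and this passes to the limit $P_\infty u$ in $L^2(I)$), and has total mass one by \cref{cor:p-infty-mass-conserving}.

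Next I would split into cases on $x \in \bR$. For $x \le 0$, both $\mu(-\infty,x)$ and $\mu_\infty(-\infty,x)$ vanish, since both measures are supported in $[0,1]$. For $x \ge 1$, both equal $1$ by the total-mass observation above (using the absence of atoms so that the endpoint does not matter). The only nontrivial range is $x \in (0,1)$, where
\[
    \mu(-\infty, x) = \int_{(0,x)} u \odif y
    \quad \text{and} \quad
    \mu_\infty(-\infty, x) = \int_{(0,x)} (P_\infty u) \odif y \,,
\]
and here \cref{cor:p-infty-prefix} applied with $\delta = x$ gives exactly
\[
    \int_{(0,x)} (P_\infty u) \odif y \le \int_{(0,x)} u \odif y \,,
\]
which is the desired inequality $\mu_\infty(-\infty, x) \le \mu(-\infty, x)$.

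There is no real obstacle here: the physically motivated fact that the directed heat flow moves mass only to the right has already been captured quantitatively in \cref{lemma:pt-prefix} (via the nonpositivity of $\partial_x \downf{\bm u(t)}$ and the nonnegativity of the velocity field $v_t$ in \cref{prop:pde-solves-continuity}) and then passed to the limit $P_\infty$ in \cref{cor:p-infty-prefix}. All that remains is to repackage this into the language of \cref{def:domination}, which the case analysis above does.
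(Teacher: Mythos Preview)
Your proposal is correct and matches the paper's own proof almost exactly: both do the case split on $x \le 0$, $x \in (0,1)$, $x \ge 1$, invoke \cref{cor:p-infty-prefix} for the interior range, and use \cref{cor:p-infty-mass-conserving} plus support considerations for the endpoints. The only difference is that you spell out why $\mu_\infty$ is a probability measure with nonnegative density, which the paper leaves implicit.
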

\begin{proof}
    We may view $\mu$ and $\mu_\infty$ as absolutely continuous measures on all of $\bR$, taking
    zero outside $[0,1]$. Now, for each $x \in [0,1]$, \cref{cor:p-infty-prefix} gives
    \[
        \mu(-\infty, x)
        = \int_{(0,x)} u \odif y
        \ge \int_{(0,x)} (P_\infty u) \odif y
        = \mu_\infty(-\infty, x) \,.
    \]
    On the other hand, for $x < 0$ both sides are zero, and for $x > 1$, both sides are $1$ by
    \cref{cor:p-infty-mass-conserving}.
\end{proof}

\subsection{Directed optimal transport and Pythagorean composition}
\label{section:directed-optimal-transport}

We now introduce directed versions of basic optimal transport concepts and theory, with the goal of
combining a directed version of weak Kantorovich duality with a multidimensional directed version of
the transport-energy inequality from \cref{thm:undirected-transport-energy} to obtain our desired
directed Poincar\'e inequality via a perturbation argument. In the interest of space, we refrain
starting a systematic study of directed optimal transport, but rather limit ourselves to the results
we require.

\begin{definition}[Quasimetric]
    Let $X$ be a set. An extended real-valued function $d : X \times X \to [0, +\infty]$, not
    necessarily symmetric, is called a \emph{quasimetric} on $X$ if it satisfies the following:
    \begin{enumerate}
        \item For all $x, y \in X$, $d(x, y) \ge 0$ with $d(x, y) = 0$ if and only if $x = y$.
        \item For all $x, y, z \in X$, $d(x, z) \le d(x, y) + d(y, z)$.
    \end{enumerate}
\end{definition}

\begin{definition}
    Let $\Omega \subset \bR^d$ be a Borel set, and let $\mu$ and $\nu$ be probability measures on
    $\Omega$. We define the set $\Pi(\mu \to \nu)$ of \emph{directed couplings}, or \emph{directed
    transport plans}, from $\mu$ to $\nu$ as
    \[
        \Pi(\mu \to \nu) \define \left\{
            \gamma \in \Pi(\mu, \nu)
            : \int_{\Omega \times \Omega} \chi_{\{x \not\preceq y\}} \odif \gamma(x, y) = 0
            \right\} \,,
    \]
    where the integral is well-defined because the set $\{(x, y) \in \bR^d \times \bR^d : x
    \not\preceq y\}$ is open and hence Borel measurable. Note that the condition could also be
    written as $\gamma\left( \{x \not\preceq y \} \right) = 0$, or $\gamma\left( \{x \preceq y\}
    \right) = 1$.
\end{definition}

\begin{definition}[Directed Wasserstein distance]
    Let $\Omega \subset \bR^d$ be a bounded Borel set and let $p \in [1, \infty)$. Given two
    probability distributions $\mu, \nu$ over $\Omega$, we define the \emph{directed $p$-Wasserstein
    distance from $\mu$ to $\nu$} by
    \[
        W_p(\mu \to \nu) \define \inf_{\gamma \in \Pi(\mu \to \nu)} C_p(\gamma) \,,
    \]
    and we write $W_p^p(\mu \to \nu) \define W_p(\mu \to \nu)^p$. Note that we may have $W_p(\mu \to
    \nu) = +\infty$, and that $W_p(\cdot \to \cdot)$ is not symmetric in general.
\end{definition}

\begin{observation}
    \label{obs:wp-ineq}
    For all measures $\mu$ and $\nu$, we have $\Pi(\mu \to \nu) \subset \Pi(\mu, \nu)$ and hence
    $W_p(\mu, \nu) \le W_p(\mu \to \nu)$.
\end{observation}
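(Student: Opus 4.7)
The plan is to unpack both assertions directly from the definitions; this observation is essentially a tautology, so no real technical work is required. First, for the inclusion $\Pi(\mu \to \nu) \subset \Pi(\mu, \nu)$, I would note that the directed set $\Pi(\mu \to \nu)$ was defined as the subset of $\Pi(\mu, \nu)$ consisting of those couplings $\gamma$ that additionally satisfy $\gamma(\{x \not\preceq y\}) = 0$, so the inclusion is immediate.

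Second, for the Wasserstein inequality, I would invoke the elementary monotonicity principle that the infimum of a function over a smaller set is at least as large as the infimum over a larger set. Concretely, since $\Pi(\mu \to \nu) \subset \Pi(\mu, \nu)$ and the cost $C_p$ is defined identically on both,
\[
    W_p(\mu, \nu)
    = \inf_{\gamma \in \Pi(\mu, \nu)} C_p(\gamma)
    \le \inf_{\gamma \in \Pi(\mu \to \nu)} C_p(\gamma)
    = W_p(\mu \to \nu) \,.
\]
No obstacle arises; the only minor subtlety worth flagging is that $\Pi(\mu \to \nu)$ could in principle be empty (in which case $W_p(\mu \to \nu) = +\infty$ by the usual convention), but the stated inequality then holds vacuously.
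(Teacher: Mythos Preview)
Your proposal is correct and matches the paper's treatment: the paper states this as an observation without proof, since the inclusion $\Pi(\mu \to \nu) \subset \Pi(\mu, \nu)$ is literally part of the definition of $\Pi(\mu \to \nu)$, and the Wasserstein inequality then follows immediately from monotonicity of the infimum.
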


\begin{lemma}[Gluing lemma; see \eg {\cite[Lemma 5.5]{San15}}]
    \label{lemma:gluing}
    Let $\Omega \subset \bR^d$ be a Borel set. Let $\mu, \varrho, \nu$ be probability measures on
    $\Omega$, and let $\gamma^+ \in \Pi(\mu, \varrho)$ and $\gamma^- \in \Pi(\varrho, \nu)$. Then
    there exists a probability measure $\sigma$ on $\Omega \times \Omega \times \Omega$ such that
    $\transmap{(\pi_{x,y})}{\sigma} = \gamma^+$ and $\transmap{(\pi_{y,z})}{\sigma} = \gamma^-$.
\end{lemma}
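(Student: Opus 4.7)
The plan is to follow the classical approach via the disintegration theorem, constructing $\sigma$ as a product of conditionals over the common middle marginal $\varrho$. Since $\Omega \subset \bR^d$ is a Borel subset of a Polish space and all measures involved are Borel probability measures (hence Radon), the disintegration theorem applies to both $\gamma^+$ and $\gamma^-$ with respect to $\varrho$.

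First, I would disintegrate $\gamma^+$ with respect to its second marginal $\varrho$, obtaining a $\varrho$-\almev uniquely determined measurable family $\{\gamma^+_y\}_{y \in \Omega}$ of probability measures on $\Omega$ satisfying
\[
    \int_{\Omega \times \Omega} f(x,y) \odif \gamma^+(x,y)
    = \int_\Omega \left( \int_\Omega f(x,y) \odif \gamma^+_y(x) \right) \odif \varrho(y)
\]
for every bounded Borel $f$. Analogously, I would disintegrate $\gamma^-$ with respect to its first marginal $\varrho$, obtaining $\{\gamma^-_y\}_{y \in \Omega}$. Next, define the measure $\sigma$ on $\Omega \times \Omega \times \Omega$ by
\[
    \int_{\Omega^3} g(x,y,z) \odif \sigma(x,y,z)
    \define \int_\Omega \int_\Omega \int_\Omega g(x,y,z)
        \odif \gamma^+_y(x) \odif \gamma^-_y(z) \odif \varrho(y)
\]
for bounded Borel $g$, which is well-defined by joint measurability of the conditionals and Fubini-Tonelli. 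Taking $g \equiv 1$ shows that $\sigma$ is a probability measure.

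The marginal conditions are then a direct computation. For $\transmap{(\pi_{x,y})}{\sigma} = \gamma^+$, I would test against an arbitrary bounded Borel $f = f(x,y)$ independent of $z$, so that the innermost integral over $z$ contributes $\gamma^-_y(\Omega) = 1$, reducing the expression to $\int_\Omega \int_\Omega f(x,y) \odif \gamma^+_y(x) \odif \varrho(y) = \int f \odif \gamma^+$ by the disintegration identity. The argument for $\transmap{(\pi_{y,z})}{\sigma} = \gamma^-$ is entirely symmetric.

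The only real obstacle is ensuring the hypotheses of the disintegration theorem are met; this is where it matters that $\Omega$ sits inside $\bR^d$ as a Borel set (hence is a standard Borel space) and that the marginal $\varrho$ is Radon. Everything after that reduces to routine applications of Fubini-Tonelli, and indeed this is why the result is standard enough to be cited directly from \cite[Lemma 5.5]{San15} rather than reproved here.
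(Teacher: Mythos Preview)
Your proposal is correct and follows the standard disintegration-based construction, which is exactly the approach in the cited reference \cite[Lemma 5.5]{San15}. The paper itself does not supply a proof of this lemma but simply invokes the reference, so there is nothing further to compare against.
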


The proof of that $W_p(\cdot \to \cdot)$ is a quasimetric follows the presentation of \cite[Lemma
5.4]{San15}, with a simple additional argument to handle \emph{directed} couplings.


\begin{lemma}[Composition of directed transport plans]
    \label{lemma:directed-composition}
    In \cref{lemma:gluing}, if $\gamma^+ \in \Pi(\mu \to \varrho)$ and $\gamma^- \in \Pi(\varrho \to
    \nu)$, then $\transmap{(\pi_{x,z})}{\sigma} \in \Pi(\mu \to \nu)$.
\end{lemma}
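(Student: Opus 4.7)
The plan is to build directly on \cref{lemma:gluing}: it already furnishes $\sigma$ on $\Omega^3$ with $\transmap{(\pi_{x,y})}{\sigma} = \gamma^+$ and $\transmap{(\pi_{y,z})}{\sigma} = \gamma^-$, so I only need to verify that $\pi \define \transmap{(\pi_{x,z})}{\sigma}$ (i)~is a coupling of $\mu$ and $\nu$, and (ii)~satisfies the directed support condition $\pi(\{x \not\preceq z\}) = 0$.

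For (i), I would observe that pushforward commutes with further projection: the first marginal of $\pi$ is $\transmap{\pi_x}{\sigma} = \transmap{\pi_x}{\bigl(\transmap{\pi_{x,y}}{\sigma}\bigr)} = \transmap{\pi_x}{\gamma^+} = \mu$, using that $\gamma^+ \in \Pi(\mu, \varrho)$. The symmetric computation with $\pi_z$ and $\gamma^-$ gives the second marginal as $\nu$. So $\pi \in \Pi(\mu, \nu)$.

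For (ii), the key idea is transitivity of $\preceq$ on $\bR^d$. Define the sets $A_+ \define \{(x,y,z) \in \Omega^3 : x \preceq y\}$ and $A_- \define \{(x,y,z) \in \Omega^3 : y \preceq z\}$. Using the marginal property of $\sigma$ and the hypothesis that $\gamma^+ \in \Pi(\mu \to \varrho)$,
\[
    \sigma(\Omega^3 \setminus A_+)
    = \transmap{\pi_{x,y}}{\sigma}\bigl(\{(x,y) : x \not\preceq y\}\bigr)
    = \gamma^+\bigl(\{x \not\preceq y\}\bigr)
    = 0 \,,
\]
and analogously $\sigma(\Omega^3 \setminus A_-) = 0$. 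Hence $\sigma(A_+ \cap A_-) = 1$. But by the transitivity of $\preceq$, $A_+ \cap A_- \subseteq \{(x,y,z) : x \preceq z\}$, so $\sigma(\{x \preceq z\}) = 1$, which pushes forward to $\pi(\{x \preceq z\}) = 1$, i.e., $\pi(\{x \not\preceq z\}) = 0$ as required.

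There is no real obstacle here — the lemma is essentially a transitivity argument packaged through the gluing construction. The only subtlety worth being a little careful about is that all the sets involved (namely $\{x \preceq y\}$, $\{y \preceq z\}$, $\{x \preceq z\}$) are Borel in the relevant product spaces (they are closed), so the pushforward identities used above are legitimate.
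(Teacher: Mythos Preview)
Your proposal is correct and takes essentially the same approach as the paper: first verify $\pi \in \Pi(\mu,\nu)$ via composition of pushforwards, then use transitivity of $\preceq$ together with the marginal identities $\transmap{(\pi_{x,y})}{\sigma} = \gamma^+$, $\transmap{(\pi_{y,z})}{\sigma} = \gamma^-$ to conclude $\pi(\{x \not\preceq z\}) = 0$. The only cosmetic difference is that the paper phrases the transitivity step via the indicator inequality $\chi_{\{x \not\preceq z\}} \le \chi_{\{x \not\preceq y\}} + \chi_{\{y \not\preceq z\}}$ integrated against $\sigma$, whereas you phrase it as an intersection of full-measure sets; these are equivalent.
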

\begin{proof}
    Let $\gamma \define \transmap{(\pi_{x,z})}{\sigma}$. First, since $\transmap{(\pi_x)}{\gamma} =
    \transmap{(\pi_x \circ \pi_{x,z})}{\sigma} = \transmap{(\pi_x \circ \pi_{x,y})}{\sigma} =
    \transmap{(\pi_x)}{\gamma^+} = \mu$, and similarly $\transmap{(\pi_z)}{\gamma} = \nu$, we have
    $\gamma \in \Pi(\mu, \nu)$. Moreover, by definition of pushforward measure we have
    \begin{align*}
        \int_{\Omega \times \Omega} \chi_{\{x \not\preceq z\}} \odif \gamma(x, z)
        &= \int_{\Omega \times \Omega \times \Omega}
            \chi_{\{x \not\preceq z\}} \odif \sigma(x, y, z)
        \le \int_{\Omega \times \Omega \times \Omega}
            \left( \chi_{\{x \not\preceq y\}} + \chi_{\{y \not\preceq z\}} \right)
            \odif \sigma(x, y, z) \\
        &= \int_{\Omega \times \Omega \times \Omega}
                \chi_{\{x \not\preceq y\}} \odif \sigma(x, y, z)
            + \int_{\Omega \times \Omega \times \Omega}
                \chi_{\{y \not\preceq z\}} \odif \sigma(x, y, z) \\
        &= \int_{\Omega \times \Omega}
                \chi_{\{x \not\preceq y\}} \odif \gamma^+(x, y)
            + \int_{\Omega \times \Omega}
                \chi_{\{y \not\preceq z\}} \odif \gamma^-(y, z) = 0 \,,
    \end{align*}
    the last equality since $\gamma^+$ and $\gamma^-$ are directed couplings. Hence $\gamma \in
    \Pi(\mu \to \nu)$ as claimed.
\end{proof}

\begin{proposition}
    Let $\Omega \subset \bR^d$ be a bounded Borel set and $p \in [1, \infty)$. Then $W_p(\cdot \to
    \cdot)$ is a quasimetric on $P(\Omega)$.
\end{proposition}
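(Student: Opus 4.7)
The plan is to verify the two defining properties of a quasimetric, namely positivity with identity of indiscernibles, and the triangle inequality, the latter being the only nontrivial step. Symmetry is not required.

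For positivity, $W_p(\mu \to \nu) \ge 0$ since it is an infimum of nonnegative costs. If $\mu = \nu$, then the diagonal coupling $\gamma \define (\mathrm{id}, \mathrm{id})_\# \mu$ belongs to $\Pi(\mu \to \mu)$ because $x \preceq x$ for every $x$, and $C_p(\gamma) = 0$, so $W_p(\mu \to \mu) = 0$. Conversely, if $W_p(\mu \to \nu) = 0$, then by \cref{obs:wp-ineq} we have $W_p(\mu, \nu) \le W_p(\mu \to \nu) = 0$, and since the classical $p$-Wasserstein distance is a genuine metric on $P(\Omega)$, we conclude $\mu = \nu$.

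For the triangle inequality, fix $\mu, \varrho, \nu \in P(\Omega)$ and arbitrary directed plans $\gamma^+ \in \Pi(\mu \to \varrho)$ and $\gamma^- \in \Pi(\varrho \to \nu)$. Apply the gluing lemma (\cref{lemma:gluing}) to obtain a probability measure $\sigma$ on $\Omega \times \Omega \times \Omega$ whose $(x,y)$- and $(y,z)$-marginals are $\gamma^+$ and $\gamma^-$ respectively, and let $\gamma \define (\pi_{x,z})_\# \sigma$. By \cref{lemma:directed-composition}, $\gamma \in \Pi(\mu \to \nu)$. I would then bound its cost by the Minkowski inequality on $L^p(\sigma)$ applied to the functions $(x,y,z) \mapsto |x - y|$ and $(x,y,z) \mapsto |y - z|$ together with the pointwise triangle inequality $|x - z| \le |x - y| + |y - z|$ in $\bR^d$, yielding
\[
    C_p(\gamma)
    = \left( \int_{\Omega^3} |x-z|^p \odif \sigma \right)^{1/p}
    \le \left( \int_{\Omega^3} |x-y|^p \odif \sigma \right)^{1/p}
        + \left( \int_{\Omega^3} |y-z|^p \odif \sigma \right)^{1/p}
    = C_p(\gamma^+) + C_p(\gamma^-) \,,
\]
where in the last step I use the pushforward identities $(\pi_{x,y})_\# \sigma = \gamma^+$ and $(\pi_{y,z})_\# \sigma = \gamma^-$ to rewrite the integrals over $\Omega^3$ as integrals of $|x-y|^p$ against $\gamma^+$ and of $|y-z|^p$ against $\gamma^-$. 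Taking the infimum over $\gamma^+ \in \Pi(\mu \to \varrho)$ and $\gamma^- \in \Pi(\varrho \to \nu)$ then gives $W_p(\mu \to \nu) \le W_p(\mu \to \varrho) + W_p(\varrho \to \nu)$, as desired.

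There is no substantial obstacle here: the argument is essentially the standard proof that the classical $p$-Wasserstein distance is a metric, and the only adjustment needed for the directed setting is to verify that the glued plan remains directed, which is exactly the content of \cref{lemma:directed-composition}. Boundedness of $\Omega$ ensures all costs are finite, so no issue arises from $+\infty$ values in the inequality.
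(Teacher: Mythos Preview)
Your proof is correct and follows essentially the same approach as the paper: verify positivity and identity of indiscernibles via the diagonal coupling and \cref{obs:wp-ineq}, then prove the triangle inequality by gluing two directed plans, invoking \cref{lemma:directed-composition} to ensure the glued plan is directed, and applying Minkowski's inequality in $L^p(\sigma)$. One small caveat: boundedness of $\Omega$ does not guarantee finiteness of $W_p(\mu \to \nu)$, since $\Pi(\mu \to \nu)$ may be empty; but the triangle inequality then holds trivially, so this does not affect the argument.
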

\begin{proof}
    It is clear that $W_p(\mu \to \nu) \ge 0$ always. If $\mu = \nu$ then the identity coupling
    shows that $W_p(\mu \to \nu) = 0$, and conversely if $W_p(\mu \to \nu) = 0$ then, by
    \cref{obs:wp-ineq}, $W_p(\mu, \nu) = 0$ and hence $\mu = \nu$. It remains to show that $W_p(\cdot
    \to \cdot)$ satisfies the triangle inequality.

    Let $\mu, \varrho, \nu$ be probability measures on $\Omega$. Let $\gamma^+ \in \Pi(\mu \to
    \varrho)$ and $\gamma^- \in \Pi(\varrho \to \nu)$. Then since we also have $\gamma^+ \in
    \Pi(\mu, \varrho)$ and $\gamma^- \in \Pi(\varrho, \nu)$, apply \cref{lemma:gluing} to obtain a
    probability measure $\sigma$ on $\Omega \times \Omega \times \Omega$ such that
    $\transmap{(\pi_{x,y})}{\sigma} = \gamma^+$ and $\transmap{(\pi_{y,z})}{\sigma} = \gamma^-$. Let
    $\gamma \define \transmap{(\pi_{x,z})}{\sigma}$, so that $\gamma \in \Pi(\mu \to \nu)$ by
    \cref{lemma:directed-composition}. Then
    {\allowdisplaybreaks
    \begin{align*}
        W_p(\mu \to \nu)
        &\le \left( \int_{\Omega \times \Omega} |x-z|^p \odif \gamma(x, z) \right)^{1/p}
        = \left( \int_{\Omega \times \Omega \times \Omega}
                        |x-z|^p \odif \sigma(x, y, z) \right)^{1/p} \\
        &= \| |x-z| \|_{L^p(\sigma)}
        \le \| |x-y| + |y-z| \|_{L^p(\sigma)}
        \le \| |x-y| \|_{L^p(\sigma)} + \| |y-z| \|_{L^p(\sigma)} \\
        &= \left( \int_{\Omega \times \Omega \times \Omega}
                        |x-y|^p \odif \sigma(x, y, z) \right)^{1/p}
            + \left( \int_{\Omega \times \Omega \times \Omega}
                        |y-z|^p \odif \sigma(x, y, z) \right)^{1/p} \\
        &= \left( \int_{\Omega \times \Omega} |x-y|^p \odif \gamma^+(x, y) \right)^{1/p}
            + \left( \int_{\Omega \times \Omega} |y-z|^p \odif \gamma^-(y, z) \right)^{1/p} \,.
    \end{align*}
    }%
    Since $\gamma^+ \in \Pi(\mu \to \varrho), \gamma^- \in \Pi(\varrho \to \nu)$ were arbitrary,
    $W_p(\mu \to \nu) \le W_p(\mu \to \varrho) + W_p(\varrho \to \nu)$.
\end{proof}

The directed Wasserstein distance arises naturally in the one-dimensional case when one probability
measure dominates the other in the sense of \cref{def:domination}, as we now show.

\begin{proposition}[Specialization of {\cite[Theorem 2.9]{San15}}]
    \label{prop:mon-maps}
    Let $p \in (1, +\infty)$. Let $\mu$ and $\nu$ be two absolutely continuous probability measures
    on $\overline I$ with strictly positive densities. Then there exists a unique $\gamma \in
    \Pi(\mu, \nu)$ attaining $C_p(\gamma) = W_p(\mu, \nu)$, and in fact $\gamma = \transmap{(\idmap,
    T_\mon)}{\mu}$ where $T_\mon : \overline I \to \overline I$ is a nondecreasing function that is
    \almev uniquely determined, and $\idmap$ denotes the identity map.
\end{proposition}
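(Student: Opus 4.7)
The plan is to construct the monotone transport map $T_\mon$ explicitly via cumulative distribution functions, show that the coupling $\gamma$ it induces is optimal by a swapping argument exploiting strict convexity of $t \mapsto |t|^p$ for $p > 1$, and then extract uniqueness from the same strict convexity. Define the CDFs $F(x) \define \mu([0,x])$ and $G(y) \define \nu([0,y])$. Since $\mu, \nu$ have strictly positive densities on $\overline I$, both $F$ and $G$ are continuous, strictly increasing homeomorphisms of $\overline I$ onto $[0,1]$, so I can set $T_\mon \define G^{-1} \circ F$. This is strictly increasing (in particular nondecreasing) and continuous, and verifying $\transmap{T_\mon}{\mu} = \nu$ is routine: for every $y \in \overline I$, $(\transmap{T_\mon}{\mu})([0,y]) = \mu\{x : F(x) \le G(y)\} = G(y) = \nu([0,y])$, and two Borel measures on $\overline I$ that agree on all initial segments coincide. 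Hence $\gamma \define \transmap{(\idmap, T_\mon)}{\mu} \in \Pi(\mu, \nu)$, concentrated on the graph of $T_\mon$.

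For optimality, the key ingredient is the \emph{strict} swap inequality: for $p > 1$ and any $x_1 < x_2, y_2 < y_1$ in $\overline I$,
\[
    |x_1 - y_2|^p + |x_2 - y_1|^p < |x_1 - y_1|^p + |x_2 - y_2|^p \,,
\]
which follows from strict convexity of $\phi(t) = |t|^p$ applied to the pairs $(x_1-y_1, x_2-y_2)$ and $(x_1-y_2, x_2-y_1)$, the latter being a strict convex combination of the former (a short computation). Given any $\tilde\gamma \in \Pi(\mu, \nu)$ with $\tilde\gamma \ne \gamma$, I would use that since $\mu$ and $\nu$ are absolutely continuous with positive densities, $\mathrm{supp}(\tilde\gamma)$ cannot be contained in any nondecreasing set (else $\tilde\gamma$ would be induced by a nondecreasing map pushing $\mu$ to $\nu$, and by the uniqueness argument below this map would be $T_\mon$ \almev). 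Thus there exist pairs $(x_1, y_1), (x_2, y_2) \in \mathrm{supp}(\tilde\gamma)$ with $x_1 < x_2$ and $y_1 > y_2$, and a standard ``swap'' construction (replace two small balls of $\tilde\gamma$-mass near these pairs by swapped balls) produces a new coupling with strictly smaller cost, contradicting optimality of $\tilde\gamma$ if $C_p(\tilde\gamma) = W_p(\mu,\nu)$. This proves that $\gamma$ is the unique optimal coupling.

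Finally, for \almev uniqueness of $T_\mon$, suppose $T' : \overline I \to \overline I$ is any nondecreasing Borel map with $\transmap{T'}{\mu} = \nu$. Then for every $y \in \overline I$ with at most countably many exceptions (the jumps of $T'$), $\{T' \le y\}$ is an interval of the form $[0, x(y)]$ with $\mu([0,x(y)]) = \nu([0,y])$, i.e., $F(x(y)) = G(y)$, so $T'(x) = G^{-1}(F(x)) = T_\mon(x)$ for \almev $x$ by the strict positivity of both densities. The main obstacle I expect is making the swapping argument fully rigorous in the measure-theoretic setting: extracting a quantitative perturbation of $\tilde\gamma$ requires carefully localizing mass around two non-monotone support points using the Lebesgue density theorem and the absolute continuity of the marginals, and ensuring the swap preserves the marginals while strictly decreasing the cost.
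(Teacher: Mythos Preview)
The paper does not give a proof of this proposition at all; it is stated as a direct specialization of \cite[Theorem~2.9]{San15} and used as a black box. Your proposal therefore goes beyond the paper by sketching a self-contained argument, and the route you take (explicit CDF construction of $T_\mon$, plus a swapping argument from strict convexity of $|t|^p$) is exactly the classical one-dimensional proof that underlies the cited reference.

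Two small points on your sketch. First, your uniqueness argument for the optimal plan implicitly relies on \emph{existence} of an optimizer: you show that any optimal $\tilde\gamma \ne \gamma$ can be strictly improved, but this only rules out such $\tilde\gamma$ once you know some optimizer exists. On the compact $\overline I$ this is automatic by weak compactness of $\Pi(\mu,\nu)$ and lower semicontinuity of the cost, but it is worth stating. Second, the step ``support contained in a nondecreasing set $\Rightarrow$ induced by a nondecreasing map'' is correct when the first marginal is atomless, but deserves a sentence of justification (a monotone set meets each vertical line in at most one point up to a countable exceptional set, which is $\mu$-null). With those additions, your argument is complete and matches the standard treatment.
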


\begin{corollary}
    \label{cor:directed-wp-1d}
    If in \cref{prop:mon-maps} we have $\mu \succeq \nu$, then in fact $\gamma \in \Pi(\mu \to
    \nu)$. As a consequence, $W_p(\mu \to \nu) = W_p(\mu, \nu)$.
\end{corollary}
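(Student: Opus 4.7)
The plan is to show that the monotone optimal map $T_{\mon}$ from \cref{prop:mon-maps} in fact satisfies $T_\mon(x) \ge x$ for $\mu$-\almev $x$, which immediately translates to $\gamma(\{x \preceq y\}) = 1$ and hence $\gamma \in \Pi(\mu \to \nu)$. Once that is established, the equality of Wasserstein distances follows by a short chain: on the one hand $W_p(\mu \to \nu) \le C_p(\gamma) = W_p(\mu, \nu)$ because $\gamma$ itself is now admissible as a directed plan, and on the other hand $W_p(\mu, \nu) \le W_p(\mu \to \nu)$ by \cref{obs:wp-ineq}.

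To carry out the key step, I would use the standard explicit description of $T_\mon$ in the one-dimensional case via cumulative distribution functions. Write $F_\mu(x) \define \mu(-\infty, x)$ and $F_\nu(x) \define \nu(-\infty, x)$. Since both $\mu$ and $\nu$ are absolutely continuous with strictly positive densities on $\overline I$, both $F_\mu$ and $F_\nu$ are continuous and strictly increasing on $\overline I$, and thus have well-defined continuous inverses $F_\mu^{-1}, F_\nu^{-1} : [0,1] \to \overline I$. A standard fact about one-dimensional optimal transport is that the unique nondecreasing map pushing $\mu$ to $\nu$ is $T_\mon = F_\nu^{-1} \circ F_\mu$, which by uniqueness in \cref{prop:mon-maps} must coincide with the $T_\mon$ given there (up to a $\mu$-null set).

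The hypothesis $\mu \succeq \nu$ translates directly to $F_\mu(x) \ge F_\nu(x)$ for all $x \in \bR$. Since $F_\nu^{-1}$ is strictly increasing on $[0,1]$, applying it to both sides of $F_\mu(x) \ge F_\nu(x)$ yields
\[
    T_\mon(x) = F_\nu^{-1}(F_\mu(x)) \ge F_\nu^{-1}(F_\nu(x)) = x
\]
for every $x \in \overline I$. Consequently,
\[
    \gamma(\{(x,y) : x \preceq y\})
    = \mu(\{x : x \preceq T_\mon(x)\}) = 1,
\]
so $\gamma \in \Pi(\mu \to \nu)$. This gives the first assertion, and combining $C_p(\gamma) = W_p(\mu, \nu)$ with \cref{obs:wp-ineq} yields $W_p(\mu \to \nu) = W_p(\mu, \nu)$ as claimed.

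I do not expect any serious obstacle: the entire argument rests on the explicit CDF formula for the monotone rearrangement, and the strict positivity of the densities is precisely what makes the inverse CDFs well-defined and strictly increasing so that the domination $\mu \succeq \nu$ transfers cleanly to the pointwise inequality $T_\mon(x) \ge x$. The only minor care needed is to ensure the identification of the $T_\mon$ from \cref{prop:mon-maps} with $F_\nu^{-1} \circ F_\mu$, which follows from the \almev uniqueness clause in that proposition together with the fact that $F_\nu^{-1} \circ F_\mu$ is itself a nondecreasing map pushing $\mu$ forward to $\nu$.
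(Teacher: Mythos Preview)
Your proof is correct and follows essentially the same route as the paper: both establish $T_\mon(x) \ge x$ and then read off that $\gamma$ is a directed plan, from which the equality of Wasserstein distances follows via \cref{obs:wp-ineq}. The only tactical difference is that you invoke the explicit CDF formula $T_\mon = F_\nu^{-1} \circ F_\mu$ to get the pointwise inequality directly, whereas the paper argues by contradiction straight from the pushforward property (if $T_\mon(x_0) < x_0$, then $\nu(0,T_\mon(x_0)) \ge \mu(0,x_0)$ via $(0,x_0) \subseteq T_\mon^{-1}(0,T_\mon(x_0))$, which combined with strict positivity of the density of $\nu$ contradicts $\mu \succeq \nu$); the paper's version is slightly more self-contained relative to what \cref{prop:mon-maps} actually states, but yours is equally valid given the standard identification you note.
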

\begin{proof}
    We claim that $T_\mon(x) \ge x$ for every $x \in I$. Indeed, suppose $T_\mon(x) < x$ for some $x
    \in I$. Then
    \[
        \nu(0, T_\mon(x))
        = (\transmap{T_\mon}{\mu})(0, T_\mon(x))
        = \mu(T_\mon^{-1}(0, T_\mon(x)))
        \ge \mu(0, x) \,,
    \]
    the inequality because certainly $(0,x) \subseteq T_\mon^{-1}(0, T_\mon(x))$, but we do not rule
    out at this point that $T$ remains constant for a while after $x$. But since $T_\mon(x) < x$ and
    $\nu$ has strictly positive density by assumption, we conclude that
    \[
        \nu(0, x) > \nu(0, T_\mon(x)) \ge \mu(0, x) \,,
    \]
    contradicting the assumption that $\mu \succeq \nu$. Hence $T_\mon(x) \ge x$ for every $x \in I$
    as claimed. We now show that $\gamma \in \Pi(\mu \to \nu)$. Let $S \define \{(x, y) \in
    \overline I \times \overline I : x \le y\}$. Since $\gamma = \transmap{(\idmap, T_\mon)}{\mu}$,
    we have
    \[
        \gamma(S)
        = \mu((\idmap, T_\mon)^{-1}(S))
        = \mu(\{x \in \overline I : x \le T_\mon(x)\})
        = \mu(\overline I)
        = 1 \,,
    \]
    so $\gamma \in \Pi(\mu \to \nu)$ as claimed.
\end{proof}

We now introduce the formal language for two related operations: constructing a probability measure
by its marginal and conditional distributions, and the inverse process of extracting marginal and
conditionals from a probability measure, which is also called \emph{disintegration}. We refer the
reader to \cite[Section 5.3]{AGS05} for an overview of these ideas from a measure-theoretic
perspective.

We use the following definition from \cite{AGS05}. For $X$ and $Y$ separable metric spaces and $x
\in X \mapsto \mu_x \in P(Y)$ a measure-valued map, we say $\mu_x$ is a Borel map if $x \mapsto
\mu_x(B)$ is a Borel map for any Borel set $B \subset Y$, or equivalently if this holds for any open
set $A \subset Y$. In this case we also have that
\[
    x \mapsto \int_Y f(x,y) \odif {\mu_x(y)}
\]
is Borel for any bounded (or nonnegative) Borel function $f : X \times Y \to \bR$.

\begin{definition}[Construction and disintegration]
    \label{def:disintegration}
    Let $S$ be a finite set and let $I \subset S$. Let $x_{S \setminus I} \in \bR^{S \setminus I}
    \mapsto \cond{\mu}{x_{S \setminus I}} \in P(\bR^I)$ be a Borel map (the \emph{conditionals}).
    Then for any bounded (or nonnegative) Borel function $f : \bR^S \to \bR$, the $\bR^{S \setminus
    I} \to \bR$ map
    \begin{equation}
        \label{eq:disintegration-map}
        x_{S \setminus I}
            \mapsto \int_{\bR^I} f(x_{S \setminus I}, x_I)
            \odif {\cond{\mu}{x_{S \setminus I}}(x_I)}
    \end{equation}
    is Borel. Therefore, for any $\marginal{\mu}{S \setminus I} \in P(\bR^{S \setminus I})$ (the
    \emph{marginal}), we define $\mu \in P(\bR^S)$ implicitly by
    \begin{equation}
        \label{eq:disintegration-condition}
        \int_{\bR^S} f \odif \mu
        = \int_{\bR^{S \setminus I}} \odif {\marginal{\mu}{S \setminus I}(x_{S \setminus I})}
                \int_{\bR^I} f(x_{S \setminus I}, x_I) \odif {\cond{\mu}{x_{S \setminus I}}(x_I)} \,,
    \end{equation}
    and we formally write $\mu = \int_{\bR^{S \setminus I}} \cond{\mu}{x_{S \setminus I}} \odif
    {\marginal{\mu}{S \setminus I}}$.

    Conversely, given $\mu \in P(\bR^S)$, we let $\marginal{\mu}{S \setminus I} \define
    \transmap{(\pi_{S \setminus I})}{\mu} \in P(\bR^{S \setminus I})$ and write $(\cond{\mu}{x_{S
    \setminus I}})_{x_{S \setminus I} \in \bR^{S \setminus I}} \subset P(\bR^I)$ for the
    $\marginal{\mu}{S \setminus I}$-\almev uniquely determined Borel family of probability measures
    such that $\mu = \int_{\bR^{S \setminus I}} \cond{\mu}{x_{S \setminus I}} \odif
    {\marginal{\mu}{S \setminus I}}$, and we call this decomposition a \emph{disintegration} of
    $\mu$.
\end{definition}

\paragraph*{Examples and simplified notational conventions.} Since the index notation in the
definition above can be somewhat laborious to parse, let us briefly give two concrete settings we
will use below, and introduce simplified notational conventions for those settings. The
simplifications are supposed to be mnemonic for what we have already defined for projections.
\begin{enumerate}
    \item For a probability measure $\mu \in P(\bR^d)$ and index $i \in [d]$, we may disintegrate
        $\mu$ into a marginal on index set $[d] \setminus \{i\}$ and conditionals supported along
        $i$-th coordinate, and write $\mu = \int_{\bR^{[d] \setminus \{i\}}} \cond{\mu}{x_{[d]
        \setminus \{i\}}} \odif {\marginal{\mu}{[d] \setminus \{i\}}}$. We simplify this notation by
        writing $\mu = \int_{\bR^d_{-i}} \cond{\mu}{x_{-i}} \odif {\marginal{\mu}{-i}}$.
    \item For a transport plan $\gamma \in P(\bR^d \times \bR^d)$, we write the index set as $[2d] =
        S = S_x \cup S_y$ for $S_x \define [d]$ and $S_y \define [2d] \setminus [d]$, so we write
        each element in the support of $\gamma$ as a tuple $(x, y)$ with $x \in \bR^{S_x}$ and $y
        \in \bR^{S_y}$. Then, given index set $I \subset [d]$, we may disintegrate $\gamma$ into a
        marginal on index set $S_M \define (S_x \setminus I_x) \cup (S_y \setminus I_y)$ for $I_x
        \define I$ and $I_y \define \{d+i : i \in I\}$, so that we think of the marginal as
        determining all but the $I$-indexed coordinates of the points $x$ and $y$; and we view the
        conditionals as supported along the $I_x \cup I_y$ directions. This disintegration is
        $\gamma = \int_{\bR^{S_M}} \cond{\gamma}{x_{S_M}} \odif {\marginal{\gamma}{S_M}}$. To be
        clear, here we have $\marginal{\gamma}{S_M} \in P(\bR^{S_M})$ and $\cond{\gamma}{x_{S_M}}
        \in P(\bR^{I_x \cup I_y})$ for each $x_{S_M} \in \bR^{S_M}$. We make the notation somewhat
        more mnemonic by writing $\gamma = \int_{\bR^d_{-I} \times \bR^d_{-I}} \cond{\gamma}{x_{-I},
        y_{-I}} \odif {\marginal{\gamma}{-I_x -I_y}}$. When we are considering a singleton $I =
        \{i\}$, we further simplify the notation by directly writing $i$ in the place of $I$
        in this formula.
\end{enumerate}

For absolutely continuous probability measures on $\closedInt^d$, the following more familiar
characterization of disintegrations will be useful:

\begin{proposition}
    \label{prop:disintegration-absolutely-continuous}
    Let $\mu$ be an absolutely continuous probability measure whose density is Borel measurable and
    supported in $\closedInt^d$. Namely, write $\odif \mu = u \odif x$ where $u : \closedInt^d \to
    [0, +\infty)$ is a Borel map. Let $i \in [d]$. Then the disintegration $\mu = \int_{\bR^d_{-i}}
    \cond{\mu}{x_{-i}} \odif {\marginal{\mu}{-i}}$ is as follows: $\marginal{\mu}{-i}$ and each
    $\cond{\mu}{x_{-i}}$ are absolutely continuous probability measures with Borel densities
    supported in $\closedInt^d_{-i}$ and $[0,1]$ respectively; and writing $\odif
    {\marginal{\mu}{-i}} = \marginal{u}{-i} \odif x_{-i}$, and $\odif {\cond{\mu}{x_{-i}}} =
    \cond{u}{x_{-i}} \odif x_i$ for each $x_{-i}$, where the (Borel) density functions are
    $\marginal{u}{-i} : \closedInt^d_{-i} \to [0, +\infty)$ and $\cond{u}{x_{-i}} : [0,1] \to [0,
    +\infty)$ for each $x_{-i} \in \closedInt^d_{-i}$, we have (marginal density)
    \[
        \marginal{u}{-i}(x_{-i}) = \int_I u(x_{-i}, x_i) \odif x_i
    \]
    for each $x_{-i} \in \closedInt^d_{-i}$, and (conditional density)
    \[
        \cond{u}{x_{-i}}(x_i) = \frac{u(x)}{\marginal{u}{-i}(x_{-i})}
    \]
    for each $x \in \closedInt^d$ where the denominator is nonzero, or $1$ by convention otherwise.
\end{proposition}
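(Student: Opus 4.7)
The plan is to verify directly that the proposed marginal $\marginal{\mu}{-i}$ and conditionals $\cond{\mu}{x_{-i}}$ satisfy the disintegration relation \eqref{eq:disintegration-condition}, and then appeal to the \almev uniqueness of the disintegration asserted in \cref{def:disintegration}. Since $u$ is Borel and nonnegative, all the integrals below are well-defined by Tonelli's theorem, and I will use Tonelli repeatedly without further comment.

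First I would define $\marginal{u}{-i}(x_{-i}) \define \int_I u(x_{-i}, x_i) \odif x_i$ and verify that this is a Borel map $\closedInt^d_{-i} \to [0, +\infty)$, which is a standard consequence of Tonelli. Similarly, the map $(x_{-i}, x_i) \mapsto \cond{u}{x_{-i}}(x_i)$ defined piecewise as in the statement (with value $1$ on the null set $N \define \{x_{-i} : \marginal{u}{-i}(x_{-i}) = 0\}$) is Borel, since it is the ratio of Borel functions on the complement of the Borel set $N \times I$ and constant elsewhere. Next, for each $x_{-i} \in \closedInt^d_{-i} \setminus N$, the map $x_i \mapsto \cond{u}{x_{-i}}(x_i)$ is a nonnegative Borel density satisfying $\int_I \cond{u}{x_{-i}}(x_i) \odif x_i = \marginal{u}{-i}(x_{-i})^{-1} \int_I u(x_{-i}, x_i) \odif x_i = 1$, so $\cond{\mu}{x_{-i}} \in P([0,1])$; and on $N$ the conditional is the uniform measure, also in $P([0,1])$.

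Second, I would verify that $x_{-i} \mapsto \cond{\mu}{x_{-i}}$ is a Borel map in the sense defined before \cref{def:disintegration}: for any open (equivalently Borel) $A \subset [0,1]$, the map $x_{-i} \mapsto \cond{\mu}{x_{-i}}(A) = \int_A \cond{u}{x_{-i}}(x_i) \odif x_i$ is Borel by Tonelli applied to the Borel integrand, plus the special treatment on $N$. This legitimizes the construction of a probability measure via \eqref{eq:disintegration-condition}.

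Third, I would verify the defining identity \eqref{eq:disintegration-condition}: for any bounded Borel $f : \bR^d \to \bR$ supported in $\closedInt^d$ (the general case reduces to this since $\mu$ is supported in $\closedInt^d$), Tonelli gives
\begin{align*}
    \int_{\bR^d_{-i}} \odif {\marginal{\mu}{-i}(x_{-i})}
        \int_{[0,1]} f(x_{-i}, x_i) \odif {\cond{\mu}{x_{-i}}(x_i)}
    &= \int_{\closedInt^d_{-i} \setminus N} \marginal{u}{-i}(x_{-i})
        \int_I f(x_{-i}, x_i) \frac{u(x_{-i}, x_i)}{\marginal{u}{-i}(x_{-i})} \odif x_i \odif x_{-i} \\
    &= \int_{\closedInt^d} f(x) u(x) \odif x
    = \int_{\bR^d} f \odif \mu \,,
\end{align*}
where the contribution from $N$ vanishes because $u(x_{-i}, \cdot) = 0$ \almev on $I$ for $x_{-i} \in N$. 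This establishes that $(\marginal{\mu}{-i}, (\cond{\mu}{x_{-i}})_{x_{-i}})$ is \emph{a} disintegration of $\mu$ along coordinate $i$, and \almev uniqueness from \cref{def:disintegration} identifies it with \emph{the} disintegration (up to a $\marginal{\mu}{-i}$-null set on which we have made an arbitrary choice). I expect no real obstacles here; the only point requiring any care is the treatment of the null set $N$, and that is handled by the stipulated convention $\cond{u}{x_{-i}} \equiv 1$ there, which does not affect any integrals against $\marginal{\mu}{-i}$.
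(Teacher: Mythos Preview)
Your proposal is correct and follows essentially the same approach as the paper: verify that the proposed densities are Borel, that $x_{-i} \mapsto \cond{\mu}{x_{-i}}$ is a Borel measure-valued map, and that the disintegration identity \eqref{eq:disintegration-condition} holds via Tonelli's theorem, with the null set $\{\marginal{u}{-i} = 0\}$ handled by the stated convention. The paper's proof is slightly terser but structurally identical.
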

\begin{proof}
    First, note that it is standard that $\marginal{u}{-i}$ is a Borel map since it is the integral
    of the section $u(x_{-i}, \cdot)$ at each point $x_{-i}$. It is hence immediate that each
    $\cond{u}{x_{-i}}$ is also Borel. Thus $\marginal{\mu}{-i}$ and all $\cond{\mu}{x_{-i}}$ are
    indeed absolutely continuous probability measures.

    We now check the conditions of \cref{def:disintegration}. We claim that the map $x_{-i} \in
    \bR^d_{-i} \mapsto \cond{\mu}{x_{-i}} \in P(\bR)$ is Borel. Let $B \subset \bR$ be a Borel set;
    we may assume that $B \subset [0,1]$ without loss of generality, since all measures assign zero
    outside this interval. Then the map $x_{-i} \in \bR^d_{-i} \mapsto \cond{\mu}{x_{-i}}(B)$ takes
    value zero outside $\closedInt^d_{-i}$, and for $x_{-i} \in \closedInt^d_{-i}$, it is
    \[
        x_{-i} \mapsto \frac{1}{\marginal{u}{-i}(x_{-i})} \int_B u(x_{-i}, x_i) \odif x_i \,.
    \]
    This is again, at each point, the integral of a section over the subspace $B$, so it is standard
    that this map is Borel. Hence the claim holds.

    It remains to verify \eqref{eq:disintegration-condition}. Let $f : \bR^d \to [0, +\infty)$ be a
    nonnegative Borel function (and the result for bounded $f$ will also follow). By the definition
    of the measures and Tonelli's theorem, we have
    \begin{align*}
        &\int_{\bR^{-i}} \odif {\marginal{\mu}{-i}(x_{-i})}
                \int_\bR f(x_{-i}, x_i) \odif {\cond{u}{x_{-i}}(x_i)}
        = \int_{\closedInt^d_{-i}} \marginal{u}{-i}(x_{-i}) \odif x_{-i}
                \int_I f(x_{-i}, x_i) \cond{u}{x_{-i}}(x_i) \odif x_i \\
        &\qquad = \int_{\closedInt^d}
                f(x) \marginal{u}{-i}(x_{-i}) \cond{u}{x_{-i}}(x_i) \odif x
        = \int_{\closedInt^d} f(x) u(x) \odif x \,,
    \end{align*}
    where the last equality holds because $u(x) = \marginal{u}{-i}(x_{-i}) \cond{u}{x_{-i}}(x_i)$
    whenever $\marginal{u}{-i}(x_{-i}) > 0$, and if $\marginal{u}{-i}(x_{-i}) = 0$ then $u(x_{-i},
    \cdot) = 0$ almost everywhere, so $\left\{ x \in \closedInt^d : u(x) > 0 \text{ and }
    \marginal{u}{-i}(x_{-i}) = 0 \right\}$ is Borel and has measure zero (by another application of
    Tonelli's theorem).
\end{proof}

\begin{definition}[Aligned transport plans]
    Let $I \subsetneq [d]$ be nonempty and let $\mu, \nu \in P(\bR^d)$. Let $\gamma \in \Pi(\mu,
    \nu)$ and write its disintegration $\gamma = \int_{\bR^d_{-I} \times \bR^d_{-I}}
    \cond{\gamma}{x_{-I}, y_{-I}} \odif {\marginal{\gamma}{-I_x -I_y}}$. We say $\gamma$ is
    \emph{$I$-aligned} if its marginal $\marginal{\gamma}{-I_x -I_y} \in P(\bR^d_{-I} \times
    \bR^d_{-I})$ satisfies the following:
    \[
        \marginal{\gamma}{-I_x -I_y}\left(
            \left\{ (x_{-I}, y_{-I}) \in \bR^d_{-I} \times \bR^d_{-I} : x_{-I} \ne y_{-I} \right\}
        \right)
        = 0 \,.
    \]
    Note that this condition is well-defined because the set being measured is open and hence Borel.
    By convention, we say that every $\gamma \in \Pi(\mu, \nu)$ is $[d]$-aligned. For any nonempty
    $I \subseteq [d]$, we denote the set of $I$-aligned transport plans by $\Pi_I(\mu, \nu)$. When
    we have a singleton $I = \{i\}$, we write $i$ directly in the place of $I$ in this definition.
\end{definition}

\begin{definition}
    For nonempty $I \subseteq [d]$ and $\mu, \nu \in P(\bR^d)$, write $\Pi_I(\mu \to \nu) \define
    \Pi(\mu \to \nu) \cap \Pi_I(\mu, \nu)$.
\end{definition}

\begin{lemma}[Alternative characterization of $I$-aligned plans]
    \label{lemma:aligned-alternative}
    Let $I \subsetneq [d]$ be nonempty and let $\mu, \nu \in P(\bR^d)$. Then $\gamma \in \Pi(\mu,
    \nu)$ is $I$-aligned if and only if
    \[
        \int_{\bR^d \times \bR^d} \chi_{\{x_{-I} \ne y_{-I}\}} \odif \gamma(x,y) = 0 \,.
    \]
\end{lemma}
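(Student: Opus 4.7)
The plan is to unfold both conditions to the measure of a single set and observe that they refer to the same thing via the definition of the marginal as a pushforward. Let $A \define \{(x,y) \in \bR^d \times \bR^d : x_{-I} \neq y_{-I}\}$ and $A' \define \{(u,v) \in \bR^d_{-I} \times \bR^d_{-I} : u \neq v\}$. First, I would rewrite the integral condition as
\[
    \int_{\bR^d \times \bR^d} \chi_{\{x_{-I} \ne y_{-I}\}} \odif \gamma(x,y) = \gamma(A),
\]
so that the integral vanishes if and only if $\gamma(A) = 0$. Meanwhile, the definition of $I$-alignment is exactly $\marginal{\gamma}{-I_x -I_y}(A') = 0$.

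Next, I would make the key observation that $A$ is the preimage of $A'$ under the (continuous, hence Borel measurable) projection $\pi : \bR^d \times \bR^d \to \bR^d_{-I} \times \bR^d_{-I}$ given by $\pi(x,y) \define (x_{-I}, y_{-I})$; indeed $(x,y) \in A$ iff $\pi(x,y) \in A'$ directly from the definitions. Since $\marginal{\gamma}{-I_x -I_y} = \transmap{\pi}{\gamma}$ by the setup in \cref{def:disintegration}, the definition of pushforward measure yields
\[
    \marginal{\gamma}{-I_x -I_y}(A') = (\transmap{\pi}{\gamma})(A') = \gamma(\pi^{-1}(A')) = \gamma(A).
\]
Chaining the two displayed equalities establishes the equivalence and completes the proof.

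There is no real obstacle here; the only minor point to verify is that $A$ and $A'$ are Borel measurable so that the measures are well-defined, which follows because the diagonal $\{u = v\}$ is closed in $\bR^d_{-I} \times \bR^d_{-I}$, making $A'$ open and Borel, and then $A = \pi^{-1}(A')$ is Borel by continuity of $\pi$. This also matches the measurability remark already present in the original definition of $I$-alignment in the excerpt.
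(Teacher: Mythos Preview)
Your proof is correct and takes essentially the same approach as the paper: both establish the equality $\gamma(\{x_{-I}\ne y_{-I}\}) = \marginal{\gamma}{-I_x-I_y}(\{x_{-I}\ne y_{-I}\})$. You invoke the pushforward definition of the marginal directly, while the paper routes the same identity through the disintegration formula (using that the conditionals integrate to~$1$); these are two phrasings of the same one-line computation.
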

\begin{proof}
    Writing the disintegration $\gamma = \int_{\bR^d_{-I} \times \bR^d_{-I}} \cond{\gamma}{x_{-I},
    y_{-I}} \odif {\marginal{\gamma}{-I_x -I_y}}$, we have
    {\allowdisplaybreaks
    \begin{align*}
        \marginal{\gamma}{-I_x -I_y}(\{x_{-I} \ne y_{-I}\})
        &= \int_{\bR^d_{-I} \times \bR^d_{-I}}
                \chi_{\{x_{-I} \ne y_{-I}\}}
                \odif {\marginal{\gamma}{-I_x -I_y}}
                \underbrace{\int_{\bR^I \times \bR^I}
                \odif {{\cond{\gamma}{x_{-I}, y_{-I}}(x_I, y_I)}}}_{=1} \\
        &= \int_{\bR^d_{-I} \times \bR^d_{-I}}
                \odif {\marginal{\gamma}{-I_x -I_y}}
                \int_{\bR^I \times \bR^I}
                    \chi_{\{x_{-I} \ne y_{-I}\}}
                    \odif {{\cond{\gamma}{x_{-I}, y_{-I}}(x_I, y_I)}} \\
        &= \int_{\bR^d \times \bR^d} \chi_{\{x_{-I} \ne y_{-I}\}} \odif \gamma \,. \qedhere
    \end{align*}
    }%
\end{proof}

While we defined $I$-aligned transport plans for probability measures on all of $\bR^d$, we extend
the definition to measures on subsets $\Omega \subset \bR^d$ (\eg on the cube) by canonically
extending any such measure to all of $\bR^d$ by assigning zero outside of $\Omega$.

\begin{lemma}[Composition of aligned transport plans]
    \label{lemma:aligned-composition}
    Let $I, J \subseteq [d]$ be nonempty. Then in \cref{lemma:gluing}, if $\gamma^+ \in \Pi_I(\mu,
    \varrho)$ and $\gamma^- \in \Pi_J(\varrho, \nu)$, then $\transmap{(\pi_{x,z})}{\sigma} \in
    \Pi_{I \cup J}(\mu, \nu)$.
\end{lemma}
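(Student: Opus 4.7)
The plan is to use the alternative characterization of alignment given by \cref{lemma:aligned-alternative}: a transport plan $\gamma$ is $K$-aligned (for nonempty $K \subsetneq [d]$) if and only if $\int \chi_{\{x_{-K} \ne y_{-K}\}} \odif \gamma(x,y) = 0$. Since the case $I \cup J = [d]$ is trivial by convention, I would assume $I \cup J \subsetneq [d]$ and let $K \define I \cup J$, $\gamma \define \transmap{(\pi_{x,z})}{\sigma}$. The goal is then to show that $\int_{\bR^d \times \bR^d} \chi_{\{x_{-K} \ne z_{-K}\}} \odif \gamma(x,z) = 0$.

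The key combinatorial observation is that the ``misalignment'' set of the composed plan is contained in the union of misalignment sets of the two factors. Precisely, if $(x, y, z) \in \bR^d \times \bR^d \times \bR^d$ satisfies $x_{-K} \ne z_{-K}$, then there exists some coordinate $i \in -K = (-I) \cap (-J)$ with $x_i \ne z_i$, so that either $x_i \ne y_i$ (in which case $x_{-I} \ne y_{-I}$, since $i \in -I$) or $y_i \ne z_i$ (in which case $y_{-J} \ne z_{-J}$, since $i \in -J$). This yields the pointwise inequality of indicator functions
\[
    \chi_{\{x_{-K} \ne z_{-K}\}}(x,y,z)
    \le \chi_{\{x_{-I} \ne y_{-I}\}}(x,y) + \chi_{\{y_{-J} \ne z_{-J}\}}(y,z) \,.
\]

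Integrating this inequality against $\sigma$, applying the pushforward definition of $\gamma$ on the left (so the $\sigma$ integral equals the $\gamma$ integral), and using the fact that $\transmap{(\pi_{x,y})}{\sigma} = \gamma^+$ and $\transmap{(\pi_{y,z})}{\sigma} = \gamma^-$ on the right, gives
\[
    \int \chi_{\{x_{-K} \ne z_{-K}\}} \odif \gamma
    \le \int \chi_{\{x_{-I} \ne y_{-I}\}} \odif \gamma^+
        + \int \chi_{\{y_{-J} \ne z_{-J}\}} \odif \gamma^- = 0 \,,
\]
where the two right-hand terms vanish by \cref{lemma:aligned-alternative} applied to the $I$-aligned $\gamma^+$ and $J$-aligned $\gamma^-$ respectively. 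Hence $\gamma$ is $K$-aligned, completing the proof. There is no real obstacle here; the only mild subtlety is correctly handling the trivial edge cases (e.g.\ $I \cup J = [d]$, or $I = J$), and making sure the index bookkeeping matches the conventions for $-I$, $-J$, and $-(I \cup J)$ in the disintegration notation.
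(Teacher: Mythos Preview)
Your proposal is correct and follows essentially the same route as the paper's proof: both reduce to \cref{lemma:aligned-alternative}, integrate a pointwise indicator inequality against $\sigma$, and push forward to $\gamma^+$ and $\gamma^-$. The only cosmetic difference is that the paper first bounds $\chi_{\{x_{-K} \ne z_{-K}\}}$ by $\chi_{\{x_{-K} \ne y_{-K}\}} + \chi_{\{y_{-K} \ne z_{-K}\}}$ and then uses $I, J \subseteq K$ to pass to $-I$ and $-J$, whereas you go directly to $\chi_{\{x_{-I} \ne y_{-I}\}} + \chi_{\{y_{-J} \ne z_{-J}\}}$ via the observation $-K = (-I) \cap (-J)$; the paper also explicitly notes that $\gamma \in \Pi(\mu,\nu)$ (citing the marginal check from \cref{lemma:directed-composition}), which you leave implicit.
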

\begin{proof}
    Let $\gamma \define \transmap{(\pi_{x,z})}{\sigma}$ and $K \define I \cup J$. As in the proof of
    \cref{lemma:directed-composition}, we do have $\gamma \in \Pi(\mu, \nu)$, so it remains to show
    that $\gamma$ is $K$-aligned (and we may assume that $K \subsetneq [n]$, otherwise there is
    nothing to prove). We have
    \begin{align*}
        &\int_{\bR^d \times \bR^d} \chi_{\{x_{-K} \ne z_{-K}\}} \odif \gamma \\
        &\qquad = \int_{\bR^d \times \bR^d \times \bR^d} \chi_{\{x_{-K} \ne z_{-K}\}}(x, z)
                \odif \sigma(x, y, z)
            & \text{(Pushforward)} \\
        &\qquad \le \int_{\bR^d \times \bR^d \times \bR^d}
                \left( \chi_{\{x_{-K} \ne y_{-K}\}}(x, y)
                    + \chi_{\{y_{-K} \ne z_{-K}\}}(y, z) \right)
                \odif \sigma(x, y, z) \\
        &\qquad
            = \int_{\bR^d \times \bR^d \times \bR^d} \chi_{\{x_{-K} \ne y_{-K}\}}
                \odif \sigma
            + \int_{\bR^d \times \bR^d \times \bR^d} \chi_{\{y_{-K} \ne z_{-K}\}}
                \odif \sigma \\
        &\qquad
            = \int_{\bR^d \times \bR^d} \chi_{\{x_{-K} \ne y_{-K}\}} \odif \gamma^+
            + \int_{\bR^d \times \bR^d} \chi_{\{y_{-K} \ne z_{-K}\}} \odif \gamma^-
            & \text{(Pushforward)} \\
        &\qquad
            \le \int_{\bR^d \times \bR^d} \chi_{\{x_{-I} \ne y_{-I}\}} \odif \gamma^+
            + \int_{\bR^d \times \bR^d} \chi_{\{y_{-J} \ne z_{-J}\}} \odif \gamma^-
            & \text{($I, J \subseteq K$)} \\
        &\qquad = 0 \,,
    \end{align*}
    the last step since $\gamma^+$ is $I$-aligned and $\gamma^-$ is $J$-aligned and by
    \cref{lemma:aligned-alternative}; and again by the latter, $\gamma$ is $K$-aligned.
\end{proof}

The following results let us find aligned transport plans by transporting mass only within
$i$-aligned lines.


\begin{lemma}[Measurable selection; specialization of {\cite[Corollary 5.22]{Vil09}}]
    \label{lemma:measurable-selection}
    Let $\Omega \subset \bR^d$ be a bounded set. Let $p \in [1, \infty)$. Let $A$ be a measurable
    space and let $a \mapsto (\mu_a, \nu_a)$ be a measurable function $A \to P(\bR) \times P(\bR)$
    with each $\mu_a$ and $\nu_a$ having bounded support. Then there is a measurable choice $a
    \mapsto \gamma_a$ such that, for each $a \in A$, $\gamma_a \in \Pi(\mu_a, \nu_a)$ and
    $C_p(\gamma_a) = W_p(\mu_a, \nu_a)$.
\end{lemma}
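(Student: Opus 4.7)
The plan is to invoke Villani's Corollary 5.22 essentially as a black box, since the lemma is explicitly labelled as a ``specialization'' of that result. First I would cast our setup into Villani's framework: the underlying base space is $\bR$ (a Polish space, and since supports are bounded we may even restrict to a compact interval), the cost function $c(x,y) = |x-y|^p$ is continuous and nonnegative, and the parameter space $A$ comes equipped with a measurable assignment $a \mapsto (\mu_a, \nu_a)$ into $P(\bR) \times P(\bR)$ endowed with its natural Borel $\sigma$-algebra generated by the weak topology.

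Next I would verify, for each fixed $a$, that an optimal coupling exists. Since $\mu_a$ and $\nu_a$ have bounded support in $\bR$, the set $\Pi(\mu_a, \nu_a)$ is tight and hence weakly compact by Prokhorov's theorem; moreover, on the bounded product of supports the cost $(x,y) \mapsto |x-y|^p$ is bounded and continuous, so $\gamma \mapsto C_p(\gamma)$ is weakly continuous on $\Pi(\mu_a, \nu_a)$. It follows that the set of optimal couplings
\[
    \mathrm{Opt}(a) \define \left\{ \gamma \in \Pi(\mu_a, \nu_a) : C_p(\gamma) = W_p(\mu_a, \nu_a) \right\}
\]
is nonempty and closed for each $a \in A$.

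The core of the proof is to check that the set-valued map $a \mapsto \mathrm{Opt}(a)$ is measurable, in the sense required by the Kuratowski--Ryll-Nardzewski selection theorem, which then produces a measurable selector $a \mapsto \gamma_a$. The measurability of $a \mapsto \Pi(\mu_a, \nu_a)$ follows from the measurability of $a \mapsto (\mu_a, \nu_a)$ together with the fact that the marginals map $P(\bR \times \bR) \to P(\bR) \times P(\bR)$ is continuous; combined with the continuity of $\gamma \mapsto C_p(\gamma)$ and the measurability of $a \mapsto W_p(\mu_a, \nu_a)$ (inherited from Villani's result), one obtains measurability of $\mathrm{Opt}$. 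Villani's Corollary 5.22 packages exactly this chain of reasoning into a single statement, so after verifying the hypotheses one concludes directly.

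The main (and essentially only) obstacle is bookkeeping: aligning our finite-dimensional, constant-cost setup with Villani's general formulation, which allows the cost function itself to depend measurably on the parameter. In our case the cost is independent of $a$, so this is the easiest instance of Villani's theorem; no computation beyond matching definitions is required.
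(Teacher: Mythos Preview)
Your proposal is correct and matches the paper's approach: the paper states this lemma as a direct specialization of Villani's Corollary~5.22 and gives no proof at all, simply citing it as a black box. Your sketch of how the hypotheses line up (Polish space, continuous bounded cost, constant cost independent of the parameter) is more than the paper provides, but entirely in the same spirit.
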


\begin{lemma}
    \label{lemma:i-aligned-plan}
    Let $p \in [1, \infty)$, $i \in [d]$, and let $\mu, \nu \in P(\bR^d)$ be supported inside some
    bounded set. Suppose $\marginal{\mu}{-i} = \marginal{\nu}{-i}$. Then there exists $\gamma \in
    \Pi_i(\mu, \nu)$ satisfying
    \[
        C_p(\gamma)^p
        = \int_{\bR^d_{-i}}
            W_p^p(\cond{\mu}{x_{-i}}, \cond{\nu}{x_{-i}}) \odif \mu_{-i}(x_{-i}) \,.
    \]
\end{lemma}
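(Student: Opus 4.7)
The plan is to construct the desired $i$-aligned plan by combining the disintegrations of $\mu$ and $\nu$ along the common marginal $\marginal{\mu}{-i} = \marginal{\nu}{-i}$ with a fiberwise optimal one-dimensional coupling, obtained via measurable selection. Concretely, first I would write the disintegrations $\mu = \int_{\bR^d_{-i}} \cond{\mu}{x_{-i}} \odif {\marginal{\mu}{-i}}$ and $\nu = \int_{\bR^d_{-i}} \cond{\nu}{x_{-i}} \odif {\marginal{\nu}{-i}} = \int_{\bR^d_{-i}} \cond{\nu}{x_{-i}} \odif {\marginal{\mu}{-i}}$ (the last equality uses the equal-marginals hypothesis), where each $\cond{\mu}{x_{-i}}, \cond{\nu}{x_{-i}} \in P(\bR)$ and both families are Borel in $x_{-i}$ by definition of disintegration.

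Next, I would invoke \cref{lemma:measurable-selection} with $A = \bR^d_{-i}$ and the map $a = x_{-i} \mapsto (\cond{\mu}{x_{-i}}, \cond{\nu}{x_{-i}})$, which is measurable by the preceding step, to obtain a Borel family $(\gamma_{x_{-i}})_{x_{-i} \in \bR^d_{-i}} \subset P(\bR \times \bR)$ such that $\gamma_{x_{-i}} \in \Pi(\cond{\mu}{x_{-i}}, \cond{\nu}{x_{-i}})$ and $C_p(\gamma_{x_{-i}})^p = W_p^p(\cond{\mu}{x_{-i}}, \cond{\nu}{x_{-i}})$ for every $x_{-i}$. (Here the bounded-support assumption ensures the hypotheses of the selection lemma are met, and it also guarantees the integrals below are finite.) Then I would define $\gamma \in P(\bR^d \times \bR^d)$ implicitly by the action
\[
    \int_{\bR^d \times \bR^d} f \odif \gamma
    \define \int_{\bR^d_{-i}} \odif {\marginal{\mu}{-i}(x_{-i})}
        \int_{\bR \times \bR} f\bigl((x_{-i}, x_i), (x_{-i}, y_i)\bigr)
        \odif \gamma_{x_{-i}}(x_i, y_i)
\]
for bounded Borel $f$; this is essentially a disintegration in the sense of \cref{def:disintegration} supported on the ``diagonal'' set $\{x_{-i} = y_{-i}\}$, hence $\gamma$ is $i$-aligned by construction (the marginal of $\gamma$ on $\bR^d_{-i} \times \bR^d_{-i}$ is supported on $\{x_{-i} = y_{-i}\}$).

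To verify $\gamma \in \Pi(\mu, \nu)$, I would test it on $f(x, y) = \varphi(x)$ and $f(x, y) = \varphi(y)$ for bounded Borel $\varphi$: the inner integral collapses to $\int \varphi(x_{-i}, x_i) \odif {\cond{\mu}{x_{-i}}(x_i)}$ (resp.\ the analogous quantity for $\nu$) since each $\gamma_{x_{-i}}$ has first marginal $\cond{\mu}{x_{-i}}$ and second marginal $\cond{\nu}{x_{-i}}$, and reassembling the disintegration recovers $\int \varphi \odif \mu$ (resp.\ $\int \varphi \odif \nu$). Finally, since $\gamma$ is $i$-aligned, the cost integrand satisfies $|x - y|^p = |x_i - y_i|^p$ on the support of $\gamma$, so taking $f(x, y) = |x - y|^p$ and unwinding the definition gives
\[
    C_p(\gamma)^p
    = \int_{\bR^d_{-i}} C_p(\gamma_{x_{-i}})^p \odif {\marginal{\mu}{-i}(x_{-i})}
    = \int_{\bR^d_{-i}} W_p^p(\cond{\mu}{x_{-i}}, \cond{\nu}{x_{-i}})
        \odif {\marginal{\mu}{-i}(x_{-i})} \,,
\]
as required.

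I expect the main technical obstacle to be the measurability bookkeeping: confirming that $x_{-i} \mapsto (\cond{\mu}{x_{-i}}, \cond{\nu}{x_{-i}})$ is Borel as a map into $P(\bR) \times P(\bR)$ with the weak topology, so that \cref{lemma:measurable-selection} applies and the constructed $\gamma$ is a bona fide Borel measure. This should be handled by the standing Borel property of disintegrations recorded in \cref{def:disintegration}, but care is needed to make the selection yield a measurable $x_{-i} \mapsto \gamma_{x_{-i}}$ (rather than just one that exists pointwise) so that the implicit definition of $\gamma$ via \eqref{eq:disintegration-condition} is legitimate.
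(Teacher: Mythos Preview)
Your proposal is correct and follows essentially the same approach as the paper: measurable selection of fiberwise optimal one-dimensional couplings via \cref{lemma:measurable-selection}, assembly of $\gamma$ as a disintegration supported on the diagonal $\{x_{-i} = y_{-i}\}$, verification of the marginals by testing on functions of $x$ and of $y$, and computation of the cost using that $|x-y|^p = |x_i - y_i|^p$ on the support. The paper's proof differs only in presentation, defining the marginal $\marginal{\gamma}{-i_x -i_y}$ and conditionals $\cond{\gamma}{x_{-i}, y_{-i}} = \gamma_{x_{-i}}$ explicitly rather than via the action on test functions, and checking the marginal condition on rectangles rather than on general bounded Borel $\varphi$.
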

\begin{proof}
    Let $A \define \bR^d_{-i}$ be equipped with the Borel $\sigma$-algebra on $\bR^{[d] \setminus
    \{i\}}$. Then the $A \to P(\bR) \times P(\bR)$ function $x_{-i} \mapsto (\cond{\mu}{x_{-i}},
    \cond{\nu}{x_{-i}})$ is Borel measurable because, by disintegration, both $x_{-i} \mapsto
    \cond{\mu}{x_{-i}}$ and $x_{-i} \mapsto \cond{\nu}{x_{-i}}$ are Borel measurable. By
    \cref{lemma:measurable-selection}, we obtain a Borel measurable map $x_{-i} \mapsto
    \gamma_{x_{-i}}$ such that, for each $x_{-i} \in \bR^d_{-i}$, $\gamma_{x_{-i}} \in
    \Pi(\cond{\mu}{x_{-i}}, \cond{\nu}{x_{-i}})$ and $C_p(\gamma_{x_{-i}})^p =
    W_p^p(\cond{\mu}{x_{-i}}, \cond{\nu}{x_{-i}})$.

    Define $\gamma \in P(\bR^d \times \bR^d)$ by its disintegration $\gamma = \int_{\bR^d_{-i}
    \times \bR^d_{-i}} \cond{\gamma}{x_{-i}, y_{-i}} \odif {\marginal{\gamma}{-i_x -i_y}}$ as
    follows. First, for the marginal $\marginal{\gamma}{-i_x -i_y} \in P(\bR^d_{-i} \times
    \bR^d_{-i})$, for each Borel $Z \subset \bR^d_{-i} \times \bR^d_{-i}$ we set
    \[
        \marginal{\gamma}{-i_x -i_y}(Z)
        \define \int_{\bR^d_{-i}} \chi_Z(x_{-i}, x_{-i}) \odif {\marginal{\mu}{-i}(x_{-i})} \,,
    \]
    where the integral is well-defined because the set $\{x_{-i} \in \bR^d_{-i} : (x_{-i}, x_{-i})
    \in Z\}$ is Borel by standard arguments. We remark that $\marginal{\gamma}{-i_x -i_y}$ is
    indeed a probability measure because it is nonnegative, it assigns zero to the empty set, it is
    additive over countable disjoint unions, and
    \[
        \marginal{\gamma}{-i_x -i_y}(\bR^d_{-i} \times \bR^d_{-i})
        = \int_{\bR^d_{-i}} \odif {\marginal{\mu}{-i}}
        = 1 \,.
    \]
    It follows that for every bounded (or nonnegative) Borel function $f : \bR^d_{-i} \times
    \bR^d_{-i} \to \bR$, we have
    \begin{equation}
        \label{eq:gamma-marginal-evaluation}
        \int_{\bR^d_{-i} \times \bR^d_{-i}} f(x_{-i}, y_{-i})
                \odif {{\marginal{\gamma}{-i_x -i_y}(x_{-i}, y_{-i})}}
        = \int_{\bR^d_{-i}} f(x_{-i}, x_{-i}) \odif {\marginal{\mu}{-i}(x_{-i})} \,.
    \end{equation}
    Second, for the conditionals, we simply set $\cond{\gamma}{x_{-i}, y_{-i}} \define
    \gamma_{x_{-i}}$ for each $(x_{-i}, y_{-i}) \in \bR^d_{-i} \times \bR^d_{-i}$.

    We claim that $\gamma \in \Pi_i(\mu, \nu)$. The $i$-aligned condition holds by construction, so
    it remains to show that $\gamma \in \Pi(\mu, \nu)$. We first verify that
    $\transmap{(\pi_1)}{\gamma} = \mu$. It suffices to verify that these measures agree on each
    rectangle $X \define X_{-i} \times X_i$ of Borel sets $X_{-i} \subset \bR^d_{-i}$, $X_i \subset
    \bR$. We have
    {\allowdisplaybreaks
        \begin{align*}
            &(\transmap{(\pi_1)}{\gamma})(X) \\
            &\quad = \int_{\bR^d \times \bR^d} \chi_X(x) \odif \gamma(x, y)
                & \text{(Pushforward)} \\
            &\quad = \int_{\bR^d_{-i} \times \bR^d_{-i}}
                    \odif {{\marginal{\gamma}{-i_x -i_y}(x_{-i}, y_{-i})}}
                    \int_{\bR \times \bR} \chi_X(x_{-i}, x_i)
                            \odif {{\cond{\gamma}{x_{-i}, y_{-i}}(x_i, y_i)}}
                & \text{(Disintegration)} \\
            &\quad = \int_{\bR^d_{-i} \times \bR^d_{-i}}
                    \chi_{X_{-i}}(x_{-i})
                    \odif {{\marginal{\gamma}{-i_x -i_y}(x_{-i}, y_{-i})}}
                    \int_{\bR \times \bR} \chi_{X_i}(x_i)
                            \odif {{\gamma_{x_{-i}}(x_i, y_i)}}
                & \text{(Definition of $X, \cond{\gamma}{x_{-i}, y_{-i}}$)} \\
            &\quad = \int_{\bR^d_{-i} \times \bR^d_{-i}}
                    \chi_{X_{-i}}(x_{-i})
                    (\transmap{(\pi_1)}{\gamma_{x_{-i}}})(X_i)
                    \odif {{\marginal{\gamma}{-i_x -i_y}(x_{-i}, y_{-i})}}
                & \text{(Pushforward)} \\
            &\quad = \int_{\bR^d_{-i} \times \bR^d_{-i}}
                    \chi_{X_{-i}}(x_{-i})
                    \cond{\mu}{x_{-i}}(X_i)
                    \odif {{\marginal{\gamma}{-i_x -i_y}(x_{-i}, y_{-i})}}
                & \text{($\gamma_{x_{-i}} \in \Pi(\cond{\mu}{x_{-i}}, \cond{\nu}{x_{-i}})$)} \\
            &\quad = \int_{\bR^d_{-i}} \chi_{X_{-i}}(x_{-i}) \cond{\mu}{x_{-i}}(X_i)
                    \odif {\marginal{\mu}{-i}(x_{-i})}
                & \text{(Application of \eqref{eq:gamma-marginal-evaluation})} \\
            &\quad = \int_{\bR^d_{-i}} \chi_{X_{-i}}(x_{-i})
                    \odif {\marginal{\mu}{-i}(x_{-i})}
                    \int_{\bR} \chi_{X_i}(x_i) \odif {{\cond{\mu}{x_{-i}}(x_i)}} \\
            &\quad = \int_{\bR^d_{-i}}
                    \odif {\marginal{\mu}{-i}(x_{-i})}
                    \int_{\bR} \chi_X(x_{-i}, x_i) \odif {{\cond{\mu}{x_{-i}}(x_i)}}
                & \text{(Definition of $X$)} \\
            &\quad = \int_{\bR^d} \chi_X \odif \mu
                = \mu(X)
                & \text{(Disintegration)} \,,
        \end{align*}
    }%
    as desired. Thanks to the hypothesis that $\marginal{\mu}{-i} = \marginal{\nu}{-i}$, an
    analogous calculation on the second variable yields that $\transmap{(\pi_2)}{\gamma} = \nu$ and
    hence $\gamma \in \Pi(\mu, \nu)$. Thus $\gamma \in \Pi_i(\mu, \nu)$ as claimed.

    Finally, we compute the cost of the plan $\gamma$. Again using its disintegration and
    \eqref{eq:gamma-marginal-evaluation}, we have
    {\allowdisplaybreaks
    \begin{align*}
        C_p(\gamma)^p
        &= \int_{\bR^d \times \bR^d} |x-y|^p \odif \gamma(x, y) \\
        &= \int_{\bR^d_{-i} \times \bR^d_{-i}}
                \odif {{\marginal{\gamma}{-i_x -i_y}(x_{-i}, y_{-i})}}
                \int_{\bR \times \bR} |(x_{-i}, x_i) - (y_{-i}, y_i)|^p
                    \odif {{\cond{\gamma}{x_{-i}, y_{-i}}(x_i, y_i)}} \\
        &= \int_{\bR^d_{-i}} \odif {{\marginal{\mu}{-i}(x_{-i})}}
                \int_{\bR \times \bR} |(x_{-i}, x_i) - (x_{-i}, y_i)|^p
                    \odif {{\cond{\gamma}{x_{-i}, x_{-i}}(x_i, y_i)}} \\
        &= \int_{\bR^d_{-i}} \odif {{\marginal{\mu}{-i}(x_{-i})}}
                \int_{\bR \times \bR} |x_i - y_i|^p
                    \odif {{\gamma_{x_{-i}}(x_i, y_i)}} \\
        &= \int_{\bR^d_{-i}} \odif {{\marginal{\mu}{-i}(x_{-i})}} C_p(\gamma_{x_{-i}})^p
        = \int_{\bR^d_{-i}} W_p^p(\cond{\mu}{x_{-i}}, \cond{\nu}{x_{-i}})
                \odif {{\marginal{\mu}{-i}(x_{-i})}} \,. \qedhere
    \end{align*}
    }%
\end{proof}

We may combine the above to find \emph{directed} transport plans by transporting mass within
axis-aligned lines, as long as the first probability distribution dominates the second in each such
line.

\begin{lemma}
    \label{lemma:directed-plan-one-coordinate}
    Let $p \in (1, +\infty)$, and let $\mu, \nu \in P(\closedInt^d)$ be absolutely continuous
    probability measures with strictly positive densities. Let $i \in [d]$ and suppose that
    1)~$\marginal{\mu}{-i} = \marginal{\nu}{-i}$ and 2)~$\cond{\mu}{x_{-i}} \succeq
    \cond{\nu}{x_{-i}}$ for each $x_{-i} \in \closedInt^d_{-i}$. Then there exists $\gamma \in
    \Pi_i(\mu \to \nu)$ satisfying
    \begin{equation}
        \label{eq:directed-plan-one-coordinate}
        C_p(\gamma)^p
        = \int_{\closedInt^d_{-i}} W_p^p(\cond{\mu}{x_{-i}}, \cond{\nu}{x_{-i}})
                \odif {\marginal{\mu}{-i}(x_{-i})} \,.
    \end{equation}
\end{lemma}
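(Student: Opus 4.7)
The plan is to combine \cref{lemma:i-aligned-plan}, which uses measurable selection to find an $i$-aligned plan realizing the stated cost, with \cref{cor:directed-wp-1d}, which guarantees that in one dimension, when the source measure dominates the target, the optimal plan is automatically directed. The key point is that the uniqueness of the one-dimensional optimal plan (\cref{prop:mon-maps}) forces the measurable selection to pick directed conditionals on almost every fiber.

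More precisely, first I would apply \cref{lemma:i-aligned-plan} with the given $p$, $i$, $\mu$, $\nu$ to obtain $\gamma \in \Pi_i(\mu, \nu)$ satisfying \eqref{eq:directed-plan-one-coordinate}. The hypothesis $\marginal{\mu}{-i} = \marginal{\nu}{-i}$ is needed to apply that lemma. The plan $\gamma$ is constructed via disintegration $\gamma = \int_{\bR^d_{-i} \times \bR^d_{-i}} \cond{\gamma}{x_{-i}, y_{-i}} \odif {\marginal{\gamma}{-i_x -i_y}}$, where $\cond{\gamma}{x_{-i}, x_{-i}} = \gamma_{x_{-i}}$ for a measurable selection of optimal couplings $\gamma_{x_{-i}} \in \Pi(\cond{\mu}{x_{-i}}, \cond{\nu}{x_{-i}})$.

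Next, I would upgrade $i$-alignment to directedness. By \cref{prop:disintegration-absolutely-continuous}, since $\mu$ and $\nu$ have strictly positive Borel densities on $\closedInt^d$, for every $x_{-i} \in \closedInt^d_{-i}$ the conditionals $\cond{\mu}{x_{-i}}$ and $\cond{\nu}{x_{-i}}$ are absolutely continuous on $[0,1]$ with strictly positive densities. Combined with hypothesis (2), \cref{cor:directed-wp-1d} (via the uniqueness statement of \cref{prop:mon-maps}) gives that for each such $x_{-i}$, there is a \emph{unique} optimal coupling between $\cond{\mu}{x_{-i}}$ and $\cond{\nu}{x_{-i}}$, and this coupling lies in $\Pi(\cond{\mu}{x_{-i}} \to \cond{\nu}{x_{-i}})$. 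Hence any measurable selection $x_{-i} \mapsto \gamma_{x_{-i}}$ of optimal plans, including the one from \cref{lemma:i-aligned-plan}, satisfies $\gamma_{x_{-i}} \in \Pi(\cond{\mu}{x_{-i}} \to \cond{\nu}{x_{-i}})$ for all $x_{-i} \in \closedInt^d_{-i}$.

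Finally, to verify $\gamma \in \Pi(\mu \to \nu)$, I would use that on the support of $\gamma$, $i$-alignment forces $x_{-i} = y_{-i}$, so $x \not\preceq y$ reduces to $x_i > y_i$. Using the disintegration of $\gamma$ (analogously to the cost computation in the proof of \cref{lemma:i-aligned-plan}),
\[
    \int_{\bR^d \times \bR^d} \chi_{\{x \not\preceq y\}} \odif \gamma
    = \int_{\closedInt^d_{-i}} \odif {\marginal{\mu}{-i}(x_{-i})}
        \int_{\bR \times \bR} \chi_{\{x_i > y_i\}} \odif {\gamma_{x_{-i}}(x_i, y_i)}
    = 0 \,,
\]
the last equality because each $\gamma_{x_{-i}}$ is directed. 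Thus $\gamma \in \Pi_i(\mu \to \nu)$ and \eqref{eq:directed-plan-one-coordinate} holds by \cref{lemma:i-aligned-plan}. The only mildly subtle step is justifying that the measurable selection necessarily yields directed plans; this is not an obstacle because one-dimensional optimality is witnessed by a unique plan under our hypotheses, so there is no ``choice'' for the selection to make incorrectly.
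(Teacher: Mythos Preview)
Your proposal is correct and follows essentially the same approach as the paper: apply \cref{lemma:i-aligned-plan} to get an $i$-aligned plan with the stated cost, invoke uniqueness of the one-dimensional optimal coupling (\cref{prop:mon-maps}) together with \cref{cor:directed-wp-1d} to conclude that each fiber coupling $\gamma_{x_{-i}}$ is directed, and then verify $\gamma \in \Pi(\mu \to \nu)$ by disintegration. Your explicit appeal to \cref{prop:disintegration-absolutely-continuous} to confirm that the conditionals have strictly positive densities (so that \cref{prop:mon-maps} applies) is a detail the paper leaves implicit.
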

\begin{proof}
    Let $\gamma \in \Pi_i(\mu, \nu)$ be the plan obtained from \cref{lemma:i-aligned-plan}, which
    satisfies \eqref{eq:directed-plan-one-coordinate}. After recalling the definition of $\gamma$ in
    \cref{lemma:i-aligned-plan}, \cref{cor:directed-wp-1d} implies that $\cond{\gamma}{x_{-i},
    y_{-i}} \in \Pi(\cond{\mu}{x_{-i}} \to \cond{\nu}{x_{-i}})$ for each $(x_{-i}, y_{-i}) \in
    \closedInt^d_{-i} \times \closedInt^d_{-i}$. It remains to confirm that $\gamma \in \Pi(\mu \to
    \nu)$. Indeed,
    \begin{align*}
        &\int_{\bR^d \times \bR^d} \chi_{\{x \not\preceq y\}} \odif \gamma(x, y) \\
        &\quad = \int_{\bR^d_{-i} \times \bR^d_{-i}}
                \odif {{\marginal{\gamma}{-i_x -i_y}(x_{-i}, y_{-i})}}
                \int_{\bR \times \bR} \chi_{\{x \not\preceq y\}}((x_{-i}, x_i), (y_{-i}, y_i))
                        \odif {{\cond{\gamma}{x_{-i}, y_{-i}}(x_i, y_i)}}
            & \text{(Disintegration)} \\
        &\quad \le \int_{\bR^d_{-i} \times \bR^d_{-i}}
                \odif {{\marginal{\gamma}{-i_x -i_y}(x_{-i}, y_{-i})}}
                \int_{\bR \times \bR} \chi_{\{x_{-i} \not\preceq y_{-i}\}}(x_{-i}, y_{-i})
                        \odif {{\cond{\gamma}{x_{-i}, y_{-i}}(x_i, y_i)}} \\
        &\quad\qquad + \int_{\bR^d_{-i} \times \bR^d_{-i}}
                \odif {{\marginal{\gamma}{-i_x -i_y}(x_{-i}, y_{-i})}}
                \underbrace{\int_{\bR \times \bR} \chi_{\{x_i > y_i\}}(x_i, y_i)
                        \odif {{\cond{\gamma}{x_{-i}, y_{-i}}(x_i, y_i)}}}_{\text{$= 0$
                            since $\cond{\gamma}{x_{-i}, y_{-i}}$ is a directed plan}} \\
        &\quad = \int_{\bR^d_{-i} \times \bR^d_{-i}}
                \chi_{\{x_{-i} \not\preceq y_{-i}\}}
                \odif {{\marginal{\gamma}{-i_x -i_y}}}
                \underbrace{\int_{\bR \times \bR} \odif {{\cond{\gamma}{x_{-i}, y_{-i}}}}}_{= 1}
        = \int_{\bR^d_{-i}} \underbrace{\chi_{\{x_{-i} \not\preceq x_{-i}\}}}_{= 0}
                \odif {\marginal{\mu}{-i}(x_{-i})}
            & \text{(By \eqref{eq:gamma-marginal-evaluation})} \\
        &\quad = 0 \,. & \qedhere
    \end{align*}
\end{proof}

Now, we show that the directed, aligned $2$-Wasserstein distance enjoys a nice ``Pythagorean''
composition property: composing $I$-aligned and $J$-aligned directed transport plans, for $I$ and
$J$ disjoint, yields an $I \cup J$-aligned directed transport plan whose cost is obtained from the
costs of the two other plans via the Pythagorean theorem. Arguments of this nature are well-known in
the undirected case, \eg a similar statement appears in \cite[p.~572]{Vil09}.

\lemmapythagoreancomposition
\begin{proof}
    Using \cref{lemma:gluing}, we obtain $\sigma \in P(\bR^d \times \bR^d \times \bR^d)$ such that
    $\transmap{(\pi_{1,2})}{\sigma} = \gamma^+$ and $\transmap{(\pi_{2,3})}{\sigma} = \gamma^-$, and
    by \cref{lemma:directed-composition} and \cref{lemma:aligned-composition}, we have $\gamma
    \define \transmap{(\pi_{1,3})}{\sigma} \in \Pi_{I \cup J}(\mu \to \nu)$. To compute the cost of
    $\gamma$, we first claim that $\sigma$ assigns zero measure to points $x, y, z$ such that
    $x_{-I} \ne y_{-I}$ or $y_{-J} \ne z_{-J}$. Indeed,
    \begin{align*}
        0
        &\le \int_{\bR^d \times \bR^d \times \bR^d}
                (1 - \chi_{\{x_{-I}=y_{-I} \text{ and } y_{-J}=z_{-J}\}}) \odif \sigma(x,y,z) \\
        &\le \int_{\bR^d \times \bR^d \times \bR^d}
                \chi_{\{x_{-I} \ne y_{-I}\}} \odif \sigma(x,y,z)
            + \int_{\bR^d \times \bR^d \times \bR^d}
                \chi_{\{y_{-J} \ne z_{-J}\}} \odif \sigma(x,y,z) \\
        &= \int_{\bR^d \times \bR^d}
                \chi_{\{x_{-I} \ne y_{-I}\}} \odif \gamma^+(x,y)
            + \int_{\bR^d \times \bR^d}
                \chi_{\{y_{-J} \ne z_{-J}\}} \odif \gamma^-(y,z)
        = 0 \,,
    \end{align*}
    \sloppy
    the last step by \cref{lemma:aligned-alternative} since $\gamma^+$ and $\gamma^-$ are $I$- and
    $J$-aligned, respectively. Hence $\int f \odif \sigma = \int f \chi_{\{x_{-I}=y_{-I} \text{ and
    } y_{-J}=z_{-J}\}} \odif \sigma$ for any bounded or nonnegative Borel function $f$. Therefore,
    using the assumption that $I$ and $J$ are disjoint to apply the Pythagorean theorem, we obtain
    {\allowdisplaybreaks
        \begin{align*}
            C_2(\gamma)^2
            &= \int_{\bR^d \times \bR^d} |x-z|^2 \odif \gamma(x,z)
            = \int_{\bR^d \times \bR^d \times \bR^d} |x-z|^2 \odif \sigma(x,y,z) \\
            &= \int_{\bR^d \times \bR^d \times \bR^d}
                \chi_{\{x_{-I}=y_{-I} \text{ and } y_{-J}=z_{-J}\}} |x-z|^2 \odif \sigma(x,y,z) \\
            &= \int_{\bR^d \times \bR^d \times \bR^d}
                \chi_{\{x_{-I}=y_{-I} \text{ and } y_{-J}=z_{-J}\}}
                \left( |x-y|^2 + |y-z|^2 \right) \odif \sigma(x,y,z) \\
            &= \int_{\bR^d \times \bR^d \times \bR^d}
                \left( |x-y|^2 + |y-z|^2 \right) \odif \sigma(x,y,z) \\
            &= \int_{\bR^d \times \bR^d \times \bR^d}
                    |x-y|^2 \odif \sigma(x,y,z)
                + \int_{\bR^d \times \bR^d \times \bR^d}
                    |y-z|^2 \odif \sigma(x,y,z) \\
            &= \int_{\bR^d \times \bR^d}
                    |x-y|^2 \odif \gamma^+(x,y)
                + \int_{\bR^d \times \bR^d}
                    |y-z|^2 \odif \gamma^-(y,z) \\
            &= C_2(\gamma^+)^2 + C_2(\gamma^-)^2 \,. \qedhere
        \end{align*}
    }%
\end{proof}

\begin{lemma}[Induction over coordinates]
    \label{lemma:induction}
    Let $(\mu^{(i)})_{i \in [d+1]}$ be a family of absolutely continuous probability measures on
    $\closedInt^d$ with strictly positive densities. Suppose that, for each $i \in [d]$, we have
    1)~$\marginal{\mu^{(i)}}{-i} = \marginal{\mu^{(i+1)}}{-i}$ and 2)~$\cond{\mu^{(i)}}{x_{-i}}
    \succeq \cond{\mu^{(i+1)}}{x_{-i}}$ for each $x_{-i} \in \closedInt^d_{-i}$. Then
    \[
        W_2^2(\mu^{(1)} \to \mu^{(d+1)})
        \le \sum_{i=1}^d \int_{\closedInt^d_{-i}}
                W_2^2(\cond{\mu^{(i)}}{x_{-i}}, \cond{\mu^{(i+1)}}{x_{-i}})
                \odif {\marginal{\mu^{(i)}}{-i}(x_{-i})} \,.
    \]
\end{lemma}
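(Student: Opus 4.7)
The plan is to produce, for each coordinate direction $i \in [d]$, a directed and $\{i\}$-aligned transport plan $\gamma^{(i)} \in \Pi_i(\mu^{(i)} \to \mu^{(i+1)})$ whose squared cost equals exactly the $i$-th summand on the right-hand side, and then to compose these $d$ plans into a single directed plan from $\mu^{(1)}$ to $\mu^{(d+1)}$ whose squared cost is the sum of the squared costs of the pieces. The key point enabling the second step is the Pythagorean nature of $W_2^2$, which is exactly the content of \cref{lemma:pythagorean-composition}.

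More precisely, I would first apply \cref{lemma:directed-plan-one-coordinate} to each pair $(\mu^{(i)}, \mu^{(i+1)})$ with $p=2$ and direction $i$: the hypotheses of \cref{lemma:induction} (matching marginals in directions $-i$ and pointwise domination along the $i$-th coordinate) are exactly what that lemma requires, so we obtain $\gamma^{(i)} \in \Pi_i(\mu^{(i)} \to \mu^{(i+1)})$ with
\[
    C_2(\gamma^{(i)})^2
    = \int_{\closedInt^d_{-i}} W_2^2(\cond{\mu^{(i)}}{x_{-i}}, \cond{\mu^{(i+1)}}{x_{-i}})
            \odif {\marginal{\mu^{(i)}}{-i}(x_{-i})} \,.
\]

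Next, I would induct on $k \in [d]$ to build a plan $\Gamma^{(k)} \in \Pi_{[k]}(\mu^{(1)} \to \mu^{(k+1)})$ satisfying $C_2(\Gamma^{(k)})^2 = \sum_{i=1}^k C_2(\gamma^{(i)})^2$. The base case $k=1$ is just $\Gamma^{(1)} \define \gamma^{(1)}$. For the inductive step, since $[k]$ and $\{k+1\}$ are disjoint subsets of $[d]$ and all relevant measures are supported in the bounded cube $\closedInt^d$, \cref{lemma:pythagorean-composition} applied to $\Gamma^{(k)} \in \Pi_{[k]}(\mu^{(1)} \to \mu^{(k+1)})$ and $\gamma^{(k+1)} \in \Pi_{k+1}(\mu^{(k+1)} \to \mu^{(k+2)})$ yields $\Gamma^{(k+1)} \in \Pi_{[k+1]}(\mu^{(1)} \to \mu^{(k+2)})$ with $C_2(\Gamma^{(k+1)})^2 = C_2(\Gamma^{(k)})^2 + C_2(\gamma^{(k+1)})^2$, closing the induction. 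Taking $k = d$ and using $\Gamma^{(d)} \in \Pi(\mu^{(1)} \to \mu^{(d+1)})$ as a feasible directed plan in the definition of $W_2(\mu^{(1)} \to \mu^{(d+1)})$ gives the stated inequality.

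The argument is essentially a direct bookkeeping combination of two earlier lemmas, so the main ``obstacle'' is simply verifying that the hypotheses line up cleanly: that the index sets $[k]$ and $\{k+1\}$ are disjoint (trivial), that the measures remain absolutely continuous with strictly positive densities at each intermediate $\mu^{(i)}$ (given by hypothesis), and that the required alignment and domination conditions hold to invoke \cref{lemma:directed-plan-one-coordinate} on each pair. No calculation beyond the additive telescoping of squared costs is needed.
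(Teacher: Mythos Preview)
Your proposal is correct and follows essentially the same approach as the paper: construct single-coordinate directed plans via \cref{lemma:directed-plan-one-coordinate} and combine them inductively via \cref{lemma:pythagorean-composition}, with the Pythagorean identity doing the bookkeeping of squared costs. The only cosmetic difference is that you first construct all the $\gamma^{(i)}$ upfront and then compose, whereas the paper builds each one-coordinate plan inside the inductive step; the logic and the cited lemmas are identical.
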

\begin{proof}
    For each $k \in [d+1]$, let $A(k)$ be the following proposition: there exists $\gamma \in
    \Pi_{[k]}(\mu^{(1)} \to \mu^{(k+1)})$ satisfying
    \[
        C_2(\gamma)^2
        = \sum_{i=1}^k \int_{\closedInt^d_{-i}}
                W_2^2(\cond{\mu^{(i)}}{x_{-i}}, \cond{\mu^{(i+1)}}{x_{-i}})
                \odif {\marginal{\mu^{(i)}}{-i}(x_{-i})} \,.
    \]
    Note that $A(d)$ implies the result we want to prove, while $A(1)$ holds by
    \cref{lemma:directed-plan-one-coordinate}. Now let $2 \le k \le d$. Suppose $A(k-1)$ holds, and
    thus let $\gamma^+ \in \Pi_{[k-1]}(\mu^{(1)} \to \mu^{(k)})$ satisfy
    \[
        C_2(\gamma^+)^2
        = \sum_{i=1}^{k-1} \int_{\closedInt^d_{-i}}
                W_2^2(\cond{\mu^{(i)}}{x_{-i}}, \cond{\mu^{(i+1)}}{x_{-i}})
                \odif {\marginal{\mu^{(i)}}{-i}(x_{-i})} \,.
    \]
    Apply \cref{lemma:directed-plan-one-coordinate} with $i=k$ to obtain a plan $\gamma^- \in
    \Pi_k(\mu^{(k)} \to \mu^{(k+1)})$ satisfying
    \[
        C_2(\gamma^-)^2 = \int_{\closedInt^d_{-k}}
                W_2^2(\cond{\mu^{(k)}}{x_{-k}}, \cond{\mu^{(k+1)}}{x_{-k}})
                \odif {\marginal{\mu^{(k)}}{-k}(x_{-k})} \,.
    \]
    Then, use \cref{lemma:pythagorean-composition} with $I = [k-1]$ and $J = \{k\}$ to obtain
    $\gamma \in \Pi_{[k]}(\mu^{(1)} \to \mu^{(k+1)})$ satisfying
    \[
        C_2(\gamma)^2
        = C_2(\gamma^+)^2 + C_2(\gamma^-)^2
        = \sum_{i=1}^k \int_{\closedInt^d_{-i}}
                W_2^2(\cond{\mu^{(i)}}{x_{-i}}, \cond{\mu^{(i+1)}}{x_{-i}})
                \odif {\marginal{\mu^{(i)}}{-i}(x_{-i})} \,,
    \]
    which implies that $A(k)$ holds. Thus $A(d)$ follows by induction.
\end{proof}

\subsection{From one-dimensional PDE to optimal transport in the cube}
\label{section:tensorization}

We now tie most of the foregoing theory together by showing how to inductively transform a function
on the unit cube into a monotone function, one coordinate at a time via the directed heat semigroup
machinery, keeping track of the cost using the directed optimal transport machinery.

For this part of the proof, it is convenient to restrict our attention to Lipschitz functions. As we
will see, the Lipschitz property of $f : \closedInt^d \to \bR$ is preserved under taking the
monotone equilibrium along each axis-aligned direction, which enables an inductive argument over the
coordinates.

Denote the set of Lipschitz functions $f : \closedInt^d \to \bR$ by $\Lip$ (the dimension $d$ will
always be clear from context).

\begin{observation}
    A function $f : \closedInt^d \to \bR$ is Lipschitz if and only if its restriction to every
    axis-aligned line is uniformly Lipschitz, \ie there exists some $M > 0$ such that, for every $i
    \in [d]$ and $x_{-i} \in \closedInt^d_{-i}$, $f(x_{-i}, \cdot) : [0,1] \to \bR$ is
    $M$-Lipschitz.
\end{observation}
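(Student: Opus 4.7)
The plan is to prove each direction separately, both using only the triangle inequality and the geometry of axis-aligned paths in $\closedInt^d$.

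For the ``only if'' direction, suppose $f$ is $L$-Lipschitz with respect to the Euclidean norm. Then for any $i \in [d]$, $x_{-i} \in \closedInt^d_{-i}$, and $s, t \in [0,1]$, the points $(x_{-i}, s)$ and $(x_{-i}, t)$ differ only in coordinate $i$, so $|(x_{-i}, s) - (x_{-i}, t)| = |s-t|$, and the Lipschitz bound on $f$ immediately gives $|f(x_{-i}, s) - f(x_{-i}, t)| \le L|s-t|$. Hence $M = L$ works uniformly.

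For the ``if'' direction, fix $x, y \in \closedInt^d$ and interpolate between them along axis-aligned segments: define $z^{(0)} \define x$ and $z^{(i)} \define (y_1, \dotsc, y_i, x_{i+1}, \dotsc, x_d)$ for $i \in [d]$, so that $z^{(d)} = y$ and each consecutive pair $z^{(i-1)}, z^{(i)}$ differs only in coordinate $i$. Applying the triangle inequality followed by the hypothesized one-dimensional Lipschitz bound coordinatewise yields
\[
    |f(x) - f(y)|
    \le \sum_{i=1}^d |f(z^{(i-1)}) - f(z^{(i)})|
    \le M \sum_{i=1}^d |x_i - y_i|
    = M \|x-y\|_1
    \le M\sqrt{d}\, |x-y| \,,
\]
where the last step is Cauchy--Schwarz. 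So $f$ is $M\sqrt{d}$-Lipschitz on $\closedInt^d$.

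There is no real obstacle here; the argument is essentially a routine reduction between the $\ell^1$ and $\ell^2$ geometries on the cube via coordinatewise interpolation. The only thing worth noting is that the implicit global Lipschitz constant one obtains from the uniform axis-aligned bound $M$ is degraded by a factor of $\sqrt{d}$, which is tight in general (e.g.\ $f(x) = \sum_i x_i$ is $1$-Lipschitz in each coordinate but $\sqrt{d}$-Lipschitz globally); this asymmetry is harmless for the qualitative claim being made in the observation.
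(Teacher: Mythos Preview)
Your proof is correct. The paper states this as an observation without proof, so there is nothing to compare against; the coordinate-by-coordinate interpolation followed by Cauchy--Schwarz that you give is exactly the standard argument one would expect here.
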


\begin{definition}[$i$- and $I$-monotonicity]
    Let $i \in [d]$. We say $f : \closedInt^d \to \bR$ is \emph{$i$-monotone} if its every
    restriction along direction $i$ is nondecreasing, \ie if for every $x_{-i} \in
    \closedInt^d_{-i}$, $f(x_{-i}, \cdot)$ is nondecreasing. For $I \subseteq [d]$, we say $f$ is
    \emph{$I$-monotone} if it is $i$-monotone for every $i \in I$. We write $\Mon_I$ for the set of
    $I$-monotone functions.
\end{definition}

\begin{observation}
    $\Mon_{[d]}$ is simply the set of monotone functions on $\closedInt^d$.
\end{observation}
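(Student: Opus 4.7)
The plan is to verify both inclusions of the set equality. The forward direction ($\Mon_{[d]} \supseteq$ monotone functions) is immediate: if $f$ is monotone in the partial-order sense and $x_{-i} \in \closedInt^d_{-i}$ is fixed with $s \le t$ in $[0,1]$, then the points $(x_{-i}, s)$ and $(x_{-i}, t)$ satisfy $(x_{-i}, s) \preceq (x_{-i}, t)$, so monotonicity of $f$ gives $f(x_{-i}, s) \le f(x_{-i}, t)$. Hence $f$ is $i$-monotone for every $i \in [d]$, \ie $f \in \Mon_{[d]}$.

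For the reverse direction, suppose $f \in \Mon_{[d]}$ and let $x \preceq y$ in $\closedInt^d$. I would interpolate between $x$ and $y$ by changing one coordinate at a time: define $z_0 \define x$ and, for each $i \in [d]$, let $z_i \in \closedInt^d$ be obtained from $z_{i-1}$ by replacing its $i$-th coordinate by $y_i$, so that $z_d = y$. By construction, $z_{i-1}$ and $z_i$ agree in every coordinate except possibly $i$, and their $i$-th coordinates satisfy $(z_{i-1})_i = x_i \le y_i = (z_i)_i$ since $x \preceq y$. Applying $i$-monotonicity of $f$ to the restriction along the $i$-th direction with fixed coordinates $(z_{i-1})_{-i} = (z_i)_{-i}$, we obtain $f(z_{i-1}) \le f(z_i)$. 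Chaining these inequalities gives $f(x) = f(z_0) \le f(z_1) \le \dotsm \le f(z_d) = f(y)$, so $f$ is monotone in the partial order.

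There is no real obstacle here; this is essentially a bookkeeping verification that coordinate-wise monotonicity in every direction is equivalent to monotonicity with respect to the product partial order, via a standard telescoping argument over the coordinates.
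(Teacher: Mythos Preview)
Your proof is correct. The paper states this as an observation without proof, treating the equivalence as immediate; your telescoping argument over coordinates is exactly the natural verification one would supply if asked to spell it out.
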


The following useful characterization of $i$-monotone functions in terms of $k$-aligned lines
follows by definition:

\begin{observation}
    \label{obs:mon-k}
    Let $i, k \in [d]$ be distinct. Then for all $f : \closedInt^d \to \bR$, $f$ is $i$-monotone if
    and only if the following holds: for every $z_{-i-k} \in \closedInt^d_{-i-k}$, and for all $x_i
    \le y_i$ in $[0,1]$, $f(z_{-i-k}, x_i, \cdot) \le f(z_{-i-k}, y_i, \cdot)$.
\end{observation}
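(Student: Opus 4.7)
The plan is to prove both directions by unfolding the definitions and observing that the tuple $(z_{-i-k}, x_k) \in \closedInt^d_{-i-k} \times [0,1]$ carries exactly the same data as a point $x_{-i} \in \closedInt^d_{-i}$, with $x_k$ playing the role of the $k$-th coordinate of $x_{-i}$.

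First, for the forward direction, I would assume $f$ is $i$-monotone, fix $z_{-i-k} \in \closedInt^d_{-i-k}$ and $x_i \le y_i$ in $[0,1]$, and show that $f(z_{-i-k}, x_i, x_k) \le f(z_{-i-k}, y_i, x_k)$ for every $x_k \in [0,1]$. To do so, I would define $x_{-i} \in \closedInt^d_{-i}$ by placing $z_{-i-k}$ in coordinates $[d] \setminus \{i,k\}$ and $x_k$ in coordinate $k$; then $i$-monotonicity of $f$ gives that $f(x_{-i}, \cdot)$ is nondecreasing, so evaluating at $x_i \le y_i$ yields the desired inequality.

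For the converse, I would assume the line-restriction condition and show that, for every $x_{-i} \in \closedInt^d_{-i}$, the function $f(x_{-i}, \cdot)$ is nondecreasing. Given such $x_{-i}$, I would let $z_{-i-k}$ consist of the coordinates of $x_{-i}$ other than the $k$-th, and let $x_k$ be its $k$-th coordinate; applying the hypothesis with this $z_{-i-k}$ and any $x_i \le y_i$ at the point $x_k$ gives $f(x_{-i}, x_i) \le f(x_{-i}, y_i)$, which is exactly what is required.

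Since the statement is essentially a notational re-indexing, there is no real obstacle; the only thing to be careful about is keeping straight which coordinates correspond to which index labels when splitting $\closedInt^d_{-i}$ as $\closedInt^d_{-i-k} \times [0,1]$ (with the second factor indexed by $k$). The proof will be a few lines long, just making this identification explicit.
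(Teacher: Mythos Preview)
Your proposal is correct and matches the paper's approach: the paper simply states that this observation ``follows by definition'' without writing out a proof, and your careful unpacking of the re-indexing between $\closedInt^d_{-i}$ and $\closedInt^d_{-i-k} \times [0,1]$ is exactly the content behind that remark.
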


\begin{definition}[Monotone equilibrium operator]
    Let $k \in [d]$. We define the operator $\cM_k : \Lip \to (\closedInt^d \to \bR)$ as follows:
    for each function $f \in \Lip$, $x_{-k} \in \closedInt^d_{-k}$ and $x_k \in [0,1]$, the function
    $\cM_k f : \closedInt^d \to \bR$ satisfies
    \[
        (\cM_k f)(x_{-k}, x_k) \define (P_\infty f(x_{-k}, \cdot))(x_k) \,,
    \]
    where the application of $P_\infty$ is well-defined because each $f(x_{-k}, \cdot)$ is Lipschitz
    and hence in $L^2(I)$.
\end{definition}

The following lemma shows that $\cM_k$ preserves Lipschitzness, confirming that $\cM_k f$ is
well-defined pointwise as a real-valued function, as opposed to only defined up to sets of measure
zero.

\begin{lemma}
    \label{lemma:mk-lipschitz}
    Let $k \in [d]$. Let $f \in \Lip$. Then $\cM_k f \in \Lip$.
\end{lemma}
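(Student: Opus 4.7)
The plan is to show that if $f$ is $M$-Lipschitz on $\closedInt^d$, then $\cM_k f$ is Lipschitz (with constant at most $M\sqrt{2}$, which clearly suffices), by controlling variation of $\cM_k f$ along direction $k$ separately from variation across the remaining coordinates and then combining via the triangle inequality. All the analytic content has already been extracted in the previous subsections; I just need to package it correctly at the pointwise level.

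For variation along direction $k$: fix $x_{-k} \in \closedInt^d_{-k}$. Since $f$ is $M$-Lipschitz, the restriction $f(x_{-k},\cdot)$ is $M$-Lipschitz on $[0,1]$, so $\psi(f(x_{-k},\cdot)) \le M$. Then \cref{prop:lipschitz-regularity-monotone-equilibrium} gives $\psi(P_\infty f(x_{-k},\cdot)) \le M$, and by \cref{fact:lipschitz} the continuous representative of $P_\infty f(x_{-k},\cdot)$ is $M$-Lipschitz. Since the pointwise evaluation in the definition of $\cM_k f$ must be understood via this continuous representative, this immediately gives $|(\cM_k f)(x_{-k},x_k) - (\cM_k f)(x_{-k},y_k)| \le M|x_k - y_k|$.

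For variation across $x_{-k}$ at a fixed $k$-th coordinate $t \in [0,1]$: fix $x_{-k}, y_{-k} \in \closedInt^d_{-k}$ and apply \cref{prop:p-infty-l-infty-nonexpansive} to $u \define f(x_{-k},\cdot)$ and $v \define f(y_{-k},\cdot)$, obtaining
\[
    |P_\infty f(x_{-k},\cdot) - P_\infty f(y_{-k},\cdot)| \le \esssup_{s \in I} |f(x_{-k},s) - f(y_{-k},s)| \quad \text{a.e.\ in } I.
\]
The right-hand side is at most $M|x_{-k} - y_{-k}|$ by $M$-Lipschitzness of $f$. By the first step, both $P_\infty f(x_{-k},\cdot)$ and $P_\infty f(y_{-k},\cdot)$ admit continuous representatives, so the a.e.\ inequality on the left lifts to a pointwise inequality between these representatives; evaluating at $t$ yields $|(\cM_k f)(x_{-k},t) - (\cM_k f)(y_{-k},t)| \le M|x_{-k} - y_{-k}|$.

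Finally, for arbitrary $x = (x_{-k},x_k)$ and $y = (y_{-k},y_k)$ in $\closedInt^d$, the triangle inequality combined with the two bounds above gives
\[
    |(\cM_k f)(x) - (\cM_k f)(y)| \le M|x_k - y_k| + M|x_{-k} - y_{-k}| \le M\sqrt{2}\,|x-y|,
\]
so $\cM_k f \in \Lip$ as desired. I do not anticipate any real obstacle, since the two ingredients (preservation of $\psi$ and $L^\infty$-nonexpansiveness of $P_\infty$) are tailor-made for this argument; the only mild care needed is the identification of $L^2$ equivalence classes with their continuous representatives when making pointwise statements, which \cref{fact:lipschitz} supplies.
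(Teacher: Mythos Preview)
Your proposal is correct and follows essentially the same approach as the paper: both use \cref{prop:lipschitz-regularity-monotone-equilibrium} to control variation along the $k$-th coordinate and \cref{prop:p-infty-l-infty-nonexpansive} to control variation across the remaining coordinates, with the same care regarding continuous representatives. The only cosmetic difference is that the paper verifies each axis-aligned line restriction is $M$-Lipschitz (invoking the observation that this characterizes Lipschitz functions), whereas you combine the $k$-direction and the $(-k)$-block via a triangle inequality to get constant $M\sqrt{2}$; either packaging is fine.
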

\begin{proof}
    Let $M > 0$ be such that every axis-aligned line restriction of $f$ is $M$-Lipschitz. We claim
    that every axis-aligned line restriction of $\cM_k f$ is $M$-Lipschitz. We first check the
    $k$-aligned lines. For any $x_{-k} \in \closedInt^d_{-k}$
    \cref{prop:lipschitz-regularity-monotone-equilibrium} implies that
    \[
        \psi((\cM_k f)(x_{-k}, \cdot))
        = \psi(P_\infty f(x_{-k}, \cdot))
        \le \psi(f(x_{-k}, \cdot))
        \le M \,,
    \]
    the last inequality and the conclusion that $(\cM_k f)(x_{-k}, \cdot)$ is $M$-Lipschitz by
    \cref{fact:lipschitz}.

    It remains to check the other line restrictions. Let $i \in [d] \setminus \{k\}$ and let
    $z_{-i-k} \in \closedInt^d_{-i -k}$. Let $x_i \ne y_i \in [0,1]$. Since the axis-aligned line
    restrictions of $f$ (in particular, in direction $i$) are $M$-Lipschitz, we have
    \[
        \sup_{w_k \in [0,1]} \abs*{f(z_{-i-k}, x_i, w_k) - f(z_{-i-k}, y_i, w_k)}
        \le M |x_i - y_i| \,.
    \]
    \cref{prop:p-infty-l-infty-nonexpansive} implies that
    \begin{equation}
        \label{eq:lipschitz}
        \abs*{P_\infty f(z_{-i-k}, x_i, \cdot) - P_\infty f(z_{-i-k}, y_i, \cdot)}
        \le \esssup_{w_k \in [0,1]} \abs*{f(z_{-i-k}, x_i, w_k) - f(z_{-i-k}, y_i, w_k)}
        \le M |x_i - y_i|
    \end{equation}
    \almev in $I$, and since $P_\infty f(z_{-i-k}, x_i, \cdot)$ and $P_\infty f(z_{-i-k}, y_i,
    \cdot)$ are (Lipschitz) continuous as observed above, we conclude that \eqref{eq:lipschitz}
    holds pointwise. Since \eqref{eq:lipschitz} holds for every $i \in [d] \setminus \{k\}$ and
    $z_{-i-k} \in \closedInt^d_{-i-k}$, and since
    \[
        \abs*{(\cM_k f)(z_{-i-k}, x_i, \cdot) - (\cM_k f)(z_{-i-k}, y_i, \cdot)}
        = \abs*{P_\infty f(z_{-i-k}, x_i, \cdot) - P_\infty f(z_{-i-k}, y_i, \cdot)}
    \]
    pointwise, we get that every axis-aligned line restriction of $\cM_k f$ is $M$-Lipschitz, and
    $\cM_k f \in \Lip$.
\end{proof}

\begin{lemma}[$\cM_k$ application makes progress]
    \label{lemma:mk-progress}
    Let $k \in [d]$, and let $I \subseteq [d] \setminus \{k\}$. Let $f \in \Lip \cap \Mon_I$. Then
    $\cM_k f \in \Lip \cap \Mon_{I \cup \{k\}}$.
\end{lemma}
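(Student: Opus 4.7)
The plan is to dispatch the Lipschitz part by citing \cref{lemma:mk-lipschitz}, and then argue $(I \cup \{k\})$-monotonicity by handling the direction $k$ and the directions in $I$ separately.

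For the direction $k$, I would unfold the definition: for every $x_{-k} \in \closedInt^d_{-k}$, the restriction $(\cM_k f)(x_{-k}, \cdot) = P_\infty f(x_{-k}, \cdot)$, which is nondecreasing by construction of $P_\infty$ (\cref{def:p-infty,prop:convergence-to-equilibrium}). Hence $\cM_k f$ is $k$-monotone.

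For a direction $i \in I$, I would rely on the reformulation in \cref{obs:mon-k}: it suffices to show that for every $z_{-i-k} \in \closedInt^d_{-i-k}$ and every $x_i \le y_i$ in $[0,1]$,
\[
(\cM_k f)(z_{-i-k}, x_i, \cdot) \le (\cM_k f)(z_{-i-k}, y_i, \cdot)
\]
as functions on $[0,1]$. Since $f \in \Mon_I$, \cref{obs:mon-k} gives $f(z_{-i-k}, x_i, \cdot) \le f(z_{-i-k}, y_i, \cdot)$ pointwise on $[0,1]$, hence \almev. The order-preserving property of $P_\infty$ (\cref{cor:p-infty-order-preserving}) then yields
\[
P_\infty f(z_{-i-k}, x_i, \cdot) \le P_\infty f(z_{-i-k}, y_i, \cdot) \quad \text{\almev in } I,
\]
which by the definition of $\cM_k$ is the desired inequality \almev. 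Finally, because $\cM_k f \in \Lip$ by \cref{lemma:mk-lipschitz}, both sides are continuous functions of $x_k$, so the \almev inequality upgrades to a pointwise inequality, completing the verification via \cref{obs:mon-k}.

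There is no real obstacle: the proof is essentially a bookkeeping exercise that combines the definition of $\cM_k$, the monotonicity of $P_\infty u$ for every $u \in L^2(I)$, and order preservation of $P_\infty$, with the Lipschitz regularity from \cref{lemma:mk-lipschitz} used only to promote \almev inequalities to pointwise ones. The conceptual content is that $P_\infty$ acts coordinate-wise in a way that is compatible with pointwise comparisons across parallel slices, which is exactly what \cref{cor:p-infty-order-preserving} encodes.
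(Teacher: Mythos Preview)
Your proposal is correct and follows essentially the same approach as the paper's proof: cite \cref{lemma:mk-lipschitz} for the Lipschitz part, use the definition of $P_\infty$ for $k$-monotonicity, and for each $i \in I$ apply \cref{obs:mon-k} together with order preservation of $P_\infty$ (\cref{cor:p-infty-order-preserving}), upgrading the resulting \almev inequality to a pointwise one via continuity. The structure and the key lemmas invoked are identical to the paper's.
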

\begin{proof}
    The fact that $\cM_k f \in \Lip$ is given by \cref{lemma:mk-lipschitz}, and the fact that $\cM_k f$
    is $k$-monotone follows from the definition of $\cM_k$ and the fact that the monotone equilibrium
    $P_\infty f(x_{-k}, \cdot)$ is nondecreasing. Now, let $i \in I$, so that $f$ is $i$-monotone.
    By \cref{obs:mon-k}, we have that for every $z_{-i-k} \in \closedInt^d_{-i-k}$, and for all $x_i
    \le y_i$ in $[0,1]$, $f(z_{-i-k}, x_i, \cdot) \le f(z_{-i-k}, y_i, \cdot)$. By
    \cref{cor:p-infty-order-preserving}, we obtain that $P_\infty f(z_{-i-k}, x_i, \cdot) \le
    P_\infty f(z_{-i-k}, y_i, \cdot)$ \almev in $I$, and since these two functions are (Lipschitz)
    continuous, this inequality holds pointwise. Thus $(\cM_k f)(z_{-i-k}, x_i, \cdot) \le (\cM_k
    f)(z_{-i-k}, y_i, \cdot)$, and again by \cref{obs:mon-k}, $\cM_k f$ is $i$-monotone. This
    concludes the proof.
\end{proof}

\begin{lemma}[$\cM_k$ preserves bounds]
    \label{lemma:mk-bounds}
    Let $f \in \Lip$. Let $a \le b$ be real numbers and suppose $a \le f \le b$. Then $a \le \cM_k f
    \le b$.
\end{lemma}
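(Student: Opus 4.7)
The plan is to reduce the multidimensional statement to the one-dimensional claim along each slice and then invoke the order-preservation properties of $P_\infty$ together with the fact that constants are fixed points.

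First, I would fix an arbitrary $x_{-k} \in \closedInt^d_{-k}$ and consider the one-dimensional slice $g \define f(x_{-k}, \cdot) : [0,1] \to \bR$. By hypothesis, $a \le g \le b$ pointwise (and in particular \almev), so viewing $a$ and $b$ as constant functions in $L^2(I)$, we have $a \le g \le b$ \almev. By \cref{cor:p-infty-order-preserving}, applying $P_\infty$ preserves the order, yielding $P_\infty a \le P_\infty g \le P_\infty b$ \almev in $I$. Since constant functions are nondecreasing, \cref{lemma:p-infty-stationary-points} gives $P_\infty a = a$ and $P_\infty b = b$, so we conclude $a \le P_\infty g \le b$ \almev in $I$.

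Next, to promote the \almev inequality to a pointwise one, I would appeal to the continuity of $P_\infty g$. Since $f \in \Lip$, the slice $g$ is Lipschitz and \cref{prop:lipschitz-regularity-monotone-equilibrium} (or equivalently the argument in \cref{lemma:mk-lipschitz}) gives that $P_\infty g \in W^{1,\infty}(I)$, and in particular has a Lipschitz continuous representative. Hence the \almev inequality $a \le P_\infty g \le b$ extends to all of $[0,1]$, giving $a \le (\cM_k f)(x_{-k}, \cdot) \le b$ pointwise by the definition of $\cM_k$. Since $x_{-k}$ was arbitrary, this establishes $a \le \cM_k f \le b$ on all of $\closedInt^d$.

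There is no real obstacle here: the lemma follows almost immediately by combining order preservation with the observation that constants are equilibria, the only subtlety being the passage from \almev to pointwise inequality, which is handled by the Lipschitz regularity already established for $\cM_k f$.
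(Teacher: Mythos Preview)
Your proof is correct and follows essentially the same approach as the paper, which simply cites \cref{cor:p-infty-order-preserving} and \cref{lemma:p-infty-stationary-points} together with the definition of $\cM_k$. You have spelled out the argument in more detail and also made explicit the passage from \almev to pointwise via Lipschitz regularity, a point the paper leaves implicit.
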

\begin{proof}
    This follows by the definition of $\cM_k$ together with
    \cref{cor:p-infty-order-preserving,lemma:p-infty-stationary-points}.
\end{proof}

\begin{lemma}[$\cM_k$ is nonexpansive in $L^2$]
    \label{lemma:mk-nonexpansive}
    Let $f, g \in \Lip$. Then
    \[
        \int_{\closedInt^d} (\cM_k f- \cM_k g)^2 \odif x
        \le \int_{\closedInt^d} (f - g)^2 \odif x \,.
    \]
\end{lemma}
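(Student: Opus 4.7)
The plan is to reduce the multidimensional $L^2$ nonexpansiveness of $\cM_k$ to the one-dimensional $L^2$ nonexpansiveness of $P_\infty$ (Proposition \ref{prop:p-infty-nonexpansive}) via Tonelli's theorem, slicing along axis-aligned lines in direction $k$.

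First, I would observe that by \cref{lemma:mk-lipschitz}, both $\cM_k f$ and $\cM_k g$ are Lipschitz, hence Borel measurable on $\closedInt^d$, and the integrand $(\cM_k f - \cM_k g)^2$ is nonnegative and measurable, so Tonelli applies. For each fixed $x_{-k} \in \closedInt^d_{-k}$, the line restriction $(\cM_k f)(x_{-k}, \cdot) = P_\infty f(x_{-k}, \cdot)$ by definition of $\cM_k$, and similarly for $g$. Applying \cref{prop:p-infty-nonexpansive} to the $L^2(I)$ functions $f(x_{-k}, \cdot)$ and $g(x_{-k}, \cdot)$ gives the one-dimensional bound
\[
\int_I \bigl( P_\infty f(x_{-k}, \cdot) - P_\infty g(x_{-k}, \cdot) \bigr)^2 \odif x_k \le \int_I \bigl( f(x_{-k}, \cdot) - g(x_{-k}, \cdot) \bigr)^2 \odif x_k
\]
for every $x_{-k}$.

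Then integrating both sides over $x_{-k} \in \closedInt^d_{-k}$ and invoking Tonelli yields the desired inequality:
\[
\int_{\closedInt^d} (\cM_k f - \cM_k g)^2 \odif x = \int_{\closedInt^d_{-k}} \!\! \odif x_{-k} \int_I \bigl( P_\infty f(x_{-k},\cdot) - P_\infty g(x_{-k},\cdot) \bigr)^2 \odif x_k \le \int_{\closedInt^d} (f - g)^2 \odif x.
\]

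There is essentially no obstacle here beyond confirming the measurability / Tonelli bookkeeping, since the heavy lifting was done in the one-dimensional theory (\cref{prop:p-infty-nonexpansive}) and in \cref{lemma:mk-lipschitz}, which ensures that $\cM_k f, \cM_k g$ are genuine pointwise-defined Lipschitz functions rather than merely equivalence classes in $L^2$ (so that slicing in direction $k$ is unambiguous). This is the same ``slice-and-integrate'' pattern that will be needed to tensorize other one-dimensional $P_\infty$ properties to $\closedInt^d$ along the way to \cref{thm:transport-energy}.
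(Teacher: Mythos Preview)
Your proposal is correct and follows essentially the same approach as the paper: slice along direction $k$ via Tonelli, apply \cref{prop:p-infty-nonexpansive} on each one-dimensional fibre, and integrate back. The paper's proof is exactly this computation, without the additional remarks on measurability you include.
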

\begin{proof}
    This follows from \cref{prop:p-infty-nonexpansive} and Tonelli's theorem, as follows:
    \begin{align*}
        \int_{\closedInt^d} (\cM_k f- \cM_k g)^2 \odif x
        &= \int_{\closedInt^d_{-k}} \odif x_{-k}
            \int_I \left[ (\cM_k f)(x_{-k}, x_k) - (\cM_k g)(x_{-k}, x_k) \right]^2 \odif x_k \\
        &= \int_{\closedInt^d_{-k}} \odif x_{-k}
            \int_I \left[
                (P_\infty f(x_{-k}, \cdot))(x_k) - (P_\infty g(x_{-k}, \cdot))(x_k)
                \right]^2 \odif x_k \\
        &\le \int_{\closedInt^d_{-k}} \odif x_{-k}
            \int_I \left[ f(x_{-k}, x_k) - g(x_{-k}, x_k) \right]^2 \odif x_k
        = \int_{\closedInt^d} (f-g)^2 \odif x \,. && \qedhere
    \end{align*}
\end{proof}

Similarly, combining \cref{cor:p-infty-mass-conserving} with Fubini's theorem yields

\begin{lemma}
    \label{lemma:mk-mass-conserving}
    Let $f \in \Lip$. Then
    \[
        \int_{\closedInt^d} (\cM_k f) \odif x = \int_{\closedInt^d} f \odif x \,.
    \]
\end{lemma}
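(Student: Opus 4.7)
The plan is to directly combine Tonelli's theorem with the one-dimensional mass conservation property of $P_\infty$ established in \cref{cor:p-infty-mass-conserving}. Since $f \in \Lip$ is bounded on $\closedInt^d$, it is certainly integrable, so Tonelli's theorem applies and lets me split the $d$-dimensional integral into an iterated integral where the innermost integration is along the $k$-th coordinate.

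More concretely, I would first write
\[
    \int_{\closedInt^d} (\cM_k f) \odif x
    = \int_{\closedInt^d_{-k}} \odif x_{-k} \int_I (\cM_k f)(x_{-k}, x_k) \odif x_k
    = \int_{\closedInt^d_{-k}} \odif x_{-k} \int_I (P_\infty f(x_{-k}, \cdot))(x_k) \odif x_k \,,
\]
where the second equality is the definition of $\cM_k$. Since each line restriction $f(x_{-k}, \cdot)$ is Lipschitz and hence in $L^2(I)$, \cref{cor:p-infty-mass-conserving} applies pointwise in $x_{-k}$ to yield
\[
    \int_I (P_\infty f(x_{-k}, \cdot))(x_k) \odif x_k = \int_I f(x_{-k}, x_k) \odif x_k \,.
\]
Substituting back and applying Tonelli's theorem once more (now in the reverse direction) gives the desired equality.

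There is essentially no obstacle here: the only minor bookkeeping item is justifying the use of Tonelli/Fubini, which follows from boundedness of $f$ (Lipschitz on a compact domain) and of $\cM_k f$ (which is Lipschitz by \cref{lemma:mk-lipschitz}, or which inherits the same bounds as $f$ by \cref{lemma:mk-bounds}). Measurability of $(x_{-k}, x_k) \mapsto (\cM_k f)(x_{-k}, x_k)$ is also immediate since $\cM_k f$ is continuous.
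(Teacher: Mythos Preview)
Your proposal is correct and matches the paper's approach exactly: the paper simply states that the lemma follows by combining \cref{cor:p-infty-mass-conserving} with Fubini's theorem, and your write-up spells this out in detail.
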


The following result is essentially an $L^2$ version of \cite[Proposition~3.14]{Fer23}.

\begin{lemma}[Effect of $\cM_k$ on the directed Dirichlet energy]
    \label{lemma:mk-energy}
    Let $i, k \in [d]$ be distinct and let $f \in \Lip$. Then
    \[
        \int_{\closedInt^d_{-i}} \cE^-((\cM_k f)(x_{-i}, \cdot)) \odif x_{-i}
        \le
        \int_{\closedInt^d_{-i}} \cE^-(f(x_{-i}, \cdot)) \odif x_{-i} \,.
    \]
\end{lemma}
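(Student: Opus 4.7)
The plan is to reduce the claim to the directed $L^2$ nonexpansiveness property of $P_\infty$ (\cref{prop:p-infty-directed-nonexpansiveness}) by a finite-difference argument, exploiting the crucial fact that $\cM_k$ commutes with shifts along direction $i$ since $i \ne k$.

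First, I rewrite both sides more concretely. Since $f \in \Lip$ and $\cM_k f \in \Lip$ by \cref{lemma:mk-lipschitz}, every restriction along direction $i$ is a Lipschitz function of one variable, hence in $W^{1,\infty}(I) \subset H^1(I)$. For any such $g \in \Lip$, applying \cref{prop:w1p-representation} fiber-wise along direction $i$ and then Tonelli's theorem gives $2 \int_{\closedInt^d_{-i}} \cE^-(g(x_{-i}, \cdot)) \odif x_{-i} = \int_{\closedInt^d} ((\partial_i g)^-)^2 \odif x$. Thus it suffices to prove
\[
\int_{\closedInt^d} ((\partial_i \cM_k f)^-)^2 \odif x \;\le\; \int_{\closedInt^d} ((\partial_i f)^-)^2 \odif x.
\]

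Next, I approximate each side by finite differences. For $\epsilon \in (0,1)$, let $\Omega_\epsilon \define \{x \in \closedInt^d : x + \epsilon e_i \in \closedInt^d\}$, and for Lipschitz $g$ define $D_\epsilon g(x) \define \epsilon^{-1}(g(x+\epsilon e_i)-g(x))$ on $\Omega_\epsilon$. Since $g$ is Lipschitz its axis-aligned line restrictions are absolutely continuous, so $D_\epsilon g \to \partial_i g$ pointwise \almev by the FTC, and $|D_\epsilon g| \le \Lip(g)$ uniformly. Hence $((D_\epsilon g)^-)^2 \to ((\partial_i g)^-)^2$ pointwise \almev, dominated by $\Lip(g)^2$; combining with $\cL(\closedInt^d \setminus \Omega_\epsilon) = \epsilon$ and the dominated convergence theorem, $\int_{\Omega_\epsilon}((D_\epsilon g)^-)^2 \odif x \to \int_{\closedInt^d}((\partial_i g)^-)^2 \odif x$ for $g \in \{f, \cM_k f\}$.

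The key commutation is that since $i \ne k$, the shift $x \mapsto x + \epsilon e_i$ touches only the $x_{-k}$ coordinates. Hence for each valid $x_{-k}$, as functions of $x_k \in I$,
\[
(\cM_k f)(x_{-k} + \epsilon e_i, \cdot) \;=\; P_\infty\bigl(f(x_{-k} + \epsilon e_i, \cdot)\bigr),
\]
so the shifted fiber of $\cM_k f$ is obtained by applying $P_\infty$ to the shifted fiber of $f$. I then apply \cref{prop:p-infty-directed-nonexpansiveness} with $u = f(x_{-k}, \cdot)$ and $v = f(x_{-k} + \epsilon e_i, \cdot)$, and use the pointwise identity $((a-b)^+/\epsilon)^2 = ((D_\epsilon (-g))^+)^2 = ((D_\epsilon g)^-)^2$ evaluated at the appropriate arguments, to get, for each valid $x_{-k}$,
\[
\int_I \bigl(D_\epsilon^{(x_{-k}\to x_{-k}+\epsilon e_i)} \cM_k f\bigr)^-{}^2 \odif x_k \;\le\; \int_I \bigl(D_\epsilon^{(x_{-k}\to x_{-k}+\epsilon e_i)} f\bigr)^-{}^2 \odif x_k.
\]
Integrating this inequality over the set of valid $x_{-k}$ via Tonelli yields $\int_{\Omega_\epsilon}((D_\epsilon \cM_k f)^-)^2 \odif x \le \int_{\Omega_\epsilon}((D_\epsilon f)^-)^2 \odif x$, and passing to the limit $\epsilon \to 0^+$ via the dominated convergence step above closes the argument.

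The main technical obstacle is the bookkeeping around the shift leaving the cube, which creates the slightly smaller integration domain $\Omega_\epsilon$; this is however harmless because $\closedInt^d \setminus \Omega_\epsilon$ has measure $O(\epsilon)$ and the Lipschitz property of both $f$ and $\cM_k f$ (the latter supplied by \cref{lemma:mk-lipschitz}) gives the uniform $L^\infty$ bound required for dominated convergence on both sides.
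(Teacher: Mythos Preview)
Your proposal is correct and follows essentially the same approach as the paper's proof: both rewrite $\cE^-$ via \cref{prop:w1p-representation}, approximate $(\partial_i^- g)^2$ by finite differences, exploit the commutation of the shift in direction $i$ with $\cM_k$ (possible precisely because $i\ne k$), apply \cref{prop:p-infty-directed-nonexpansiveness} fiberwise via the identity $[(a-b)^+]^2=[(b-a)^-]^2$, and close by dominated convergence using the uniform Lipschitz bound from \cref{lemma:mk-lipschitz}. The only cosmetic differences are in notation (your $\Omega_\epsilon$ versus the paper's $I_h$) and the order in which Tonelli and the limit are interleaved.
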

\begin{proof}
    For any $g \in \Lip$, $i \in [d]$ and $h \in \bR \setminus \{0\}$, define the $Q_{g,i,h} :
    \closedInt^d \to \bR$ by
    \begin{equation}
        \label{eq:q-def}
        Q_{g,i,h}(x) \define \begin{cases}
            \frac{g(x_{-i}, x_i + h) - g(x_{-i}, x_i)}{h} & \text{if $x_i+h \in (0,1)$} \\
            0                                             & \text{otherwise.}
        \end{cases}
    \end{equation}
    By Rademacher's theorem on the open domain $(0,1)^d$, $g$ is differentiable almost everywhere.
    In particular, the function $D_{g,i} : \closedInt^d \to \bR$ given by
    \[
        D_{g,i}(x) \define \begin{cases}
            \lim_{h \to 0} Q_{g,i,h}(x) & \text{if the limit exists} \\
            0                           & \text{otherwise}
        \end{cases}
    \]
    gives the partial derivative of $g$ in direction $i$ for \almev $x \in \closedInt^d$, and is
    measurable as the limit of the measurable functions $Q_{g,i,h}$. Moreover, letting $M > 0$ be
    such that the axis-aligned line restrictions of $g$ are $M$-Lipschitz, we have $\abs*{Q_{g,i,h}}
    \le M$ pointwise. Finally, fixing any $x_{-i} \in \closedInt^d_{-i}$, $g(x_{-i}, \cdot)$ is
    (Lipschitz and hence) in $H^1(I)$, and the partial derivative $D_{g,i}(x_{-i}, \cdot)$ agrees
    with any weak derivative $\partial_{x_i} g(x_{-i}, \cdot)$ \almev in $[0,1]$. In particular, all
    of these considerations apply to the functions $f$ and $\cM_k f$ by \cref{lemma:mk-lipschitz}. We
    conclude that, for $g = f$ or $g = \cM_k f$ and each $x_{-i} \in \closedInt^d_{-i}$,
    \begin{align*}
        \cE^-(g(x_{-i}, \cdot))
        &= \frac{1}{2} \int_I \left[ \partial_{x_i}^- g(x_{-i}, x_i) \right]^2 \odif x_i
        = \frac{1}{2} \int_I \left[ D_{g,i}(x_{-i}, x_i)^- \right]^2 \odif x_i \\
        &= \frac{1}{2} \int_I \lim_{h \to 0} \left[ Q_{g,i,h}(x_{-i}, x_i)^- \right]^2 \odif x_i \,,
    \end{align*}
    where for simplicity we write the limit in the last expression with the understanding that it is
    only defined almost everywhere in $[0,1]$. Note that the measurability of the last integrand
    over $[0,1]^d$ justifies writing the integrals in the statement of the lemma via Tonelli's
    theorem (as the computation below shows). Define for each $h \in \bR$ the set
    \[
        I_h \define \begin{cases}
            (0, 1-h) & \text{if } h \ge 0 \\
            (-h, 1)  & \text{otherwise,}
        \end{cases}
    \]
    so that $Q_{g,i,h}(x)$ is defined by the first case of \eqref{eq:q-def} if and only if $x_i \in
    I_h$. Then, by repeated applications of Tonelli's theorem and the dominated convergence theorem,
    we obtain
    {\allowdisplaybreaks
        \begin{align*}
            &\int_{\closedInt^d_{-i}} \cE^-((\cM_k f)(x_{-i}, \cdot)) \odif x_{-i}
            = \frac{1}{2} \int_{\closedInt^d_{-i}} \odif x_{-i} \int_I \lim_{h \to 0} \left[
                    Q_{\cM_k f, i, h}(x_{-i}, x_i)^- \right]^2 \odif x_i \\
            &\, = \frac{1}{2} \int_{\closedInt^d} \lim_{h \to 0} \underbrace{\left[
                    Q_{\cM_k f, i, h}(x)^- \right]^2}_{\le M^2} \odif x
            = \frac{1}{2} \lim_{h \to 0} \int_{\closedInt^d} \left[
                    Q_{\cM_k f, i, h}(x)^- \right]^2 \odif x \\
            &\, = \frac{1}{2} \lim_{h \to 0}
                \int_{\closedInt^d_{-i-k}} \odif x_{-i-k}
                \int_I \odif x_i
                \int_I \left[ Q_{\cM_k f, i, h}(x_{-i-k}, x_i, x_k)^- \right]^2 \odif x_k \\
            &\, = \frac{1}{2} \lim_{h \to 0}
                \int_{\closedInt^d_{-i-k}} \odif x_{-i-k}
                \int_{I_h} \odif x_i
                \int_I \left[ \left(
                        \frac{(\cM_k f)(x_{-i-k}, x_i+h, x_k) - (\cM_k f)(x_{-i-k}, x_i, x_k)}{h}
                    \right)^- \right]^2 \odif x_k \\
            &\, = \frac{1}{2} \lim_{h \to 0}
                \int_{\closedInt^d_{-i-k}} \odif x_{-i-k}
                \int_{I_h} \odif x_i
                \int_I \left[ \left(
                        \frac{(P_\infty f(x_{-i-k}, x_i+h, \cdot))(x_k)
                                - (P_\infty f(x_{-i-k}, x_i, \cdot))(x_k)} {h}
                    \right)^- \right]^2 \odif x_k \\
            &\, \le \frac{1}{2} \lim_{h \to 0}
                \int_{\closedInt^d_{-i-k}} \odif x_{-i-k}
                \int_{I_h} \odif x_i
                \int_I \left[ \left(
                        \frac{f(x_{-i-k}, x_i+h, x_k) - f(x_{-i-k}, x_i, x_k)} {h}
                    \right)^- \right]^2 \odif x_k \\
            &\, = \frac{1}{2} \lim_{h \to 0}
                \int_{\closedInt^d_{-i-k}} \odif x_{-i-k}
                \int_I \odif x_i
                \int_I \left[ Q_{f,i,h}(x_{-i-k}, x_i, x_k)^- \right]^2 \odif x_k \\
            &\, = \frac{1}{2} \lim_{h \to 0} \int_{\closedInt^d} \underbrace{\left[
                    Q_{f,i,h}(x)^- \right]^2}_{\le M^2} \odif x
            \, = \frac{1}{2} \int_{\closedInt^d} \lim_{h \to 0} \left[
                    Q_{f,i,h}(x)^- \right]^2 \odif x \\
            &\, = \frac{1}{2} \int_{\closedInt^d_{-i}} \odif x_{-i} \int_I \lim_{h \to 0} \left[
                    Q_{f,i,h}(x_{-i}, x_i)^- \right]^2 \odif x_i
            = \int_{\closedInt^d_{-i}} \cE^-(f(x_{-i}, \cdot)) \odif x_{-i} \,,
        \end{align*}
    }%
    the inequality by \cref{prop:p-infty-directed-nonexpansiveness} via the identity $\left[ (a-b)^+
    \right]^2 = \left[ (b-a)^- \right]^2$.
\end{proof}

\begin{definition}[Coordinate-wise monotone equilibrium]
    For each $f \in \Lip$, define $f^* \in \Lip$, the \emph{coordinate-wise monotone equilibrium} of
    $f$, by
    \[
        f^* \define \cM_d \cM_{d-1} \dotsm \cM_1 f \,.
    \]
\end{definition}

Note that indeed $f^*$ is indeed Lipschitz by \cref{lemma:mk-lipschitz}, and it is monotone:

\begin{proposition}
    Let $f \in \Lip$. Then $f^*$ is monotone.
\end{proposition}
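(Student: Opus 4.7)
The plan is a short induction on the number of coordinates processed so far, using \cref{lemma:mk-progress} as the inductive step. Concretely, for each $k \in \{0, 1, \dotsc, d\}$, let $f_k \define \cM_k \cM_{k-1} \dotsm \cM_1 f$ (with $f_0 \define f$). I will prove by induction on $k$ that $f_k \in \Lip \cap \Mon_{[k]}$, where $\Mon_{[0]}$ is interpreted as all of $\Lip$ (no monotonicity constraint). Taking $k = d$ then yields $f^* = f_d \in \Lip \cap \Mon_{[d]}$, i.e., $f^*$ is Lipschitz and monotone, which is the claim.

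The base case $k = 0$ is immediate: $f_0 = f \in \Lip$ by hypothesis, and the monotonicity requirement is vacuous. For the inductive step, suppose $f_{k-1} \in \Lip \cap \Mon_{[k-1]}$ for some $k \in [d]$. Since $k \notin [k-1]$, \cref{lemma:mk-progress} applies with the set $I = [k-1]$, and yields $f_k = \cM_k f_{k-1} \in \Lip \cap \Mon_{[k-1] \cup \{k\}} = \Lip \cap \Mon_{[k]}$, completing the induction.

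No step here is really an obstacle: all the work has already been done in \cref{lemma:mk-lipschitz} (which ensures $\cM_k$ preserves Lipschitz regularity, so the argument stays inside $\Lip$ at every stage) and in \cref{lemma:mk-progress} (which packages the two essential facts that $\cM_k f$ is $k$-monotone by construction, since each line restriction $P_\infty f(x_{-k}, \cdot)$ is nondecreasing by \cref{def:p-infty}, and that $\cM_k$ preserves $i$-monotonicity for $i \ne k$, which rests on the order-preservation of $P_\infty$ from \cref{cor:p-infty-order-preserving} together with the continuity of Lipschitz functions to upgrade the \almev inequality to a pointwise one). The only conceptual content worth flagging is that the Lipschitz regularity must be carried along the induction in order to apply \cref{lemma:mk-progress} at each step; without it, $\cM_k$ would not even be defined pointwise on functions that are merely in $L^2$.
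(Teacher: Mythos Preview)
Your proposal is correct and takes essentially the same approach as the paper: the paper's proof is simply the one-line remark that the result follows by repeatedly applying \cref{lemma:mk-progress} and recalling that $\Mon_{[d]}$ is the set of monotone functions, which is exactly the induction you spell out explicitly.
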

\begin{proof}
    This follows by repeatedly applying \cref{lemma:mk-progress}, and recalling that $\Mon_{[d]}$ is
    the set of monotone functions on $\closedInt^d$.
\end{proof}

By \cref{prop:p-infty-transformations} and the definition of $\cM_k$, we also observe that the
coordinate-wise monotone equilibrium behaves nicely with respect to certain affine transformations:

\begin{observation}
    \label{obs:fstar-transformations}
    Let $f \in \Lip$. Let $\alpha > 0$ and $\beta \in \bR$. Then $(\alpha f + \beta)^* = \alpha f^*
    + \beta$.
\end{observation}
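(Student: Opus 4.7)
The plan is to reduce the claim to the single-coordinate analogue: for each $k \in [d]$ and every $g \in \Lip$, we have $\cM_k(\alpha g + \beta) = \alpha \cM_k g + \beta$ pointwise in $\closedInt^d$. Once this is established, iterating along $k = 1, 2, \dotsc, d$ gives the full statement by a trivial induction. Specifically, defining $g_0 := f$ and $g_k := \cM_k g_{k-1}$ (so that $g_d = f^*$), and setting $h_0 := \alpha f + \beta$ and $h_k := \cM_k h_{k-1}$ (so that $h_d = (\alpha f + \beta)^*$), the single-coordinate statement applied to $g_{k-1}$ yields $h_k = \alpha g_k + \beta$ provided $h_{k-1} = \alpha g_{k-1} + \beta$, starting from $h_0 = \alpha g_0 + \beta$ by definition.

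For the single-coordinate step, I would fix any $x_{-k} \in \closedInt^d_{-k}$ and consider the line restriction. By definition of $\cM_k$, we have $(\cM_k(\alpha g + \beta))(x_{-k}, \cdot) = P_\infty \big[ (\alpha g + \beta)(x_{-k}, \cdot) \big]$ and $(\cM_k g)(x_{-k}, \cdot) = P_\infty [g(x_{-k}, \cdot)]$ as elements of $L^2(I)$. Since the line restriction satisfies $(\alpha g + \beta)(x_{-k}, \cdot) = \alpha g(x_{-k}, \cdot) + \beta$ pointwise (and hence in $L^2(I)$), applying \cref{prop:p-infty-transformations} with $u := g(x_{-k}, \cdot) \in L^2(I)$ yields
\[
    P_\infty \big[ \alpha g(x_{-k}, \cdot) + \beta \big]
    = \alpha P_\infty [g(x_{-k}, \cdot)] + \beta
\]
as elements of $L^2(I)$, which is precisely the claimed identity along the $k$-th line modulo the choice of representative.

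The only mildly delicate point is the passage from an $L^2(I)$ identity back to a pointwise equality on $\closedInt^d$, since $\cM_k$ is defined pointwise whereas $P_\infty$ outputs an equivalence class. This is handled by \cref{lemma:mk-lipschitz}: both $\cM_k(\alpha g + \beta)$ and $\alpha \cM_k g + \beta$ are Lipschitz (and in particular continuous) on $\closedInt^d$, so their restrictions to the $k$-th line at any $x_{-k}$ are the unique continuous representatives of the corresponding $L^2(I)$ classes, and the $L^2$ equality above upgrades to pointwise equality. I expect no real obstacle here, which justifies the fact that the paper records the result as an observation rather than a full proposition.
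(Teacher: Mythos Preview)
Your proposal is correct and follows exactly the approach the paper indicates: the paper justifies the observation in the sentence preceding it by invoking \cref{prop:p-infty-transformations} and the definition of $\cM_k$, which is precisely what you unpack coordinate by coordinate. Your additional care in upgrading the $L^2(I)$ identity to a pointwise one via \cref{lemma:mk-lipschitz} is a nice touch that the paper leaves implicit.
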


Similarly, \cref{lemma:mk-nonexpansive} implies that taking the coordinate-wise monotone equilibrium
is a nonexpansive operation:

\begin{observation}
    \label{obs:fstar-nonexpansive}
    Let $f, g \in \Lip$. Then $\|f^* - g^*\|_{L^2((0,1)^d)} \le \|f - g\|_{L^2((0,1)^d)}$. Since
    $0^* = 0$ by \cref{lemma:p-infty-stationary-points} (where we write $0$ for the constant zero
    function), we conclude in particular that $\|f^*\|_{L^2((0,1)^d)} \le \|f\|_{L^2((0,1)^d)}$.
\end{observation}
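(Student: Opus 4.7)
The plan is to prove the two claims by a straightforward induction/telescoping argument using \cref{lemma:mk-nonexpansive} ($L^2$-nonexpansiveness of each $\cM_k$) together with \cref{lemma:mk-lipschitz} (which ensures that intermediate applications of $\cM_k$ stay in $\Lip$, so that the next $\cM_{k+1}$ is well-defined).

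For the first claim, by definition $f^* = \cM_d \cM_{d-1} \dotsm \cM_1 f$ and similarly for $g^*$. I would define the intermediate functions $f_0 \define f$, $f_k \define \cM_k f_{k-1}$ for $k \in [d]$ (and analogously $g_k$), all of which lie in $\Lip$ by iteratively applying \cref{lemma:mk-lipschitz}, so that $f_d = f^*$ and $g_d = g^*$. Then \cref{lemma:mk-nonexpansive} applied at each step yields
\[
    \|f_k - g_k\|_{L^2(\closedInt^d)}
    = \|\cM_k f_{k-1} - \cM_k g_{k-1}\|_{L^2(\closedInt^d)}
    \le \|f_{k-1} - g_{k-1}\|_{L^2(\closedInt^d)}
\]
for each $k \in [d]$. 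Chaining these $d$ inequalities gives
\[
    \|f^* - g^*\|_{L^2(\closedInt^d)}
    = \|f_d - g_d\|_{L^2(\closedInt^d)}
    \le \|f_0 - g_0\|_{L^2(\closedInt^d)}
    = \|f - g\|_{L^2(\closedInt^d)} \,,
\]
as required.

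For the second claim, I would first verify that $0^* = 0$. Since the constant zero function is nondecreasing as a function in $L^2(I)$, \cref{lemma:p-infty-stationary-points} gives $P_\infty 0 = 0$; applied to every axis-aligned line restriction, this gives $\cM_k 0 = 0$ for each $k \in [d]$, and iterating yields $0^* = 0$. Taking $g = 0$ in the first claim then immediately produces $\|f^*\|_{L^2(\closedInt^d)} = \|f^* - 0^*\|_{L^2(\closedInt^d)} \le \|f - 0\|_{L^2(\closedInt^d)} = \|f\|_{L^2(\closedInt^d)}$.

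There is no substantive obstacle here: the entire argument is a telescoping of $d$ applications of an already-established one-coordinate nonexpansiveness bound, with the only subtle point being to check that the intermediate iterates remain Lipschitz so that each subsequent $\cM_k$ can be legitimately applied. Both of these verifications are already available in \cref{lemma:mk-lipschitz,lemma:mk-nonexpansive,lemma:p-infty-stationary-points}.
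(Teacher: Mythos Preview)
Your proposal is correct and is exactly the argument the paper has in mind: the observation is stated immediately after the sentence ``Similarly, \cref{lemma:mk-nonexpansive} implies that taking the coordinate-wise monotone equilibrium is a nonexpansive operation,'' so the paper's proof is precisely the telescoping of \cref{lemma:mk-nonexpansive} over the $d$ coordinates (with \cref{lemma:mk-lipschitz} ensuring the intermediate iterates are Lipschitz), followed by taking $g=0$ via \cref{lemma:p-infty-stationary-points}.
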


We are now ready to combine the ingredients from the previous sections and establish the
transport-energy inequality. Note that this inequality is most effective when the function $f$ is
pointwise bounded close to $1$.

\thmtransportenergy
\begin{proof}
    Define the family $(f^{(i)})_{i \in [d+1]} \subset \Lip$ by $f^{(1)} \define f$ and $f^{(i+1)}
    \define \cM_i f^{(i)}$ for each $i \in [d]$, so that $f^{(d+1)} = f^*$. Note that we have $1-a \le
    f^{(i)} \le 1+a$ and $\int_{\closedInt^d} f^{(i)} \odif x = 1$ for each $i \in [d+1]$ by
    \cref{lemma:mk-bounds,lemma:mk-mass-conserving}. Then, define the family $(\mu^{(i)})_{i \in
    [d+1]}$ of probability measures on $\closedInt^d$ by $\odif \mu^{(i)} \define f^{(i)} \odif x$,
    so that $\mu^{(1)} = \mu$ and $\mu^{(d+1)} = \mu^*$.

    We claim $(\mu^{(i)})_{i \in [d+1]}$ satisfies the conditions of \cref{lemma:induction}. First,
    they have strictly positive densities because $1-a \le f^{(i)} \le 1+a$ for each $i \in [d+1]$
    as observed above. They are also absolutely continuous with Borel measurable density because
    each $f^{(i)}$ is continuous and hence Borel.

    Recall that, by \cref{prop:disintegration-absolutely-continuous}, for each $i,j \in [d+1]$ we
    may characterize the disintegration $\mu^{(j)} = \int_{\bR^d_{-i}} \cond{\mu^{(j)}}{x_{-i}}
    \odif {\marginal{\mu^{(j)}}{-i}}$ as follows: the marginal and all conditional measures are both
    absolutely continuous supported in $\closedInt^d_{-i}$ and $[0,1]$ respectively, and we have
    $\odif {\marginal{\mu^{(j)}}{-i}} = \marginal{f^{(j)}}{-i} \odif x_{-i}$ where
    \begin{equation}
        \label{eq:f-marginal}
        \marginal{f^{(j)}}{-i}(x_{-i}) = \int_I f^{(j)}(x_{-i}, x_i) \odif x_i
    \end{equation}
    for each $x_{-i} \in \closedInt^d_{-i}$, and $\odif {\cond{\mu^{(j)}}{x_{-i}}} =
    \cond{f^{(j)}}{x_{-i}} \odif x_i$ where
    \begin{equation}
        \label{eq:f-cond}
        \cond{f^{(j)}}{x_{-i}}(x_i) = \frac{f^{(j)}(x)}{\marginal{f^{(j)}}{-i}(x_{-i})}
    \end{equation}
    for each $x \in \closedInt^d$ (note that the denominator is nonzero since $f^{(j)} \ge 1-a$).

    Now, fix any $i \in [d]$. Then the condition $\marginal{\mu^{(i)}}{-i} =
    \marginal{\mu^{(i+1)}}{-i}$ is implied by the pointwise condition
    \begin{equation}
        \label{eq:marginals}
        \int_I f^{(i)}(x_{-i}, x_i) \odif x_i
        = \int_I f^{(i+1)}(x_{-i}, x_i) \odif x_i
        \qquad
        \forall x_{-i} \in \closedInt^d_{-i} \,,
    \end{equation}
    which in turn holds by the definitions of $f^{(i+1)}$ and $\cM_i$, and
    \cref{cor:p-infty-mass-conserving}. Finally, fixing $x_{-i} \in \closedInt^d_{-i}$, we verify
    the condition $\cond{\mu^{(i)}}{x_{-i}} \succeq \cond{\mu^{(i+1)}}{x_{-i}}$. By
    \cref{cor:mu-infty-dominated}, it suffices to show that
    \begin{equation}
        \label{eq:cond-f-p-infty}
        \cond{f^{(i+1)}}{x_{-i}} \eqquestion P_\infty \cond{f^{(i)}}{x_{-i}} \,.
    \end{equation}
    By \eqref{eq:f-marginal} and \eqref{eq:f-cond}, this is equivalent to
    \[
        \frac{1}{\int_I f^{(i+1)}(x_{-i}, x_i) \odif x_i} f^{(i+1)}(x_{-i}, \cdot)
        \eqquestion P_\infty \left(
            \frac{1}{\int_I f^{(i)}(x_{-i}, x_i) \odif x_i} f^{(i)}(x_{-i}, \cdot) \right) \,.
    \]
    Since $f^{(i+1)}(x_{-i}, \cdot) = P_\infty f^{(i)}(x_{-i}, \cdot)$, and applying
    \cref{prop:p-infty-transformations}, the above is equivalent to
    \[
        \frac{1}{\int_I f^{(i+1)}(x_{-i}, x_i) \odif x_i} P_\infty f^{(i)}(x_{-i}, \cdot)
        \eqquestion \frac{1}{\int_I f^{(i)}(x_{-i}, x_i) \odif x_i}
            P_\infty f^{(i)}(x_{-i}, \cdot) \,,
    \]
    which is true again by \eqref{eq:marginals}, so indeed $\cond{\mu^{(i)}}{x_{-i}} \succeq
    \cond{\mu^{(i+1)}}{x_{-i}}$. Thus, by \cref{lemma:induction}, we have
    \begin{align*}
        W_2^2(\mu \to \mu^*)
        &\le \sum_{i=1}^d \int_{\closedInt^d_{-i}}
                W_2^2(\cond{\mu^{(i)}}{x_{-i}}, \cond{\mu^{(i+1)}}{x_{-i}})
                \odif {\marginal{\mu^{(i)}}{-i}(x_{-i})} \\
        &= \sum_{i=1}^d \int_{\closedInt^d_{-i}}
                W_2^2(\cond{\mu^{(i)}}{x_{-i}}, \cond{\mu^{(i+1)}}{x_{-i}})
                \marginal{f^{(i)}}{-i}(x_{-i}) \odif x_{-i} \\
        &\le (1+a) \sum_{i=1}^d \int_{\closedInt^d_{-i}}
                W_2^2(\cond{\mu^{(i)}}{x_{-i}}, \cond{\mu^{(i+1)}}{x_{-i}}) \odif x_{-i} \,,
    \end{align*}
    the last inequality because $1-a \le f^{(i)} \le 1+a$ also implies that $1-a \le
    \marginal{f^{(i)}}{-i} \le 1+a$. Now, applying \cref{thm:undirected-transport-energy} via
    \eqref{eq:cond-f-p-infty}, we have
    \[
        W_2^2(\cond{\mu^{(i)}}{x_{-i}}, \cond{\mu^{(i+1)}}{x_{-i}})
        \le \frac{C}{\inf \cond{f^{(i)}}{x_{-i}}} \cE^-(\cond{f^{(i)}}{x_{-i}})
    \]
    for each $i \in [d]$ and $x_{-i} \in \closedInt^d_{-i}$. Note that $\cond{f^{(i)}}{x_{-i}} \ge
    \frac{1-a}{1+a}$ by the pointwise bounds on $f^{(i)}$ and $\marginal{f^{(i)}}{-i}$, and that
    $\cE^-(\alpha g) = \alpha^2 \cE^-(g)$ for any $\alpha \ge 0$ and (say) Lipschitz $g$ by the
    definition of $\cE^-$. Therefore, applying \eqref{eq:f-cond}, we obtain
    \[
        W_2^2(\cond{\mu^{(i)}}{x_{-i}}, \cond{\mu^{(i+1)}}{x_{-i}})
        \le \frac{C (1+a)}{1-a} \left( \frac{1}{\marginal{f^{(i)}}{-i}(x_{-i})} \right)^2
            \cE^-(f^{(i)}(x_{-i}, \cdot))
        \le \frac{C (1+a)}{(1-a)^3} \cE^-(f^{(i)}(x_{-i}, \cdot)) \,.
    \]
    Hence we have
    \[
        W_2^2(\mu \to \mu^*)
        \le \frac{C (1+a)^2}{(1-a)^3} \sum_{i=1}^d \int_{\closedInt^d_{-i}}
            \cE^-(f^{(i)}(x_{-i}, \cdot)) \odif x_{-i} \,.
    \]
    Recall that $f^{(1)} = f$ while $f^{(i)} = \cM_{i-1} \dotsm \cM_1 f$ for each $i = 2, \dotsc, d+1$.
    Hence, by inductively applying \cref{lemma:mk-energy} for each $i \ge 2$, and by Tonelli's
    theorem and the definition of $\cE^-$, we obtain
    \begin{align*}
        W_2^2(\mu \to \mu^*)
        &\le \frac{C (1+a)^2}{(1-a)^3} \sum_{i=1}^d \int_{\closedInt^d_{-i}}
            \cE^-(f(x_{-i}, \cdot)) \odif x_{-i} \\
        &= \frac{C (1+a)^2}{2 (1-a)^3} \sum_{i=1}^d \int_{\closedInt^d_{-i}} \odif x_{-i}
            \int_I \left( \partial_{x_i}^- f(x_{-i}, x_i) \right)^2 \odif x_i \\
        &= \frac{C (1+a)^2}{2 (1-a)^3} \int_{\closedInt^d} \sum_{i=1}^d
            \left( \partial_i^- f \right)^2 \odif x \\
        &= \frac{C (1+a)^2}{2 (1-a)^3} \int_{\closedInt^d} \left| \grad^- f \right|^2 \odif x \,.
        \qedhere
    \end{align*}
\end{proof}

\section{Directed Poincaré from Wasserstein and Kantorovich}
\label{section:poincare-from-transport-energy}

In this section, we establish our directed Poincaré inequality
(\cref{thm:directed-poincare-inequality}) by combining the transport-energy inequality from
\cref{thm:transport-energy} with a directed version of (weak) Kantorovich duality via a perturbation
argument, thereby implementing the ideas from \cref{section:overview-duality} of the proof overview.
The general theme of going between transport inequalities and Poincaré inequalities, as well as the
spirit of the technical arguments presented here, are well-known in the undirected case; we refer
the reader to \cite[Chapters 7 and 22]{Vil09} for a comprehensive presentation and primary
references, and to \cite{Liu20} for an undirected version of the precise reduction we instantiate
here (that is, from a transport-energy inequality to a Poincaré inequality).

For simplicity, we do not attempt to initiate a systematic study of duality in directed optimal
transport, or to state results in the broadest possible generality, but rather limit ourselves to
what is required for the aforementioned goal. In \cref{section:duality}, we introduce the notion of
directed weak Kantorovich duality and establish some of its properties. The main results are
\cref{prop:h-duality}, which gives a \emph{lower bound} on the directed Wasserstein distance between
two measures in terms the so-called (directed) \emph{Hamilton-Jacobi operator} $\dirH$ (recall that
the previous section gave an \emph{upper bound} on this distance via the transport-energy
inequality), and \cref{prop:properties-h}, which relates the action $\dirH_t h(x)$ of this operator
back to the directed gradient $\grad^+ h(x)$ (which, recall, we wish to have in the right-hand side
of our Poincaré inequality). Then, in \cref{section:perturbation}, we apply these tools via a
careful choice of dual function (a ``perturbation argument'') to conclude the directed Poincaré
inequality, first for continuously differentiable functions (\cref{thm:c1-poincare}) and then, via a
standard approximation argument, for all $H^1$ functions (\cref{thm:directed-poincare-inequality}).

\subsection{Directed weak Kantorovich duality and Hamilton-Jacobi operator}
\label{section:duality}

Kantorovich duality plays a fundamental role in the study of Wasserstein distance. The weak part of
this duality allows one to lower bound the Wasserstein distance between two distributions by means
of ``test functions'' satisfying a certain constraint. Here, we translate a small part of this rich
theory to the directed setting.

\begin{lemma}[Weak duality]
    \label{lemma:weak-duality}
    Let $\Omega \subset \bR^d$ be a bounded Borel set and let $\mu, \nu \in P(\Omega)$. Suppose
    $\phi \in L^1(\nu)$ and $\psi \in L^1(\mu)$ satisfy
    \[
        \phi(y) - \psi(x) \le |x-y|^2 \qquad \forall x \preceq y \text{ in } \Omega \,.
    \]
    Then
    \[
        W_2^2(\mu \to \nu) \ge \int_\Omega \phi \odif \nu - \int_\Omega \psi \odif \mu \,.
    \]
\end{lemma}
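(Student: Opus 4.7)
The plan is to mirror the classical weak Kantorovich duality argument, with the single modification that the pointwise inequality $\phi(y)-\psi(x) \le |x-y|^2$ is only assumed on the directed set $\{(x,y) \in \Omega\times\Omega : x \preceq y\}$, which is precisely the support of any directed transport plan. First I would dispose of the trivial case in which $\Pi(\mu \to \nu)$ is empty: then the infimum defining $W_2^2(\mu \to \nu)$ is vacuous and equals $+\infty$ by convention, so the inequality holds automatically.

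Otherwise, fix an arbitrary $\gamma \in \Pi(\mu \to \nu)$. By definition of directed coupling, $\gamma$ assigns zero mass to the Borel set $\{(x,y) : x \not\preceq y\}$, so $\gamma$-almost every pair $(x,y)$ satisfies $x \preceq y$ and hence the hypothesis gives
\[
    \phi(y) - \psi(x) \le |x-y|^2 \qquad \text{for $\gamma$-a.e.\ $(x,y) \in \Omega \times \Omega$.}
\]
Since $\phi \in L^1(\nu)$ and $\nu$ is the second marginal of $\gamma$, the function $(x,y) \mapsto \phi(y)$ is in $L^1(\gamma)$ with $\int_{\Omega\times\Omega} \phi(y) \odif\gamma(x,y) = \int_\Omega \phi \odif\nu$; and similarly $(x,y) \mapsto \psi(x)$ is in $L^1(\gamma)$ with $\int_{\Omega\times\Omega}\psi(x)\odif\gamma(x,y) = \int_\Omega \psi \odif\mu$. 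Integrating the pointwise inequality above against $\gamma$ therefore yields
\[
    \int_\Omega \phi \odif\nu - \int_\Omega \psi \odif\mu
    \le \int_{\Omega\times\Omega} |x-y|^2 \odif\gamma(x,y)
    = C_2(\gamma)^2 \,.
\]

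Taking the infimum of the right-hand side over all $\gamma \in \Pi(\mu \to \nu)$ gives the desired bound $W_2^2(\mu \to \nu) \ge \int_\Omega \phi \odif\nu - \int_\Omega \psi \odif\mu$. There is no serious obstacle: the only point requiring any care is the observation that the pointwise inequality, though hypothesized only on $\{x \preceq y\}$, may still be integrated against $\gamma$ over all of $\Omega\times\Omega$ precisely because directed couplings concentrate on that set, and this is exactly what makes the directed duality slightly stronger than its undirected counterpart (a wider class of $(\phi,\psi)$ qualify as test functions).
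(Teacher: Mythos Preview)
Your proof is correct and follows essentially the same approach as the paper's: both fix an arbitrary directed coupling $\gamma$, exploit that $\gamma$ concentrates on $\{x \preceq y\}$ to apply the hypothesis $\gamma$-almost everywhere, and then use the marginal conditions to reduce the integrated inequality to $\int \phi\,\mathrm{d}\nu - \int \psi\,\mathrm{d}\mu \le C_2(\gamma)^2$ before taking the infimum. The paper writes this via an explicit indicator $\chi_{\{x\preceq y\}}$ rather than your $\gamma$-a.e.\ phrasing, and does not separately mention the empty-coupling case, but the argument is otherwise identical.
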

\begin{proof}
    Suppose $\gamma \in \Pi(\mu \to \nu)$. Then
    {\allowdisplaybreaks
    \begin{align*}
        C_2(\gamma)^2
        &= \int_{\Omega \times \Omega} |x-y|^2 \odif \gamma(x,y) \\
        &= \int_{\Omega \times \Omega} \chi_{\{x \preceq y\}} |x-y|^2 \odif \gamma(x,y)
            & \text{(Since $\gamma \in \Pi(\mu \to \nu)$)} \\
        &\ge \int_{\Omega \times \Omega} \chi_{\{x \preceq y\}}
            (\phi(y) - \psi(x)) \odif \gamma(x,y)
            & \text{(Hypothesis)} \\
        &= \int_{\Omega \times \Omega} (\phi(y) - \psi(x)) \odif \gamma(x,y)
            & \text{(Since $\gamma \in \Pi(\mu \to \nu)$)} \\
        &= \int_\Omega \phi \odif \nu - \int_\Omega \psi \odif \mu
            & \text{(Since $\transmap{(\pi_1)}{\gamma} = \mu, \transmap{(\pi_2)}{\gamma} = \nu$)}
            \,.
    \end{align*}
    }%
    Since this holds for every $\gamma \in \Pi(\mu \to \nu)$, the result follows.
\end{proof}

This duality results motivates the following definition. Let $C_b(\Omega)$ denote the set of bounded
continuous functions $\Omega \to \bR$.

\begin{definition}[Directed Hamilton-Jacobi operator]
    Let $\Omega \subset \bR^d$. For each $t \ge 0$, define the \emph{directed Hamilton-Jacobi
    operator} $\dirH_t : C_b(\Omega) \to (\Omega \to \bR)$ as follows: for each $h \in C_b(\Omega)$
    and $x \in \Omega$, we set
    \begin{equation}
        \label{eq:hamilton-jacobi}
        (\dirH_t h)(x) \define \begin{cases}
            h(x)                                                            & \text{if $t = 0$} \\
            \sup_{y \succeq x} \left\{ h(y) - \frac{1}{2t} |x-y|^2 \right\} & \text{otherwise.}
        \end{cases}
    \end{equation}
\end{definition}

\begin{remark}
    The definition above is a directed analogue of the so-called (backward)
    \emph{Hamilton-Jacobi-Hopf-Lax-Oleinik semigroup}. We do not claim that $\dirH_t$ forms a
    semigroup.
\end{remark}

\begin{proposition}
    \label{prop:h-duality}
    Let $\Omega \subset \bR^d$ be a bounded Borel set, and let $\mu, \nu \in P(\Omega)$. Let $h \in
    C_b(\Omega)$. Then
    \[
        \tfrac{1}{2} W_2^2(\mu \to \nu)
        \ge \int_\Omega h \odif \nu - \int_\Omega (\dirH_1 h) \odif \mu \,.
    \]
\end{proposition}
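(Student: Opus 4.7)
The plan is to invoke the directed weak Kantorovich duality (\cref{lemma:weak-duality}) with the dual pair $\phi \define 2h$ and $\psi \define 2\dirH_1 h$. The factor of $2$ is engineered so that the admissibility condition $\phi(y) - \psi(x) \le |x-y|^2$ for $x \preceq y$ becomes, after dividing by $2$, exactly the inequality
\[
    h(y) - \tfrac{1}{2}|x-y|^2 \le (\dirH_1 h)(x).
\]
But this is essentially a tautology: since $y \succeq x$, the point $y$ itself is admissible in the supremum \eqref{eq:hamilton-jacobi} defining $(\dirH_1 h)(x)$, so the sup is at least as large as the quantity obtained by plugging in $y' = y$. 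With this choice, \cref{lemma:weak-duality} delivers $W_2^2(\mu \to \nu) \ge 2\int h \odif \nu - 2\int (\dirH_1 h) \odif \mu$, and dividing by $2$ gives the claim.

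The only ingredient that is not completely formal is verifying that $\phi \in L^1(\nu)$ and $\psi \in L^1(\mu)$. Since $h \in C_b(\Omega)$, the function $\phi$ is bounded and hence trivially in $L^1(\nu)$. For $\psi$, boundedness follows from the sandwich
\[
    h(x) \;\le\; (\dirH_1 h)(x) \;\le\; \sup_{y \in \Omega} h(y),
\]
where the lower bound is obtained by taking $y = x$ in the defining supremum and the upper bound is trivial; both ends are finite because $h$ is bounded. For Borel measurability, I would check that $\dirH_1 h$ is upper semicontinuous: given $x_n \to x$, extract near-maximizers $y_n \succeq x_n$ in the sup defining $(\dirH_1 h)(x_n)$, pass to a subsequence $y_n \to y$ using that $\Omega$ is bounded, invoke closedness of the partial order $\preceq$ to conclude $y \succeq x$, and use continuity of $h$ to get $(\dirH_1 h)(x) \ge h(y) - \tfrac{1}{2}|x-y|^2 \ge \limsup_n (\dirH_1 h)(x_n)$. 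Hence $\dirH_1 h$ is bounded and Borel, so $\psi \in L^1(\mu)$.

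The main (minor) obstacle is this measurability/integrability check for $\dirH_1 h$; once it is in place, the proof reduces to a one-line application of weak duality, with the choice of dual functions dictated by the very definition of the Hamilton--Jacobi operator. No additional ideas beyond what is already assembled in \cref{section:duality} are needed.
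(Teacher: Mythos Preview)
Your proof is correct and matches the paper's argument exactly: set $\phi = 2h$ and $\psi = 2\dirH_1 h$, note both are bounded, verify the constraint $\phi(y) - \psi(x) \le |x-y|^2$ for $x \preceq y$ as an immediate consequence of the definition of $\dirH_1$, and apply \cref{lemma:weak-duality}. The paper does not explicitly treat Borel measurability of $\dirH_1 h$, so your upper semicontinuity sketch goes beyond it---though note that since $\Omega$ is only assumed bounded Borel, not closed, the subsequential limit of your near-maximizers $y_n$ need not remain in $\Omega$.
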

\begin{proof}
    Let $\phi \define 2h$ and $\psi \define 2 \dirH_1 h$. Note that $\phi \in L^1(\nu)$ since $h$ is
    bounded. The fact that $h$ is bounded also implies that $\dirH_1 h$ is bounded, as can be
    verified from its definition, and hence $\psi \in L^1(\mu)$. Finally, for each $x \preceq y$ in
    $\Omega$ we have
    \begin{align*}
        \phi(y) - \psi(x)
        &= 2h(y) - 2 \sup_{y' \succeq x} \left\{ h(y') - \frac{1}{2} |x - y'|^2 \right\} \\
        &= |x-y|^2 + \underbrace{2\left[
            \left( h(y) - \frac{1}{2} |x-y|^2 \right)
            - \sup_{y' \succeq x}
                \left\{ h(y') - \frac{1}{2} |x - y'|^2 \right\} \right]}_{\le 0}
        \le |x-y|^2 \,.
    \end{align*}
    Thus \cref{lemma:weak-duality} gives the conclusion.
\end{proof}

The following simple lemma plays a key role in the perturbation argument.

\begin{lemma}
    \label{lemma:h-scaling}
    Let $\Omega \subset \bR^d$. Let $t > 0$ and let $h \in C_b(\Omega)$. Then $\dirH_1 (th) = t
    \dirH_t h$.
\end{lemma}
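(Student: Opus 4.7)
The proof is essentially a one-line unfolding of definitions, so the plan is simply to expand both sides using \eqref{eq:hamilton-jacobi} and observe that they match term-by-term once $t$ is pulled inside the supremum.

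Concretely, fix $x \in \Omega$. On the left-hand side, since $t > 0$ we are in the second case of \eqref{eq:hamilton-jacobi} with parameter $1$ applied to the function $th$, giving
\[
    (\dirH_1 (th))(x) = \sup_{y \succeq x} \left\{ (th)(y) - \tfrac{1}{2} |x-y|^2 \right\}
    = \sup_{y \succeq x} \left\{ t h(y) - \tfrac{1}{2} |x-y|^2 \right\}.
\]
On the right-hand side, applying \eqref{eq:hamilton-jacobi} with parameter $t$ and then using positivity of $t$ to pull the factor inside the supremum yields
\[
    t (\dirH_t h)(x) = t \sup_{y \succeq x} \left\{ h(y) - \tfrac{1}{2t} |x-y|^2 \right\}
    = \sup_{y \succeq x} \left\{ t h(y) - \tfrac{1}{2} |x-y|^2 \right\}.
\]
The two suprema are literally the same expression, so they coincide for every $x$, and the identity follows.

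There is no real obstacle here: the only thing to check is that $th \in C_b(\Omega)$ so that $\dirH_1(th)$ is defined (immediate, since $t > 0$ and $h \in C_b(\Omega)$), and that passing the positive scalar $t$ through the $\sup$ is legitimate (standard). So the proof will be a short two-line computation with no hidden steps.
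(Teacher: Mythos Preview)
Your proposal is correct and matches the paper's proof essentially line for line: the paper also fixes $x \in \Omega$, expands $(\dirH_1(th))(x)$ via the definition, factors out the positive scalar $t$ from the supremum, and identifies the result as $t(\dirH_t h)(x)$.
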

\begin{proof}
    Indeed, for each $x \in \Omega$ we have
    \[
        (\dirH_1 (th))(x)
        = \sup_{y \succeq x} \left\{ th(y) - \frac{1}{2} |x-y|^2 \right\}
        = t \sup_{y \succeq x} \left\{ h(y) - \frac{1}{2t} |x-y|^2 \right\}
        = t ((\dirH_t h)(x)) \,. \qedhere
    \]
\end{proof}

\paragraph*{Notation.} For $\Omega \subset \bR^d$ an open set and $k \in \bN \cup \{+\infty\}$,
write $C^k(\overline \Omega)$ for the set of restrictions to $\overline \Omega$ of functions in
$C^k(\bR^d)$.

For $\Omega \subset \bR^d$, point $x \in \Omega$, and $r > 0$, let $B^\circ_\Omega(x,r) \define
B^\circ(x,r) \cap \Omega$, where $B^\circ(x,r) \define \{ y \in \bR^d : 0 < |x-y| < r \}$ is the
open ball of radius $r$ centered at $x$ with $x$ itself excluded, and let $B^+_\Omega(x,r) \define
B^\circ_\Omega(x,r) \cap \{ y \in \bR^d : y \succeq x\}$.

The following result gives a directed analogue to some of the properties of the Hamilton-Jacobi
semigroup, as presented in \cite[Theorem~22.16]{Vil09}.\footnote{Note that our use of the notation
$\grad^+ h= 0 \lor \grad h$, $\grad^- h = 0 \land \grad h$ is unrelated to the use of similar
notation in \cite{Vil09}, where it denotes a notion of norm of the gradient for functions in more
general spaces where the usual derivative may not be defined. In particular, their definition agrees
with the norm of the gradient for differentiable functions, while the point is the our definition
(the directed gradient) does not.}

\begin{proposition}[Some properties of the directed Hamilton-Jacobi operator]
    \label{prop:properties-h}
    Let $\Omega \subset \bR^d$ be a bounded open set. Let $h \in C^1(\overline \Omega)$.
    \begin{enumerate}[label=(\alph*)]
        \item \label{item:h-a}
            Let $C \define \sup h - \inf h$, which is well-defined because $h$ is continuous over
            the compact set $\overline \Omega$ and thus bounded. Then for each $t > 0$ and $x \in
            \Omega$, the supremum in \eqref{eq:hamilton-jacobi} may be taken over the set
            $B^+_\Omega(x, \sqrt{2Ct})$.
        \item \label{item:h-b}
            For each $x \in \Omega$,
            \[
                \limsup_{t \to 0^+} \frac{\dirH_t h(x) - h(x)}{t}
                \le \frac{|\grad^+ h(x)|^2}{2} \,.
            \]
        \item \label{item:h-c}
            The quotient
            \[
                \frac{\dirH_t h(x) - h(x)}{t}
            \]
            is nonnegative and bounded, uniformly in $t > 0$ and $x \in \Omega$.
    \end{enumerate}
\end{proposition}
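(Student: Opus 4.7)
The plan is to handle the three parts in order (a), (c), (b), since the first two are short and part (b) leverages the localization from (a). For part (a), I would use that $h$ is bounded on the compact set $\overline \Omega$. Note that taking $y=x$ as a candidate in the supremum of \eqref{eq:hamilton-jacobi} already gives the lower bound $\dirH_t h(x) \ge h(x)$ (though $y=x$ is not in $B^+_\Omega(x,r)$, this value is the limit along points in $B^+_\Omega(x,r)$ approaching $x$, which exist because $\Omega$ is open). On the other hand, for any candidate $y \succeq x$ with $|x-y| \ge \sqrt{2Ct}$, we have $h(y) - \tfrac{1}{2t}|x-y|^2 \le (\inf h + C) - C = h(x) + (\inf h - h(x)) \le h(x)$, so such $y$ cannot beat the value at $x$ and can be discarded. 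This proves (a).

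For part (c), nonnegativity follows immediately from the preceding observation. For the uniform upper bound, since $h \in C^1(\overline \Omega)$ and $\overline \Omega$ is compact, the quantity $M \define \sup_{\overline \Omega} |\grad h|$ is finite. For any $x, y \in \overline \Omega$, the fundamental theorem of calculus along the segment from $x$ to $y$ gives $h(y) - h(x) \le M|x-y|$, and so $h(y) - h(x) - \tfrac{1}{2t}|x-y|^2 \le M|x-y| - \tfrac{1}{2t}|x-y|^2 \le \tfrac{M^2 t}{2}$ by maximizing the right-hand side over $|x-y| \ge 0$. Taking the supremum over $y \succeq x$ yields $\dirH_t h(x) - h(x) \le \tfrac{M^2 t}{2}$, uniformly in $t > 0$ and $x \in \Omega$.

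For part (b), the main idea is to combine the localization from (a) with a Taylor expansion of $h$ around $x$. Fix $x \in \Omega$. By (a), for each $t > 0$ small enough we may restrict the supremum to $y \in B^+_\Omega(x, \sqrt{2Ct})$, so all relevant $y$ satisfy $|y-x| \le \sqrt{2Ct} \to 0$ as $t \to 0$. By the $C^1$ regularity of $h$, a uniform Taylor expansion gives $h(y) - h(x) = \grad h(x) \cdot (y-x) + \epsilon(t)|y-x|$ where $\epsilon(t) \to 0^+$ as $t \to 0$, for all $y$ in that ball. Substituting $u = y-x \succeq 0$ and using that $\grad h(x) \cdot u \le \grad^+ h(x) \cdot u \le |\grad^+ h(x)|\,|u|$ for $u \succeq 0$ (since nonnegative combinations of negative entries of the gradient only shrink the dot product), we bound
\[
    \dirH_t h(x) - h(x)
    \le \sup_{s \in [0, \sqrt{2Ct}]}
        \Bigl\{ (|\grad^+ h(x)| + \epsilon(t))\,s - \tfrac{s^2}{2t} \Bigr\} \,,
\]
where $s$ plays the role of $|u|$. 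The unconstrained maximizer is $s^* = t(|\grad^+ h(x)| + \epsilon(t))$, which lies in the admissible range for small $t$ since $s^* = O(t) \ll \sqrt{2Ct}$. Plugging in, the right-hand side equals $\tfrac{t}{2}(|\grad^+ h(x)| + \epsilon(t))^2$; dividing by $t$ and taking $\limsup_{t \to 0^+}$ yields $\tfrac{1}{2}|\grad^+ h(x)|^2$.

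The only subtle point is part (b), where one must be careful that the localization (a) permits a Taylor expansion with a \emph{uniform} remainder and that the directional analysis via $\grad^+$ correctly captures the constraint $y \succeq x$; the rest is routine. The appearance of $\grad^+$ (rather than $\grad$) is exactly the manifestation of the directed nature of the supremum, and is what will eventually pair with the directed gradient $\grad^-$ arising from the transport-energy inequality in the perturbation argument of the following section.
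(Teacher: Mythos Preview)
Your proof is correct. Parts (a) and (c) are essentially the same as the paper's, with your treatment of (c) being slightly more direct (you bound $h(y)-h(x)$ by the Lipschitz constant right away and optimize the quadratic in $|x-y|$, whereas the paper first derives an intermediate difference-quotient bound and then plugs in the Lipschitz constant). One small technical point in (c): since $\Omega$ is only assumed bounded and open, the segment from $x$ to $y$ need not stay in $\overline\Omega$, so $M=\sup_{\overline\Omega}|\nabla h|$ is not quite the right constant; but since $h\in C^1(\overline\Omega)$ is by definition the restriction of a $C^1(\bR^d)$ function, you can take $M$ to be the sup of $|\nabla h|$ over any convex compact set containing $\overline\Omega$, and the argument goes through unchanged.

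For (b) your route genuinely differs from the paper's. You use a first-order Taylor expansion with a uniform remainder on the localized ball, combine it with the elementary inequality $\nabla h(x)\cdot u \le \nabla^+ h(x)\cdot u \le |\nabla^+ h(x)|\,|u|$ for $u\succeq 0$, and then optimize a one-variable quadratic. The paper instead rewrites the quotient as $\frac{(h(y)-h(x))^+}{|x-y|}\cdot\frac{|x-y|}{t}-\tfrac12\bigl(\frac{|x-y|}{t}\bigr)^2$, applies $\alpha\beta-\tfrac12\beta^2\le\tfrac12\alpha^2$, and then invokes a separate lemma (\cref{lemma:quotients-directed-gradient}) establishing the exact identity $\lim_{r\to 0^+}\sup_{y\in B^+_\Omega(x,r)}\frac{(h(y)-h(x))^+}{|x-y|}=|\nabla^+ h(x)|$. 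Your approach is shorter and self-contained; the paper's approach factors out a reusable geometric lemma that characterizes $|\nabla^+ h|$ via directed difference quotients, and its intermediate bound \eqref{eq:h-quotient-bound} simultaneously yields (c). Also, your remark that the maximizer $s^*$ lies in the admissible range is unnecessary: since you are upper-bounding, the unconstrained maximum over $s\ge 0$ already suffices.
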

\begin{proof}
    Let us first show \ref*{item:h-a}. For any $x \preceq y$ in $\Omega$ and $t > 0$, if $|x-y| \ge
    \sqrt{2Ct}$ then
    \[
        h(y) - \frac{1}{2t}|x-y|^2
        \le h(x) + C - \frac{1}{2t} 2Ct
        = h(x) \,,
    \]
    so $y$ is irrelevant to the supremum in \eqref{eq:hamilton-jacobi}. Moreover, by the continuity
    of $h$, we may drop $x$ itself from the supremum. Thus $y$ is irrelevant to the supremum
    whenever $y \not\in B^+_\Omega(x, \sqrt{2Ct})$, as claimed.

    We now show \ref*{item:h-b} using \ref*{item:h-a}. Let $x \in \Omega$ and $t > 0$. We have
    {\allowdisplaybreaks
        \begin{align*}
            \frac{\dirH_t h(x) - h(x)}{t}
            &= \frac{\sup_{y \in B^+_\Omega(x, \sqrt{2Ct})}
                \left\{ h(y) - \frac{1}{2t}|x-y|^2 \right\} - h(x)}{t} \\
            &= \sup_{y \in B^+_\Omega(x, \sqrt{2Ct})}
                \left\{ \frac{h(y) - h(x)}{t} - \frac{1}{2t^2}|x-y|^2 \right\} \\
            &\le \sup_{y \in B^+_\Omega(x, \sqrt{2Ct})}
                \left\{ \frac{(h(y) - h(x))^+}{t} - \frac{1}{2t^2}|x-y|^2 \right\} \\
            &= \sup_{y \in B^+_\Omega(x, \sqrt{2Ct})}
                \left\{ \frac{(h(y) - h(x))^+}{|x-y|} \frac{|x-y|}{t}
                        - \frac{1}{2}\left(\frac{|x-y|}{t}\right)^2 \right\} \,.
        \end{align*}
    }%
    Using the inequality $\alpha \beta - \frac{1}{2} \beta^2 \le \frac{1}{2} \alpha^2$, we obtain
    \begin{equation}
        \label{eq:h-quotient-bound}
        \frac{\dirH_t h(x) - h(x)}{t}
        \le \frac{1}{2} \sup_{y \in B^+_\Omega(x, \sqrt{2Ct})}
            \left( \frac{(h(y) - h(x))^+}{|x-y|} \right)^2 \,.
    \end{equation}
    Using \cref{lemma:quotients-directed-gradient} and the fact that the term inside the supremum is
    nonnegative, we conclude that
    \[
        \limsup_{t \to 0^+} \frac{\dirH_t h(x) - h(x)}{t}
        \le \frac{1}{2} \left( \lim_{t \to 0^+} \sup_{y \in B^+_\Omega(x, \sqrt{2Ct})}
            \frac{(h(y) - h(x))^+}{|x-y|} \right)^2
        = \frac{|\grad^+ h(x)|^2}{2} \,,
    \]
    as desired. We may also obtain \ref*{item:h-c} from \eqref{eq:h-quotient-bound} as follows.
    First note that $\dirH_t h(x) \ge h(x)$ for all $t \ge 0$ and $x \in \Omega$. It is standard
    that $C^1$ functions are Lipschitz on compact sets, so $h$ is Lipschitz on $\overline \Omega$.
    Let $M > 0$ be its Lipschitz constant. Then from \eqref{eq:h-quotient-bound} we conclude that,
    for each $x \in \Omega$ and $t > 0$,
    \[
        0
        \le \frac{\dirH_t h(x) - h(x)}{t}
        \le \frac{1}{2} \sup_{y \in B^+_\Omega(x, \sqrt{2Ct})}
            \left( \frac{M |x-y|}{|x-y|} \right)^2
        = \frac{M^2}{2} \,,
    \]
    which establishes \ref*{item:h-c}.
\end{proof}

\begin{lemma}
    \label{lemma:quotients-directed-gradient}
    Let $\Omega \subset \bR^d$ be an open set. Let $h \in C^1(\overline \Omega)$. Then for each $x
    \in \Omega$,
    \[
        |\grad h(x)|
        = \lim_{r \to 0^+} \sup_{y \in B^\circ_\Omega(x, r)} \frac{(h(y) - h(x))^+}{|x-y|}
    \]
    and
    \[
        |\grad^+ h(x)|
        = \lim_{r \to 0^+} \sup_{y \in B^+_\Omega(x, r)} \frac{(h(y) - h(x))^+}{|x-y|} \,.
    \]
\end{lemma}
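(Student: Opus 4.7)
The plan is to use the first-order Taylor expansion of $h$ at $x$: since $h \in C^1(\overline \Omega)$, we have
\[
h(y) - h(x) = \grad h(x) \cdot (y-x) + \epsilon(y) |y-x|
\]
where $\epsilon(y) \to 0$ as $y \to x$. Both identities follow by carefully extracting the supremum of $\grad h(x) \cdot (y-x)/|y-x|$ over the appropriate set of directions, in the limit as $r \to 0^+$.

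For the first identity, I would first handle the trivial case $\grad h(x) = 0$, in which $(h(y)-h(x))^+/|x-y| \le |\epsilon(y)| \to 0$. Otherwise, the upper bound follows from the Cauchy--Schwarz inequality: $(\grad h(x) \cdot (y-x))^+/|x-y| \le |\grad h(x)|$, so that the limsup of the $r$-indexed supremum is at most $|\grad h(x)|$. For the matching lower bound, the direction of steepest ascent $v = \grad h(x)/|\grad h(x)|$ gives, for $y_t = x + tv \in \Omega$ (valid for small $t > 0$ since $\Omega$ is open), $(h(y_t) - h(x))/|x-y_t| = |\grad h(x)| + \epsilon(y_t) \to |\grad h(x)|$. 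This also shows monotonicity (as $r \to 0^+$) of the supremum is unnecessary: the supremum must converge to exactly $|\grad h(x)|$.

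For the second identity, the argument is identical in structure but I must restrict attention to $y \succeq x$. For the upper bound, when $y \succeq x$ we have $(y-x)_i \ge 0$ for every $i$, so
\[
\grad h(x) \cdot (y-x)
= \sum_{i=1}^d \partial_i h(x) (y-x)_i
\le \sum_{i=1}^d \partial^+_i h(x) (y-x)_i
= \grad^+ h(x) \cdot (y-x)
\le |\grad^+ h(x)| |y-x|
\]
by Cauchy--Schwarz, so the supremum over $B^+_\Omega(x,r)$ is at most $|\grad^+ h(x)| + o(1)$. For the matching lower bound, take $v = \grad^+ h(x)/|\grad^+ h(x)|$ (assuming $\grad^+ h(x) \ne 0$; otherwise the upper bound already gives the conclusion), observe that $v \succeq 0$, so $y_t = x + tv \in B^+_\Omega(x,r)$ for all sufficiently small $t > 0$, and compute
\[
\grad h(x) \cdot v
= \sum_{i : \partial_i h(x) > 0} \frac{(\partial_i h(x))^2}{|\grad^+ h(x)|}
= |\grad^+ h(x)| \,,
\]
so that $(h(y_t) - h(x))^+/|x-y_t| \to |\grad^+ h(x)|$ as $t \to 0^+$.

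No step is a serious obstacle; the main care is in the book-keeping of edge cases ($\grad h(x) = 0$, $\grad^+ h(x) = 0$), in verifying that the test point $y_t$ indeed lies in $B^\circ_\Omega(x,r)$ or $B^+_\Omega(x,r)$ (which is immediate from openness of $\Omega$ and, in the directed case, from $v \succeq 0$), and in handling the limit $r \to 0^+$ of the $r$-indexed supremum (the supremum is monotone nondecreasing in $r$, so the limit exists and equals the infimum over $r > 0$).
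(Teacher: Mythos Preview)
Your proposal is correct, and for the second identity it takes a more direct route than the paper. The paper establishes the second identity by partitioning $[d]$ into the index sets $P = \{i : \partial_i h(x) > 0\}$ and $N = [d] \setminus P$, applying the already-proved first identity to the restriction $h^P$ of $h$ to the $P$-coordinates (so that $|\grad h^P(x_P)| = |\grad^+ h(x)|$), and then carrying out a somewhat involved argument---using reflections in the $P$-coordinates and projections in the $N$-coordinates, together with continuity of $\partial_i h$ near $x$---to show that the supremum over $B^+_\Omega(x,r)$ matches the supremum over $B^\circ_{\Omega^P}(x_P,r)$ in the limit $r \to 0^+$. Your argument sidesteps this entirely: the upper bound comes from the clean inequality $\grad h(x) \cdot (y-x) \le \grad^+ h(x) \cdot (y-x)$ for $y \succeq x$, and the lower bound from the single test direction $v = \grad^+ h(x)/|\grad^+ h(x)|$, which automatically satisfies $v \succeq 0$ and achieves $\grad h(x) \cdot v = |\grad^+ h(x)|$ because $v_i$ vanishes precisely where $\partial_i h(x) \le 0$. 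Both approaches are valid; yours is shorter and more transparent, while the paper's reduction makes the relationship between the two identities structurally explicit.
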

\begin{proof}
    The first part of the statement is known, but let us give a proof for completeness. Let $x \in
    \Omega$. We have
    \[
        \lim_{r \to 0^+} \sup_{y \in B^\circ_\Omega(x,r)} \frac{h(y) - h(x)}{|x-y|}
        = \lim_{r \to 0^+} \sup_{y \in B^\circ_\Omega(x,r)}
            \frac{h(y) - h(x) - \grad h(x)^\top (y-x) + \grad h(x)^\top (y-x)}{|x-y|} \,.
    \]
    By definition of derivative, we have
    \[
        \lim_{y \to x} \frac{\abs*{h(y) - h(x) - \grad h(x)^\top (y-x)}}{|x-y|} = 0 \,,
    \]
    and hence, by $\ell^2$ norm duality,
    \[
        \lim_{r \to 0^+} \sup_{y \in B^\circ_\Omega(x,r)} \frac{h(y) - h(x)}{|x-y|}
        = \lim_{r \to 0^+} \sup_{y \in B^\circ_\Omega(x,r)} \frac{\grad h(x)^\top (y-x) }{|x-y|}
        = \sup_{v \in \bR^d : |v|=1} \grad h(x)^\top v
        = |\grad h(x)| \,.
    \]
    Similarly, we also have
    \begin{align*}
        \lim_{r \to 0^+} \sup_{y \in B^\circ_\Omega(x,r)} \frac{\abs*{h(y) - h(x)}}{|x-y|}
        &= \lim_{r \to 0^+} \sup_{y \in B^\circ_\Omega(x,r)}
            \frac{\abs*{h(y) - h(x) - \grad h(x)^\top (y-x) + \grad h(x)^\top (y-x)}}{|x-y|} \\
        &\le \lim_{r \to 0^+} \sup_{y \in B^\circ_\Omega(x,r)}
            \frac{\abs*{h(y) - h(x) - \grad h(x)^\top (y-x)}
                + \abs*{\grad h(x)^\top (y-x)}}{|x-y|} \\
        &= \lim_{r \to 0^+} \sup_{y \in B^\circ_\Omega(x,r)}
            \frac{\abs*{\grad h(x)^\top (y-x)}}{|x-y|}
        = \sup_{v \in \bR^d : |v|=1} \abs*{\grad h(x)^\top v}
        = |\grad h(x)| \,.
    \end{align*}
    Since $h(y) - h(x) \le (h(y) - h(x))^+ \le \abs*{h(y) - h(x)}$, the first part of the statement
    follows.

    We now establish the second part. Write the partition $[d] = P \cup N$ for
    \begin{align*}
        P &\define \{ i \in [d] : \partial_i h(x) > 0 \} \,, \\
        N &\define \{ i \in [d] : \partial_i h(x) \le 0 \} \,.
    \end{align*}
    Assume for a moment that $P \ne \emptyset$\footnote{We track the edge case $P = \emptyset$ as we
    go along the proof; the main idea is the same.}\!\!. Define $\Omega^P \subset \bR^P$ by
    \[
        \Omega^P \define \{ y_P : y \in \Omega \text{ satisfies } \supp(y-x) \subseteq P \} \,,
    \]
    and let $h^P \in C^1(\overline {\Omega^P})$ be given by $h^P(y_P) \define h(y_P, x_{-P})$ for
    each $y_P \in \Omega_P$, \ie $h^P$ is the restriction of $h$ to the space obtained by fixing the
    $N$-coordinates of inputs to those of $x$. Note that $\partial_i h(x) = \partial_i h^P(x_P)$ for
    each $i \in P$, and hence $|\grad^+ h(x)| = |\grad h^P(x_P)|$. Applying the first part of the
    statement to $\Omega^P$, $h^P$ and $x_P$ gives
    \begin{equation}
        \label{eq:gradplus-1}
        |\grad^+ h(x)|
        = |\grad h^P(x_P)|
        = \lim_{r \to 0^+} \sup_{y_P \in B^\circ_{\Omega^P}(x_P, r)}
            \frac{(h^P(y_P) - h^P(x_P))^+}{|x_P - y_P|}
    \end{equation}
    We claim that
    \begin{equation}
        \label{eq:gradplus-2}
        \lim_{r \to 0^+} \sup_{y_P \in B^\circ_{\Omega^P}(x_P, r)}
            \frac{(h^P(y_P) - h^P(x_P))^+}{|x_P - y_P|}
        \eqquestion
        \lim_{r \to 0^+} \sup_{y \in B^+_\Omega(x, r)} \frac{(h(y) - h(x))^+}{|x-y|} \,,
    \end{equation}
    with the LHS replaced by $0$ if $P = \emptyset$, which will conclude the proof.

    For each set $A \subseteq [d]$, let $X_A \define \{y \in \Omega : y_A \succeq x_A\}$ and $Y_A
    \define \{y \in \Omega : y_A \preceq x_A\}$. Note that for all $r > 0$, $B^+_\Omega(x,r) =
    B^\circ_\Omega(x,r) \cap X_P \cap X_N$ and moreover, defining $B'_\Omega(x,r) \define
    B^\circ_\Omega(x,r) \cap X_N \cap Y_N$, when $P \ne \emptyset$ we have
    \begin{equation}
        \label{eq:gradplus-3}
        \lim_{r \to 0^+} \sup_{y_P \in B^\circ_{\Omega^P}(x_P, r)}
            \frac{(h^P(y_P) - h^P(x_P))^+}{|x_P - y_P|}
        = \lim_{r \to 0^+} \sup_{y \in B'_\Omega(x, r)} \frac{(h(y) - h(x))^+}{|x-y|}
    \end{equation}
    by definition of the objects involved.

    Note that, when $P \ne \emptyset$, $\min_{i \in P} \partial_i h(x) > 0$ by the definition of
    $P$. Since $h$ is continuously differentiable, it follows that there exists $r_0 > 0$ such that,
    for all $y \in B^\circ_\Omega(x,r_0)$ and every $i \in P$, we have $\partial_i h(y) > 0$. Since
    $\Omega$ is open, we can also let $r_0$ be small enough so that $B(x,r_0) \subset \Omega$.

    Now, suppose $y \in B^\circ_\Omega(x,r_0)$ and $i \in P$ are such that $y_i < x_i$. Then define
    $z \in \bR^d$ by $z_i \define y_i + 2(x_i - y_i) > x_i$, and $z_j \define y_j$ for $j \in [d]
    \setminus \{i\}$. Then $|x-z| = |x-y|$, and in particular $z \in B^\circ(x,r_0) \subset \Omega$.
    Also, since $i \in P$ and $z$ agrees with $y$ on all $j \ne i$, we have that $y \in X_N$ (resp.
    $Y_N$) if and only if $z \in X_N$ (resp. $Y_N$). Moreover, since $\partial_i h(w) > 0$ for all
    $w$ in the line segment connecting $y$ and $z$ (which is contained in $B^\circ_\Omega(x,r_0)
    \cup \{x\}$), the fundamental theorem of calculus implies that $h(z) > h(y)$.

    Repeating this argument inductively for each index $i \in P$ for which $y_i < x_i$, we conclude
    that for all $r \in (0, r_0)$ and $y \in B'_\Omega(x,r)$, there exists $z \in B'_\Omega(x,r)
    \cap X_P$ such that $|x-z| = |x-y|$ and $h(z) > h(y)$. It follows that, when $P \ne \emptyset$,
    \begin{equation}
        \label{eq:gradplus-4}
        \lim_{r \to 0^+} \sup_{y \in B'_\Omega(x, r)} \frac{(h(y) - h(x))^+}{|x-y|}
        = \lim_{r \to 0^+} \sup_{y \in B'_\Omega(x, r) \cap X_P} \frac{(h(y) - h(x))^+}{|x-y|} \,.
    \end{equation}
    Combining \eqref{eq:gradplus-2}, \eqref{eq:gradplus-3} and \eqref{eq:gradplus-4}, we see that it
    remains to show that
    \begin{equation}
        \label{eq:gradplus-5}
        \lim_{r \to 0^+} \sup_{y \in B'_\Omega(x, r) \cap X_P} \frac{(h(y) - h(x))^+}{|x-y|}
        \eqquestion
        \lim_{r \to 0^+} \sup_{y \in B^+_\Omega(x, r)} \frac{(h(y) - h(x))^+}{|x-y|} \,,
    \end{equation}
    again with the LHS replaced by $0$ if $P = \emptyset$. Note that for each $r > 0$ it holds by
    definition that $B'_\Omega(x, r) \cap X_P \subseteq B^+_\Omega(x, r)$, namely $B'_\Omega(x, r)
    \cap X_P = B^+_\Omega(x, r) \cap Y_N$. Thus we already have
    \[
        \lim_{r \to 0^+} \sup_{y \in B'_\Omega(x, r) \cap X_P} \frac{(h(y) - h(x))^+}{|x-y|}
        \le \lim_{r \to 0^+} \sup_{y \in B^+_\Omega(x, r)} \frac{(h(y) - h(x))^+}{|x-y|} \,,
    \]
    which also holds when the LHS is replaced by $0$, and it remains to prove the reverse
    inequality. Let $\epsilon > 0$. As before, recalling that $\partial_i h(x) \le 0$ for every $i
    \in N$ and using the fact that $h$ is continuously differentiable, we can let $r_0 > 0$ be small
    enough so that for all $y \in B^\circ(x,r_0)$ and every $i \in N$, we have $\partial_i h(y) \le
    \epsilon$. Again since $\Omega$ is open, we may also let $r_0$ be small enough so that $B(x,r_0)
    \subset \Omega$.

    Now, let $r \in (0, r_0)$ and $y \in B_\Omega^+(x,r)$. Define $y^P \in \bR^d$ as follows: $y^P_i
    \define y_i$ for each $i \in P$, and $y^P_j \define x_j$ for each $j \in N$. Then $\supp(y^P-x)
    \subseteq P$ by construction, so $y^P \in X_N \cap Y_N$. We also have $y^P \succeq x$ and $|x -
    y^P| \le |x - y|$, and hence $y^P \in (B^+(x,r) \cap Y_N) \cup \{x\} = (B'_\Omega(x,r) \cap X_P)
    \cup \{x\}$.

    By a standard multivariate version of the mean value theorem, there exists a point $z$ in the
    line segment connecting $y^P$ and $y$ such that
    \[
        h(y) - h(y^P) = \grad h(z)^\top (y - y^P) \,.
    \]
    By construction, we have $\supp(y - y^P) \subseteq N$. Moreover, since $y \in B^+_\Omega(x,r)$
    and hence $y \succeq x$, while $y^P_j = x_j$ for each $j \in N$, we have that $y - y^P \succeq
    \vec 0$. Finally, it is clear from the construction of $y^P$ that $|y - y^P| \le |x - y|$. We
    conclude that
    \begin{align*}
        h(y) - h(y^P)
        &= \grad h(z)^\top (y - y^P) \\
        &\le \grad^+ h(z)^\top (y - y^P)
            & \text{(Since $y - y^P \succeq \vec 0$)} \\
        &= \sum_{i \in N} (\partial_i^+ h(z)) (y_i - y^P_i)
            & \text{(Since $\supp(y - y^P) \subseteq N$)} \\
        &\le \left( \sum_{i \in N} (\partial_i^+ h(z))^2 \right)^{1/2} |y - y^P|
            & \text{(Cauchy-Schwarz inequality)} \\
        &\le \left( \sum_{i \in N} \epsilon^2 \right)^{1/2} |x-y|
            & \text{(Choice of $r_0$ and last observation)} \\
        &\le \epsilon \sqrt{d} |x-y| \,.
    \end{align*}
    Therefore, when $P \ne \emptyset$ we have
    \begin{align*}
        \lim_{r \to 0^+} \sup_{y \in B^+_\Omega(x, r)} \frac{(h(y) - h(x))^+}{|x-y|}
        &\le \lim_{r \to 0^+} \sup_{y \in B^+_\Omega(x, r)}
            \frac{\left( h(y^P) - h(x) + \epsilon \sqrt{d} |x-y| \right)^+}{|x-y|} \\
        &\le \epsilon \sqrt{d}
            + \lim_{r \to 0^+} \sup_{y \in B'_\Omega(x, r) \cap X_P}
                \frac{(h(y) - h(x))^+}{|x-y|} \,,
    \end{align*}
    the second inequality since $|x-y^P| \le |x-y|$ and since we showed that, for all $r < r_0$ and
    $y \in B^+_\Omega(x,r)$, we have $y^P \in (B'_\Omega(x,r) \cap X_P) \cup \{x\}$, and moreover
    $(h(y^P) - h(x))^+ = 0$ when $y^P = x$. Similarly, when $P = \emptyset$ we have $y^P = x$
    always, so in this case we obtain the upper bound $\epsilon \sqrt{d}$, \ie the last limit
    superior may indeed be replaced by $0$. Since this inequality holds for every $\epsilon > 0$,
    \eqref{eq:gradplus-5} follows, and this concludes the proof.
\end{proof}

\subsection{Perturbation argument}
\label{section:perturbation}

We can now obtain the main result, first for $C^1(\closedInt^d)$ functions and then for all of
$H^1((0,1)^d)$ by an approximation argument. The key idea of the perturbation argument is to use the
same (mean zero) function $h$ in two different roles: 1)~to construct absolutely continuous
probability measures from $h$ and its coordinate-wise monotone equilibrium, namely $\odif \mu =
(1+th) \odif x$ and $\odif \mu^* = (1+th^*) \odif x$, for small $t > 0$; and 2)~in the ``test
function'' $-th$ for weak duality on $W_2^2(\mu \to \mu^*)$ via the directed Hamilton-Jacobi
operator.

\begin{theorem}
    \label{thm:c1-poincare}
    There exists a universal constant $C > 0$ such that the following holds. Define $\Omega \define
    (0,1)^d$, and let $h \in C^1(\overline \Omega)$. Then
    \[
        \dist^\mono_2(h)^2
        \le \int_\Omega (h - h^*)^2 \odif x
        \le C \int_\Omega |\grad^- h|^2 \odif x \,,
    \]
    where $h^*$ is the coordinate-wise monotone equilibrium of $h$.
\end{theorem}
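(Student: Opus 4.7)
The first inequality is immediate: $h^*$ is monotone (by iterating \cref{lemma:mk-progress} starting from $h \in \Lip$, since $C^1(\overline\Omega) \subset \Lip$) and lies in $L^2(\Omega)$, so $\|h - h^*\|_{L^2(\Omega)}$ bounds $\dist^\mono_2(h)$ from above. For the second inequality, the plan is to implement the perturbation argument sketched in \cref{section:overview-duality}. First I would normalize: replacing $h$ with $h - \bar h$ (where $\bar h \define \int_\Omega h\,\odif x$) preserves $\grad^- h$, shifts $h^*$ by the same constant via \cref{obs:fstar-transformations}, and hence preserves $h - h^*$ and $\dist^\mono_2(h)$; so I may assume $\bar h = 0$. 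Let $M \define \|h\|_{L^\infty(\Omega)}$, finite by compactness of $\overline\Omega$, and for $0 < t < 1/(2M)$ set $f_t \define 1 + th \in \Lip$. Then $f_t$ takes values in $[1-tM, 1+tM]$ and integrates to $1$, so defining $\odif\mu_t \define f_t \odif x$ and, using \cref{obs:fstar-transformations}, $\odif\mu_t^* \define f_t^* \odif x = (1+th^*)\odif x$ yields absolutely continuous probability measures on $\closedInt^d$. Since $\grad^- f_t = t\,\grad^- h$, applying \cref{thm:transport-energy} to $f_t$ gives a universal $C_1 > 0$ such that
\[
    W_2^2(\mu_t \to \mu_t^*) \le C_1\,t^2 \int_\Omega |\grad^- h|^2 \odif x
\]
for all sufficiently small $t$.

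For the matching lower bound, I would apply \cref{prop:h-duality} to $(\mu_t, \mu_t^*)$ using $-th \in C_b(\Omega)$ as the dual test function, combined with the scaling identity $\dirH_1(-th) = t\,\dirH_t(-h)$ from \cref{lemma:h-scaling}. Expanding against the densities and using $\int_\Omega h\,\odif x = 0$ to cancel the $O(t)$ contribution from $\int(-th)\,\odif\mu_t^*$ yields
\[
    \tfrac{1}{2}W_2^2(\mu_t \to \mu_t^*)
    \ge -t^2\!\int_\Omega h h^*\,\odif x
        - t\!\int_\Omega \dirH_t(-h)\,\odif x
        - t^2\!\int_\Omega h\,\dirH_t(-h)\,\odif x.
\]
Since $\grad^+(-h) = -\grad^- h$, parts (b) and (c) of \cref{prop:properties-h} applied to $-h$ give the pointwise control $\limsup_{t\to 0^+}\frac{\dirH_t(-h) + h}{t} \le \tfrac{1}{2}|\grad^- h|^2$ together with a uniform (in $t, x$) upper bound on the same quotient. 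The reverse Fatou lemma on the bounded domain $\Omega$, combined with $\int_\Omega h\,\odif x = 0$, then gives
\[
    \limsup_{t\to 0^+} \tfrac{1}{t}\int_\Omega \dirH_t(-h)\,\odif x
    \le \tfrac{1}{2}\int_\Omega |\grad^- h|^2 \odif x,
\]
while the same uniform bound implies $\dirH_t(-h) \to -h$ pointwise boundedly, so dominated convergence yields $\int_\Omega h\,\dirH_t(-h)\,\odif x \to -\int_\Omega h^2 \odif x$.

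Dividing the lower bound by $t^2$, passing to the limit, and chaining with the transport-energy upper bound produces a universal $C_2 > 0$ satisfying
\[
    \int_\Omega (h^2 - h h^*)\,\odif x \le C_2 \int_\Omega |\grad^- h|^2 \odif x.
\]
To close, I would invoke the nonexpansiveness $\|h^*\|_{L^2(\Omega)} \le \|h\|_{L^2(\Omega)}$ from \cref{obs:fstar-nonexpansive}: expanding $\int(h-h^*)^2\,\odif x = \int h^2\,\odif x - 2\int h h^*\,\odif x + \int (h^*)^2\,\odif x$ and bounding the last term by $\int h^2\,\odif x$ gives $\int_\Omega(h-h^*)^2\,\odif x \le 2\int_\Omega(h^2 - h h^*)\,\odif x$, which chains to the desired estimate with constant $2C_2$. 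The most delicate step, where I expect to spend the most care, is the reverse-Fatou/dominated-convergence argument for the $\dirH$ term: both the pointwise $\limsup$ estimate from \cref{prop:properties-h}(b) and the uniform control from part (c) of that proposition must be invoked in tandem to justify exchanging the limit with the integral.
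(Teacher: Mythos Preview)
Your proposal is correct and follows essentially the same route as the paper: reduce to mean-zero $h$, set $f_t = 1+th$, sandwich $W_2^2(\mu_t\to\mu_t^*)$ between the transport-energy upper bound and the duality lower bound with test function $-th$, invoke \cref{lemma:h-scaling} and \cref{prop:properties-h} to pass to the limit $t\to 0^+$, and close with the nonexpansiveness of $h\mapsto h^*$. The only cosmetic differences are that the paper additionally rescales so that $|h|\le 0.1$ (and thus uses the crude bound $1+th\le 2$ on the cross term), whereas you keep $h$ arbitrary, take $t<1/(2M)$, and handle the cross term $\int_\Omega h\,\dirH_t(-h)\,\odif x$ by dominated convergence; both executions yield the same estimate $\int(h^2 - hh^*)\le C_2\int|\grad^- h|^2$.
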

\begin{proof}
    Note that $h^*$ is well-defined because $C^1$ functions are Lipschitz on compact sets. We first
    show that we may assume without loss of generality that $h$ is bounded and has mean zero. For
    any $\alpha > 0$ and $\beta \in \bR$, we have that $\alpha g + \beta$ is monotone if and only if
    $g$ is monotone, so it follows that
    \[
        \dist^\mono_2(\alpha h + \beta)^2 = \alpha^2 \dist^\mono_2(h)^2 \,.
    \]
    Moreover, \cref{obs:fstar-transformations} gives that $(\alpha h + \beta)^* = \alpha h^* +
    \beta$, and clearly $|\grad^- (\alpha h + \beta)|^2 = \alpha^2 |\grad^- h|^2$ pointwise. Thus as
    long as we show the result for $h$ satisfying $\int_\Omega h \odif x = 0$ and (say) $-0.1 \le h
    \le 0.1$, then we may write any other function in $C^1(\overline \Omega)$ as $\alpha h + \beta$
    for some $h$ satisfying these conditions, and conclude the result. Hence assume that
    $\int_\Omega h \odif x = 0$ and that $-0.1 \le h \le 0.1$ pointwise.

    Let $t \in (0, 1)$, and define $f \in C^1(\overline \Omega)$ by $f \define 1 + t h$. Note that
    $f$ is Lipschitz and bounded between $0.9$ and $1.1$, and it satisfies $\int_\Omega f \odif x =
    1$. Define the absolutely continuous probability measures $\odif \mu \define f \odif x$ and
    $\odif \mu^* \define f^* \odif x$. \cref{thm:transport-energy} then implies that, for some
    universal constant $C > 0$,
    \[
        W_2^2(\mu \to \mu^*)
        \le C \int_\Omega \left| \grad^- f \right|^2 \odif x
        = C t^2 \int_\Omega \left| \grad^- h \right|^2 \odif x
        = C t^2 \int_\Omega \left| \grad^+ (-h) \right|^2 \odif x \,,
    \]
    the first equality by the definition of $f$ and the fact that $t > 0$. On the other hand, noting
    that $-th \in C^1(\overline \Omega)$ and hence $-th \in C_b(\Omega)$ as well, we have
    \begin{align*}
        \tfrac{1}{2} W_2^2(\mu \to \mu^*)
        &\ge \int_\Omega (-th) \odif \mu^* - \int_\Omega (\dirH_1 (-th)) \odif \mu
            & \text{(\cref{prop:h-duality})} \\
        &= -t \int_\Omega h (1 + t h)^* \odif x
            - \int_\Omega (\dirH_1 (-th)) (1 + t h) \odif x
            & \text{(Definition of $\mu, \mu^*$).}
    \end{align*}
    Recalling that $\int_\Omega h \odif x = 0$ and that $(1 + t h)^* = 1 + t h^*$ by
    \cref{obs:fstar-transformations}, the first term in the last line above is
    \[
        -t \int_\Omega h (1 + t h)^* \odif x
        = -t \int_\Omega h (1 + t h^*) \odif x
        = -t^2 \int_\Omega h h^* \odif x \,.
    \]
    Recalling that $1 + t h$ is bounded between $0.9$ and $1.1$, the second term is
    \begin{align*}
        &-\int_\Omega (\dirH_1 (-th)) (1 + t h) \odif x \\
        &\qquad = -\int_\Omega (t \dirH_t (-h)) (1 + t h) \odif x
            & \text{(\cref{lemma:h-scaling})} \\
        &\qquad = -t^2 \int_\Omega \left( \frac{\dirH_t (-h) - (-h) - h}{t} \right)
            (1 + t h) \odif x \\
        &\qquad = - t^2 \int_\Omega \underbrace{
            \left( \frac{\dirH_t (-h) - (-h)}{t} \right)}_{
                \ge 0 \text{ by \cref{prop:properties-h}\ref*{item:h-c}}}
                \underbrace{(1 + t h)}_{\le 2} \odif x
            + t \int_\Omega h (1 + t h) \odif x \\
        &\qquad \ge - 2 t^2 \int_\Omega \frac{\dirH_t (-h) - (-h)}{t} \odif x
            + t \underbrace{\int_\Omega h \odif x}_{=0}
            + t^2 \int_\Omega h^2 \odif x \\
        &\qquad = - 2 t^2 \int_\Omega \frac{\dirH_t (-h) - (-h)}{t} \odif x
            + t^2 \int_\Omega h^2 \odif x \,.
    \end{align*}
    Putting all of the above together, we conclude that
    \[
        \tfrac{C}{2} t^2 \int_\Omega \left| \grad^+ (-h) \right|^2 \odif x
        \ge \tfrac{1}{2} W_2^2(\mu \to \mu^*)
        \ge - t^2 \int_\Omega h h^* \odif x
            - 2 t^2 \int_\Omega \frac{\dirH_t (-h) - (-h)}{t} \odif x
            + t^2 \int_\Omega h^2 \odif x
    \]
    and hence, since $t > 0$,
    \[
        \int_\Omega h^2 \odif x - \int_\Omega h h^* \odif x
        \le \tfrac{C}{2} \int_\Omega \left| \grad^+ (-h) \right|^2 \odif x
            + 2 \int_\Omega \frac{\dirH_t (-h) - (-h)}{t} \odif x \,.
    \]
    Since this holds for all sufficiently small $t > 0$, we may pass the inequality to the limit
    superior as $t \to 0^+$. Since $\frac{\dirH_t (-h) - (-h)}{t}$ is uniformly bounded by
    \cref{prop:properties-h}\ref*{item:h-c}, we may apply the reverse Fatou lemma and
    \cref{prop:properties-h}\ref*{item:h-b} to obtain
    \begin{align*}
        \int_\Omega h^2 \odif x - \int_\Omega h h^* \odif x
        &\le \tfrac{C}{2} \int_\Omega \left| \grad^+ (-h) \right|^2 \odif x
            + 2 \limsup_{t \to 0^+} \int_\Omega \frac{\dirH_t (-h) - (-h)}{t} \odif x \\
        &\le \tfrac{C}{2} \int_\Omega \left| \grad^+ (-h) \right|^2 \odif x
            + 2 \int_\Omega \limsup_{t \to 0^+} \frac{(\dirH_t (-h))(x) - (-h)(x)}{t} \odif x \\
        &\le \tfrac{C}{2} \int_\Omega \left| \grad^+ (-h) \right|^2 \odif x
            + \int_\Omega |\grad^+ (-h)(x)|^2 \odif x
        = \left(1 + \tfrac{C}{2}\right) \int_\Omega \left| \grad^- h \right|^2 \odif x
        \,.
    \end{align*}
    By \cref{obs:fstar-nonexpansive}, we have that $\int_\Omega (h^*)^2 \odif x \le \int_\Omega h^2
    \odif x$ and hence, since $h^*$ is monotone,
    \begin{align*}
        \tfrac{1}{2} \dist^\mono_2(h)^2
        &\le \tfrac{1}{2} \int_\Omega (h - h^*)^2 \odif x
        = \tfrac{1}{2} \int_\Omega h^2 \odif x
            + \tfrac{1}{2} \int_\Omega (h^*)^2 \odif x
            - \int_\Omega h h^* \odif x \\
        &\le \int_\Omega h^2 \odif x
            - \int_\Omega h h^* \odif x
        \le \left(1 + \tfrac{C}{2}\right) \int_\Omega \left| \grad^- h \right|^2 \odif x
        \,. \qedhere
    \end{align*}
\end{proof}

The following denseness result is an immediate application of \cite[Theorem~1, p.~10]{Maz11}.

\begin{fact}
    \label{fact:smooth-functions-dense}
    Define $\Omega \define (0,1)^d$. The space $C^\infty(\overline \Omega)$ is dense in
    $H^1(\Omega)$.
\end{fact}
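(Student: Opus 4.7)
The plan is to prove this classical density result by a standard rescale-and-mollify argument, which is well-suited to the cube $\Omega = (0,1)^d$ because the cube is star-shaped with respect to its interior. The key difference from the easier Meyers--Serrin result (that $C^\infty(\Omega) \cap H^1(\Omega)$ is dense) is that we need smoothness up to the boundary $\partial \Omega$; simply mollifying $f$ is not enough because the mollification is only defined on $\Omega_\delta = \{x \in \Omega : \mathrm{dist}(x, \partial \Omega) > \delta\}$ and generally fails to extend smoothly to $\overline \Omega$. The rescaling step fixes this.

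First, I would define, for each $\epsilon \in (0, 1/2)$, the rescaled function $T_\epsilon f : \Omega \to \bR$ by $T_\epsilon f(x) \define f((1-2\epsilon)x + \epsilon \vec{1})$, where $\vec{1} = (1, \ldots, 1)$. This is a valid element of $H^1(\Omega)$ since the affine map $x \mapsto (1-2\epsilon)x + \epsilon \vec{1}$ sends $\Omega$ into $(\epsilon, 1-\epsilon)^d \subset \Omega$. Moreover, by the same affine map, $T_\epsilon f$ extends naturally to an $H^1$ function on the enlarged open set $\tilde \Omega_\epsilon \define (-\tau, 1+\tau)^d$ for any $\tau \in (0, \epsilon/(1-2\epsilon))$, since the preimage of $\Omega$ under the affine map is exactly this enlarged cube (for the maximal such $\tau$). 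The point is that $\overline \Omega \subset \tilde \Omega_\epsilon$ with strictly positive distance to the boundary of the latter.

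Next, I would approximate each $T_\epsilon f$ by its mollifications. Fix $\epsilon > 0$. Pick $\delta_0 > 0$ smaller than $\mathrm{dist}(\overline \Omega, \partial \tilde \Omega_\epsilon)$; then for every $\delta \in (0, \delta_0)$, the standard mollification $(T_\epsilon f)^\delta = \eta_\delta * T_\epsilon f$ is defined on an open neighborhood of $\overline \Omega$ and in particular belongs to $C^\infty(\overline \Omega)$. Standard properties of mollification (as recalled in \cref{section:pde-preliminaries}) then give $(T_\epsilon f)^\delta \to T_\epsilon f$ in $H^1(\Omega)$ as $\delta \to 0$, since mollification converges in $L^2$ on any fixed open set on which the function and its weak derivatives are defined, and commutes with weak derivatives.

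Finally, I would show that $T_\epsilon f \to f$ in $H^1(\Omega)$ as $\epsilon \to 0$. This reduces to continuity of translation and rescaling in $L^2$: for $f \in L^2(\Omega)$, the map $\epsilon \mapsto T_\epsilon f$ is continuous in $L^2(\Omega)$ by a standard density argument (approximate $f$ by a continuous compactly supported function, for which continuity is obvious, and use uniform $L^2$ bounds on $T_\epsilon$). Applying this both to $f$ and to each weak partial derivative $\partial_i f \in L^2(\Omega)$ -- and observing that $\partial_i (T_\epsilon f) = (1-2\epsilon)\, T_\epsilon(\partial_i f)$ -- yields the claim. Combining with the previous step by a diagonal argument, for every $\eta > 0$ one can choose $\epsilon, \delta$ small enough so that $\|(T_\epsilon f)^\delta - f\|_{H^1(\Omega)} < \eta$, producing the desired approximation in $C^\infty(\overline \Omega)$. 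The main (mild) obstacle is the second step -- verifying $H^1$-continuity of the rescaling -- which requires handling both the translation part and the Jacobian factor $(1-2\epsilon)$ in the derivative, but both are routine given $L^2$ translation continuity.
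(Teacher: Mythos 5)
Your proposal is correct, but it takes a different route from the paper in the sense that the paper does not prove this fact at all: it is stated as an immediate consequence of a cited general theorem (\cite[Theorem~1, p.~10]{Maz11}), which gives density of smooth-up-to-the-boundary functions in Sobolev spaces for a general class of domains. Your argument instead proves the statement from scratch by exploiting the specific geometry of the cube: since $\Omega=(0,1)^d$ is star-shaped about its center, the inward affine reparametrization $T_\epsilon f(x) = f((1-2\epsilon)x+\epsilon\vec{1})$ is defined on an enlarged cube containing $\overline\Omega$ at positive distance from its boundary, so mollification with small enough $\delta$ produces functions smooth on a neighborhood of $\overline\Omega$ (and hence, after multiplying by a cutoff equal to $1$ on $\overline\Omega$, elements of $C^\infty(\overline\Omega)$ in the paper's sense of restrictions of $C^\infty(\bR^d)$ functions). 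The two remaining steps are exactly as you say: $(T_\epsilon f)^\delta \to T_\epsilon f$ in $H^1(\Omega)$ because $\overline\Omega$ is a compact subset of the enlarged cube and mollification commutes with weak derivatives there, and $T_\epsilon f \to f$ in $H^1(\Omega)$ by $L^2$-continuity of dilations (uniform boundedness of $T_\epsilon$ on $L^2$, the dense class $C_c(\Omega)$, and the identity $\partial_i(T_\epsilon f) = (1-2\epsilon)\,T_\epsilon(\partial_i f)$, whose factor $(1-2\epsilon)\to 1$ is harmless). What your approach buys is a self-contained, elementary proof tailored to the cube; what the paper's citation buys is brevity and applicability to more general domains (the result is delicate for rough domains, which is why the paper defers to a reference), but for $(0,1)^d$ your dilation-and-mollification argument is the standard proof and is complete.
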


Thus we may promote the main result to all of $H^1(\Omega)$ by an approximation argument:

\thmdirectedpoincareinequality*
\begin{proof}
    By \cref{fact:smooth-functions-dense}, we may find a sequence $(f_n)_{n \in \bN} \subset
    C^\infty(\overline \Omega)$ such that $f_n \to f$ in $H^1(\Omega)$. Since each $f_n$ belongs in
    particular to $C^1(\overline \Omega)$, \cref{thm:c1-poincare} gives that, for each $n \in \bN$,
    \[
        \|f_n - f_n^*\|_{L^2(\Omega)}^2 \le C \int_\Omega |\grad^- f_n|^2 \odif x \,.
    \]
    Let $\epsilon > 0$. Since $f_n \to f$ in $H^1(\Omega)$ implies in particular that $f_n \to f$ in
    $L^2(\Omega)$, we have that, for all sufficiently large $n$, $\|f - f_n\|_{L^2(\Omega)} \le
    \epsilon$ and hence, by the triangle inequality,
    \begin{equation}
        \label{eq:dist-mono-1}
        \dist^\mono_2(f)
        \le \|f - f_n^*\|_{L^2(\Omega)}
        \le \|f - f_n\|_{L^2(\Omega)} + \|f_n - f_n^*\|_{L^2(\Omega)}
        \le \epsilon + \sqrt{C \int_\Omega |\grad^- f_n|^2 \odif x} \,.
    \end{equation}
    Now, since the norm in $H^1(\Omega)$ is given by
    \[
        \|u\|_{H^1(\Omega)}^2
        = \|u\|_{L^2(\Omega)}^2 + \sum_{i=1}^d \int_\Omega |\partial_i u|^2 \odif x \,,
    \]
    the fact that $f_n \to f$ in $H^1(\Omega)$ implies that $\|f - f_n\|_{H^1(\Omega)} \to 0$ and
    hence
    \[
        0
        = \lim_{n \to \infty} \sum_{i=1}^d \int_\Omega |\partial_i (f - f_n)|^2 \odif x
        = \lim_{n \to \infty}
            \sum_{i=1}^d \int_\Omega |(\partial_i f) - (\partial_i f_n)|^2 \odif x \,.
    \]
    Since $0 \le |(a \land 0) - (b \land 0)| \le |a - b|$ for any $a, b \in \bR$, we conclude that
    \begin{equation}
        \label{eq:convergence-negative-partials}
        \lim_{n \to \infty}
            \sum_{i=1}^d \int_\Omega |(\partial^-_i f) - (\partial^-_i f_n)|^2 \odif x
        = 0 \,.
    \end{equation}
    Observing that $\partial^-_i f, \partial^-_i f_n \in L^2(\Omega)$ for each $i \in [d]$ and $n
    \in \bN$ (because $\partial_i f, \partial_i f_n \in L^2(\Omega)$ by the definition of
    $H^1(\Omega)$, and if $u \in L^2(\Omega)$ then $u \land 0 \in L^2(\Omega)$),
    \eqref{eq:convergence-negative-partials} implies that $\partial^-_i f_n \to \partial^-_i f$ in
    $L^2(\Omega)$ for each $i \in [d]$, and hence
    \[
        \int_\Omega |\grad^- f|^2 \odif x
        = \sum_{i=1}^d \int_\Omega |\partial^-_i f|^2 \odif x
        = \lim_{n \to \infty} \sum_{i=1}^d \int_\Omega |\partial^-_i f_n|^2 \odif x
        = \lim_{n \to \infty} \int_\Omega |\grad^- f_n|^2 \odif x \,.
    \]
    Thus, for all sufficiently large $n$ we have
    \[
        \int_\Omega |\grad^- f_n|^2 \odif x
        \le \epsilon + \int_\Omega |\grad^- f|^2 \odif x \,.
    \]
    Combining with \eqref{eq:dist-mono-1} we obtain, for all sufficiently large $n$,
    \begin{align*}
        \dist^\mono_2(f)^2
        &\le \left( \epsilon + \sqrt{C \int_\Omega |\grad^- f_n|^2 \odif x} \right)^2
        \le 2\epsilon^2 + 2C \int_\Omega |\grad^- f_n|^2 \odif x \\
        &\le 2\epsilon^2 + 2C \left( \epsilon + \int_\Omega |\grad^- f|^2 \odif x \right)
        = 2\epsilon^2 + 2C \epsilon + 2C \int_\Omega |\grad^- f|^2 \odif x \,.
    \end{align*}
    Since this holds for all $\epsilon > 0$, we conclude that
    \[
        \dist^\mono_2(f)^2 \le 2C \int_\Omega |\grad^- f|^2 \odif x \,. \qedhere
    \]
\end{proof}

\iftoggle{anonymous}{}{
\section*{Acknowledgments}

We thank Eric Blais for extensive discussions throughout the development of this project and for
helpful feedback on preliminary versions of this article; Nathan Harms for offering feedback and
identifying typos in the introduction of the paper; Lap Chi Lau for extensive feedback on the
presentation of this article and for pointing out the connection to the nonlinear Laplacian of
\cite{Yos16}; and C.~Seshadhri for suggesting the connection between
\cref{lemma:good-vector-distribution} and combinatorial search problems.
}

\printbibliography

\appendix

\section{Technical lemmas}
\label{section:technical-lemmas}

\begin{lemma}
    \label{lemma:weak-limit-monotonic}
    Define $\Omega \define (0,1)^d$. Suppose $f_n \weakto f$ weakly in $L^2(\Omega)$, and each $f_n$
    is monotone nondecreasing (resp.\ monotone nonincreasing). Then $f$ is monotone nondecreasing
    (resp.\ monotone nonincreasing).
\end{lemma}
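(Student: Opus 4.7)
The plan is to characterize the property ``has a nondecreasing representative'' in terms of distributional partial derivatives, and then pass the resulting linear inequality through the weak convergence.

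First, I will establish the following characterization, which I will use as the main tool: for $g \in L^2(\Omega)$, $g$ has a nondecreasing representative with respect to the partial order $\preceq$ if and only if, for each $i \in [d]$, the distributional partial derivative $\partial_i g$ is nonnegative, that is,
\[
    -\int_\Omega g \, \partial_i \phi \odif x \ge 0
    \qquad \text{for every nonnegative } \phi \in C_c^\infty(\Omega) \,.
\]
The ``only if'' direction is immediate once one notes that the mollifications $g^\epsilon = \eta_\epsilon * g$ are smooth and nondecreasing on $\Omega_\epsilon$, so their classical partial derivatives (which agree with the mollifications of $\partial_i g$) are nonnegative.

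Given the characterization, the main argument is short. For each $n$ and each $i \in [d]$, since $f_n$ has a nondecreasing representative, we have $-\int_\Omega f_n \partial_i \phi \odif x \ge 0$ for every nonnegative $\phi \in C_c^\infty(\Omega)$. Because $\phi$ is smooth and compactly supported, $\partial_i \phi \in L^2(\Omega)$, so weak convergence $f_n \weakto f$ in $L^2(\Omega)$ gives
\[
    -\int_\Omega f_n \partial_i \phi \odif x \to -\int_\Omega f \partial_i \phi \odif x \,.
\]
Thus the limit is also nonnegative, so by the characterization $f$ has a nondecreasing representative. The nonincreasing case follows by applying the result to $-f_n \weakto -f$.

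The main technical obstacle is proving the ``if'' direction of the characterization, since this is the only non-routine step. I would argue as follows: given that $-\int_\Omega g \partial_i \phi \odif x \ge 0$ for every nonnegative $\phi \in C_c^\infty(\Omega)$ and each $i$, the mollifications $g^\epsilon = \eta_\epsilon * g$ are smooth on $\Omega_\epsilon$ with partial derivatives $\partial_i g^\epsilon(x) = -\int g(y) \partial_i \eta_\epsilon(x-y) \odif y$, and testing against the nonnegative function $y \mapsto \eta_\epsilon(x-y)$ (using translation to reduce to a compactly supported nonnegative test function in $C_c^\infty(\Omega)$ for $x \in \Omega_\epsilon$) shows $\partial_i g^\epsilon \ge 0$ on $\Omega_\epsilon$. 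By the smooth case (fundamental theorem of calculus along coordinate directions), each $g^\epsilon$ is nondecreasing on $\Omega_\epsilon$ with respect to $\preceq$. Since $g^\epsilon \to g$ in $L^2_\loc(\Omega)$, a diagonal subsequence converges pointwise almost everywhere on $\Omega$, so $g$ agrees a.e.\ with a pointwise limit of nondecreasing functions; a standard redefinition on a null set then yields a genuinely nondecreasing representative of $g$ in $L^2(\Omega)$.
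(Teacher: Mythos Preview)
Your proof is correct, but it takes a somewhat different route from the paper's. The paper argues more directly: for fixed $\epsilon>0$ it shows $f_n^\epsilon(x)\to f^\epsilon(x)$ pointwise on $\Omega_\epsilon$ by testing the weak convergence against the $L^2$ function $y\mapsto\eta_\epsilon(x-y)$; since each $f_n^\epsilon$ is nondecreasing, so is the pointwise limit $f^\epsilon$, and then $f^\epsilon\to f$ a.e.\ finishes the job. Your argument instead passes weak convergence through the linear functionals $g\mapsto -\int g\,\partial_i\phi$ to conclude that the distributional partial derivatives of $f$ are nonnegative, and then proves (via mollification) that this characterizes functions with a nondecreasing representative. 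Both approaches ultimately rely on mollification and the fact that $g^\epsilon\to g$ a.e.; the paper's is shorter because it avoids stating and proving the distributional-derivative characterization as a separate lemma, while yours packages that characterization as a reusable tool. One minor remark: the final step in both arguments---going from ``nondecreasing on a full-measure set'' to ``has a genuinely nondecreasing representative on $\Omega$''---is left implicit in the paper as well, so your level of detail there matches theirs.
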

\begin{proof}
    Recall that for each $\epsilon > 0$ and $g \in L^2(\Omega)$, $g^\epsilon = \eta_\epsilon * g$ is
    the mollification of $g$, defined on $\Omega_\epsilon = (\epsilon, 1-\epsilon)^d$.

    Let $\epsilon \in (0, 1/3)$. We claim that $f_n^\epsilon \to f^\epsilon$ pointwise in
    $\Omega_\epsilon$. Indeed, fix any $x \in \Omega_\epsilon$ and define $\eta_{\epsilon,x} \in
    C^\infty(\bR^d)$ by $\eta_{\epsilon,x}(y) \define \eta_\epsilon(x-y)$. Since $f_n \weakto f$
    weakly in $L^2(\Omega)$, we obtain
    \[
        f^\epsilon(x) = \int_\Omega \eta_\epsilon(x-y) f(y) \odif y
        = \inp{\eta_{\epsilon,x}}{f}
        = \lim_{n \to \infty} \inp{\eta_{\epsilon,x}}{f_n}
        = \lim_{n \to \infty} \int_\Omega \eta_\epsilon(x-y) f_n(y) \odif y
        = \lim_{n \to \infty} f_n^\epsilon(x) \,.
    \]

    Now, suppose without loss of generality that each $f_n$ is monotone nondecreasing. Let $\epsilon
    \in (0, 1/3)$. It is immediate that each $f_n^\epsilon$ is also monotone nondecreasing (in
    $\Omega_\epsilon$). Since $f_n^\epsilon \to f^\epsilon$ pointwise in $\Omega_\epsilon$, we
    conclude that $f^\epsilon$ is monotone nondecreasing as well.

    The conclusion follows the fact that $f^\epsilon \to f$ almost everywhere as $\epsilon \to 0$.
\end{proof}

The following lemma is essentially standard, and below we present the proof sketched in
\cite{Fen}.

\begin{lemma}
    \label{lemma:bounded-loc-to-weak}
    Let $\Omega \subset \bR^d$ be an open set, let $u \in L^2(\Omega)$ and let $(u_n)_{n \in \bN}$
    be a bounded sequence in $L^2(\Omega)$ such that $u_n \to u$ in $L^2_\loc(\Omega)$. Then $u_n
    \weakto u$ weakly in $L^2(\Omega)$.
\end{lemma}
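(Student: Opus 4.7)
The plan is to reduce the claim to test functions in $C_c(\Omega)$ (compactly supported continuous functions) and then use the $L^2_\loc$ convergence, handling general test functions in $L^2(\Omega)$ via a density argument in which the uniform bound on $(u_n)$ controls the tail error.

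Concretely, I would first handle $\varphi \in C_c(\Omega)$. Setting $K \define \supp(\varphi)$, which is compact in $\Omega$, the Cauchy--Schwarz inequality gives
\[
    \left| \int_\Omega (u_n - u) \varphi \odif x \right|
    = \left| \int_K (u_n - u) \varphi \odif x \right|
    \le \|u_n - u\|_{L^2(K)} \|\varphi\|_{L^2(K)} \,,
\]
and the right-hand side tends to $0$ by the hypothesis that $u_n \to u$ in $L^2_\loc(\Omega)$.

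Next I would pass to a general $\varphi \in L^2(\Omega)$. Let $M \define \sup_n \|u_n\|_{L^2(\Omega)} < +\infty$ by boundedness, and let $\epsilon > 0$. Since $C_c(\Omega)$ is dense in $L^2(\Omega)$ (a standard fact), choose $\psi \in C_c(\Omega)$ such that $\|\varphi - \psi\|_{L^2(\Omega)} < \epsilon$. Then by the triangle inequality and Cauchy--Schwarz applied on all of $\Omega$,
\[
    \left| \int_\Omega (u_n - u) \varphi \odif x \right|
    \le \left| \int_\Omega (u_n - u) \psi \odif x \right|
        + \bigl( \|u_n\|_{L^2(\Omega)} + \|u\|_{L^2(\Omega)} \bigr) \|\varphi - \psi\|_{L^2(\Omega)} \,.
\]
The first term tends to $0$ by the previous step, while the second term is at most $(M + \|u\|_{L^2(\Omega)})\epsilon$ uniformly in $n$. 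Taking $\limsup_{n \to \infty}$ and then letting $\epsilon \to 0$ yields the conclusion.

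The only subtle point is that we must have the uniform bound on $(u_n)$ in $L^2(\Omega)$, rather than merely the $L^2_\loc$ convergence, in order to make the density argument work: without it, we would not be able to control $\int_\Omega (u_n - u)(\varphi - \psi) \odif x$ independently of $n$ on the non-compact ``tail'' part of $\Omega$ where $L^2_\loc$ convergence alone gives no information. This is the essence of why uniform boundedness is hypothesized in the statement.
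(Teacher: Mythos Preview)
Your proof is correct and, in fact, somewhat more direct than the paper's. You argue by density: first establish $\int_\Omega (u_n-u)\varphi \to 0$ for compactly supported $\varphi$ via Cauchy--Schwarz on the support, then upgrade to arbitrary $\varphi \in L^2(\Omega)$ using the uniform $L^2$ bound to control the approximation error. The paper instead uses a subsequence--subsequence argument: from any subsequence, extract (via weak sequential compactness of bounded sets in the Hilbert space $L^2(\Omega)$) a further subsequence with some weak limit $w$, and then identify $w=u$ by testing against $C^\infty_c(\Omega)$ and passing to the density limit. Your approach is more elementary in that it avoids invoking weak compactness altogether, relying only on density and Cauchy--Schwarz; the paper's approach is the more ``soft'' functional-analytic route, trading the explicit $\epsilon$-splitting for the off-the-shelf compactness fact. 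Both hinge on the same two ingredients---density of compactly supported functions and the uniform $L^2$ bound---just deployed differently.
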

\begin{proof}
    It suffices to show that every subsequence $(u_{n_k})_{k \in \bN}$ has a subsequence that weakly
    converges to $u$. Fix any subsequence $(u_{n_k})_{k \in \bN}$. First, since $u_{n_k} \to u$ in
    $L^2_\loc(\Omega)$, we have that $\inp{u_{n_k}}{\phi} \to \inp{u}{\phi}$ for all $\phi \in
    C^\infty_c(\Omega)$. Moreover, since $L^2(\Omega)$ is a Hilbert space and $(u_{n_k})_{k \in
    \bN}$ is bounded, let $(u_{n_{k_\ell}})_{\ell \in \bN}$ be a subsequence such that
    $u_{n_{k_\ell}} \weakto w$ weakly in $L^2(\Omega)$ for some $w \in L^2(\Omega)$. Now, let $v \in
    L^2(\Omega)$. Since $C^\infty_c(\Omega)$ is dense in $L^2(\Omega)$, let $(\phi_m)_{m \in \bN}$
    be a sequence in $C^\infty_c(\Omega)$ with $\phi_m \to v$ in $L^2(\Omega)$. Then
    \begin{align*}
        \inp{u}{v}
        &= \lim_{m \to \infty} \inp{u}{\phi_m}
            & \text{(Since $\phi_m \to v$ in $L^2(\Omega)$)} \\
        &= \lim_{m \to \infty} \lim_{k \to \infty} \inp{u_{n_k}}{\phi_m}
            & \text{(As observed above)} \\
        &= \lim_{m \to \infty} \lim_{\ell \to \infty} \inp{u_{n_{k_\ell}}}{\phi_m}
            & \text{(Taking subsequence preserves the limit)} \\
        &= \lim_{m \to \infty} \inp{w}{\phi_m}
            & \text{(Since $u_{n_{k_\ell}} \weakto w$ weakly in $L^2(\Omega)$)} \\
        &= \inp{w}{v} & \text{(Since $\phi_m \to v$ in $L^2(\Omega)$)} \\
        &= \lim_{\ell \to \infty} \inp{u_{n_{k_\ell}}}{v}
            & \text{(Since $u_{n_{k_\ell}} \weakto w$ weakly in $L^2(\Omega)$),}
    \end{align*}
    so $u_{n_{k_\ell}} \weakto u$ weakly in $L^2(\Omega)$ as needed.
\end{proof}

\begin{lemma}[From ``almost Lipschitz'' to Lipschitz]
    \label{lemma:almost-lipschitz-extension}
    Let $M \in \bR_{\ge 0}$ and let $N \subset I$ be a measure zero set. Suppose $f : I \to \bR$
    satisfies $\abs*{f(x) - f(y)} \le M |x-y|$ for all $x, y \in I \setminus N$. Then there exists a
    $M$-Lipschitz function $g : I \to \bR$ such that $f = g$ in $I \setminus N$.
\end{lemma}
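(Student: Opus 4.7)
The plan is to exploit the fact that $N$ having Lebesgue measure zero makes $I \setminus N$ dense in $I$, and then extend $f$ by continuity. Concretely, for each $x \in N$, I would pick any sequence $(x_n)_n \subset I \setminus N$ converging to $x$, observe that the Lipschitz condition gives $\abs{f(x_n) - f(x_m)} \le M |x_n - x_m|$, so $(f(x_n))_n$ is Cauchy in $\bR$ and has a limit, and define $g(x)$ to be that limit. For $x \in I \setminus N$, I would simply set $g(x) = f(x)$.

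The first routine step is to verify that $g$ is well-defined, i.e. that the limit does not depend on the choice of approximating sequence. Given two sequences $(x_n)_n, (x'_n)_n \subset I \setminus N$ both converging to $x$, one has $\abs{f(x_n) - f(x'_n)} \le M |x_n - x'_n| \to 0$, so $\lim f(x_n) = \lim f(x'_n)$. The second routine step is to verify that $g$ is $M$-Lipschitz on all of $I$: for any $x, y \in I$, pick sequences $(x_n)_n, (y_n)_n \subset I \setminus N$ with $x_n \to x$ and $y_n \to y$, and pass the inequality $\abs{f(x_n) - f(y_n)} \le M |x_n - y_n|$ to the limit using the definition of $g$. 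This immediately yields $\abs{g(x) - g(y)} \le M |x-y|$. By construction $g = f$ on $I \setminus N$.

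There is essentially no hard step here; the only subtlety is making sure that $I \setminus N$ is dense in $I$, which follows because a set of Lebesgue measure zero cannot contain any nonempty open interval, so its complement intersects every open interval in $I$. An alternative, even more direct, approach would be to define $g$ by the McShane formula $g(x) \define \inf_{y \in I \setminus N} \{ f(y) + M |x-y| \}$, which is automatically $M$-Lipschitz on $I$ and agrees with $f$ on $I \setminus N$ (the latter because, for $x \in I \setminus N$, any $y$ in the infimum gives $f(y) + M|x-y| \ge f(x)$ by the Lipschitz hypothesis, while the choice $y = x$ achieves $f(x)$); I would likely present this version as it avoids the small bookkeeping around well-definedness of limits.
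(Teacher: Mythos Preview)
Your proposal is correct, and your primary approach---extending $f$ to $N$ by continuity using the density of $I\setminus N$---is essentially the same as the paper's. The paper uses $\liminf_{z\to x}\overline f(z)$ rather than $\lim$ (thereby sidestepping your well-definedness check) and then verifies the Lipschitz bound by a short case analysis, but the underlying idea is identical. Your McShane-formula alternative is a genuinely different and cleaner route: it produces an $M$-Lipschitz extension in one line, without any sequences or case analysis, at the minor cost of checking that the infimum is finite (which follows from the Lipschitz hypothesis on $I\setminus N$).
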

\begin{proof}
    Let $\overline f  : I \to (-\infty, +\infty]$ be given by $\overline f  \define f$ in $I
    \setminus N$ and $\overline f  \define +\infty$ in $N$. Define $g : I \to \bR$ by
    \[
        g(x) \define \begin{cases}
            f(x) & \text{if } x \in I \setminus N \\
            \liminf_{z \to x} \overline f (z) & \text{otherwise.}
        \end{cases}
    \]
    Note that $g$ is real-valued because, when $x \in N$, we may find a sequence $(z_n)_{n \in \bN}
    \subset I \setminus N$ such that $z_n \to x$ and then, by the definition of $\liminf$ by
    subsequential limits and the assumption on $f$,
    \[
        g(x)
        = \liminf_{z \to x} \overline f(z)
        \le \lim_{n \to \infty} \overline f(z_n)
        = \lim_{n \to \infty} f(z_n)
        \le \lim_{n \to \infty} \big[ f(z_1) + M |z_n - z_1| \big]
        = f(z_1) + M |x - z_1|
        < +\infty \,.
    \]
    Clearly $g = f$ in $I \setminus N$. We claim that $g$ is $M$-Lipschitz. By the assumption on
    $f$, we do have $\abs*{g(x) - g(y)} \le M |x-y|$ for all $x, y \in I \setminus N$. Now, let $x
    \in N$ and $y \in I \setminus N$. By the definition of $g$, fix a sequence $(z_n)_{n \in \bN}
    \subset I \setminus N$ such that $z_n \to x$ and moreover $\lim_{n \to \infty} f(z_n) = g(x)$.
    Then
    \[
        \abs*{g(x) - g(y)}
        = \abs*{\lim_{n \to \infty} f(z_n) - f(y)}
        = \lim_{n \to \infty} \abs*{f(z_n) - f(y)}
        \le \lim_{n \to \infty} M |z_n - y|
        = M |x - y| \,,
    \]
    as desired. The case when $x, y \in N$ follows by the triangle inequality: suppose $x < y$
    without loss of generality, choose any $z \in (x, y) \setminus N$, which is possible since $N$
    has measure zero, and use the above to conclude that
    \[
        \abs*{g(x) - g(y)}
        \le \abs*{g(x) - g(z)} + \abs*{g(z) - g(y)}
        \le M|x-z| + M|z-y|
        = M|x-y| \,. \qedhere
    \]
\end{proof}

\ignore{
\begin{proposition}
    \label{res:dist-achieved}
    The infimum in the definition of $\dist^\mono_2 : L^2((0,1)^d) \to [0, +\infty)$ is achieved,
    \ie for each $f \in L^2((0,1)^d)$ there exists a monotone $g \in L^2((0,1)^d)$ such that
    $\|f-g\|_{L^2((0,1)^d)} = \dist^\mono_2(f)$.
\end{proposition}
\begin{proof}
    Let $f \in L^2((0,1)^d)$. Let $E \define L^2((0,1)^d)$ and let $C \subset E$ be the set of
    monotone functions in $E$. Let $\delta : C \to [0, +\infty)$ be given by $\delta(g) \define
    \|f-g\|_{L^2((0,1)^d)}$. We verify that $E$, $C$ and $\delta$ meet the conditions of
    \cref{lemma:convex-optimization}. First, $E$ is a Hilbert space and hence a reflexive Banach
    space. It is clear that $C$ is nonempty and convex, and it is also closed by
    \cref{lemma:weak-limit-monotonic}. That $\delta$ is convex and lower semicontinuous (in fact,
    continuous) is immediate from the properties of the $L^2$ norm, and it is finitely-valued and
    thus proper. Finally, $\delta$ is coercive because if $(g_n)_{n \in \bN} \subset C$ is such that
    $\|g_n\|_{L^2((0,1)^d)} \to +\infty$, then $\delta(g_n) = \|f-g_n\|_{L^2((0,1)^d)} \to +\infty$.
    \cref{lemma:convex-optimization} gives the conclusion.
\end{proof}
}

\end{document}